\pgfplotsset{compat=1.13}
\newcommand\blankfootnote[1]{%
  \begin{NoHyper}
  \let\svthefootnote\thefootnote%
  \let\thefootnote\relax\footnotetext{#1}%
  \let\thefootnote\svthefootnote%
  \end{NoHyper}
}
\newcommand{\inputAlphabet}{\mathcal{X}}
\newcommand{\inputAlphabetElement}{x}
\newcommand{\inputAlgebra}{\Sigma}
\newcommand{\hilbertSpace}{\mathcal{H}}
\newcommand{\eveHilbertSpace}{\hilbertSpace_\eve}
\newcommand{\bobHilbertSpace}{\hilbertSpace_\bob}
\newcommand{\densityOperators}[1]{\mathcal{S}({#1})}
\newcommand{\classicalQuantumChannel}{D}
\newcommand{\bobcqChannel}{\classicalQuantumChannel_\bob}
\newcommand{\eveChannel}{\classicalQuantumChannel_\eve}
\newcommand{\inputDistribution}{P}
\newcommand{\outputDistribution}{U}
\newcommand{\naturals}{\mathbb{N}}
\newcommand{\reals}{\mathbb{R}}
\newcommand{\blocklength}{n}
\newcommand{\blockindex}{i}
\newcommand{\codebookRate}{R}
\newcommand{\traceClass}[1]{\mathcal{T}({#1})}
\newcommand{\compactOperators}[1]{\mathcal{K}({#1})}
\newcommand{\traceNorm}[1]{\left\lVert {#1} \right\rVert_\mathrm{tr}}
\newcommand{\traceNormBiggLeft}{\Bigg\lVert}
\newcommand{\traceNormBiggRight}{\Bigg\rVert_\mathrm{tr}}
\newcommand{\operatorNorm}[1]{\left\lVert {#1} \right\rVert_\mathrm{op}}
\newcommand{\generalIndexMax}{\ell}
\newcommand{\generalIndex}{i}
\newcommand{\generalIndexTwo}{j}
\newcommand{\singularValue}{s}
\newcommand{\generalConstant}{c}
\newcommand{\generalPolynomial}{p}
\newcommand{\generalPolynomialCoefficient}{a}
\newcommand{\generalIntervalLowerBound}{a}
\newcommand{\generalIntervalUpperBound}{b}
\newcommand{\generalFunction}{f}
\newcommand{\generalFunctionTwo}{g}
\newcommand{\generalReal}{t}
\newcommand{\generalRealTwo}{s}
\newcommand{\generalQuantumState}{\rho}
\newcommand{\generalQuantumStateTwo}{\sigma}
\newcommand{\hilbertSpaceElement}{h}
\newcommand{\innerProduct}[2]{\left\langle {#1}, {#2} \right\rangle}
\newcommand{\eigenvalue}{\lambda}
\newcommand{\eigenvalueIndex}{\ell}
\newcommand{\eigenvector}{e}
\newcommand{\bra}[1]{\left\langle {#1} \right\rvert}
\newcommand{\ket}[1]{\left\lvert {#1} \right\rangle}
\newcommand{\rankOneOperator}[1]{\ket{#1}\bra{#1}}
\newcommand{\maxnorm}[1]{\left\lVert {#1} \right\rVert_\infty}
\newcommand{\boundedOperators}[1]{\mathcal{B}({#1})}
\newcommand{\complexNumbers}{\mathbb{C}}
\newcommand{\generalOperator}{A}
\newcommand{\generalOperatorTwo}{B}
\newcommand{\generalOperatorRV}{A}
\newcommand{\dualSpace}[1]{#1'}
\newcommand{\trace}{\mathrm{tr}}
\newcommand{\generalFunctional}{\varphi}
\newcommand{\onbVector}{e}
\newcommand{\adjoint}[1]{#1^*}
\newcommand{\indicatorFunction}[1]{\mathbbm{1}_{#1}}
\newcommand{\spectrum}[1]{\sigma\left({#1}\right)}
\newcommand{\probabilitySpace}{\Omega}
\newcommand{\probabilitySpaceAlgebra}{\mathcal{F}}
\newcommand{\Expectation}{\mathbb{E}}
\newcommand{\Probability}{\mathbb{P}}
\newcommand{\generalPMeasure}{\mu}
\newcommand{\probabilitySpaceElement}{\omega}
\newcommand{\classicalOutputIndex}{y}
\newcommand{\classicalOutputSubset}{\mathcal{Y}}
\newcommand{\inputRV}{X}
\newcommand{\outputRV}{Y}
\newcommand{\jointEntropy}{H_P}
\newcommand{\outputEntropy}{H_U}
\newcommand{\jointEntropySubscript}[1]{H_{P_{#1}}}
\newcommand{\outputEntropySubscript}[1]{H_{U_{#1}}}
\newcommand{\quantumInformation}{\holevoInformation{P}{D}}
\newcommand{\holevoInformation}[2]{\chi(#1;#2)}
\newcommand{\quantumRelativeEntropy}[2]{H(#1;#2)}
\newcommand{\vonNeumannEntropy}[1]{H\left(#1\right)}
\newcommand{\legitInformation}{I(P,W)}
\newcommand{\legitRenyDiv}[1]{I_{#1}(P,W)}
\newcommand{\codebook}{\mathcal{C}}
\newcommand{\codewordIndex}{m}
\newcommand{\codebookSize}{M}
\newcommand{\finalconst}{\gamma}
\newcommand{\proofconst}{\beta}
\newcommand{\renyiorder}{\alpha}
\newcommand{\operatorValuedMap}{T}
\newcommand{\rademacherRV}{\mathcal{E}}
\newcommand{\typicalityParameter}{\varepsilon}
\newcommand{\jointTypicalitySet}{\mathcal{P}}
\newcommand{\outputTypicalitySet}{\mathcal{U}}
\newcommand{\jointTypicalityOperator}{\Psi}
\newcommand{\outputTypicalityOperator}{\Theta}
\newcommand{\typicalityOperatorProduct}{\Gamma}
\newcommand{\identityOperator}{\mathbf{1}}
\newcommand{\conditionalRenyiEntropy}[1]{H_{P,{#1}}}
\newcommand{\outputRenyiEntropy}[1]{H_{U,{#1}}}
\newcommand{\conditionalRenyiEntropySubscript}[2]{H_{P_{#2},{#1}}}
\newcommand{\outputRenyiEntropySubscript}[2]{H_{U_{#2},{#1}}}
\newcommand{\generalpmf}{p}
\newcommand{\cardinality}[1]{\left\lvert #1 \right \rvert}
\newcommand{\absolute}[1]{\left \lvert #1 \right \rvert}
\newcommand{\legitOutputAlphabet}{\hat{\mathcal{X}}}
\newcommand{\legitOutputAlphabetElement}{\hat{x}}
\newcommand{\legitOutputRV}{\hat{X}}
\newcommand{\legitOutputAlgebra}{\hat{\Sigma}}
\newcommand{\bobChannel}{W}
\newcommand{\alice}{\mathfrak{A}}
\newcommand{\bob}{\mathfrak{B}}
\newcommand{\eve}{\mathfrak{E}}
\newcommand{\differenceBounds}{v}
\newcommand{\encoder}{\mathrm{Enc}}
\newcommand{\decoder}{\mathrm{Dec}}
\newcommand{\decodingError}{\varepsilon}
\newcommand{\securityNumber}{\delta}
\newcommand{\messageRV}{\mathfrak{M}}
\newcommand{\randomnessRate}{\tilde{R}}
\newcommand{\combinedRate}{\hat{R}}
\newcommand{\numCodebooks}{L}
\newcommand{\indexCodebooks}{\ell}
\newcommand{\RNDerivative}[2]{\frac{d{#1}}{d{#2}}}
\newcommand{\legitOutputDistribution}{\hat{P}}
\newcommand{\messageSpacePartition}{\Pi}
\newcommand{\messageSpacePartitionElement}{\pi}
\newcommand{\eavesdropperSuccessProbability}{\mathrm{Succ}}
\newcommand{\eavesdropperGuessProbability}{\mathrm{Guess}}
\newcommand{\messageProbabilityDistribution}{P_\messageRV}
\newcommand{\eavesdropperAdvantageWeak}{\mathrm{Adv}_\mathrm{weak}}
\newcommand{\eavesdropperAdvantageStrong}{\mathrm{Adv}_\mathrm{str}}
\newcommand{\eavesdropperAdvantageMutualInformation}{\mathrm{Adv}_\mathrm{inf}}
\newcommand{\eavesdropperAdvantageDistinguishing}{\mathrm{Adv}_\mathrm{dist}}
\newcommand{\eavesdropperAdvantageSemantic}{\mathrm{Adv}_\mathrm{sem}}
\newcommand{\povm}{F}
\newcommand{\evepmf}{q}
\newcommand{\legitInformationDensity}[3]{i_{#1}({#2};{#3})}
\newcommand{\generalProbabilitySpace}{\mathcal{X}}
\newcommand{\generalProbabilitySpaceElement}{x}
\newcommand{\generalRV}{X}
\newcommand{\hilbertSpaceDimension}{d}
\newcommand{\decoderpovm}{Y}
\newcommand{\typicalityOperatorProductDecoder}{\Phi}
\newcommand{\hnconstant}{c}
\newcommand{\pseudoinverse}[1]{#1^{-1}}
\newcommand{\generalNatural}{k}
\newcommand{\image}{\mathrm{im}}
\newcommand{\kernel}{\mathrm{ker}}
\newcommand{\costConstraint}{C}
\newcommand{\costFunction}{c}
\newcommand{\badCodewordsSet}{\mathbb{B}}
\newcommand{\generalFiniteSet}{V}
\newcommand{\generalFiniteSetElement}{v}
\newcommand{\inputAlphabetSubset}{B}
\newcommand{\LTwoOnReals}{L^2(\reals)}
\newcommand{\photonNumberNatural}{k}
\newcommand{\coherentStateVector}{\zeta}
\newcommand{\quantumNoisePower}{N}
\newcommand{\gaussiancqUnitary}[1]{U_{#1}}
\newcommand{\inputEnergy}{E}
\newcommand{\gordonFunction}{g}
\newcommand{\eveChannelOutput}[1]{\classicalQuantumChannel_{\eve, #1}}
\newcommand{\gibbsObservable}{G}
\newcommand{\gibbsParameter}{\beta}
\newcommand{\gibbsEnergy}{E}
\newcommand{\binaryEntropy}{h}
\newcommand{\gibbsEigenvalue}{g}
\newcommand{\gibbsEigenvector}{e}
\newcommand{\gibbsEntropy}[1]{H_{#1}}
\newcommand{\expConstant}{\alpha}
\newcommand{\multConstant}{c}
\newcommand{\indexModes}{j}
\newcommand{\numModes}{s}
\newcommand{\frequency}{\omega}
\newcommand{\outputStateEve}{\rho_{\eve}^{\blocklength}}
\newcommand{\evePostprocChannel}{\mathcal{N}}
\newcommand{\transmittivity}{\eta}
\newcommand{\receiverNoise}{\quantumNoisePower}
\newcommand{\beamSplitter}{B}
\newcommand{\additiveNoiseChannel}{\mathcal{N}}
\newcommand{\distance}{L_\mathrm{dist}}
\newcommand{\diameter}{L_\mathrm{diam}}
\newcommand{\beamsplitterOperation}{\mathcal{V}}
\newcommand{\env}{\textup{env}}
\newcommand{\identityMap}{\textrm{id}}
\DeclareMathOperator*{\argmin}{arg\,min}
\newcommand{\beamSplitterAngle}{\theta}
\newcommand{\distributedAccTo}{\sim}
\newcommand{\atypicalTermsFunc}{\mathcal{R}}
\newcommand{\resolvabilityBoundFunc}{\mathcal{W}_\mathrm{res}}
\newcommand{\costBoundFunc}{\mathcal{W}_\mathrm{cost}}
\newcommand{\codingBoundFunc}{\mathcal{W}_\mathrm{coding}}
\newcommand{\classicalCodingBoundFunc}{\mathcal{W}_\mathrm{class}}
\newtheorem{theorem}{Theorem}
\newtheorem{lemma}{Lemma}
\newtheorem{cor}{Corollary}
\newtheorem{Prob}{Problem}
\newtheorem{remark}{Remark}
\newtheorem{definition}{Definition}
\newenvironment{customthm}[1]
  {\innercustomthm}
  {\endinnercustomthm}
\newacronym{pmf}{p.m.f.}{probability mass function}
\newacronym{iid}{i.i.d.}{identically and independently distributed}
\newacronym{cq}{cq}{classical input, quantum output}
\newacronym{qq}{qq}{quantum input, quantum output}
\newacronym{ccq}{c-cq}{classical input, one classical output, one quantum output}
\newacronym{cqq}{c-qq}{classical input, quantum output}
\newacronym{povm}{POVM}{positive operator-valued measure}
\newacronym{pls}{PLS}{physical layer security}
\title{Semantic Security with Infinite-dimensional Quantum Eavesdropping Channel}
\author{
Matthias Frey\IEEEauthorrefmark{1},
Igor Bjelaković\IEEEauthorrefmark{2}\IEEEauthorrefmark{3},
Janis Nötzel\IEEEauthorrefmark{4}, and
Sławomir Stańczak\IEEEauthorrefmark{2}\IEEEauthorrefmark{3}\\
\vspace{0.3cm}
\IEEEauthorrefmark{1}Department of Electrical and Electronic Engineering, The University of Melbourne, Australia\\
\IEEEauthorrefmark{2}Technische Universität Berlin, Germany\\
\IEEEauthorrefmark{3}Fraunhofer Heinrich Hertz Institute, Berlin, Germany\\
\IEEEauthorrefmark{4}Technical University of Munich, Germany
}
\begin{document}

\maketitle

\blankfootnote{MF, IB, and SS acknowledge the financial support by the Federal Ministry of Education and Research of Germany in the program of ``Souverän. Digital. Vernetzt.'' Joint project 6G-RIC, project identification numbers: 16KISK020K and 16KISK030.
This work was also supported by the Federal Ministry of Education and Research of Germany (BMBF) under grant 16KIS1686, as well as by the German Research Foundation (DFG) under grant STA 864/15-1.

JN acknowledges the financial support by the DFG Emmy-Noether program under grant number NO 1129/2-1 and by the Federal Ministry of Education and Research of Germany in
the program of ``Souver\"an. Digital. Vernetzt.''. Joint project 6G-life, project identification number: 16KISK002, and via projects 16KISQ077 and 16KISQ039.

This work was presented in part at the 2022 IEEE Information Theory Workshop (ITW).
}

\begin{abstract}
We propose a new proof method for direct coding theorems for wiretap channels where the eavesdropper has access to a quantum version of the transmitted signal on an infinite-dimensional Hilbert space and the legitimate parties communicate through a classical channel or a \gls{cq} channel. The transmitter input can be subject to an additive cost constraint, which specializes to the case of an average energy constraint. This method yields errors that decay exponentially with increasing block lengths. Moreover, it provides a guarantee of a quantum version of semantic security, which is an established concept in classical cryptography and physical layer security. Therefore, it complements existing works which either do not prove the exponential error decay or use weaker notions of security. The main part of this proof method is a direct coding result on channel resolvability which states that there is only a doubly exponentially small probability that a standard random codebook does not solve the channel resolvability problem for the \gls{cq} channel. Semantic security has strong operational implications meaning essentially that the eavesdropper cannot use its quantum observation to gather any meaningful information about the transmitted signal. We also discuss the connections between semantic security and various other established notions of secrecy.
\end{abstract}

\section{Introduction}
\label{sec:intro}
Developments in the area of quantum computing in the last decades have put a spotlight on how vulnerable many state-of-the-art security techniques for communication networks are against attacks based on execution of quantum algorithms that exploit the laws  of quantum physics \cite{Shor97}. Another aspect of the rapid development in experimental quantum physics, however, has received significantly less attention in communication engineering: Instead of simply performing quantum processing steps on an intercepted classical signal, an attacker can use quantum measurement devices to exploit the quantum nature of the signals themselves. For instance, radio waves as well as visible light that is used in optical fiber communications both consist of photons. This fact was exploited in \cite{Ferrigno2008} to introduce the so-called photonic side channels.

Any communication network that involves a vast number of interconnected network elements and systems communicating with each other, by its very nature exposes a large attack surface to potential adversaries. This is even exacerbated if many of the communication paths are wireless, as is the case in cellular networks such as 6G. Examples for possible attacks that can be carried out against various parts of such networks are algorithm implementation attacks, jamming attacks, side-channel attacks, and attacks on the physical layer of the communication system. This multi-faceted nature of the threat necessitates a diverse range of countermeasures. It has therefore been a longstanding expectation that established defenses based on cryptography will need to be complemented with defenses based on \gls{pls} which is a promising approach to protect against lower-layer attacks. It can be used as an additional layer of security to either increase the overall system security or reduce the complexity of cryptographic algorithms and protocols running at higher layers of the protocol stack. We point out that complexity may be a key aspect in massive wireless networks of the future such as 6G, where many low-cost, resource and computationally constrained devices will be deployed, making it difficult to use advanced cryptographic techniques.

In cryptography, sequences of bits are protected against attacks. The main threat posed by the recent progress in the implementation of quantum computers and known quantum algorithms \cite{Shor97, grover1996fast} which directly affect the security of certain cryptographic schemes, therefore, is that an attacker may have the ability to process cipher texts with quantum computers. This has triggered significant research and development efforts in the field of post-quantum cryptography \cite{Buchmann2016}. The goal in this field is to develop new cryptographic algorithms and protocols that are resistant to attacks by quantum computers. The threat for defense mechanisms based on \gls{pls}, on the other hand, is of a different nature: Since there is no assumption regarding the computational capability of attackers needed to guarantee security, such techniques are inherently safe against attacks with quantum computers that process classically represented signals. But since \gls{pls} seeks to protect the communication signals themselves against attacks, they are vulnerable to violations of system assumptions regarding what type of signal the attacker can intercept. Therefore, security is not guaranteed if an attack is carried out with quantum hardware such as optical quantum detectors and similar quantum measurement equipment. For example, the photon emission from integrated circuits has been exploited in \cite{Ferrigno2008, Juliane2014roleofphotons} to read out the secret keys used. The underlying physical process represents a quantum side channel since the properties of the emitted photons strongly depend on the operation that the involved device performs and the data being processed. This shortcoming of (classical) \gls{pls} techniques can be addressed by establishing results for \gls{pls} that take the quantum nature of wireless communication signals and side channels in transmitters and receivers into account. For example, the work \cite{vazquez-castro} defined the notion of an exclusion region with the goal of guaranteeing information-theoretic security as long as no wiretapper can interact with signals inside the exclusion region.

Another important aspect is the availability of bounds for secure communication in the finite block length regime. These are important for many practical purposes, such as the construction of secrecy maps for indoor and outdoor wireless networks in \cite{utkovski2019, utkovski2021, schulz2021} that we expect to be crucial for the integration of \gls{pls} into mobile communication networks.

In this work, we prove direct coding results for wiretap channels that take both of these aspects into consideration. The channel models considered have a quantum output at the eavesdropper's channel terminal, and the derived bounds can be numerically evaluated at finite block lengths. Besides this, we also discuss operational implications of the resulting security guarantees and compare them to other notions of security that are commonly used in the literature.

\subsection{Prior work}
\label{sec:literature}
The results we provide in this paper are rooted in and draw from several branches of research like classical \gls{pls} and its connection to cryptography, channel resolvability, and quantum Shannon theory. For this reason, we give a short overview of existing literature in these fields which is closely related to this work.

\paragraph{Classical \gls{pls}}
\label{par:pls}
An important branch of research with long history in \gls{pls} addresses fundamental bounds to confidentiality of communication, which is traditionally based on the communication model of the wiretap channel \cite{WynerWiretap, csiszarkoerner1978, CsiszarSecrecy, HayashiResolvability, BlochStrongSecrecy, watanabe2014optimal}.

It is important to emphasize that the measure of confidentiality itself has undergone a tremendous evolution. While the results of \cite{WynerWiretap, csiszarkoerner1978} use equivocation as the underlying measure, \cite{CsiszarSecrecy, HayashiResolvability, BlochStrongSecrecy, watanabe2014optimal} rely on strong secrecy or closely related measures as the security metric. The disadvantage of these security metrics is that they allow no or very limited operational interpretation, i.e., they do not provide a way to quantify leakage of information for specific types of eavesdropping attacks. A confidentiality measure that has been known in the cryptography community for a long time and allows a clear operational interpretation is semantic security \cite{GoldwasserProbabilistic}. It has been adopted in the \gls{pls} community as a suitable measure of secrecy and used for wiretap channel coding problems in \cite{bellare2012semantic, CuffSoftCovering, GoldfeldSemantic, GoldfeldSemanticConference}.

\paragraph{Achievability for wiretap channels with quantum outputs}
\label{par:q-achievability}
A quantum version of the wiretap channel was first analyzed in~\cite{schumacher1998quantum} for a one-shot scenario. \cite{DevetakPrivateCapacity,cai2004quantum} derive non-asymptotic results for general finite-dimensional wiretap channels with classical input and quantum outputs as well as in the case of quantum input and quantum outputs under the strong secrecy criterion, but due to the proof methods used, semantic security is implicitly established as well. In~\cite{hayashi2015quantum}, error exponents and equivocation rates are established for the case that there is only randomness of limited quality available at the transmitter. \cite{guha2008bosonicwiretap} extends the result from \cite{DevetakPrivateCapacity} to pure-loss bosonic channels under the strong secrecy criterion. The series of works~\cite{wildePublicPrivateTradeoffs2012,wilde2012quantumtradeoff,qi2017capacityamplifierchannels,wilde2018privateenrgyconstrained} explores a trade-off region between semantically secure communication, public communication and secret key generation which can be specialized to just semantically secure communication in a straightforward way. The channel models considered are the pure-loss bosonic channel, the thermal-noise bosonic channel, the quantum amplifier channel and general infinite-dimensional quantum channels. All achievability results are of asymptotic nature. In~\cite{dingguha2018noisyfeedback}, a feedback scenario is considered under weak secrecy. One-shot achievability bounds for the finite-dimensional broadcast channel with private and confidential messages are given in~\cite{renes2011noisy,wilde2017position,salek2020single}. In contrast to the prior works discussed, the present work derives achievability results for a general infinite-dimensional \gls{cq} channels, providing bounds that can be numerically evaluated for finite block lengths.

\paragraph{Converse results for wiretap channels with quantum outputs}
\label{par:q-converse}
Multi-letter converse results for finite-dimensional channels appear in~\cite{DevetakPrivateCapacity,cai2004quantum}. One-shot converse results for finite-dimensional broadcast channels with private and confidential messages are proposed in~\cite{renes2011noisy,wilde2017position}. In~\cite{guha2008bosonicwiretap,wilde2012quantumtradeoff}, single-letter converse results for pure-loss bosonic channels are given, but they rely on an unproven conjecture, the entropy photon number inequality. Converse results for general quantum channels and certain bosonic channels that do not rely on unproven conjectures are given in~\cite{wilde2018privateenrgyconstrained,qi2017capacityamplifierchannels,hayashi2015quantum}. Single-letter versions are available only in the case of degradable channels, otherwise the converse results are multi-letter.

\paragraph{Semantic security for channels with quantum outputs}
\label{par:q-semantic}
As mentioned above, many works have either given results that imply semantic security or established them implicitly in their proofs. The works~\cite{hayashi2015quantum,boche2022semantic}, however, have considered the question of security notion explicitly and established an interesting connection between strong secrecy and semantic security. \cite{hayashi2015quantum} contains a construction that can transform any wiretap code which achieves strong secrecy over a \gls{cqq} wiretap channel into a code that ensures semantic security and has (asymptotically) the same rate. As was observed in~\cite{boche2022semantic}, this holds also in the infinite-dimensional case. For the finite-dimensional case,~\cite{boche2022semantic} gives an alternative construction for this transformation; the advantage here is that the construction is explicit so that it can be expected that if it is used on a practically feasible code that achieves strong secrecy, the transformation will result in a semantically secure wiretap code that retains the practical feasibility. The prior works study semantic security either implicitly or explicitly under one of multiple possible definitions. In this work, we connect these definitions with each other and with strong and weak secrecy by showing, similarly as was done before for classical channels~\cite{bellare2012cryptographic}, what implications hold between them. In doing so, we also pay special attention to the subtleties that arise in the infinite-dimensional case and make explicit which of these implications continue to hold.

\paragraph{Resolvability for \gls{cq} channels}
\label{par:q-resolvability}
To the best of our knowledge, the first works that contain resolvability results for \gls{cq} channels are~\cite{DevetakPrivateCapacity,cai2004quantum}. Further results appeared in~\cite{hayashi2006quantum} and a much more in-depth treatment with generalizations to the case of imperfect randomness at the transmitter can be found in~\cite{hayashi2015quantum}. All of these works are specific to the finite-dimensional case, while in this paper, we propose results that are valid for general infinite-dimensional \gls{cq} channels.

\subsection{Contribution and Outline}
\label{sec:contributions}
The contribution of this paper can be summarized as follows:
\begin{itemize}
    \item We establish a semantic security based direct coding theorem for wiretap channels with infinite-dimensional quantum output observed by the eavesdropper (cf. Fig.~\ref{fig:wiretap}). This result applies both to the case where the legitimate parties communicate over a (possibly) continuous-alphabet classical channel, and to the case in which they also use a channel with classical input and infinite-dimensional quantum output.
    \item We develop a new proof method for wiretap channels with the eavesdropper having access to a quantum version of the input signal. This method also establishes a resolvability result for \gls{cq} channels with infinite-dimensional output. Our method is based on symmetrization arguments used to prove non-asymptotic versions of uniform laws of large numbers based on Rademacher complexity. One advantage of this new method, besides its amenability to infinite-dimensional channels, is that it naturally yields nonasymptotic results that are valid for any block length and not only asymptotically for block lengths tending to infinity.
    \item We state versions of our results that are amenable to numerical evaluations for finite block lengths. While it may not be possible from these results to obtain nontrivial bounds for very small block lengths, it is possible to numerically determine minimum block lengths for which our theorems guarantee certain performance metrics.
    \item We establish our results under additive cost constraints. These types of constraints specialize in particular to the case of average input energy constraints.
    \item We illustrate the finite-block length nature of our methods with numerical evaluations of the obtained error bounds in the special case where both the legitimate communication parties and the eavesdropper use Gaussian \gls{cq} channels.
    \item We discuss various established secrecy metrics for quantum communication systems and show how they are related to each other.
\end{itemize}
In Section~\ref{sec:main-result}, we introduce the wiretap channel models that are considered in this paper, and state the main results. In Section~\ref{sec:security}, we give the definitions of semantic security and other established secrecy metrics for quantum communication systems, and we discuss how they are related. In particular, we show how the results stated in Section~\ref{sec:main-result} imply a semantic security guarantee. To establish the main results, we need a theorem on \gls{cq} channel resolvability and a coding theorem for the \gls{cq} channel, which we state and briefly discuss in Section~\ref{sec:res-code}. Section~\ref{sec:proofs} contains the proofs of the theorems in Sections~\ref{sec:main-result} and~\ref{sec:res-code}. Finally, in Section~\ref{sec:gaussian-cq}, we specialize our results to Gaussian \gls{cq} channels. For the example of a set of system parameters that is plausible for a real-world optical communication system, we numerically evaluate and plot the bounds derived in this paper. A number of technical lemmas are relegated to the appendix.

\section{Problem Statement and Main Result}
In this section, we introduce the wiretap channel model and state our main results.
\label{sec:main-result}
\begin{figure}
    \centering
    \begin{subfigure}[t]{0.45\textwidth}
        \includegraphics[width=\textwidth]{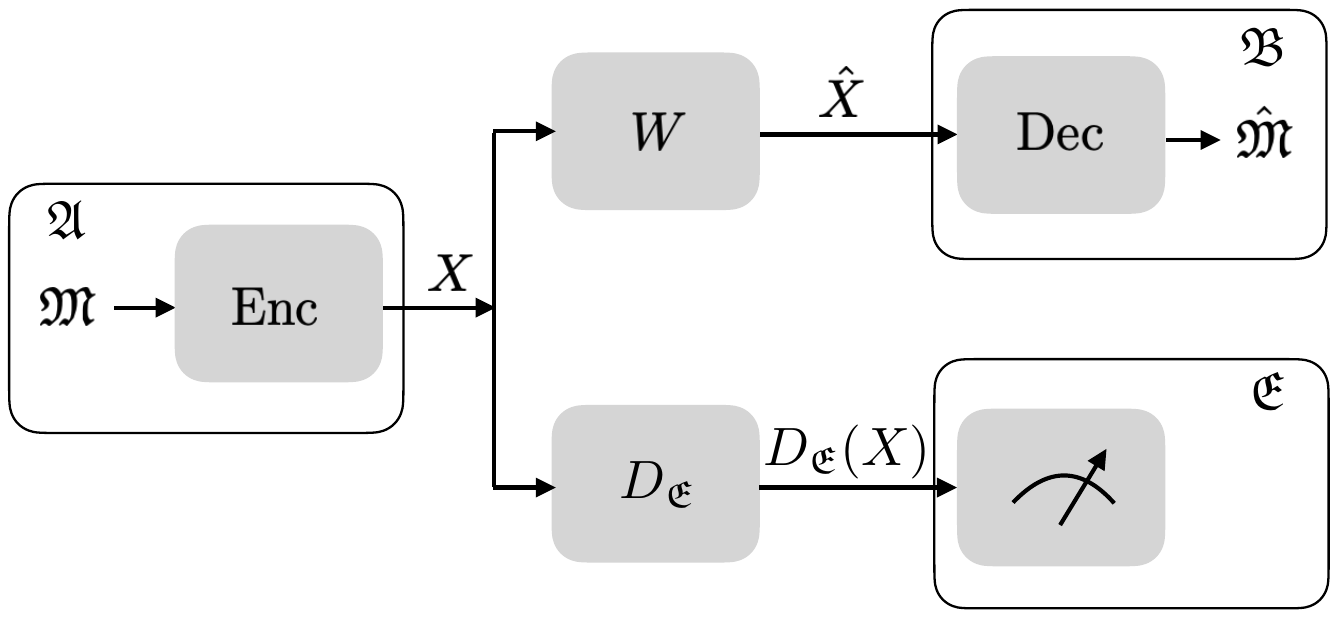}
        \caption{\acrshort{ccq} wiretap channel consisting of a classical channel $ \bobChannel$ with continuous input and output alphabet and a channel $ \eveChannel$ with classical continuous input alphabet and quantum output on separable infinite-dimensional Hilbert space.}
        \label{fig:ccq}
    \end{subfigure}
    \hfill
    \begin{subfigure}[t]{0.45\textwidth}
        \includegraphics[width=\textwidth]{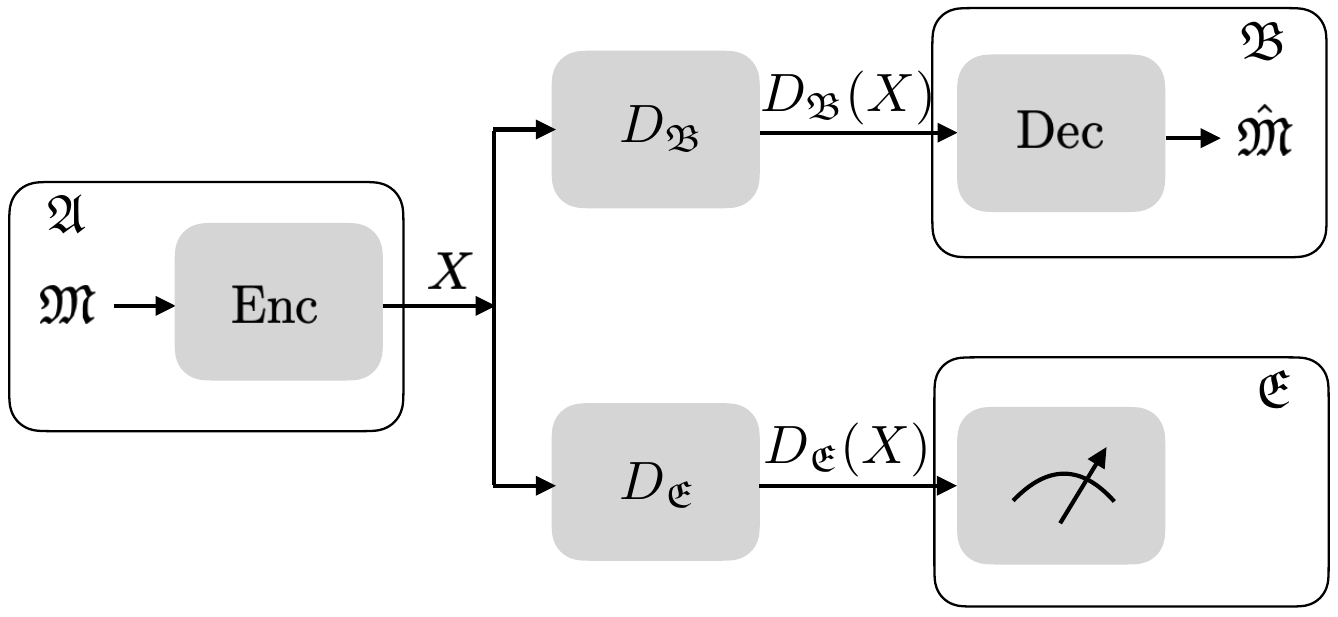}
        \caption{\acrshort{cqq} wiretap channel consisting of the pair of channels $(\bobcqChannel, \eveChannel)$ both with classical continuous input alphabet and quantum output on separable infinite-dimensional Hilbert space.}
        \label{fig:cq}
    \end{subfigure}

    \caption{Quantum wiretap channel models.}
    \label{fig:wiretap}
\end{figure}
\subsection{Notation and Conventions}\label{sec:notation}
Before we can make the formal problem statement, it is necessary to introduce some preliminaries. Let $\hilbertSpace$ be a separable  Hilbert space over $\mathbb{C} $ and let
\[
\boundedOperators{\hilbertSpace}:=\left\{\generalOperator: \hilbertSpace\to \hilbertSpace: \generalOperator \textrm{ is linear and } \sup_{\hilbertSpaceElement\in \hilbertSpace: || \hilbertSpaceElement||\le 1} || \generalOperator \hilbertSpaceElement||< \infty \right\}
\]
be the set of bounded linear operators on $\hilbertSpace$ where $ || \cdot ||$ denotes the norm on the Hilbert space $\hilbertSpace$ induced by the inner product. The set $\traceClass{\hilbertSpace}$ of trace class operators \cite{reed1980functional} is defined by
\[
\traceClass{\hilbertSpace}:=\left\{\generalOperator\in \boundedOperators{\hilbertSpace}: \trace (\adjoint{\generalOperator} \generalOperator)^{\frac{1}{2}}< \infty \right\}.
\]
Endowed with the norm $\traceNorm{\cdot}: \traceClass{\hilbertSpace}\to \reals_{+}$ defined by
\[
\traceNorm{A}:=\trace (\adjoint{\generalOperator} \generalOperator)^{\frac{1}{2}},
\]
for $\generalOperator\in \traceClass{\hilbertSpace}$, the pair
$(\traceClass{\hilbertSpace}, \traceNorm{\cdot})$ is a Banach space (cf. \cite{reed1980functional}) and is called the trace class on $\hilbertSpace$.

The set of states or density operators $\densityOperators{\hilbertSpace}\subset \traceClass{\hilbertSpace} $ is given by
\begin{equation}\label{eq:def-state}
\densityOperators{\hilbertSpace}
:=
\left\{
  \generalQuantumState \in \traceClass{\hilbertSpace}:~
  \generalQuantumState \geq 0,~
  \trace \generalQuantumState = 1
\right\},
\end{equation}
where $\geq$ denotes the positive semi-definite partial ordering of operators, and we follow the tradition of using lowercase Greek letters for states.
We will also occasionally use the operator norm $\operatorNorm{\cdot}$ on $\boundedOperators{\hilbertSpace}$ which is given by
\[
\operatorNorm{A}:= \sup_{\hilbertSpaceElement\in \hilbertSpace: || \hilbertSpaceElement||\le 1} || \generalOperator \hilbertSpaceElement||
\]
for $\generalOperator\in \boundedOperators{\hilbertSpace}$, and for the reader's convenience we summarize the specific properties of the operator and trace norms that we use in Lemma~\ref{lemma:norm-basics} in Appendix~\ref{appendix:funcana}. Note that the pair $(\boundedOperators{\hilbertSpace}, \operatorNorm{\cdot})$ is a Banach space as well (cf. \cite{reed1980functional}).

Given any finite set $\generalFiniteSet$, a $\generalFiniteSet$-valued \emph{\gls{povm}} \cite{heinosaari2012mathematical, Busch-Quantum-Measurement} is a sequence $(\povm_\generalFiniteSetElement)_{\generalFiniteSetElement \in \generalFiniteSet}$ of operators on $\hilbertSpace$ such that for all $\generalFiniteSetElement$, $\povm_\generalFiniteSetElement \geq 0$, and
\begin{equation}\label{eq:def-povm}
\sum_{\generalFiniteSetElement \in \generalFiniteSet}
  \povm_\generalFiniteSetElement
=
\identityOperator.
\end{equation}
The \gls{povm} $(\povm_\generalFiniteSetElement)_{\generalFiniteSetElement \in \generalFiniteSet}$ is a mathematical description of a measurement with the possible outcomes $ \generalFiniteSetElement \in \generalFiniteSet$. For a system in the state $\generalQuantumState \in \densityOperators{\hilbertSpace}$, the probability of the outcome $\generalFiniteSetElement \in \generalFiniteSet$ when performing the measurement represented by the \gls{povm} $(\povm_\generalFiniteSetElement)_{\generalFiniteSetElement \in \generalFiniteSet}$ is given by the Born rule as
\begin{equation*}
p(\generalFiniteSetElement):=\trace\left(\generalQuantumState \povm_{\generalFiniteSetElement}\right).
\end{equation*}
Since $\generalQuantumState\ge 0$ and $\povm_{\generalFiniteSetElement}\ge 0$, we have $p(\generalFiniteSetElement)\ge 0$.  The defining relations (\ref{eq:def-state}), (\ref{eq:def-povm}), and the linearity of the trace show that
\[
\sum_{\generalFiniteSetElement\in \generalFiniteSet}p(\generalFiniteSetElement)=1,
\]
so that, indeed, $p$ is a \gls{pmf} on $\generalFiniteSet$.

Throughout the paper, we use the following abbreviations. For a Hilbert space $\hilbertSpace$, $\generalOperator\in \boundedOperators{\hilbertSpace}$, and $\blocklength\in \naturals$, $\hilbertSpace^{\otimes \blocklength}:=\hilbertSpace\otimes \ldots \otimes \hilbertSpace$ stands for $\blocklength$-fold tensor product of $\hilbertSpace$ and $\generalOperator^{\otimes \blocklength}:= \generalOperator\otimes\ldots \otimes \generalOperator$ denotes the $\blocklength$-fold tensor power of $\generalOperator$. For a set $\inputAlphabet$ and $\blocklength \in \naturals$ we set $\inputAlphabet^{\blocklength}:=\inputAlphabet \times \ldots \times \inputAlphabet$ for $\blocklength$-fold Cartesian product of the set $\inputAlphabet$. Moreover, we use the abbreviation $\inputAlphabetElement^{\blocklength}:=(x_1, \ldots , x_{\blocklength})\in \inputAlphabet^{\blocklength}$.

Throughout this work, expectations and integrals of operator-valued random variables and functions will play an important role. Since we do not assume that the input alphabet $\inputAlphabet$ is necessarily finite or discrete, we assume that $\inputAlphabet$ is equipped with a $\sigma$-algebra $\inputAlgebra$ which represents a collection of measurable sets. This means in particular that the probability space underlying these expectations is possibly infinite as well.  The expectations are therefore formally defined as Bochner integrals on the Banach space $(\traceClass{\hilbertSpace}, \traceNorm{\cdot})$. The exact definitions and many important facts about Bochner integration can, e.g., be found in~\cite[Section V.5]{yosida1980functional}. Therefore, many technical questions of measurability and integrability arise in our proofs. We summarize the preliminaries on measurability that we use in this paper in Lemma~\ref{lemma:spectral-decompositions-measurable} and the necessary preliminaries on Bochner integration in Lemma~\ref{lemma:bochner-integral-basics} in Appendix~\ref{appendix:funcana}. When referring to the notions of continuity (measurability) of functions, it is important to specify with respect to which topology ($\sigma$-algebra) the function in question is continuous (measurable). We therefore adopt the following conventions: When the domain or range of a function is given as $\inputAlphabet$, measurability is with respect to $\inputAlgebra$. When the domain or range is given as $\boundedOperators{\hilbertSpace}$, the continuity (measurability) of the function is with respect to the topology (Borel $\sigma$-algebra) induced by the operator norm. When the domain or range is given as $\traceClass{\hilbertSpace}$ or $\densityOperators{\hilbertSpace}$, the continuity (measurability) of the function is with respect to the topology (Borel $\sigma$-algebra) induced by the trace norm. When the domain or range is a subset of $\reals$ or $\complexNumbers$, we consider the topology (Borel $\sigma$-algebra) induced by the Euclidean norm.

\subsection{System Model and Main Result}\label{sec:system-model}
\glsreset{ccq}
\glsreset{cqq}
In this work, we study wiretap channels with \gls{ccq} and wiretap channels with \gls{cqq}.
Formally, a \gls{ccq} wiretap channel is a pair $(\bobChannel, \eveChannel)$, where $\bobChannel$ is a stochastic kernel (i.e., a classical channel), $\eveChannel$ is a \gls{cq} channel, and $\bobChannel$ and $\eveChannel$ share the same input alphabet. The idea behind this definition is that the eavesdropper observes a quantum system instead of a classical output while the legitimate receiver observes a classical output. A \gls{cqq} wiretap channel is a pair $(\bobcqChannel,\eveChannel)$ where both $\bobcqChannel$ and $\eveChannel$ are \gls{cq} channels.

In the following, we describe the system model (which is also depicted in Fig.~\ref{fig:wiretap}) in more detail. The transmitter $\alice$ transmits a channel input $\inputRV$ which is a random variable ranging over a measurable space $(\inputAlphabet,\inputAlgebra)$, the input alphabet. The eavesdropper $\eve$ observes the output of a \gls{cq} channel which is described by a measurable map $\eveChannel: \inputAlphabet \rightarrow \densityOperators{\eveHilbertSpace}$ where $\eveHilbertSpace$ is a separable Hilbert space. The measurability of $\eveChannel$ is with respect to $\inputAlgebra$ and the Borel $\sigma$-algebra induced by the trace norm. In the case of the \gls{cqq} wiretap channel (Fig.~\ref{fig:cq}), the legitimate receiver $\bob$ also observes the output of a \gls{cq} channel, described by a measurable map  $\bobcqChannel: \inputAlphabet \rightarrow \densityOperators{\bobHilbertSpace}$, where $\bobHilbertSpace$ is a separable Hilbert space. In the case of the \gls{ccq} wiretap channel (Fig.~\ref{fig:ccq}), the legitimate receiver observes the output $\legitOutputRV$ which is a random variable ranging over another measurable space $(\legitOutputAlphabet,\legitOutputAlgebra)$ and its relationship with $\inputRV$ is described by a stochastic kernel $\bobChannel: \legitOutputAlgebra \times \inputAlphabet\to [0,1]$, the legitimate receiver's (classical) channel.

For $\inputAlphabetElement^\blocklength :=(\inputAlphabetElement_1, \ldots ,\inputAlphabetElement_{\blocklength})\in \inputAlphabet^\blocklength$ and $B_1,\ldots, B_n \in \legitOutputAlgebra$, we set
\begin{equation*}
    \bobChannel^\blocklength(\inputAlphabetSubset_1\times \ldots \times \inputAlphabetSubset_n,\inputAlphabetElement^\blocklength ):= \prod_{\generalIndex=1}^{\blocklength}\bobChannel (\inputAlphabetSubset_\generalIndex, \inputAlphabetElement_\generalIndex),\
\end{equation*}
for classical channels $\bobChannel$, and 
\begin{equation*}
  \classicalQuantumChannel^{\blocklength} (\inputAlphabetElement^\blocklength) := \bigotimes_{\generalIndex=1}^{\blocklength} \classicalQuantumChannel(\inputAlphabetElement_\generalIndex)= \classicalQuantumChannel(\inputAlphabetElement_1)\otimes \dots \otimes \classicalQuantumChannel(\inputAlphabetElement_{\blocklength}),
\end{equation*}
for \gls{cq} channels $\classicalQuantumChannel$, i.e., we consider $\blocklength$-th memoryless extensions of the channels $\bobChannel$ and $\classicalQuantumChannel$.

A \emph{wiretap code} for the \gls{ccq} channel $(\bobChannel, \eveChannel)$ (for the \gls{cqq} channel $(\bobcqChannel, \eveChannel)$) with message set size $\codebookSize$ and \emph{block length} $\blocklength$ is a pair $(\encoder, \decoder)$ such that
\begin{itemize}
 \item The encoder is a stochastic kernel mapping $\encoder: \inputAlgebra^{\otimes \blocklength} \times \{1, \dots, \codebookSize\} \to [0,1]$, where $\inputAlphabet$ is the common input alphabet of $\bobChannel$ and $\eveChannel$ (of $\bobcqChannel$ and $\eveChannel$).
 \item In case of a classical channel $\bobChannel$ to the legitimate receiver, the decoder is a deterministic mapping $ \decoder: \legitOutputAlphabet^\blocklength \to \{1, \dots, \codebookSize\}$, where $\legitOutputAlphabet$ is the output alphabet of $\bobChannel$.
 \item In case of a \gls{cq} channel $\bobcqChannel$ to the legitimate receiver, the decoder is a $\{1, \dots, \codebookSize\}$-valued \gls{povm} $(\decoderpovm_\codewordIndex)_{\codewordIndex=1}^\codebookSize$ on $\bobHilbertSpace$.
\end{itemize}

The \emph{rate} of a wiretap code is defined as $\log (\codebookSize) / \blocklength$. In this paper, we always use the logarithm (as well as the exponential function denoted $\exp$) with Euler's number as a base and consequently, the rate is given in nats per channel use. For the \gls{ccq} channel and $\decodingError\ge 0$, we say that the wiretap code has \emph{average error} $\decodingError$ if
\begin{equation}
\label{eq:def-average-error}
\Expectation_\messageRV
  \Probability\big(
    \decoder (\legitOutputRV^\blocklength)
    \neq
    \messageRV
  \big)
\leq
\decodingError
\end{equation}
where $ \legitOutputRV^\blocklength$ is the random variable observed by the legitimate receiver, resulting from encoding and transmission of a uniformly distributed message $\messageRV$ through the channel $ \bobChannel^\blocklength$. For the \gls{cqq} channel and $\decodingError\ge 0$, we say that the wiretap code has \emph{average error} $\decodingError$ if
\begin{equation}
\label{eq:def-average-error-cq}
\Expectation_\messageRV
  \trace\left(
    \bobcqChannel^\blocklength \circ \encoder (\messageRV)
    (\identityOperator - \decoderpovm_\messageRV)
  \right)
\leq
\decodingError
\end{equation}
for a uniformly distributed message $\messageRV$. The composition of the stochastic kernel mapping  $\encoder: \inputAlgebra^{\otimes \blocklength} \times \{1, \dots, \codebookSize\} \to [0,1]$ and the \gls{cq} channel $\bobcqChannel^\blocklength:\inputAlphabet^\blocklength\to \densityOperators{\bobHilbertSpace^{\otimes \blocklength}} $ is given by
\begin{equation}\label{eq:def-channel-composition}
\bobcqChannel^\blocklength \circ \encoder (\codewordIndex) := \int_{\inputAlphabet^\blocklength} \bobcqChannel^\blocklength (\inputAlphabetElement^\blocklength)\encoder (d \inputAlphabetElement^\blocklength, \codewordIndex),
\end{equation}
for $\codewordIndex\in \{1, \dots, \codebookSize\}$. Note that the integral is well-defined by the measurability of $\bobcqChannel$ and that the integral exists in Bochner sense by Lemma \ref{lemma:bochner-integral-basics}-\ref{item:bochner-integral-basics-existence}. Moreover, $\bobcqChannel^\blocklength \circ \encoder (\codewordIndex)\in \densityOperators{\bobHilbertSpace^{\otimes \blocklength}}$ by Lemma~\ref{lemma:bochner-integral-basics}-\ref{item:bochner-integral-basics-trace} and Lemma~\ref{lemma:bochner-integral-basics}-\ref{item:bochner-integral-basics-bounded-functional}.
\begin{remark}
Despite their different appearance, the expressions in (\ref{eq:def-average-error}) and (\ref{eq:def-average-error-cq}) are closely related, as we will briefly explain. Introducing the indicator functions of the decoding sets $\indicatorFunction{\decoder^{-1}(\codewordIndex)}$ for $\codewordIndex\in \{1, \ldots, \codebookSize  \}$, we can write (\ref{eq:def-average-error}) as 
\begin{equation}\label{eq:remark-average-errors-1}
\Expectation_\messageRV
  \Probability\big(
    \decoder (\legitOutputRV^\blocklength)
    \neq
    \messageRV
  \big)
  = \frac{1}{\codebookSize}\sum_{\codewordIndex=1}^{\codebookSize}\sum_{\substack{\hat{\codewordIndex}=1\\ \hat{\codewordIndex}\neq \codewordIndex}}^{\codebookSize} \int \left( \int\indicatorFunction{\decoder^{-1}(\hat{\codewordIndex})}(\legitOutputAlphabetElement^{\blocklength})\bobChannel^\blocklength (d \legitOutputAlphabetElement^{\blocklength}, \inputAlphabetElement^\blocklength)\right) \encoder (d \inputAlphabetElement^\blocklength, \codewordIndex).
\end{equation}
On the other hand, using (\ref{eq:def-povm}) for the decoding \gls{povm} $(\decoderpovm_\codewordIndex)_{\codewordIndex=1}^\codebookSize$ together with the linearity of the trace and the integral, we have
\begin{equation}\label{eq:remark-average-errors-2}
\Expectation_\messageRV
  \trace\left(
    \bobcqChannel^\blocklength \circ \encoder (\messageRV)
    (\identityOperator - \decoderpovm_\messageRV)
  \right)
  = \frac{1}{\codebookSize}\sum_{\codewordIndex=1}^{\codebookSize}\sum_{\substack{\hat{\codewordIndex}=1\\ \hat{\codewordIndex}\neq \codewordIndex}}^{\codebookSize} \int \trace \left(\decoderpovm_{\hat{\codewordIndex}}\bobcqChannel^\blocklength (\inputAlphabetElement^\blocklength)\right)\encoder (d \inputAlphabetElement^\blocklength, \codewordIndex) .
  \end{equation}
  The inner integral in (\ref{eq:remark-average-errors-1}) and the term containing the trace in (\ref{eq:remark-average-errors-2}) both describe the probability that the message $\hat{\codewordIndex}$ is detected/measured given that the channel input is $\inputAlphabetElement^{\blocklength}\in \inputAlphabet^{\blocklength}$.
\end{remark}
For $\securityNumber \ge 0$, we say that the wiretap code $(\encoder, \decoder)$ has \emph{distinguishing security level} $\securityNumber$ if
\begin{equation}
\label{eq:def-distinguishing-security}
\forall \codewordIndex_1, \codewordIndex_2 \in \{1, \dots, \codebookSize\}
~~
\traceNorm{
  \eveChannel^\blocklength \circ \encoder (\codewordIndex_1)
  -
  \eveChannel^\blocklength \circ \encoder (\codewordIndex_2)
}
\leq
\securityNumber,
\end{equation}
where $\eveChannel^\blocklength \circ \encoder$ is defined in an analogous way to (\ref{eq:def-channel-composition}).

The main results of this work state that wiretap codes for certain \gls{ccq} and \gls{cqq} channels exist that simultaneously achieve low average error and low distinguishing security level. Achievable secrecy rates are characterized by two information quantities. Information density and mutual information of the classical channel $\bobChannel$ associated with an input distribution $\inputDistribution$ on $\inputAlphabet$ are defined in the usual way as
\begin{equation*}
\legitInformationDensity{\inputDistribution}{
  \inputAlphabetElement^\blocklength
}{
  \legitOutputAlphabetElement^\blocklength
}
:=
\log
\RNDerivative{\bobChannel^\blocklength(\cdot, \inputAlphabetElement^\blocklength)}
             {\legitOutputDistribution}
  (\legitOutputAlphabetElement^\blocklength)
,~~
\legitInformation
:=
\Expectation
  \legitInformationDensity{\inputDistribution}{
    \inputRV
  }{
    \legitOutputRV
  }
\end{equation*}
where $\legitOutputDistribution$ is the distribution of the output of $\bobChannel$ under the input distribution $\inputDistribution$, and $\legitOutputRV$ is a random variable distributed according to $\legitOutputDistribution$ (or $\legitOutputRV \distributedAccTo \legitOutputDistribution$ for short). For $\generalQuantumState \in \densityOperators{\hilbertSpace}$, we define the \emph{von Neumann entropy}
\[
\vonNeumannEntropy{\generalQuantumState}
:=
-
\trace(
  \generalQuantumState \log \generalQuantumState
)
\]
with the convention $\vonNeumannEntropy{\generalQuantumState} = \infty$ if $\generalQuantumState \log \generalQuantumState \notin \traceClass{\hilbertSpace}$. For an input distribution $\inputDistribution$, $\inputRV \distributedAccTo \inputDistribution$, and a \gls{cq} channel $\classicalQuantumChannel: \inputAlphabet \rightarrow \densityOperators{\hilbertSpace}$, we define 
\begin{equation*}
\classicalQuantumChannel_\inputDistribution := \Expectation \classicalQuantumChannel(\inputRV)
\end{equation*}
to be the density operator of the output of the channel under input distribution $\inputDistribution$. Note that since $\traceNorm{\classicalQuantumChannel(\inputAlphabetElement)}=1$ for all $ \inputAlphabetElement\in \inputAlphabet$, the expectation $ \Expectation \classicalQuantumChannel(\inputRV)$ exists by Lemma~\ref{lemma:bochner-integral-basics}-\ref{item:bochner-integral-basics-existence}. Moreover, von Neumann entropy $\vonNeumannEntropy{\cdot}$ is lower semi-continuous with respect to the topology induced by the trace norm $\traceNorm{\cdot}$ (cf. \cite[Theorem 11.6]{holevo2019quantum}). Consequently, the map $\inputAlphabet\ni \inputAlphabetElement \mapsto\vonNeumannEntropy{\classicalQuantumChannel(\inputAlphabetElement)}$ is measurable. Therefore, for a random variable $\inputRV \distributedAccTo \inputDistribution$, we can define the \emph{Holevo information} as
\begin{equation*}
\quantumInformation
:=
\Expectation
  \trace\big(
    \classicalQuantumChannel(\inputRV)
    \log \classicalQuantumChannel(\inputRV)
  \big)
    -
\trace\big(
  \classicalQuantumChannel_\inputDistribution
  \log \classicalQuantumChannel_\inputDistribution
\big)
=
\vonNeumannEntropy{\classicalQuantumChannel_\inputDistribution}
-
\Expectation
  \vonNeumannEntropy{\classicalQuantumChannel(\inputRV)},
\end{equation*}
where we adopt the convention that $\quantumInformation = \infty$ whenever $\vonNeumannEntropy{\classicalQuantumChannel_\inputDistribution} = \infty$ or if the integral
$
\Expectation
  \vonNeumannEntropy{\classicalQuantumChannel(\inputRV)}
$
does not exist. More information on definition, measurability, and structural properties of $ \quantumInformation$ can be found in \cite{Holevo-Shirokov-Continuous-Ensembles}.

\begin{definition}
\label{def:cost-constraint}
An \emph{additive cost constraint} $(\costFunction,\costConstraint)$ consists of a measurable cost function $\costFunction: \inputAlphabet \rightarrow [0,\infty)$ and a constraint $\costConstraint \in (0,\infty)$. A tuple $\inputAlphabetElement^\blocklength \in \inputAlphabet^\blocklength$ satisfies the additive cost constraint $(\costFunction,\costConstraint)$ if
\[
\costFunction(\inputAlphabetElement_1)
+
\dots
+
\costFunction(\inputAlphabetElement_\blocklength)
\leq
\blocklength\costConstraint.
\]
We say that the cost constraint is \emph{compatible} with a distribution $\inputDistribution$ on $\inputAlphabet$ if, for $\inputRV \distributedAccTo \inputDistribution$, we have
\[
\Expectation \costFunction(\inputRV) < \costConstraint,
\]
and if there is $\generalReal \in (0,\infty)$ with $\Expectation \exp(\generalReal \costFunction(\inputRV)) < \infty$.
\end{definition}

We have now introduced all necessary terminology to state the main result of this work. For better readability, we state the result for the \gls{ccq} wiretap channel (Theorem~\ref{theorem:wiretap-ccq}) and for the \gls{cqq} wiretap channel (Theorem~\ref{theorem:wiretap-cq}) separately although the theorems are very similar. The proofs are deferred to Section~\ref{sec:proofs-main-result}.

\begin{theorem}
\label{theorem:wiretap-ccq}
Let $(\bobChannel, \classicalQuantumChannel)$ be a $\gls{ccq}$ wiretap channel, let $\inputDistribution$ be a probability distribution on the input alphabet $\inputAlphabet$, and let $\inputRV \distributedAccTo \inputDistribution$ such that
\begin{itemize}
 \item there is $\renyiorder_{\min} < 1$ with $\eveChannel(\inputAlphabetElement)^{\renyiorder_{\min}} \in \traceClass{\hilbertSpace}$ for almost all $\inputAlphabetElement$, $\classicalQuantumChannel_\inputDistribution^{\renyiorder_{\min}} \in \traceClass{\hilbertSpace}$, and the Bochner integral $\Expectation \eveChannel(\inputRV)^{\renyiorder_{\min}}$ exists;
 \item there is $\generalReal > 0$ such that $\Expectation \exp(\generalReal \legitInformationDensity{\inputDistribution}{\inputRV}{\legitOutputRV}) < \infty$.
\end{itemize}
 Let $(\costFunction, \costConstraint)$ be a cost constraint compatible with $\inputDistribution$, and let $\codebookRate < \legitInformation - \holevoInformation{\inputDistribution}{\eveChannel}$. Then there are $\finalconst_1, \finalconst_2 \in (0,\infty)$ such that for sufficiently large $\blocklength$, there exists a wiretap code with the following properties:
\begin{enumerate}
 \item The code has a rate of at least $\codebookRate$.
 \item The output of the encoder satisfies the cost constraint $(\costFunction,\costConstraint)$ almost surely.
 \item The code has an average error $\decodingError = \exp(-\finalconst_1 \blocklength)$ as defined in (\ref{eq:def-average-error}).
 \item The code has distinguishing security level $\securityNumber = \exp(-\finalconst_2 \blocklength)$ as defined in (\ref{eq:def-distinguishing-security}).
\end{enumerate}
\end{theorem}
\begin{theorem}
\label{theorem:wiretap-cq}
Let $(\bobcqChannel,\eveChannel)$ be a \gls{cqq} wiretap channel, and let $\inputDistribution$ be a probability distribution on the input alphabet $\inputAlphabet$ such that for both choices of $\classicalQuantumChannel \in \{\eveChannel, \bobcqChannel\}$, there is $\renyiorder_{\min} < 1$ with $\classicalQuantumChannel(\inputAlphabetElement)^{\renyiorder_{\min}} \in \traceClass{\hilbertSpace}$ for almost all $\inputAlphabetElement$, $\classicalQuantumChannel_\inputDistribution^{\renyiorder_{\min}} \in \traceClass{\hilbertSpace}$, and the Bochner integral $\Expectation \classicalQuantumChannel(\inputRV)^{\renyiorder_{\min}}$ exists, where $\inputRV \distributedAccTo \inputDistribution$.

 Let $(\costFunction, \costConstraint)$ be a cost constraint compatible with $\inputDistribution$, and let $\codebookRate < \holevoInformation{\inputDistribution}{\bobcqChannel} - \holevoInformation{\inputDistribution}{\eveChannel}$. Then there are $\finalconst_1, \finalconst_2 \in (0,\infty)$ such that for sufficiently large $\blocklength$, there exists a wiretap code with the following properties:
\begin{enumerate}
 \item The code has a rate of at least $\codebookRate$.
 \item The output of the encoder satisfies the cost constraint $(\costFunction,\costConstraint)$ almost surely.
 \item The code has an average error of $\decodingError = \exp(-\finalconst_1 \blocklength)$ as defined in (\ref{eq:def-average-error-cq}).
 \item The code has distinguishing security level $\securityNumber = \exp(-\finalconst_2 \blocklength)$ as defined in (\ref{eq:def-distinguishing-security}).
\end{enumerate}
\end{theorem}
\begin{remark}
As mentioned in the introduction, converse results were derived in ~\cite{wilde2018privateenrgyconstrained,qi2017capacityamplifierchannels,hayashi2015quantum}. Single-letter versions are shown for degradable channels; otherwise the bounds are of the multi-letter variety. In \cite{Tikku-Non-Additivity}, it has been shown by means of many  explicit examples that the capacity of \gls{cqq} and \gls{ccq} wiretap channels is non-additive. Therefore, single-letter converses are not possible in general.    
\end{remark}

\subsection{Results Valid for Finite Block lengths}
\label{sec:finite-blocklength}
In this subsection, we state versions of Theorem~\ref{theorem:wiretap-ccq} and Theorem~\ref{theorem:wiretap-cq} that can be evaluated at given, finite block lengths $\blocklength$. While the expressions neither give closed-form expressions for the achievable average error and security level at a given $\blocklength$ nor guarantee nontrivial bounds for these performance metrics at arbitrarily short block lengths, they do make it possible to numerically evaluate tradeoffs between $\blocklength$, average error, and security level for given channels and communication rates. For instance, it is possible to numerically approximate the average error and security level for a given channel and a set communication rate and block length, or conversely, to determine a minimum block length at which a set combination of performance metrics can be achieved. We show examples for this in Section~\ref{sec:gaussian-cq}.

Before we can state these somewhat more involved versions of our main results, we need to make some more definitions. Given the classical channel $\bobChannel$ and input distribution $\inputDistribution$, we define for $\renyiorder \in [0,1) \cup (1,\infty)$
\[
\legitRenyDiv{\renyiorder}
:=
\frac{1}{1-\renyiorder}
\log
\Expectation_{\inputDistribution \legitOutputDistribution}\left(
  \RNDerivative{\bobChannel(\cdot, \inputRV)}{\legitOutputDistribution}(\legitOutputRV)
\right)^\renyiorder
\]
to denote the Rényi divergence between the joint input-output distribution and the product of their marginals. As before, we use $\inputRV,\legitOutputRV$ to denote the channel input and output, and we use $\legitOutputDistribution$ to denote the marginal output distribution of $\bobChannel$ under the input distribution $\inputDistribution$. It is well-known~\cite[Theorem 5]{vanerven2014renyi} that this definition can be continuously extended to $\renyiorder \in [0,\infty]$ and with this extension, $\legitRenyDiv{1}=\legitInformation$.

For every \gls{cq} channel $\classicalQuantumChannel$ and $\inputAlphabetElement \in \inputAlphabet$, we fix a spectral decomposition
\begin{equation*}
\classicalQuantumChannel(\inputAlphabetElement)
=
\sum_{\classicalOutputIndex \in \naturals}
  \inputDistribution_{\classicalQuantumChannel}( \classicalOutputIndex | \inputAlphabetElement)
  \rankOneOperator{\eigenvector_{\classicalOutputIndex | \inputAlphabetElement}},
\end{equation*}
where $ \{\eigenvector_{\classicalOutputIndex | \inputAlphabetElement}\}_{\classicalOutputIndex \in \naturals } $ is an orthonormal basis of the output Hilbert space of $\classicalQuantumChannel$ for every $\inputAlphabetElement\in \inputAlphabet$. Whenever the channel $\classicalQuantumChannel$ is clear from context, we will drop the corresponding subscript. By Lemma~\ref{lemma:spectral-decompositions-measurable}-\ref{item:spectral-decompositions-measurable-eigenvalues}, for every $\classicalOutputSubset \subseteq \naturals$, the map
$
\inputAlphabetElement
\mapsto
\sum_{\classicalOutputIndex \in \classicalOutputSubset}
  \inputDistribution_{\classicalQuantumChannel}(\classicalOutputIndex | \inputAlphabetElement)
$
is a pointwise limit of measurable maps and therefore measurable. Moreover, for all $\inputAlphabetElement$, we have
$
\sum_{\classicalOutputIndex \in \naturals}
  \inputDistribution_{\classicalQuantumChannel}(\classicalOutputIndex | \inputAlphabetElement)
=
\trace \classicalQuantumChannel(\inputAlphabetElement)
=
1
$.
Therefore, the eigenvalues $\inputDistribution(\classicalOutputIndex | \inputAlphabetElement)$ induce a stochastic kernel from $\inputAlphabet$ to $\naturals$, and together with a probability distribution $\inputDistribution$ on $\inputAlphabet$, we obtain a joint probability distribution on $\inputAlphabet \times \naturals$. In a slight abuse of notation, we will use the symbol $\inputDistribution$ to denote this joint probability distribution as well as the marginal on $\naturals$ and all corresponding \glspl{pmf}. Let $(\probabilitySpace,\probabilitySpaceAlgebra, \generalPMeasure)$ be a probability space and $\inputRV: \probabilitySpace \rightarrow \inputAlphabet$ be a random variable distributed according to $\inputDistribution$. We write the density operator of the channel output in terms of its spectral decomposition
\[
\classicalQuantumChannel_\inputDistribution
:=
\Expectation_\inputRV \classicalQuantumChannel(\inputRV)
=
\sum_{\classicalOutputIndex \in \naturals}
  \outputDistribution_{\classicalQuantumChannel}(\classicalOutputIndex)
  \rankOneOperator{\eigenvector_\classicalOutputIndex}.
\]
Clearly, $\outputDistribution$ is a \gls{pmf} on $\naturals$ which we identify with the probability measure it induces. We define the entropies
\begin{align}
\label{eq:joint-entropy}
\jointEntropySubscript{\classicalQuantumChannel}
&:=
-
\sum_{\inputAlphabetElement \in \inputAlphabet}
\sum_{\classicalOutputIndex \in \naturals}
  \inputDistribution_{\classicalQuantumChannel}(\inputAlphabetElement,\classicalOutputIndex)
  \log \inputDistribution_{\classicalQuantumChannel}(\classicalOutputIndex | \inputAlphabetElement)
=
-
\Expectation_\inputRV
  \sum_{\classicalOutputIndex \in \naturals}
    \inputDistribution_{\classicalQuantumChannel}(\classicalOutputIndex | \inputRV)
    \log \inputDistribution_{\classicalQuantumChannel}(\classicalOutputIndex | \inputRV)
=
-
\Expectation_\inputRV
  \trace(
    \classicalQuantumChannel(\inputRV)
    \log \classicalQuantumChannel(\inputRV)
  )
=
\Expectation_\inputRV
\vonNeumannEntropy{\classicalQuantumChannel(\inputRV)}
,
\\
\label{eq:output-entropy}
\outputEntropySubscript{\classicalQuantumChannel}
&:=
-
\sum_{\classicalOutputIndex \in \naturals}
  \outputDistribution_{\classicalQuantumChannel}(\classicalOutputIndex)
  \log \outputDistribution_{\classicalQuantumChannel}(\classicalOutputIndex)
=
-
\trace(
  \classicalQuantumChannel_\inputDistribution
  \log \classicalQuantumChannel_\inputDistribution
)
=
\vonNeumannEntropy{\classicalQuantumChannel_\inputDistribution}.
\end{align}
If $\quantumInformation < \infty$, then $\jointEntropy$ and $\outputEntropy$ are both finite, and we have
\begin{equation}
\label{eq:quantum-information}
\quantumInformation
=
-\jointEntropy + \outputEntropy.
\end{equation}

Furthermore, we define for $\renyiorder \in [\renyiorder_{\min},1) \cup (1,\infty)$
\begin{align}
\label{eq:conditional-renyi-entropy}
\conditionalRenyiEntropySubscript{\renyiorder}{\classicalQuantumChannel}
&:=
\frac{1}{1-\renyiorder}
\log
\Expectation_{\inputDistribution_{\classicalQuantumChannel}}\left(
  \inputDistribution_{\classicalQuantumChannel}(\outputRV | \inputRV)^{\renyiorder-1}
\right)
=
\frac{1}{1-\renyiorder}
\log
\Expectation_{\inputDistribution_{\classicalQuantumChannel}}
\sum\limits_{\classicalOutputIndex \in \naturals}
  \inputDistribution_{\classicalQuantumChannel}(\classicalOutputIndex | \inputRV)^{\renyiorder}
=
\frac{1}{1-\renyiorder}
\log\Expectation_{\inputDistribution_{\classicalQuantumChannel}}\trace\left(
  \classicalQuantumChannel(\inputRV)^{\renyiorder}
\right)
\\
\label{eq:renyi-entropy}
\outputRenyiEntropySubscript{\renyiorder}{\classicalQuantumChannel}
&:=
\frac{1}{1-\renyiorder}
\log
\Expectation_{\outputDistribution_{\classicalQuantumChannel}}\left(
  \outputDistribution_{\classicalQuantumChannel}(\outputRV)^{\renyiorder-1}
\right)
=
\frac{1}{1-\renyiorder}
\log
\sum\limits_{\classicalOutputIndex \in \naturals}
  \outputDistribution_{\classicalQuantumChannel}(\classicalOutputIndex)^{\renyiorder}
=
\frac{1}{1-\renyiorder}
\log\trace\left(
  \classicalQuantumChannel_\inputDistribution^{\renyiorder}
\right).
\end{align}

We will use properties of $\conditionalRenyiEntropySubscript{\renyiorder}{\classicalQuantumChannel}$ and $\outputRenyiEntropySubscript{\renyiorder}{\classicalQuantumChannel}$ which are stated and proved in Lemmas~\ref{lemma:renyi-entropy-basics-preliminary} and~\ref{lemma:renyi-entropy-basics} in Appendix~\ref{appendix:funcana}. Lemma~\ref{lemma:renyi-entropy-basics} allows us to expand the definitions of $\outputRenyiEntropySubscript{\renyiorder}{\classicalQuantumChannel}$ and $\conditionalRenyiEntropySubscript{\renyiorder}{\classicalQuantumChannel}$ to the domain $[\renyiorder_{\min},\infty)$ and obtain continuous functions in $\renyiorder$ with $\outputRenyiEntropySubscript{1}{\classicalQuantumChannel} = \outputEntropySubscript{\classicalQuantumChannel}$ and $\conditionalRenyiEntropySubscript{1}{\classicalQuantumChannel} = \jointEntropySubscript{\classicalQuantumChannel}$.

We define the following functions:
\begin{align}
\nonumber
\classicalCodingBoundFunc^{\bobChannel}(\codebookRate,\blocklength)
&:=
\inf_{\substack{\typicalityParameter \in (0,\\ \legitInformation-\codebookRate)}}\Bigg(
  \exp\left(
    -
    \blocklength
    \sup_{\renyiorder \in (1,\infty)}\left(
      (\renyiorder-1)
      \left(
        \legitInformation
        +
        \typicalityParameter
        -
        \legitRenyDiv{\renyiorder}
      \right)
    \right)
  \right)
  +
  \exp\left(
    -
    \blocklength
    (\legitInformation+\typicalityParameter-\codebookRate)
  \right)
\Bigg)
\\
\label{eq:atypicalTermsFunc-start}
\atypicalTermsFunc_1^\classicalQuantumChannel(\typicalityParameter, \blocklength)
&:=
\exp\left(
  -\blocklength
    \sup_{\renyiorder \in (1,\infty)}
      (\renyiorder-1)(\conditionalRenyiEntropySubscript{\renyiorder}{\classicalQuantumChannel} + \typicalityParameter - \jointEntropySubscript{\classicalQuantumChannel})
\right)
\\
\label{eq:atypicalTermsFunc-2}
\atypicalTermsFunc_2^\classicalQuantumChannel(\typicalityParameter, \blocklength)
&:=
\exp\left(
  -\blocklength
   \sup_{\renyiorder \in [\renyiorder_{\min},1)}
     (1-\renyiorder)(\jointEntropySubscript{\classicalQuantumChannel} + \typicalityParameter - \conditionalRenyiEntropySubscript{\renyiorder}{\classicalQuantumChannel})
\right)
\\
\label{eq:atypicalTermsFunc-3}
\atypicalTermsFunc_3^\classicalQuantumChannel(\typicalityParameter, \blocklength)
&:=
\exp\left(
  -
  \blocklength
  \sup_{\renyiorder \in (1,\infty)}
    (\renyiorder-1)
    (
      \outputRenyiEntropySubscript{\renyiorder}{\classicalQuantumChannel}
      +
      \typicalityParameter
      -
      \outputEntropySubscript{\classicalQuantumChannel}
    )
\right)
\\
\label{eq:atypicalTermsFunc-end}
\atypicalTermsFunc_4^\classicalQuantumChannel(\typicalityParameter, \blocklength)
&:=
\exp\left(
  -
  \blocklength
  \sup_{\renyiorder \in [\renyiorder_{\min},1)}
    (1-\renyiorder)
    (
      \outputEntropySubscript{\classicalQuantumChannel}
      +
      \typicalityParameter
      -
      \outputRenyiEntropySubscript{\renyiorder}{\classicalQuantumChannel}
    )
\right)
\\
\label{eq:codingBoundFunc}
\codingBoundFunc^\classicalQuantumChannel(\codebookRate,\blocklength)
&:=\hspace{-7pt}
\inf_{\substack{\typicalityParameter \in (0,\\(\holevoInformation{\inputDistribution}{\classicalQuantumChannel}-\codebookRate)/2)}}\Big(
  2
  \atypicalTermsFunc_1^\classicalQuantumChannel(\typicalityParameter,\blocklength)
  +
  2
  \atypicalTermsFunc_2^\classicalQuantumChannel(\typicalityParameter,\blocklength)
  +
  4
  \atypicalTermsFunc_3^\classicalQuantumChannel(\typicalityParameter,\blocklength)
  +
  4
  \atypicalTermsFunc_4^\classicalQuantumChannel(\typicalityParameter,\blocklength)
  +
  4
  \exp\big(
    -
    \blocklength
    (\quantumInformation - \codebookRate - 2\typicalityParameter)
  \big)
\Big)
\\
\label{eq:resolvabilityBoundFunc}
\resolvabilityBoundFunc^\classicalQuantumChannel(\codebookRate,\blocklength)
&:= \hspace{-7pt}
\inf_{\substack{\typicalityParameter \in (0,\\(\codebookRate-\holevoInformation{\inputDistribution}{\classicalQuantumChannel})/4)}}\left(
  2\atypicalTermsFunc_1^\classicalQuantumChannel(\typicalityParameter, \blocklength)
  +
  2\atypicalTermsFunc_2^\classicalQuantumChannel(\typicalityParameter, \blocklength)
  +
  2\atypicalTermsFunc_3^\classicalQuantumChannel(\typicalityParameter, \blocklength)
  +
  2\atypicalTermsFunc_4^\classicalQuantumChannel(\typicalityParameter, \blocklength)
  +
  6\exp\left(
    -
    \frac{1}{2}
    \blocklength
    (\codebookRate - \quantumInformation - 4\typicalityParameter)
  \right)
\right)
\\
\label{eq:costBoundFunc}
\costBoundFunc^{(\costFunction,\costConstraint)}(
  \proofconst,
  \codebookRate,
  \blocklength
)
&:=
\exp\Big(
  -2\exp(\blocklength(\codebookRate-2\proofconst))\big(1-\exp(-\blocklength(\proofconst_1-\proofconst))\big)^2
\Big),
\end{align}
where $(\costFunction,\costConstraint)$ is an additive cost constraint and
\begin{equation}
\label{eq:mgf-parameter-choice}
\proofconst_1 := \sup_{\hat{\generalReal} \in (0,\infty)} \left(
  - \log \generalpmf(\hat{\generalReal})
\right),
\end{equation}
with $\generalpmf(\generalReal) := \Expectation_\inputDistribution \exp(\generalReal (\costFunction(\inputRV)-\costConstraint))$, the moment generating function of $\costFunction(\inputRV)-\costConstraint$.

With this, we are ready to state finite-block length versions of Theorem~\ref{theorem:wiretap-ccq} and Theorem~\ref{theorem:wiretap-cq}.

\begin{customthm}{\ref{theorem:wiretap-ccq}'}
\label{theorem:wiretap-ccq-all-blocklengths}
Let $(\bobChannel, \classicalQuantumChannel)$ be a $\gls{ccq}$ wiretap channel, let $\inputDistribution$ be a probability distribution on the input alphabet $\inputAlphabet$, and let $\inputRV \distributedAccTo \inputDistribution$ such that
\begin{itemize}
 \item there is $\renyiorder_{\min} < 1$ with $\eveChannel(\inputAlphabetElement)^{\renyiorder_{\min}} \in \traceClass{\hilbertSpace}$ for almost all $\inputAlphabetElement$, $\classicalQuantumChannel_\inputDistribution^{\renyiorder_{\min}} \in \traceClass{\hilbertSpace}$, and the Bochner integral $\Expectation \eveChannel(\inputRV)^{\renyiorder_{\min}}$ exists;
 \item there is $\generalReal > 0$ such that $\Expectation \exp(\generalReal \legitInformationDensity{\inputDistribution}{\inputRV}{\legitOutputRV}) < \infty$.
\end{itemize}
Let $(\costFunction, \costConstraint)$ be a cost constraint compatible with $\inputDistribution$, let $\codebookSize, \numCodebooks \in \naturals$, and let $\codebookRate,\randomnessRate,\combinedRate \in (0,\infty)$ with the properties
\begin{align}
\label{eq:wiretap-codebook-size}
\codebookSize &\geq \exp(\blocklength\randomnessRate)
\\
\label{eq:wiretap-codebook-size-num}
\codebookSize \numCodebooks &\leq \exp(\blocklength\combinedRate)
\\
\label{eq:wiretap-codebook-num}
\numCodebooks &\geq \exp(\blocklength\codebookRate).
\end{align}

Further assume that
\begin{align*}
\randomnessRate &> \holevoInformation{\inputDistribution}{\classicalQuantumChannel} \\
\combinedRate &< \legitInformation.
\end{align*}

Let $\proofconst_2 \in (0, \min(\proofconst_1,(\codebookRate+\numCodebooks)/2)), \proofconst_3 \in (0,\infty), \proofconst_4 \in (0, \min(\proofconst_1,\randomnessRate/2)), \proofconst_5 \in (0,
\randomnessRate/2)$, and $\blocklength \in \naturals$, where $\proofconst_1$ is defined in \eqref{eq:mgf-parameter-choice}, be chosen such that
\begin{multline}
\label{eq:wiretap-ccq-prob-one}
\classicalCodingBoundFunc^\bobChannel(\combinedRate,\blocklength)\exp(\blocklength\proofconst_3)
+
\costBoundFunc^{(\costFunction,\costConstraint)}(\proofconst_2,\codebookRate+\randomnessRate,\blocklength)
\\+
\costBoundFunc^{(\costFunction,\costConstraint)}(
  \proofconst_4,
  \randomnessRate,
  \blocklength
)
\exp(\blocklength(\combinedRate-\randomnessRate))
+
\exp\left(
  -\frac{1}{2}\exp(
    \blocklength
    (\randomnessRate-2\proofconst_5)
  )
  +
  \blocklength(
    \combinedRate-\randomnessRate
  )
\right)
<
1.
\end{multline}

Then there exists a wiretap code with the following properties:
\begin{enumerate}
 \item\label{item:wiretap-ccq-all-blocklengths-codebook-size} The code has size $\numCodebooks$.
 \item\label{item:wiretap-ccq-all-blocklengths-cost} The output of the encoder satisfies the cost constraint $(\costFunction,\costConstraint)$ almost surely.
 \item\label{item:wiretap-ccq-all-blocklengths-error} The code has an average error
 \[
 \decodingError
 =
 \exp(-\blocklength\proofconst_2)
 +
 \exp(-\blocklength\proofconst_3)
 \]
 as defined in (\ref{eq:def-average-error}).
 \item\label{item:wiretap-ccq-all-blocklengths-security} The code has distinguishing security level
  \[
  \securityNumber
  =
  2\resolvabilityBoundFunc^\classicalQuantumChannel(\randomnessRate,\blocklength)
  +
  4\exp(-\proofconst_4\blocklength)
  +
  2\exp(-\proofconst_5\blocklength)
  \]
 as defined in (\ref{eq:def-distinguishing-security}).
\end{enumerate}
\end{customthm}

\begin{customthm}{\ref{theorem:wiretap-cq}'}
\label{theorem:wiretap-cq-all-blocklengths}
Let $(\bobcqChannel,\eveChannel)$ be a \gls{cqq} wiretap channel, and let $\inputDistribution$ be a probability distribution on the input alphabet $\inputAlphabet$ such that for both choices of $\classicalQuantumChannel \in \{\eveChannel, \bobcqChannel\}$, there is $\renyiorder_{\min} < 1$ with $\classicalQuantumChannel(\inputAlphabetElement)^{\renyiorder_{\min}} \in \traceClass{\hilbertSpace}$ for almost all $\inputAlphabetElement$, $\classicalQuantumChannel_\inputDistribution^{\renyiorder_{\min}} \in \traceClass{\hilbertSpace}$, and the Bochner integral $\Expectation \classicalQuantumChannel(\inputRV)^{\renyiorder_{\min}}$ exists, where $\inputRV \distributedAccTo \inputDistribution$.

Let $(\costFunction, \costConstraint)$ be a cost constraint compatible with $\inputDistribution$, let $\codebookSize,\numCodebooks \in \naturals$ and $\codebookRate,\randomnessRate,\combinedRate \in \reals$ satisfying \eqref{eq:wiretap-codebook-size}, \eqref{eq:wiretap-codebook-size-num}, and \eqref{eq:wiretap-codebook-num}.

Further assume that
\begin{align*}
\randomnessRate &> \holevoInformation{\inputDistribution}{\eveChannel} \\
\combinedRate &< \holevoInformation{\inputDistribution}{\bobChannel}.
\end{align*}

Let $\proofconst_2 \in (0, \min(\proofconst_1,(\codebookRate+\numCodebooks)/2)), \proofconst_3 \in (0,\infty), \proofconst_4 \in (0, \min(\proofconst_1,\randomnessRate/2)), \proofconst_5 \in (0,\codebookRate/2)$, and $\blocklength \in \naturals$, where $\proofconst_1$ is defined in \eqref{eq:mgf-parameter-choice}, be chosen such that
\begin{multline}
\label{eq:wiretap-cq-prob-one}
\codingBoundFunc^{\bobcqChannel}(\combinedRate,\blocklength)\exp(\blocklength\proofconst_2)
+
\costBoundFunc^{(\costFunction,\costConstraint)}\left(\proofconst_3,\codebookRate+\randomnessRate,\blocklength\right)
\\+
\costBoundFunc^{(\costFunction,\costConstraint)}(
  \proofconst_4,
  \randomnessRate,
  \blocklength
)
\exp(\blocklength(\combinedRate-\randomnessRate))
+
\exp\left(
  -\frac{1}{2}\exp(
    \blocklength
    (\randomnessRate-2\proofconst_5)
  )
  +
  \blocklength(
    \combinedRate-\randomnessRate
  )
\right)
<
1.
\end{multline}

Then there exists a wiretap code with the following properties:
\begin{enumerate}
 \item The code has size $\numCodebooks$.
 \item The output of the encoder satisfies the cost constraint $(\costFunction,\costConstraint)$ almost surely.
 \item The code has an average error
 \[
 \decodingError
 =
 \exp(-\blocklength\proofconst_2)
 +
 2\exp(-\blocklength\proofconst_3)
 \]
 as defined in (\ref{eq:def-average-error-cq}).
 \item The code has distinguishing security level
  \[
  \securityNumber
  =
  2\resolvabilityBoundFunc^{\eveChannel}(\blocklength)
  +
  4\exp(-\proofconst_4\blocklength)
  +
  2\exp(-\proofconst_5\blocklength)
  \]
 as defined in (\ref{eq:def-distinguishing-security}).
\end{enumerate}
\end{customthm}

\section{Security Notions for Wiretap Channels with Quantum Outputs}
\label{sec:security}

The distinguishing security criterion provided by Theorems~\ref{theorem:wiretap-ccq} and~\ref{theorem:wiretap-cq} implies security guarantees against eavesdropping attacks that extend beyond an attacker's ability to reconstruct the entire transmitted message. In this section, we discuss various notions of security and their operational implications.

\subsection{Semantic Security}
\label{sec:semantic}
In this subsection, we give a definition of semantic security that is analogous to the classical archetype in~\cite{bellare2012semantic}. Formally, semantic security means that a large class of possible objectives (which includes the objective of reconstructing the entire transmitted message, but also many others) cannot be reached by the eavesdropper. To the best of our knowledge, this definition of semantic security has not appeared before in the literature for channels with quantum outputs.

\begin{Prob}
\label{prob:eavesdropper}
\emph{(Eavesdropper's objective.)}
Each possible eavesdropper's objective that we consider in this paper is defined by a partition $\messageSpacePartition$ of the message space $\{1, \dots, \codebookSize\}$. The eavesdropper's objective is then to output some $\messageSpacePartitionElement \in \messageSpacePartition$ such that the originally transmitted message is contained in $\messageSpacePartitionElement$.
\end{Prob}\penalty-200

The partition of $\{1, \dots, \codebookSize\}$ that corresponds to the reconstruction of the entire message consists of all singleton sets, i.e.,
\[
\messageSpacePartition = \big\{ \{1\}, \dots, \{\codebookSize\} \big\}.
\]

But also other possible objectives can be defined by a message space partition. For instance, the task of reconstructing the first bit of the transmitted message would be represented by
\[
\messageSpacePartition = \big\{
  \{
    \codewordIndex:~ \textrm{The binary representation of $\codewordIndex$ starts with $0$}
  \},
  \{
    \codewordIndex:~ \textrm{The binary representation of $\codewordIndex$ starts with $1$}
  \}
\big\}.
\]

In this section, we assume that the eavesdropper has prior knowledge of the probability distribution $\messageProbabilityDistribution$ from which the transmitted message $\messageRV$ is drawn. While this may not be realistic in some practical scenarios, it is certainly possible, for example when the eavesdropper knows that a protocol is executed in which only certain messages can be transmitted at certain time instances. In any case, this is not a restrictive assumption since it represents an additional advantage that the eavesdropper has compared to a case where no such prior knowledge is available.

We assume in this work that, after message $\messageRV$ has been transmitted, the eavesdropper can perform any $\messageSpacePartition$-valued \gls{povm} on the output $\outputStateEve(\messageRV)$ of the channel
\begin{equation}\label{eq:eve-q-output-state}
  \outputStateEve :=  \eveChannel^\blocklength \circ \encoder.
\end{equation}
By the Born rule, a $\messageSpacePartition$-valued \gls{povm} $(\povm_\messageSpacePartitionElement)_{\messageSpacePartitionElement \in \messageSpacePartition}$ and $\outputStateEve(\messageRV)$ induce a probability distribution on $\messageSpacePartition$, described by the \gls{pmf}
\begin{equation*}
\generalpmf(\messageSpacePartitionElement)
:=
\trace(\povm_\messageSpacePartitionElement \outputStateEve(\messageRV)).
\end{equation*}
$\generalpmf$ describes the probability distribution of the eavesdropper's conclusion of what the correct partition element $\messageSpacePartitionElement$ is under a fixed communication scheme, distribution of the message, and reconstruction strategy of the eavesdropper.

We define the eavesdropper's maximum success probability when a quantum measurement is performed on a quantum state $\outputStateEve(\messageRV)$ as
\begin{equation}
\label{eq:def-eavesdropper-success}
\eavesdropperSuccessProbability(\messageSpacePartition, \messageProbabilityDistribution, \outputStateEve)
:=
\sup\left\{
    \Expectation_\messageRV \sum_{\messageSpacePartitionElement \in \messageSpacePartition: \messageRV \in \messageSpacePartitionElement} \trace(\povm_\messageSpacePartitionElement \outputStateEve(\messageRV))
    :~
    (\povm_\messageSpacePartitionElement)_{\messageSpacePartitionElement \in \messageSpacePartition} \text{ is a \gls{povm}}
\right\}.
\end{equation}
Note that since $\messageSpacePartition$ is a partition, the sum consists of only one summand for each fixed realization of $\messageRV$.

It is important to emphasize at this point that it is not possible in general to guarantee a low eavesdropper's success probability. For instance, $\messageRV$ could be drawn from a distribution where the binary representation of $\messageRV$ almost surely starts with $0$. What we can control, however, is the advantage that the eavesdropper gains by observing $\outputStateEve(\messageRV)$ compared to the situation where it cannot observe any channel output. We denote the maximum achievable success probability in solving Problem~\ref{prob:eavesdropper} for partition $\messageSpacePartition$ and message distribution $\messageProbabilityDistribution$ without observing any quantum state (i.e., by pure guessing) by 
\[
\eavesdropperGuessProbability(\messageSpacePartition, \messageProbabilityDistribution)
:=
\sup\left\{
  \Probability(\messageRV \in \messageSpacePartitionElement):~
  \messageSpacePartitionElement \text{ is a random variable ranging over $\messageSpacePartition$}
\right\}.
\]

This leads us to the following definition of semantic security.
\begin{definition}
\label{def:semantic-security}
The \emph{eavesdropper's semantic security advantage} associated with the eavesdropper's output state $\outputStateEve(\messageRV)$ is defined as
\begin{multline}
\label{eq:def-eavesdropper-advantage}
\eavesdropperAdvantageSemantic(\outputStateEve)
:=
\sup\Big\{
  \eavesdropperSuccessProbability(\messageSpacePartition, \messageProbabilityDistribution, \outputStateEve)
  -
  \eavesdropperGuessProbability(\messageSpacePartition, \messageProbabilityDistribution)
  :~
  \\
  \messageSpacePartition \text{ is a partition of $\{1, \dots, \codebookSize\}$},~
  \messageProbabilityDistribution \text{ is a probability distribution on $\{1, \dots, \codebookSize\}$}.
\Big\}
\end{multline}
We say that \emph{semantic security level $\securityNumber$} is satisfied for the output state $\outputStateEve$ if
\[
\eavesdropperAdvantageSemantic(\outputStateEve) \leq \securityNumber.
\]
\end{definition}

This is a straightforward quantum analog of the term \emph{semantic security} which is already an established performance metric both in classical cryptography and information-theoretic secrecy~\cite{bellare2012semantic}.

\subsection{Other Security Notions}
There are alternative security definitions that are easier to analyze in proofs than semantic security in the sense of Definition~\ref{def:semantic-security}. In this section, we introduce several of these alternative established notions of security. We point out that the terminology is not the same everywhere in the literature, neither for classical nor for quantum wiretap channels. Because both distinguishing security as defined in Definition~\ref{def:alternative-security}-\ref{item:distinguishing-security} and mutual information security as defined in Definition~\ref{def:alternative-security}-\ref{item:mutual-information-security} below imply semantic security in the sense of Definition~\ref{def:semantic-security} above, some papers use these alternative definitions directly for semantic security. In this section, we define these alternative notions and show some implications between them. This is very similar to the treatment of the classical case~\cite{bellare2012semantic,bellare2012cryptographic}.

\begin{definition}
\label{def:alternative-security}
Given a \gls{cq} channel $\{1, \dots ,\codebookSize \}\ni \codewordIndex \mapsto \outputStateEve(\codewordIndex)$ (cf. (\ref{eq:eve-q-output-state})) and $\securityNumber\in\reals$, $\securityNumber\ge 0$, we say that:
\begin{enumerate}
 \item \label{item:weak-secrecy} \emph{Weak secrecy level $\securityNumber$} is satisfied if
 \[
   \eavesdropperAdvantageWeak(\outputStateEve)
   :=
   \frac{1}{n}
   \holevoInformation{\messageProbabilityDistribution}{\outputStateEve}
   \leq
   \securityNumber,
 \]
 where $\messageProbabilityDistribution$ is the uniform distribution over all messages $\{1, \dots, \codebookSize\}$.
 \item \label{item:strong-secrecy} \emph{Strong secrecy level $\securityNumber$} is satisfied if
 \[
   \eavesdropperAdvantageStrong(\outputStateEve)
   :=
   \holevoInformation{\messageProbabilityDistribution}{\outputStateEve}
   \leq
   \securityNumber,
 \]
 where $\messageProbabilityDistribution$ is the uniform distribution over all messages $\{1, \dots, \codebookSize\}$.
 \item \label{item:mutual-information-security} \emph{Mutual information security level $\securityNumber$} is satisfied if
 \[
   \eavesdropperAdvantageMutualInformation(\outputStateEve)
   :=
   \max_{\messageProbabilityDistribution}
     \holevoInformation{\messageProbabilityDistribution}{\outputStateEve}
   \leq
   \securityNumber,
 \]
 where $\messageProbabilityDistribution$ ranges over all probability distributions of messages $\{1, \dots, \codebookSize\}$.
 \item \label{item:distinguishing-security} \emph{Distinguishing security level $\securityNumber$} is satisfied if
 \[
  \eavesdropperAdvantageDistinguishing(\outputStateEve)
   :=
   \max\Big\{
     \traceNorm{\outputStateEve(\codewordIndex_1) - \outputStateEve(\codewordIndex_2)}
     ~:~
     \codewordIndex_1, \codewordIndex_2 \in \{1, \dots, \codebookSize\}
   \Big\}
   \leq
   \securityNumber.
 \]
\end{enumerate}
The terms on the left-hand side of these inequalities are called the weak secrecy (strong secrecy, mutual information security, distinguishing security) \emph{advantage} of the eavesdropper.

For a sequence of encoding schemes with growing block length $\blocklength$, we say that the sequence is \emph{semantically secure (weakly secret, strongly secret, mutual information secure, distinction secure)} if the corresponding advantage vanishes as $\blocklength$ tends to infinity.
\end{definition}

\begin{remark}
We note that post-processing steps carried out by the eavesdropper cannot increase any of the advantages given in Definition~\ref{def:semantic-security} and Definition~\ref{def:alternative-security}.
To see this, let $\evePostprocChannel: \traceClass{\eveHilbertSpace}\to \traceClass{\eveHilbertSpace'}$ be a quantum channel, i.e.,  positive, trace-preserving map such that the dual map $\evePostprocChannel': \boundedOperators{\eveHilbertSpace'}\to \boundedOperators{\eveHilbertSpace}$, given by
\begin{equation}\label{eq:cp-duality}
\trace (\generalOperator \evePostprocChannel(\generalOperatorTwo))=\trace (\evePostprocChannel'(\generalOperator) \generalOperatorTwo)
\end{equation}
for all $\generalOperator\in \boundedOperators{\eveHilbertSpace'}$ and all $\generalOperatorTwo\in\traceClass{\eveHilbertSpace} $, is completely positive \cite[Section 7.1]{Busch-Quantum-Measurement}. The map $\evePostprocChannel$ represents a possible post-processing that can be performed by the eavesdropper. The monotonicity of quantum relative entropy under the action of quantum channels \cite{lindblad1975completely} implies that
\begin{equation*}
 \eavesdropperAdvantageWeak(\evePostprocChannel \circ \outputStateEve)\le \eavesdropperAdvantageWeak(\outputStateEve),   
\end{equation*}
\begin{equation*}
 \eavesdropperAdvantageStrong(\evePostprocChannel \circ \outputStateEve) \le \eavesdropperAdvantageStrong(\outputStateEve),  
\end{equation*}
and
\begin{equation*}
\eavesdropperAdvantageMutualInformation(\evePostprocChannel \circ \outputStateEve)\le  \eavesdropperAdvantageMutualInformation(\outputStateEve).   
\end{equation*}
Moreover, since for all states $\generalQuantumState, \generalQuantumStateTwo\in\densityOperators{\hilbertSpace}$ we have  \cite[Proposition 4.37]{heinosaari2012mathematical}
\begin{equation*}
    \traceNorm{\evePostprocChannel(\generalQuantumState) - \evePostprocChannel(\generalQuantumStateTwo)}
    \le \traceNorm{\generalQuantumState-\generalQuantumStateTwo},
\end{equation*}
we can conclude that
\begin{equation*}
 \eavesdropperAdvantageDistinguishing(\evePostprocChannel \circ \outputStateEve)
 \le 
  \eavesdropperAdvantageDistinguishing(\outputStateEve),
\end{equation*}
holds as well.
Finally, we have
\begin{equation*}
\eavesdropperAdvantageSemantic(\evePostprocChannel \circ \outputStateEve)
\le 
\eavesdropperAdvantageSemantic(\outputStateEve),
\end{equation*}
which can be deduced as follows. Since for any given partition $\messageSpacePartition$ and probability distribution $\messageProbabilityDistribution$, the quantity 
\[
 \eavesdropperGuessProbability(\messageSpacePartition, \messageProbabilityDistribution)
\]
does not depend on the eavesdropper's output state, it is sufficient to prove
\begin{equation*}
 \eavesdropperSuccessProbability(\messageSpacePartition, \messageProbabilityDistribution, \evePostprocChannel \circ \outputStateEve)  \le 
  \eavesdropperSuccessProbability(\messageSpacePartition, \messageProbabilityDistribution, \outputStateEve),
\end{equation*}
for all partitions $\messageSpacePartition$ and probability distributions $\messageProbabilityDistribution$.
Showing this is the aim of the following argument:
\begin{align*}
\eavesdropperSuccessProbability(\messageSpacePartition, \messageProbabilityDistribution, \evePostprocChannel \circ \outputStateEve)
 &=
\sup\left\{
    \Expectation_\messageRV \sum_{\messageSpacePartitionElement \in \messageSpacePartition: \messageRV \in \messageSpacePartitionElement} \trace(\povm_\messageSpacePartitionElement \evePostprocChannel \circ \outputStateEve(\messageRV))
    :~
    (\povm_\messageSpacePartitionElement)_{\messageSpacePartitionElement \in \messageSpacePartition} \text{ is a \gls{povm} on } \eveHilbertSpace'
\right\} \\
\overset{(a)}&{=} 
\sup\left\{
    \Expectation_\messageRV \sum_{\messageSpacePartitionElement \in \messageSpacePartition: \messageRV \in \messageSpacePartitionElement} \trace(\evePostprocChannel'(\povm_\messageSpacePartitionElement ) \outputStateEve(\messageRV))
    :~
    (\povm_\messageSpacePartitionElement)_{\messageSpacePartitionElement \in \messageSpacePartition} \text{ is a \gls{povm} on } \eveHilbertSpace'
\right\} \displaybreak[0] \\
&=
\sup\left\{
    \Expectation_\messageRV \sum_{\messageSpacePartitionElement \in \messageSpacePartition: \messageRV \in \messageSpacePartitionElement} \trace(\povm'_\messageSpacePartitionElement \outputStateEve(\messageRV))
    :~
    \povm'_{\messageSpacePartitionElement}=\evePostprocChannel'(\povm_{\messageSpacePartitionElement}) \text{ where }(\povm_\messageSpacePartitionElement)_{\messageSpacePartitionElement \in \messageSpacePartition} \text{ is a \gls{povm} on } \eveHilbertSpace'
\right\} \displaybreak[0] \\
\overset{(b)}&{\le}
\sup\left\{
    \Expectation_\messageRV \sum_{\messageSpacePartitionElement \in \messageSpacePartition: \messageRV \in \messageSpacePartitionElement} \trace(\povm'_\messageSpacePartitionElement \outputStateEve(\messageRV))
    :~
    (\povm'_\messageSpacePartitionElement)_{\messageSpacePartitionElement \in \messageSpacePartition} \text{ is a \gls{povm} on } \eveHilbertSpace
\right\}\\
&=
\eavesdropperSuccessProbability(\messageSpacePartition, \messageProbabilityDistribution, \outputStateEve),
\end{align*}
where step (a) is by definition (\ref{eq:cp-duality}) of the dual map $\evePostprocChannel'$. The inequality (b) follows from the fact that $\evePostprocChannel'$ is a unital (i.e., $\evePostprocChannel'(\identityOperator_{\eveHilbertSpace'})=\identityOperator_{\eveHilbertSpace}$) and positive map. Therefore, it maps the \gls{povm}s on $\eveHilbertSpace'$ to a \emph{subset} of the \gls{povm}s on the Hilbert space $\eveHilbertSpace$, which leads to a larger supremum in the inequality (b).\\
Consequently, in proofs of security guarantees, it is sufficient to focus on the case $\outputStateEve = \classicalQuantumChannel^\blocklength \circ \encoder$, as we do in Theorems~\ref{theorem:wiretap-ccq} and~\ref{theorem:wiretap-cq}, since the guarantees automatically extend to all post-processing attempts represented by application of quantum channels at the eavesdropper.
\end{remark}

In the remainder of this section, we follow the arguments in \cite{bellare2012cryptographic} for the classical case and prove that similar implications also hold in the quantum case between the different security notions of Definitions~\ref{def:semantic-security} and \ref{def:alternative-security}. For the implication from distinguishing security to mutual information security, we use a slightly more general approach that applies not only to finite-dimensional systems but also to some practically relevant infinite-dimensional systems.

\begin{lemma}
\label{lemma:semantic-security}
$\eavesdropperAdvantageSemantic(\outputStateEve) \leq \frac{1}{2} \eavesdropperAdvantageDistinguishing(\outputStateEve)$.
\end{lemma}

\begin{proof}
Suppose the eavesdropper's output state is $\outputStateEve(\messageRV)$, and let $(\povm_{\messageSpacePartitionElement})_{\messageSpacePartitionElement \in \messageSpacePartition}$ be a \gls{povm}.

We define a \gls{pmf}
\begin{equation}
\label{eq:semantic-security-eve-guess}
\evepmf:
\messageSpacePartition \rightarrow [0,1],~
\messageSpacePartitionElement
\mapsto
\Expectation_\messageRV
  \trace(\povm_\messageSpacePartitionElement \outputStateEve(\messageRV)).
\end{equation}
Clearly, $\evepmf$ does not depend on an observed channel output, so we have
\begin{equation}
\label{eq:semantic-security-comparison-advantage}
\eavesdropperGuessProbability(\messageSpacePartition, \messageProbabilityDistribution)
\geq
\Expectation_\messageRV
\sum_{\messageSpacePartitionElement \in \messageSpacePartition: \messageRV \in \messageSpacePartitionElement} \evepmf(\messageSpacePartitionElement).
\end{equation}
Therefore,
\begin{align*}
\Expectation_\messageRV \sum_{\messageSpacePartitionElement \in \messageSpacePartition: \messageRV \in \messageSpacePartitionElement} \trace(\povm_\messageSpacePartitionElement \outputStateEve(\messageRV))
-
\eavesdropperGuessProbability(\messageSpacePartition, \messageProbabilityDistribution)
\overset{(\ref{eq:semantic-security-comparison-advantage})}&{\leq}
\Expectation_\messageRV
  \sum_{\messageSpacePartitionElement \in \messageSpacePartition: \messageRV \in \messageSpacePartitionElement}\left(
    \trace(\povm_\messageSpacePartitionElement \outputStateEve(\messageRV))
    -
    \evepmf(\messageSpacePartitionElement)
  \right)
\\
\overset{(\ref{eq:semantic-security-eve-guess})}&{=}
\Expectation_\messageRV
  \sum_{\messageSpacePartitionElement \in \messageSpacePartition: \messageRV \in \messageSpacePartitionElement}\left(
    \trace\big(
      \povm_{\messageSpacePartitionElement}\outputStateEve(\messageRV)
    \big)
    -
    \Expectation_\messageRV
      \trace\big(
        \povm_{\messageSpacePartitionElement}\outputStateEve(\messageRV)
      \big)
  \right)
\displaybreak[0] \\
&=
\Expectation_\messageRV
  \trace\left(
    \sum_{\messageSpacePartitionElement \in \messageSpacePartition: \messageRV \in \messageSpacePartitionElement}\left(
        \povm_{\messageSpacePartitionElement}\outputStateEve(\messageRV)
        -
        \povm_{\messageSpacePartitionElement}\Expectation_\messageRV\outputStateEve(\messageRV)
    \right)
  \right)
\displaybreak[0] \\
&=
\Expectation_\messageRV
  \trace\left(
    \left(
      \sum_{\messageSpacePartitionElement \in \messageSpacePartition: \messageRV \in \messageSpacePartitionElement}
        \povm_{\messageSpacePartitionElement}
    \right)(
      \outputStateEve(\messageRV) - \Expectation_\messageRV\outputStateEve(\messageRV)
    )
  \right)
\displaybreak[0] \\
\overset{(a)}&{\leq}
\frac{1}{2}
\Expectation_\messageRV
  \traceNorm{
    \outputStateEve(\messageRV) - \Expectation_\messageRV\outputStateEve(\messageRV)
  }
\displaybreak[0] \\
\overset{(b)}&{\leq}
\frac{1}{2}
\Expectation_{\messageRV,\messageRV'}
  \traceNorm{
    \outputStateEve(\messageRV')
    -
    \outputStateEve(\messageRV)
  }
\\
&\leq
\frac{1}{2}
\eavesdropperAdvantageDistinguishing(\outputStateEve).
\end{align*}
(a) is by Lemma~\ref{lemma:norm-basics}-\ref{item:norm-basics-zero-trace-operators}, and (b) is by replacing $\messageRV$ with an independent copy $\messageRV'$ and then applying Lemma~\ref{lemma:bochner-integral-basics}-\ref{item:bochner-integral-basics-trace-norm-jensen}. Since this derivation holds for every choice of $\povm$, $\messageProbabilityDistribution$ and $\messageSpacePartition$, the lemma immediately follows.
\end{proof}

\begin{lemma}
$\eavesdropperAdvantageDistinguishing(\outputStateEve) \leq 4 \eavesdropperAdvantageSemantic(\outputStateEve)$.
\end{lemma}
\begin{proof}
Fix arbitrary $\codewordIndex_1, \codewordIndex_2 \in \{1, \dots, \codebookSize\}$. We will show that
\[
\traceNorm{\outputStateEve(\codewordIndex_1) - \outputStateEve(\codewordIndex_2)}
\leq
4\eavesdropperAdvantageSemantic(\outputStateEve).
\]
Since this is immediately clear for $\codewordIndex_1 = \codewordIndex_2$, we may assume $\codewordIndex_1 \neq \codewordIndex_2$ in the following. We fix the message probability distribution
\[
\messageProbabilityDistribution(\codewordIndex)
:=
\begin{cases}
\frac{1}{2}, &\codewordIndex \in \{\codewordIndex_1, \codewordIndex_2\} \\
0, &\text{otherwise,}
\end{cases}
\]
and the partition $\messageSpacePartition := \{\{\codewordIndex_1\}, \{1, \dots, \codebookSize\} \setminus \{\codewordIndex_1\}\}$. Any guess over this partition will be correct with probability $1/2$. So for any operator $0 \leq \povm \leq \identityOperator$, we can calculate
\[
\eavesdropperAdvantageSemantic(\outputStateEve)
\overset{(\ref{eq:def-eavesdropper-advantage})}{\geq}
\eavesdropperSuccessProbability(\messageSpacePartition,\messageProbabilityDistribution,\outputStateEve)
-
\frac{1}{2}
\overset{(\ref{eq:def-eavesdropper-success})}{\geq}
\frac{1}{2} \trace(\povm \outputStateEve(\codewordIndex_1))
+
\frac{1}{2} \trace((\identityOperator-\povm) \outputStateEve(\codewordIndex_2))
-
\frac{1}{2}
=
\frac{1}{2} \trace(\povm (\outputStateEve(\codewordIndex_1)-\outputStateEve(\codewordIndex_2))).
\]
We now fix $\povm$ as the operator which maximizes the latter term according to Lemma~\ref{lemma:norm-basics}-\ref{item:norm-basics-zero-trace-operators}, and obtain
\[
\eavesdropperAdvantageSemantic(\generalOperator)
\geq
\frac{1}{4}
\traceNorm{\outputStateEve(\codewordIndex_1)-\outputStateEve(\codewordIndex_2)},
\]
concluding the proof.
\end{proof}

\begin{lemma}
$\eavesdropperAdvantageDistinguishing(\outputStateEve) \leq 2\sqrt{2\eavesdropperAdvantageMutualInformation(\outputStateEve)}$.
\end{lemma}
\begin{proof}
Fix arbitrary $\codewordIndex_1, \codewordIndex_2 \in \{1, \dots, \codebookSize\}$. We will show that
\begin{equation}
\label{eq:information-security-to-distinguishing-objective}
\traceNorm{\outputStateEve(\codewordIndex_1) - \outputStateEve(\codewordIndex_2)}
\leq
2\sqrt{2\eavesdropperAdvantageMutualInformation(\outputStateEve)}.
\end{equation}
Since this is immediately clear for $\codewordIndex_1 = \codewordIndex_2$, we may assume $\codewordIndex_1 \neq \codewordIndex_2$ in the following. We define quantum states
\begin{align*}
\generalQuantumState
&:=
\frac{1}{2} \rankOneOperator{\codewordIndex_1} \otimes \outputStateEve(\codewordIndex_1)
+
\frac{1}{2} \rankOneOperator{\codewordIndex_2} \otimes \outputStateEve(\codewordIndex_2)
\\
\generalQuantumStateTwo
&:=
\left(
  \frac{1}{2} \rankOneOperator{\codewordIndex_1}
  +
  \frac{1}{2} \rankOneOperator{\codewordIndex_2}
\right)
\otimes
\left(
  \frac{1}{2} \outputStateEve(\codewordIndex_1)
  +
  \frac{1}{2} \outputStateEve(\codewordIndex_2)
\right),
\end{align*}
where $\ket{\codewordIndex_1}, \ket{\codewordIndex_2}$ are an orthonormal basis of some $2$-dimensional Hilbert space. We further fix a distribution $\messageProbabilityDistribution$ on the set $\{1, \dots, \codebookSize\}$ of messages via
\[
\messageProbabilityDistribution(\codewordIndex)
:=
\begin{cases}
\frac{1}{2}, &\codewordIndex \in \{\codewordIndex_1, \codewordIndex_2\} \\
0, &\text{otherwise.}
\end{cases}
\]
Then, we can calculate
\begin{align*}
\eavesdropperAdvantageMutualInformation(\outputStateEve)
&\geq
\holevoInformation{\messageProbabilityDistribution}{\outputStateEve}
\\
\overset{(a)}&{=}
\quantumRelativeEntropy{\generalQuantumState}{\generalQuantumStateTwo}
\\
\overset{(b)}&{\geq}
\frac{1}{2}
\traceNorm{\generalQuantumState-\generalQuantumStateTwo}^2
\\
&=
\frac{1}{2}
\traceNorm{
  \frac{1}{4}
  \left(
    \rankOneOperator{\codewordIndex_1}
    -
    \rankOneOperator{\codewordIndex_2}
  \right)
  \otimes
  \left(
    \outputStateEve(\codewordIndex_1)
    -
    \outputStateEve(\codewordIndex_2)
  \right)
}^2
\\
&=
\frac{1}{2} \cdot \frac{1}{16}
\traceNorm{
  \rankOneOperator{\codewordIndex_1}
  -
  \rankOneOperator{\codewordIndex_2}
}^2
\traceNorm{
  \outputStateEve(\codewordIndex_1)
  -
  \outputStateEve(\codewordIndex_2)
}^2
\\
&=
\frac{1}{8}
\traceNorm{
  \outputStateEve(\codewordIndex_1)
  -
  \outputStateEve(\codewordIndex_2)
}^2
\end{align*}
where step (a) uses the definition of \emph{relative quantum entropy}
\[
\quantumRelativeEntropy{\generalQuantumState}{\generalQuantumStateTwo}
:=
\trace{\generalQuantumState \log \generalQuantumState}
-
\trace{\generalQuantumState \log \generalQuantumStateTwo},
\]
where by convention, $\quantumRelativeEntropy{\generalQuantumState}{\generalQuantumStateTwo} = \infty$ whenever the support of $\generalQuantumState$ is not contained in the support of $\generalQuantumStateTwo$ or one of the traces is infinite~\cite[Definition 7.1]{holevo2019quantum}. Step (b) is a quantum version of the Pinsker inequality, namely, the stronger version of~\cite[Proposition 7.3]{holevo2019quantum} that is mentioned in~\cite[Section 7.8]{holevo2019quantum} and originally proven for a more general scenario in~\cite[Theorem 3.1]{hiai1981sufficiency}. This yields (\ref{eq:information-security-to-distinguishing-objective}), concluding the proof of the lemma.
\end{proof}

For the bound of $\eavesdropperAdvantageMutualInformation(\outputStateEve)$ in terms of $\eavesdropperAdvantageDistinguishing(\outputStateEve)$, we introduce some technical terms first. A linear operator $\gibbsObservable$ on $\hilbertSpace$ is called a \emph{Gibbs observable} if it satisfies the following properties:
\begin{itemize}
    \item It is of the form
    \begin{equation}
    \label{eq:gibbs-eigenvalues}
        \gibbsObservable \ket{\hilbertSpaceElement}
        :=
        \sum_{\generalIndex \in \naturals}
          \gibbsEigenvalue_\generalIndex
          \innerProduct{\gibbsEigenvector_\generalIndex}{\hilbertSpaceElement}\ket{\gibbsEigenvector_\generalIndex},
    \end{equation}
    defined on the dense domain
    \begin{equation}
    \label{eq:gibbs-domain}
       \mathcal{D}(\gibbsObservable):= \left\{
          \hilbertSpaceElement \in \hilbertSpace
          :
          \sum_{\generalIndex \in \naturals}
            \absolute{\gibbsEigenvalue_\generalIndex}^2
            \absolute{\innerProduct{\gibbsEigenvector_\generalIndex}{\hilbertSpaceElement}}^2
          <
          \infty
        \right\},
    \end{equation}
    where $(\gibbsEigenvalue_\generalIndex)_{\generalIndex \in \naturals}$ is an unbounded sequence of nonnegative real numbers that includes $0$ and $\ket{\gibbsEigenvector_\generalIndex}_{\generalIndex \in \naturals}$ form an orthonormal basis in $\hilbertSpace$ (cf.~\cite[Definition 11.3]{holevo2019quantum} and~\cite[Section 4]{winter2016tight}). Note that  $\gibbsObservable$ is an unbounded self-adjoint operator on the domain $ \mathcal{D}(\gibbsObservable) $ (which is a simple consequence of \cite[Theorem VIII.3 (c)]{reed1980functional}).
    \item For every $\gibbsParameter \in (0,\infty)$, it holds that
    \begin{equation}\label{eq:gibbs-regularity}
\exp(-\gibbsParameter\gibbsObservable) \in \traceClass{\hilbertSpace},
    \end{equation}
      with the definition
    \[
        \exp(-\gibbsParameter\gibbsObservable)
        \ket{\hilbertSpaceElement}
        :=
        \sum_{\generalIndex \in \naturals}
          \exp(-\gibbsParameter\gibbsEigenvalue_\generalIndex)
          \innerProduct{\gibbsEigenvector_\generalIndex}{\hilbertSpaceElement} \ket{\gibbsEigenvector_\generalIndex},
    \]
    $\hilbertSpaceElement \in \hilbertSpace$ (cf.~\cite[Section 4]{winter2016tight}). Note that for (\ref{eq:gibbs-regularity}) to hold, it is necessary that the spectrum of $\gibbsObservable$ given by the sequence  $(\gibbsEigenvalue_\generalIndex)_{\generalIndex \in \naturals}$  is unbounded and every eigenvalue has finite multiplicity.
\end{itemize}
For a Gibbs observable $\gibbsObservable$ and $\generalQuantumState \in \densityOperators{\hilbertSpace}$, we adopt the convention (cf.~\cite[(11.6)]{holevo2019quantum})
\begin{equation}
\label{eq:gibbs-trace}
    \trace(\generalQuantumState \gibbsObservable)
    :=
    \sum_{\generalIndex \in \naturals}
      \gibbsEigenvalue_\generalIndex
      \innerProduct{\gibbsEigenvector_\generalIndex}{\generalQuantumState\gibbsEigenvector_\generalIndex}
    \in
    [0,\infty],
\end{equation}
and we define (see~\cite[Section 4]{winter2016tight}) for $\gibbsEnergy \in [0,\infty)$
\begin{equation}
\label{eq:gibbs-max-entropy-function}
    \gibbsEntropy{\gibbsObservable}(\gibbsEnergy)
    :=
    \sup\left\{
      \vonNeumannEntropy{\generalQuantumState}
      :
      \generalQuantumState \in \densityOperators{\hilbertSpace},~
      \trace(\generalQuantumState\gibbsObservable) \leq \gibbsEnergy
    \right\}
    \in
    [0,\infty).
\end{equation}
In~\cite[Proposition 1]{shirokov2006entropy}, it is proven that this supremum is finite and is in fact realized by a state of maximum entropy.

\begin{lemma}
\label{lemma:distinguishing-to-information}
Suppose that $\hat \gibbsObservable$ is a Gibbs observable on $\eveHilbertSpace^{\otimes \blocklength}$ and $\gibbsEnergy \in [0,\infty)$ such that for every $\codewordIndex \in \{1, \dots, \codebookSize\}$, we have $\trace(\outputStateEve(\codewordIndex) \hat \gibbsObservable) \leq \gibbsEnergy$. Assume further that $\eavesdropperAdvantageDistinguishing(\outputStateEve) \leq 1$. Then, we have
\[
\eavesdropperAdvantageMutualInformation(\outputStateEve)
\leq
\eavesdropperAdvantageDistinguishing(\outputStateEve)
\gibbsEntropy{\hat \gibbsObservable}\left(
  \frac{2\gibbsEnergy}
       {\eavesdropperAdvantageDistinguishing(\outputStateEve)}
\right)
+
\binaryEntropy\left(
  \frac{\eavesdropperAdvantageDistinguishing(\outputStateEve)}{2}
\right),
\]
where
$\binaryEntropy: [0,1] \rightarrow \reals, \generalReal \mapsto -\generalReal \log \generalReal - (1-\generalReal)\log(1-\generalReal)$ is the binary entropy.
\end{lemma}
\begin{proof}
We fix an arbitrary probability distribution $\messageProbabilityDistribution$ on $\{1, \dots, \codebookSize\}$ and define
\begin{equation}
\label{eq:security-implication-minentropy}
\tilde{\codewordIndex}
:=
\argmin_{\codewordIndex \in \{1, \dots, \codebookSize\}}
  \vonNeumannEntropy{\outputStateEve(\codewordIndex)}.
\end{equation}
Then, we have
\begin{align*}
\holevoInformation{\messageProbabilityDistribution}{\outputStateEve}
&=
\vonNeumannEntropy{\Expectation_\messageRV \outputStateEve(\codewordIndex)}
-
\Expectation_\messageRV
  \vonNeumannEntropy{\outputStateEve(\messageRV)}
\\
\overset{(\ref{eq:security-implication-minentropy})}&{\leq}
\vonNeumannEntropy{\Expectation_\messageRV \outputStateEve(\codewordIndex)}
-
\vonNeumannEntropy{\outputStateEve(\tilde{\codewordIndex})}
\\
\overset{(a)}&{\leq}
\traceNorm{
  \Expectation_\messageRV \outputStateEve(\codewordIndex)
  -
  \outputStateEve(\tilde{\codewordIndex})
}
\gibbsEntropy{\hat \gibbsObservable}\left(
  \frac{2\gibbsEnergy}
  {
    \traceNorm{
      \Expectation_\messageRV \outputStateEve(\codewordIndex)
      -
      \outputStateEve(\tilde{\codewordIndex})
    }
  }
\right)
+
\binaryEntropy\left(
  \frac{
    \traceNorm{
      \Expectation_\messageRV \outputStateEve(\codewordIndex)
      -
      \outputStateEve(\tilde{\codewordIndex})
    }
  }{
    2
  }
\right)
\\
\overset{(b)}&{\leq}
\eavesdropperAdvantageDistinguishing(\outputStateEve)
\gibbsEntropy{\hat \gibbsObservable}\left(
  \frac{
    2\gibbsEnergy
  }{
    \eavesdropperAdvantageDistinguishing(\outputStateEve)
  }
\right)
+
\binaryEntropy\left(
  \frac{
    \eavesdropperAdvantageDistinguishing(\outputStateEve)
  }{
    2
  }
\right),
\end{align*}
where (a) is an application of~\cite[Lemma 15]{winter2016tight}. For (b), we first observe that due to Lemma~\ref{lemma:bochner-integral-basics}-\ref{item:bochner-integral-basics-trace-norm-jensen} and Definition~\ref{def:alternative-security}-\ref{item:distinguishing-security}, we have
\[
\traceNorm{
  \Expectation_\messageRV \outputStateEve(\messageRV)
  -
  \outputStateEve(\tilde{\codewordIndex})
}
\leq
\Expectation_\messageRV
  \traceNorm{
    \outputStateEve(\messageRV)
    -
    \outputStateEve(\tilde{\codewordIndex})
  }
\leq
\eavesdropperAdvantageDistinguishing(\outputStateEve)
\leq
1.
\]
The inequality in the first summand then follows from~\cite[Corollary 12]{winter2016tight}, and in the second summand from the well-known fact that binary entropy is nondecreasing on $[0,1/2]$.
\end{proof}
It is not clear from Lemma~\ref{lemma:distinguishing-to-information} how $\eavesdropperAdvantageMutualInformation(\outputStateEve)$ behaves for $\blocklength \rightarrow \infty$ since both $\outputStateEve$ and the Gibbs observable $\hat \gibbsObservable$ depend on $\blocklength$. In the following Lemmas~\ref{lemma:gibbs-optimizer-tensor-product} and~\ref{lemma:distinguishing-to-information-tensorized}, we derive a bound for $ \eavesdropperAdvantageMutualInformation(\outputStateEve) $ that depends on $\blocklength$ only explicitly and via $\eavesdropperAdvantageDistinguishing(\outputStateEve)$. To this end, we need a way to lift a Gibbs observable $\gibbsObservable$ on a Hilbert space $\hilbertSpace$ to a Gibbs observable $\gibbsObservable^\blocklength$ on $\hilbertSpace^{\otimes \blocklength}$, and a relation between $\gibbsEntropy{\gibbsObservable}$ and $\gibbsEntropy{\gibbsObservable^\blocklength}$. Let $(\gibbsEigenvalue_\generalIndex)_{\generalIndex \in \naturals}$ be the eigenvalues and $\ket{\gibbsEigenvector_\generalIndex}_{\generalIndex \in \naturals}$ be the eigenvectors associated with $\gibbsObservable$, i.e., $\gibbsObservable$ can be written as in (\ref{eq:gibbs-eigenvalues}) and (\ref{eq:gibbs-domain}). Then $(\ket{\gibbsEigenvector_{\generalIndex_{1}}} \otimes \dots \otimes \ket{\gibbsEigenvector_{\generalIndex_{\blocklength}}})_{\generalIndex_{1}, \ldots ,\generalIndex_{\blocklength} \in \naturals}  $ is an orthonormal basis of $ \hilbertSpace^{\otimes \blocklength}  $
and the operator 
\begin{equation}\label{eq:n-block-gibbs}
  \gibbsObservable^\blocklength
  :=
  \gibbsObservable \otimes \identityOperator \otimes \dots \otimes \identityOperator
  +
  \dots
  +
  \identityOperator \otimes \dots \otimes \identityOperator \otimes \gibbsObservable
\end{equation}
is well-defined on the set of (finite) linear combinations of $(\ket{\gibbsEigenvector_{\generalIndex_{1}}} \otimes \dots \otimes \ket{\gibbsEigenvector_{\generalIndex_{\blocklength}}})_{\generalIndex_{1}, \ldots ,\generalIndex_{\blocklength} \in \naturals}  $. Moreover, for any $ \generalIndex_{1}, \ldots ,\generalIndex_{\blocklength} \in \naturals$, we have
\begin{equation*}
 \gibbsObservable^{\blocklength}  \ket{\gibbsEigenvector_{\generalIndex_{1}}} \otimes \dots \otimes \ket{\gibbsEigenvector_{\generalIndex_{\blocklength}}}
 = (\gibbsEigenvalue_{\generalIndex_{1}}+\dots + \gibbsEigenvalue_{\generalIndex_{\blocklength}})\ket{\gibbsEigenvector_{\generalIndex_{1}}} \otimes \dots \otimes \ket{\gibbsEigenvector_{\generalIndex_{\blocklength}}}.
\end{equation*}
Then on the dense domain
\begin{equation}\label{eq:n-block-gibbs-domain}
\mathcal{D}(\gibbsObservable^{\blocklength}):= \left\{
          \hilbertSpaceElement \in \hilbertSpace^{\otimes \blocklength}
          :
          \sum_{\generalIndex_1,\ldots , \generalIndex_{\blocklength} \in \naturals}
            \absolute{ \gibbsEigenvalue_{\generalIndex_{1}}+\dots + \gibbsEigenvalue_{\generalIndex_{\blocklength}}      }^2
            \absolute{\innerProduct{  {\gibbsEigenvector_{\generalIndex_{1}}} \otimes \dots \otimes {\gibbsEigenvector_{\generalIndex_{\blocklength}}}      }{\hilbertSpaceElement}}^2
          <
          \infty
        \right\},
\end{equation}
we can write
\begin{equation}\label{eq:n-block-gibbs-alt}
   \gibbsObservable^{\blocklength} \ket{\hilbertSpaceElement}= \sum_{\generalIndex_1,\ldots , \generalIndex_{\blocklength} \in \naturals}(\gibbsEigenvalue_{\generalIndex_{1}}+\dots + \gibbsEigenvalue_{\generalIndex_{\blocklength}})\innerProduct{  {\gibbsEigenvector_{\generalIndex_{1}}} \otimes \dots \otimes {\gibbsEigenvector_{\generalIndex_{\blocklength}}}      }{\hilbertSpaceElement}
   \ket{\gibbsEigenvector_{\generalIndex_{1}}} \otimes \dots \otimes \ket{\gibbsEigenvector_{\generalIndex_{\blocklength}}}.
\end{equation}
It is easily shown using \cite[Theorem VIII.3 (c)]{reed1980functional} that (\ref{eq:n-block-gibbs-domain}) and (\ref{eq:n-block-gibbs-alt}) define a self-adjoint operator on $\hilbertSpace^{\otimes \blocklength}$.
Additionally, for any $\gibbsParameter\in (0,\infty)$, we have 
\begin{align*}
 \trace\big(\exp(-\gibbsParameter \gibbsObservable^\blocklength)\big)
 &=
 \sum_{\generalIndex_1,\ldots , \generalIndex_{\blocklength} \in \naturals} \innerProduct{ \gibbsEigenvector_{\generalIndex_{1}} \otimes \dots \otimes \gibbsEigenvector_{\generalIndex_{\blocklength}} }{\exp(-\gibbsParameter \gibbsObservable^{\blocklength}) \gibbsEigenvector_{\generalIndex_{1}} \otimes \dots \otimes \gibbsEigenvector_{\generalIndex_{\blocklength}}    } 
 \\
 &=
 \sum_{\generalIndex_1,\ldots , \generalIndex_{\blocklength} \in \naturals}\exp(-\gibbsParameter( \gibbsEigenvalue_{\generalIndex_{1}}+\dots + \gibbsEigenvalue_{\generalIndex_{\blocklength}}   )) 
 \displaybreak[0] \\
\overset{(a)}&{=}
\left( \sum_{\generalIndex \in \naturals}\exp(-\gibbsParameter \gibbsEigenvalue_{\generalIndex})\right)^{\blocklength}
\\
&<
\infty,
\end{align*}
where (a) is due to the theorem of Fubini-Tonelli applied to the $\blocklength$-fold product of the counting measure on $\naturals$.
Therefore,
\begin{equation}\label{eq:n-block-gibbs-regularity}
\exp(-\gibbsParameter\gibbsObservable^{\blocklength}) \in \traceClass{\hilbertSpace^{\otimes \blocklength}},    
\end{equation}
for all $\gibbsParameter\in (0,\infty) $. Consequently, (\ref{eq:n-block-gibbs-domain}), (\ref{eq:n-block-gibbs-alt}), and (\ref{eq:n-block-gibbs-regularity}) show that $\gibbsObservable^{\blocklength}$ is a Gibbs observable on $\hilbertSpace^{\otimes\blocklength}$.

\begin{lemma}
\label{lemma:gibbs-optimizer-tensor-product}
Let $\gibbsObservable$ be a Gibbs observable on $\hilbertSpace$, and let $\gibbsObservable^{\blocklength}$ be the corresponding Gibbs observable on $\hilbertSpace^{\otimes \blocklength}$ given by (\ref{eq:n-block-gibbs}) -- (\ref{eq:n-block-gibbs-alt}). Let $\gibbsEnergy \in [0,\infty)$. Then
\[
  \gibbsEntropy{\gibbsObservable^\blocklength}(\gibbsEnergy)
  =
  \blocklength \gibbsEntropy{\gibbsObservable}\left(
    \frac{\gibbsEnergy}{\blocklength}
  \right).
\]
\end{lemma}
\begin{proof}
Let $\generalQuantumState \in \densityOperators{\hilbertSpace^{\otimes\blocklength}}$ be the quantum state which attains the supremum in the definition of $\gibbsEntropy{\gibbsObservable^\blocklength}(\gibbsEnergy)$ (cf.~\cite[Proposition 1]{shirokov2006entropy}) analog to (\ref{eq:gibbs-max-entropy-function}), i.e., $\trace(\generalQuantumState\gibbsObservable^\blocklength) \leq \gibbsEnergy$ and $\vonNeumannEntropy{\generalQuantumState} = \gibbsEntropy{\gibbsObservable^\blocklength}(\gibbsEnergy)$. Further, use $\generalQuantumState_1, \dots, \generalQuantumState_\blocklength$ to denote the marginal states of $\generalQuantumState$ on the factors of $\hilbertSpace^{\otimes\blocklength}$. We have
\begin{equation}
\label{eq:gibbs-extension-sum}
  \trace(\generalQuantumState\gibbsObservable^\blocklength)
  =
  \trace(\generalQuantumState (\gibbsObservable \otimes \identityOperator \otimes \dots \otimes \identityOperator))
  +
  \dots
  +
  \trace(\generalQuantumState (\identityOperator \otimes \dots \otimes \identityOperator \otimes \gibbsObservable))
  =
  \trace(\generalQuantumState_1 \gibbsObservable)
  +
  \dots
  +
  \trace(\generalQuantumState_\blocklength \gibbsObservable).
\end{equation}
By sub-additivity of von Neumann entropy, we have 
\begin{equation}\label{eq:ineq-1}
  \vonNeumannEntropy{\generalQuantumState}
  \leq
  \vonNeumannEntropy{\generalQuantumState_1}
  +
  \dots
  +
  \vonNeumannEntropy{\generalQuantumState_\blocklength}
  \le
  \gibbsEntropy{\gibbsObservable}(\gibbsEnergy_1)
  +
  \dots
  +
  \gibbsEntropy{\gibbsObservable}(\gibbsEnergy_\blocklength),
  \end{equation}
where we defined $\gibbsEnergy_\blockindex := \trace(\generalQuantumState_\blockindex \gibbsObservable)$ for $\blockindex \in \{1, \dots, \blocklength\}$. On the other hand, for states $\generalQuantumStateTwo_\blockindex  $, $\blockindex \in \{1, \dots, \blocklength\}$, with
\[
\vonNeumannEntropy{\generalQuantumStateTwo_\blockindex}= \gibbsEntropy{\gibbsObservable}(\gibbsEnergy_\blockindex)
\]
we define
\[
\tilde{\generalQuantumStateTwo}:=\generalQuantumStateTwo_1 \otimes \dots \otimes \generalQuantumStateTwo_\blocklength.
\]
Then
\[
\trace(\tilde{\generalQuantumStateTwo}\gibbsObservable^{\blocklength})\le \gibbsEnergy
\]
by a calculation similar to (\ref{eq:gibbs-extension-sum}), and by additivity of von Neumann entropy on product states we obtain
\begin{equation}\label{eq:ineq-2}
 \gibbsEntropy{\gibbsObservable}(\gibbsEnergy_1)
  +
  \dots
  +
  \gibbsEntropy{\gibbsObservable}(\gibbsEnergy_\blocklength)
  =
  \vonNeumannEntropy{\generalQuantumStateTwo_1}
  +
  \dots
  +
  \vonNeumannEntropy{\generalQuantumStateTwo_\blocklength}
  =
\vonNeumannEntropy{\tilde{\generalQuantumStateTwo}}\le \gibbsEntropy{\gibbsObservable^{\blocklength}}(\gibbsEnergy)= \vonNeumannEntropy{\generalQuantumState}.
\end{equation}
From (\ref{eq:ineq-1}) and (\ref{eq:ineq-2}) we arrive at
\begin{equation}
\label{eq:gibbs-energy-intermediate-equality}
  \gibbsEntropy{\gibbsObservable^\blocklength}(\gibbsEnergy)
  =
  \vonNeumannEntropy{\generalQuantumState}
  =
  \vonNeumannEntropy{\generalQuantumState_1}
  +
  \dots
  +
  \vonNeumannEntropy{\generalQuantumState_\blocklength}
  =
  \gibbsEntropy{\gibbsObservable}(\gibbsEnergy_1)
  +
  \dots
  +
  \gibbsEntropy{\gibbsObservable}(\gibbsEnergy_\blocklength).
\end{equation}
From (\ref{eq:gibbs-extension-sum}), it is clear that $\gibbsEnergy \geq \gibbsEnergy_1 + \dots + \gibbsEnergy_\blocklength$. Let $\gibbsEnergy_1', \dots, \gibbsEnergy_\blocklength' \in [0,\infty)$ be such that $\gibbsEnergy_1' + \dots + \gibbsEnergy_\blocklength' \leq \gibbsEnergy$, but otherwise arbitrary. By choosing states $\generalQuantumState_1', \dots, \generalQuantumState_\blocklength'$ with $\trace(\generalQuantumState_\blockindex'\gibbsObservable) \leq \gibbsEnergy_\blockindex'$ and $\vonNeumannEntropy{\generalQuantumState_\blockindex'} = \gibbsEntropy{\gibbsObservable}(\gibbsEnergy_\blockindex')$, we can argue
\[
  \gibbsEntropy{\gibbsObservable}(\gibbsEnergy_1')
  +
  \dots
  +
  \gibbsEntropy{\gibbsObservable}(\gibbsEnergy_\blocklength')
  =
  \vonNeumannEntropy{\generalQuantumState_1'}
  +
  \dots
  +
  \vonNeumannEntropy{\generalQuantumState_\blocklength'}
  =
  \vonNeumannEntropy{\generalQuantumState_1' \otimes \dots \otimes \generalQuantumState_\blocklength'}
  \overset{(a)}{\leq}
  \gibbsEntropy{\gibbsObservable^\blocklength}(\gibbsEnergy_1' + \dots + \gibbsEnergy_\blocklength')
  \overset{(b)}{\leq}
  \gibbsEntropy{\gibbsObservable^\blocklength}(\gibbsEnergy),
\]
where (a) follows from a calculation similar to (\ref{eq:gibbs-extension-sum}) and the definition (\ref{eq:gibbs-max-entropy-function}), and (b) is by monotonicity of $\gibbsEntropy{\gibbsObservable_\blocklength}$. By (\ref{eq:gibbs-energy-intermediate-equality}), these inequalities hold with equality in case $\gibbsEnergy_1 = \gibbsEnergy_1', \dots, \gibbsEnergy_\blocklength = \gibbsEnergy_\blocklength'$, so we have
\[
\gibbsEntropy{\gibbsObservable}(\gibbsEnergy_1) + \dots + \gibbsEntropy{\gibbsObservable}(\gibbsEnergy_\blocklength)
=
\max\Big\{
  \gibbsEntropy{\gibbsObservable}(\gibbsEnergy_1') + \dots + \gibbsEntropy{\gibbsObservable}(\gibbsEnergy_\blocklength')
  ~:~
  \gibbsEnergy_1', \dots, \gibbsEnergy_\blocklength' \in [0,\infty)
  ,~~
  \gibbsEnergy_1' + \dots + \gibbsEnergy_\blocklength' \leq \gibbsEnergy
\Big\}.
\]
Moreover, we have for any such choice of $\gibbsEnergy_1', \dots, \gibbsEnergy_\blocklength'$
\begin{equation}
\label{eq:gibbs-concavity-consequence}
  \blocklength
  \gibbsEntropy{\gibbsObservable}\left(
    \frac{\gibbsEnergy}{\blocklength}
  \right)
  \overset{(a)}{\geq}
  \blocklength
  \gibbsEntropy{\gibbsObservable}\left(
    \sum_{\blockindex=1}^\blocklength
      \frac{1}{\blocklength}
      \gibbsEnergy_\blockindex'
  \right)
  \overset{(b)}{\geq}
  \blocklength
  \sum_{\blockindex=1}^\blocklength
    \frac{1}{\blocklength}
    \gibbsEntropy{\gibbsObservable}(\gibbsEnergy_\blockindex')
  =
  \gibbsEntropy{\gibbsObservable}(\gibbsEnergy_1')
  +
  \dots
  +
  \gibbsEntropy{\gibbsObservable}(\gibbsEnergy_\blocklength'),
\end{equation}
where (a) is again by monotonicity and (b) follows from the fact that $\gibbsEntropy{\gibbsObservable}$ is a concave function~\cite[Proposition 1-iii)]{shirokov2006entropy}. Due to the strict monotonicity (it is shown in~\cite[Proposition 1-ii)]{shirokov2006entropy} that the derivative is strictly positive) and strict concavity of $\gibbsEntropy{\gibbsObservable}$~\cite[Proposition 1-iii)]{shirokov2006entropy}, these inequalities clearly both hold with equality iff $\gibbsEnergy_1' = \dots = \gibbsEnergy_\blocklength' = \gibbsEnergy/\blocklength$, and since we know that $\gibbsEnergy_1, \dots, \gibbsEnergy_\blocklength$ maximize the right-hand side of (\ref{eq:gibbs-concavity-consequence}), we get $\gibbsEnergy_1 = \dots = \gibbsEnergy_\blocklength = \gibbsEnergy/\blocklength$. Substituting this in (\ref{eq:gibbs-energy-intermediate-equality}) proves the lemma.
\end{proof}
\begin{lemma}
\label{lemma:distinguishing-to-information-tensorized}
Let $\outputStateEve$ be as defined in (\ref{eq:eve-q-output-state}). Assume that for all $\inputAlphabetElement \in \inputAlphabet$, we have $\trace(\eveChannel(\inputAlphabetElement) \gibbsObservable) \leq \gibbsEnergy$, where $\gibbsObservable$ is a Gibbs observable on $\eveHilbertSpace$. Suppose further that $\eavesdropperAdvantageDistinguishing(\outputStateEve) \leq 1$. Then
\[
\eavesdropperAdvantageMutualInformation(\outputStateEve)
\leq
\blocklength
\eavesdropperAdvantageDistinguishing(\outputStateEve)
\gibbsEntropy{\gibbsObservable}\left(
  \frac{2\gibbsEnergy}
       {\eavesdropperAdvantageDistinguishing(\outputStateEve)}
\right)
+
\binaryEntropy\left(
  \frac{\eavesdropperAdvantageDistinguishing(\outputStateEve)}{2}
\right).
\]
\end{lemma}
\begin{proof}
Let $ \gibbsObservable^\blocklength$ be given in (\ref{eq:n-block-gibbs}) -- (\ref{eq:n-block-gibbs-alt}).  We have, for every $\inputAlphabetElement^\blocklength \in \inputAlphabet^\blocklength$,
\begin{equation}
\label{eq:gibbs-product-energy}
\trace(
  \eveChannel^\blocklength(\inputAlphabetElement^\blocklength)
  \gibbsObservable^\blocklength
)
=
\trace(\eveChannel(\inputAlphabetElement_1) \gibbsObservable)
+
\dots
+
\trace(\eveChannel(\inputAlphabetElement_\blocklength) \gibbsObservable)
\leq
\blocklength\gibbsEnergy.
\end{equation}
Note that for each $\generalIndex \in \{1,\ldots ,\blocklength \}$, the function $ \trace(\eveChannel(\inputAlphabetElement_\generalIndex) \gibbsObservable) $ is measurable due to representation as a series of measurable functions given in (\ref{eq:gibbs-trace}). Therefore, $ \trace(
  \eveChannel^\blocklength(\inputAlphabetElement^\blocklength)
  \gibbsObservable^\blocklength
) $ is a measurable and bounded function.
Thus, using the representation (\ref{eq:n-block-gibbs-alt}) of $\gibbsObservable^{\blocklength}$, we obtain, for every $\codewordIndex$,
\begin{align*}
  \trace(\outputStateEve(\codewordIndex)\gibbsObservable^\blocklength)
  &=
  \trace\left(
    \int \eveChannel^\blocklength (\inputAlphabetElement^\blocklength)\encoder (d \inputAlphabetElement^\blocklength, \codewordIndex)
    \gibbsObservable^\blocklength
  \right)
  \\
  \overset{(\ref{eq:gibbs-trace})}&{=}
  \sum_{\generalIndex_1,\ldots , \generalIndex_{\blocklength} \in \naturals}
  (\gibbsEigenvalue_{\generalIndex_{1}}+\dots + \gibbsEigenvalue_{\generalIndex_{\blocklength}})
    \innerProduct{\int \eveChannel^\blocklength (\inputAlphabetElement^\blocklength)\encoder (d \inputAlphabetElement^\blocklength, \codewordIndex)\gibbsEigenvector_{\generalIndex_{1}}\otimes \dots \otimes  \gibbsEigenvector_{\generalIndex_{\blocklength}} }{ \gibbsEigenvector_{\generalIndex_{1}}\otimes \dots \otimes  \gibbsEigenvector_{\generalIndex_{\blocklength}} }
  \\
  \overset{(a)}&{=}
 \sum_{\generalIndex_1,\ldots , \generalIndex_{\blocklength} \in \naturals}
  (\gibbsEigenvalue_{\generalIndex_{1}}+\dots + \gibbsEigenvalue_{\generalIndex_{\blocklength}}) 
    \int
    \innerProduct{\eveChannel^\blocklength (\inputAlphabetElement^\blocklength) \gibbsEigenvector_{\generalIndex_{1}}\otimes \dots \otimes  \gibbsEigenvector_{\generalIndex_{\blocklength}} }{ \gibbsEigenvector_{\generalIndex_{1}}\otimes \dots \otimes  \gibbsEigenvector_{\generalIndex_{\blocklength}}  }
    \encoder (d \inputAlphabetElement^\blocklength, \codewordIndex)
  \\
  \overset{(b)}&{=}
  \int
   \sum_{\generalIndex_1,\ldots , \generalIndex_{\blocklength} \in \naturals}
  (\gibbsEigenvalue_{\generalIndex_{1}}+\dots + \gibbsEigenvalue_{\generalIndex_{\blocklength}})
    \innerProduct{\eveChannel^\blocklength (\inputAlphabetElement^\blocklength)  \gibbsEigenvector_{\generalIndex_{1}}\otimes \dots \otimes  \gibbsEigenvector_{\generalIndex_{\blocklength}} }{ \gibbsEigenvector_{\generalIndex_{1}}\otimes \dots \otimes  \gibbsEigenvector_{\generalIndex_{\blocklength}} }
  \encoder (d \inputAlphabetElement^\blocklength, \codewordIndex)
  \\
  \overset{(\ref{eq:gibbs-trace})}&{=}
    \int \trace(\eveChannel^\blocklength (\inputAlphabetElement^\blocklength)\gibbsObservable^\blocklength)\encoder (d \inputAlphabetElement^\blocklength, \codewordIndex)
  \\
  \overset{(\ref{eq:gibbs-product-energy})}&{\leq}
  \blocklength \gibbsEnergy,
\end{align*}
where (a) is by Lemma~\ref{lemma:bochner-integral-basics}-\ref{item:bochner-integral-basics-bounded-functional} and (b) is by the monotone convergence theorem. Therefore, we can apply Lemmas~\ref{lemma:distinguishing-to-information} and~\ref{lemma:gibbs-optimizer-tensor-product} and obtain
\begin{align*}
\eavesdropperAdvantageMutualInformation(\outputStateEve)
&\leq
\eavesdropperAdvantageDistinguishing(\outputStateEve)
\gibbsEntropy{\gibbsObservable^\blocklength}\left(
  \frac{2\blocklength\gibbsEnergy}
       {\eavesdropperAdvantageDistinguishing(\outputStateEve)}
\right)
+
\binaryEntropy\left(
  \frac{\eavesdropperAdvantageDistinguishing(\outputStateEve)}{2}
\right)
\\
&=
\blocklength
\eavesdropperAdvantageDistinguishing(\outputStateEve)
\gibbsEntropy{\gibbsObservable}\left(
  \frac{2\gibbsEnergy}
       {\eavesdropperAdvantageDistinguishing(\outputStateEve)}
\right)
+
\binaryEntropy\left(
  \frac{\eavesdropperAdvantageDistinguishing(\outputStateEve)}{2}
\right),
\end{align*}
concluding the proof.
\end{proof}
It is known from~\cite[Proposition 1-ii)]{shirokov2006entropy} that
$
\eavesdropperAdvantageDistinguishing(\outputStateEve)
\gibbsEntropy{\gibbsObservable}\left(
  2\gibbsEnergy/\eavesdropperAdvantageDistinguishing(\outputStateEve)
\right)
$
tends to $0$ as $\eavesdropperAdvantageDistinguishing(\outputStateEve)$ tends to $0$, and it is also clear that the binary entropy term vanishes in this case. However, in general, the presence of the extra factor $\blocklength$ in the upper bound of Lemma~\ref{lemma:distinguishing-to-information-tensorized} means that an additional argument is necessary to show that distinguishing security implies mutual information security. We do not have such an argument for the general case, but we show in the following lemma that the implication holds in important special cases. The first of these cases is that the eavesdropper's output is finite-dimensional. The other cases are infinite dimensional and assume the existence of Gibbs observables for the eavesdropper's output that are closely related to the Hamiltonian of harmonic oscillator of bounded energy. Therefore, they specialize to bosonic systems with either one or finitely many modes. In this case, the Hilbert space under consideration is $\eveHilbertSpace=L^2(\reals)$ and the Gibbs observable is given by
\begin{equation}\label{eq:def-number-operator}
  \gibbsObservable:=\adjoint{a}a,  
\end{equation}
defined, for the moment, on the domain
\begin{equation*}
    \mathcal{D}(\gibbsObservable)=\mathcal{S}(\reals)
\end{equation*}
where $\adjoint{a}$ and $a$ denote the creation resp. annihilation operators (cf. \cite[Chapter 12.1]{holevo2019quantum}) defined on $\mathcal{S}(\reals)$ and where $\mathcal{S}(\reals)$ is the Schwartz space of rapidly decreasing functions which is dense in $L^2(\reals)$. It is well known that $\gibbsObservable:=\adjoint{a} a  $ is essentially self-adjoint on $\mathcal{D}(\gibbsObservable)$ (cf. \cite[Section 2.2.7, Example 4]{moretti2019fundamental} from which the claim follows by a slight modification of the argument). The operator $\gibbsObservable$ is, up to additive scalar multiple of $\identityOperator$, the Hamiltonian of the quantum harmonic oscillator \cite[Section 12.1.2]{holevo2019quantum}. Moreover, the operator $\gibbsObservable $ has discrete spectrum with spectral decomposition (cf. \cite[Section 2.2.7, Example 4]{moretti2019fundamental}, \cite[Chapter 12.1]{holevo2019quantum})
\begin{equation}\label{eq:spec-decomp-number-operator}
  \gibbsObservable=\sum_{\generalIndex=0}^{\infty} \generalIndex\ket{\gibbsEigenvector_{\generalIndex}}\bra{\gibbsEigenvector_{\generalIndex}},
\end{equation}
where $\ket{\gibbsEigenvector_{\generalIndex}}_{\generalIndex\in \naturals_{0}}$ is an orthonormal basis of $ \eveHilbertSpace=L^2(\reals) $ consisting of so-called number state vectors. The final domain of the operator $\gibbsObservable$ is then given by
\begin{equation*}
\bar{\mathcal{D}}(\gibbsObservable)=\left\{\ket{\hilbertSpaceElement}\in L^2(\reals): \sum_{\generalIndex=0}^{\infty} \generalIndex^2 \absolute{\innerProduct{\gibbsEigenvector_{\generalIndex}}{\hilbertSpaceElement}}^2<\infty \right\}.
\end{equation*}
The operator $\gibbsObservable$ is self-adjoint on $\bar{\mathcal{D}}(\gibbsObservable) $.

For the case of $\numModes$ modes, we similarly define the operator
\begin{equation}\label{eq:def-s-mode-number-operator}
  \gibbsObservable^{(\numModes)} := \sum_{\indexModes=1}^\numModes \identityOperator^{\otimes \indexModes-1} \otimes \frequency_\indexModes \gibbsObservable_\indexModes \otimes \identityOperator^{\otimes \numModes - \indexModes},
  \end{equation}
on $ \mathcal{D}(\gibbsObservable^{\numModes}):=\mathcal{S}(\reals)^{\otimes \numModes} \subseteq (L^2(\reals))^{\otimes \numModes}$, where $\frequency_1,\ldots , \frequency_\numModes \in (0,\infty)$ denote the frequencies of the modes and the operators $\gibbsObservable_1, \ldots , \gibbsObservable_\numModes$ are given in (\ref{eq:def-number-operator}) with the respective creation and annihilation operators of the modes. Using the spectral decomposition (\ref{eq:spec-decomp-number-operator}) for each of the modes, we obtain the spectral decomposition
\begin{equation*}
\gibbsObservable^{(\numModes)}
=
\sum_{\generalIndex_1, \dots, \generalIndex_\numModes=0}^\infty\left(
  \sum_{\indexModes=1}^\numModes
    \frequency_\indexModes
    \generalIndex_\indexModes
  \right)
    \rankOneOperator{
      \gibbsEigenvector_{1,\generalIndex_1}
      \otimes
      \dots
      \otimes
      \gibbsEigenvector_{\numModes,\generalIndex_\numModes}
    },
\end{equation*}
where
$(
  \gibbsEigenvector_{1,\generalIndex_1}
  \otimes
  \dots
  \otimes
  \gibbsEigenvector_{\numModes,\generalIndex_\numModes}
)$
is the orthonormal basis of $ (L^2(\reals))^{\otimes \numModes} $ composed of bases consisting of number state vectors of individual modes (\ref{eq:spec-decomp-number-operator}). Defining
\begin{equation*}
\bar{\mathcal{D}}\left(\gibbsObservable^{(\numModes)}\right):= \left\{ \ket{\hilbertSpaceElement}\in (L^2(\reals))^{\otimes \numModes} : \sum_{\generalIndex_1, \dots, \generalIndex_\numModes=0}^\infty \left(
  \sum_{\indexModes=1}^\numModes
    \frequency_\indexModes
    \generalIndex_\indexModes
  \right)^2 \absolute{ \innerProduct{ \gibbsEigenvector_{1,\generalIndex_1}
      \otimes
      \dots
      \otimes
      \gibbsEigenvector_{\numModes,\generalIndex_\numModes} }{\hilbertSpaceElement}  }^2 <\infty
\right\},
\end{equation*}
we note that the operator $\gibbsObservable^{(\numModes)} $ is self-adjoint on the dense domain $ \bar{\mathcal{D}}(\gibbsObservable^{(\numModes)}) $ \cite[Theorem VIII.3 (c)]{reed1980functional}.

In order to show that $\gibbsObservable$ and $\gibbsObservable^{(\numModes)} $ are Gibbs observables, it remains to verify that for all $\gibbsParameter\in (0,\infty)$, we have 
\begin{equation*}
  \exp(-\gibbsParameter\gibbsObservable) \in \traceClass{L^2(\reals)}\quad \textrm{ and } \quad \exp(-\gibbsParameter\gibbsObservable^{(\numModes)})\in \traceClass{(L^2(\reals))^{\otimes \numModes}},
\end{equation*}
which will be proven in the implications 3) and 4) of Lemma \ref{lemma:distinguishing-to-information-specialized}. An example of a \gls{cqq} wiretap channel for which $\gibbsObservable$ is a Gibbs observable is given in Section \ref{sec:gaussian-cq-basics}, and a general account and further examples of energy-constrained quantum channels can be found in \cite[Chapter 12]{holevo2019quantum}.
According to the following lemma, superlinear convergence of the distinguishing security level to $0$ is sufficient to guarantee mutual information security in the finite-dimensional case. In the case that the eavesdropper's channel has energy constraints described by Gibbs observables of the form (\ref{eq:def-number-operator}), (\ref{eq:def-s-mode-number-operator}), superquadratic convergence is a sufficient condition. Therefore, mutual information security in these cases follows in particular from the exponential bounds on distinguishing security level that we obtain in this work.
\begin{lemma}
\label{lemma:distinguishing-to-information-specialized}
$(\outputStateEve)_{\blocklength \in \naturals}$ is mutual information secure if any of the following assumption holds:
\begin{enumerate}
    \item \label{item:distinguishing-to-information-specialized-finite}$\dim \eveHilbertSpace = \hilbertSpaceDimension < \infty$ and $\blocklength \eavesdropperAdvantageDistinguishing(\outputStateEve) \rightarrow 0$ as $\blocklength \rightarrow \infty$.
    \item \label{item:distinguishing-to-information-specialized-gibbs} There exists a Gibbs observable $\gibbsObservable$ on $\eveHilbertSpace$ and $\multConstant_1,\gibbsEnergy \in [0,\infty)$ such that for all $\gibbsParameter \in (0,\infty)$, we have $\log \trace \exp(-\gibbsParameter\gibbsObservable) \leq \multConstant_1\gibbsParameter^{-1}$, and for all $\inputAlphabetElement \in \inputAlphabet$, we have $\trace(\eveChannel(\inputAlphabetElement)\gibbsObservable) \leq \gibbsEnergy$. Moreover, there exist $\multConstant_2 \in (0, \infty), \expConstant \in (2,\infty)$ with $\eavesdropperAdvantageDistinguishing(\outputStateEve) \leq \multConstant_2 \blocklength^{-\expConstant}$.
    \item \label{item:distinguishing-to-information-specialized-harmonic} There is a Gibbs observable $\gibbsObservable := \sum_{\generalIndex=0}^\infty \generalIndex\rankOneOperator{\gibbsEigenvector_\generalIndex}$ on $\eveHilbertSpace$, where $\ket{\gibbsEigenvector_\generalIndex}_{\generalIndex=0}^\infty$ is an orthonormal basis of $\eveHilbertSpace$. Moreover, there is $\gibbsEnergy \in (0,\infty)$ such that for all $\inputAlphabetElement \in \inputAlphabet$, we have $\trace(\eveChannel(\inputAlphabetElement)\gibbsObservable) \leq \gibbsEnergy$, and there are $\multConstant \in (0, \infty), \expConstant \in (2,\infty)$ with $\eavesdropperAdvantageDistinguishing(\outputStateEve) \leq \multConstant \blocklength^{-\expConstant}$.
    \item \label{item:distinguishing-to-information-specialized-harmonic-multimode} We have $\eveHilbertSpace = \hilbertSpace^{\otimes \numModes}$ and a Gibbs observable $\gibbsObservable := \sum_{\indexModes=1}^\numModes \identityOperator^{\otimes \indexModes-1} \otimes \frequency_\indexModes \gibbsObservable_\indexModes \otimes \identityOperator^{\otimes \numModes - \indexModes}$, where each $\gibbsObservable_\indexModes$ is a Gibbs observable on $\hilbertSpace$ of the same form as in \ref{item:distinguishing-to-information-specialized-harmonic}) and each $\frequency_\indexModes \in (0,\infty)$. We assume further that there is $\gibbsEnergy \in (0,\infty)$ such that for all $\inputAlphabetElement \in \inputAlphabet$, we have $\trace(\eveChannel(\inputAlphabetElement)\gibbsObservable) \leq \gibbsEnergy$, and there are $\multConstant \in (0, \infty), \expConstant \in (2,\infty)$ with $\eavesdropperAdvantageDistinguishing(\outputStateEve) \leq \multConstant \blocklength^{-\expConstant}$.
\end{enumerate}
\end{lemma}
\begin{proof}
We first prove the lemma for case \ref{item:distinguishing-to-information-specialized-finite}). In this case, the proof proceeds in parallel to Lemma~\ref{lemma:distinguishing-to-information}. We fix an arbitrary probability distribution $\messageProbabilityDistribution$ on $\{1, \dots, \codebookSize\}$ and define $\tilde{\codewordIndex}$ as in (\ref{eq:security-implication-minentropy}). With this, we obtain, for sufficiently large $\blocklength$,
\begin{align*}
\holevoInformation{\messageProbabilityDistribution}{\outputStateEve}
&=
\vonNeumannEntropy{\Expectation_\messageRV \outputStateEve(\codewordIndex)}
-
\Expectation_\messageRV
  \vonNeumannEntropy{\outputStateEve(\messageRV)}
\\
\overset{(\ref{eq:security-implication-minentropy})}&{\leq}
\vonNeumannEntropy{\Expectation_\messageRV \outputStateEve(\codewordIndex)}
-
\vonNeumannEntropy{\outputStateEve(\tilde{\codewordIndex})}
\displaybreak[0] \\
\overset{(a)}&{\leq}
\traceNorm{
  \Expectation_\messageRV \outputStateEve(\codewordIndex)
  -
  \outputStateEve(\tilde{\codewordIndex})
}
\blocklength \log \hilbertSpaceDimension
+
\binaryEntropy\left(
  \frac{
    \traceNorm{
      \Expectation_\messageRV \outputStateEve(\codewordIndex)
      -
      \outputStateEve(\tilde{\codewordIndex})
    }
  }{
    2
  }
\right)
\\
\overset{(b)}&{\leq}
\eavesdropperAdvantageDistinguishing(\outputStateEve)
\blocklength \log \hilbertSpaceDimension
+
\binaryEntropy\left(
  \frac{
    \eavesdropperAdvantageDistinguishing(\outputStateEve)
  }{
    2
  }
\right),
\end{align*}
where (a) is due to the universal bound of~\cite[Lemma 1]{winter2016tight} (note that $\dim \eveHilbertSpace^{\otimes \blocklength} = \hilbertSpaceDimension^\blocklength$) and (b) holds for $\blocklength$ large enough so that $\eavesdropperAdvantageDistinguishing(\outputStateEve) \leq 1$. Clearly, this bound vanishes as $\blocklength \eavesdropperAdvantageDistinguishing(\outputStateEve)$ vanishes.

Next, we prove the lemma for case \ref{item:distinguishing-to-information-specialized-gibbs}). We have
\begin{align*}
\blocklength
\eavesdropperAdvantageDistinguishing(\outputStateEve)
\gibbsEntropy{\gibbsObservable}\left(
  \frac{2\gibbsEnergy}
       {\eavesdropperAdvantageDistinguishing(\outputStateEve)}
\right)
\overset{(a)}&{=}
\blocklength
\eavesdropperAdvantageDistinguishing(\outputStateEve)
\inf_{\gibbsParameter \in (0,\infty)}\left(
  \gibbsParameter
  \frac{2\gibbsEnergy}{\eavesdropperAdvantageDistinguishing(\outputStateEve)}
  +
  \log \trace \exp(-\gibbsParameter\gibbsObservable)
\right)
\\
\overset{(b)}&{\leq}
\blocklength
\eavesdropperAdvantageDistinguishing(\outputStateEve)
\inf_{\gibbsParameter \in (0,\infty)}\left(
  \gibbsParameter
  \frac{2\gibbsEnergy}{\eavesdropperAdvantageDistinguishing(\outputStateEve)}
  +
  \frac{\multConstant_1}{\gibbsParameter}
\right)
\\
\overset{(c)}&{\leq}
2\gibbsEnergy
\blocklength^{1-\expConstant'}
+
\multConstant_1
\eavesdropperAdvantageDistinguishing(\outputStateEve)
\blocklength^{1+\expConstant'}
\\
\overset{(d)}&{\leq}
2\gibbsEnergy
\blocklength^{1-\expConstant'}
+
\multConstant_1 \multConstant_2
\blocklength^{1+\expConstant'-\expConstant},
\end{align*}
where (a) uses the variational representation of the function $\gibbsEntropy{\gibbsObservable}$~\cite[Proposition 1-3)]{shirokov2006entropy}, (b) follows from the assumption $\log \trace \exp(-\gibbsParameter\gibbsObservable) \leq \multConstant_1\gibbsParameter^{-1}$, (c) follows by upper bounding the infimum with the choices $\gibbsParameter:=\blocklength^{-\expConstant'}$ and $\expConstant' \in (1,\expConstant-1)$, and in (d) we have substituted the upper bound for $\eavesdropperAdvantageDistinguishing(\outputStateEve)$ from the lemma statement. The obtained upper bound vanishes, and thus mutual information security follows from Lemma~\ref{lemma:distinguishing-to-information-tensorized} and the fact that binary entropy vanishes at $0$.

The remaining two cases are special cases of \ref{item:distinguishing-to-information-specialized-gibbs}). For \ref{item:distinguishing-to-information-specialized-harmonic}), we note that with $\gibbsObservable$ as defined in this case, we have for $\gibbsParameter \in (0,\infty)$
\begin{align}
\nonumber
\log\trace\exp(-\gibbsParameter\gibbsObservable)
&=
\log
\sum_{\generalIndex = 0}^\infty
  \exp(-\gibbsParameter\generalIndex)
\\
\nonumber
\overset{(a)}&{=}
-
\log(1-\exp(-\gibbsParameter))
\\
\nonumber
\overset{(b)}&{\leq}
-
\left(
  1
  -
  \frac{1}{1-\exp(-\gibbsParameter)}
\right)
\\
\nonumber
&=
\frac{1}{\exp(\gibbsParameter)-1}
\\
\label{eq:distinguishing-to-information-specialized-harmonic-trace}
\overset{(c)}&{\leq}
\frac{1}{\gibbsParameter},
\end{align}
where (a) is the known convergence behavior of the geometric series, (b) is due to the inequality $\forall \generalReal \in (0,\infty)~\log(\generalReal) \geq 1-1/\generalReal$, and (c) is due to the inequality $\forall \generalReal \in \reals~ \exp(\generalReal) \geq \generalReal + 1$. Thus, the assumption of case \ref{item:distinguishing-to-information-specialized-gibbs}) is satisfied.

For case \ref{item:distinguishing-to-information-specialized-harmonic-multimode}), we fix suitable orthonormal bases $\ket{\gibbsEigenvector_{1,\generalIndex}}_{\generalIndex=0}^\infty, \dots, \ket{\gibbsEigenvector_{\numModes,\generalIndex}}_{\generalIndex=0}^\infty$ of $\hilbertSpace$ such that
\begin{align*}
\gibbsObservable
&=
\sum_{\indexModes=1}^\numModes
  \sum_{\generalIndex_1, \dots, \generalIndex_\numModes=0}^\infty
    \frequency_\indexModes
    \generalIndex_\indexModes
    \rankOneOperator{
      \gibbsEigenvector_{1,\generalIndex_1}
      \otimes
      \dots
      \otimes
      \gibbsEigenvector_{\numModes,\generalIndex_\numModes}
    }
\\
&=
\sum_{\generalIndex_1, \dots, \generalIndex_\numModes=0}^\infty\left(
  \sum_{\indexModes=1}^\numModes
    \frequency_\indexModes
    \generalIndex_\indexModes
  \right)
    \rankOneOperator{
      \gibbsEigenvector_{1,\generalIndex_1}
      \otimes
      \dots
      \otimes
      \gibbsEigenvector_{\numModes,\generalIndex_\numModes}
    }.
\end{align*}
It suffices to show $\log \trace \exp(-\gibbsParameter\gibbsObservable) = \gibbsParameter^{-1}(\frequency_1^{-1} + \dots + \frequency_\numModes^{-1})$, which we do by induction on $\numModes$. For $\numModes=1$, this reduces to (\ref{eq:distinguishing-to-information-specialized-harmonic-trace}). For $\numModes > 1$, we calculate
\begin{align*}
\log \trace \exp(-\gibbsParameter\gibbsObservable)
&=
\log
\sum_{\generalIndex_1, \dots, \generalIndex_\numModes=0}^\infty
  \exp\left(
    -\gibbsParameter
    \sum_{\indexModes=1}^\numModes
      \frequency_\indexModes
      \generalIndex_\indexModes
  \right)
\\
&=
\log
\sum_{\generalIndex_1, \dots, \generalIndex_\numModes=0}^\infty
    \exp\left(
      -\gibbsParameter
      \sum_{\indexModes=1}^{\numModes-1}
        \frequency_\indexModes
        \generalIndex_\indexModes
    \right)
    \exp(
      -\gibbsParameter
      \frequency_\numModes
      \generalIndex_\numModes
    )
\displaybreak[0] \\
&=
\log
\sum_{\generalIndex_1, \dots, \generalIndex_{\numModes-1}=0}^\infty
  \exp\left(
    -\gibbsParameter
    \sum_{\indexModes=1}^{\numModes-1}
      \frequency_\indexModes
      \generalIndex_\indexModes
  \right)
+
\log
  \sum_{\generalIndex_\numModes=0}^\infty
    \exp(
      -\gibbsParameter
      \frequency_\numModes
      \generalIndex_\numModes
    )
\displaybreak[0] \\
\overset{(\ref{eq:distinguishing-to-information-specialized-harmonic-trace})}&{\leq}
\frac{1}{\frequency_\numModes\gibbsParameter}
+
\log
\sum_{\generalIndex_1, \dots, \generalIndex_{\numModes-1}=0}^\infty
  \exp\left(
    -\gibbsParameter
    \sum_{\indexModes=1}^{\numModes-1}
      \frequency_\indexModes
      \generalIndex_\indexModes
  \right)
\\
\overset{(a)}&{\leq}
\gibbsParameter^{-1}(\frequency_1^{-1} + \dots + \frequency_\numModes^{-1}),
\end{align*}
where the induction hypothesis is used in step (a).
\end{proof}

Observing the trivial inequalities
\[
\eavesdropperAdvantageWeak(\outputStateEve)
\leq
\eavesdropperAdvantageStrong(\outputStateEve)
\leq
\eavesdropperAdvantageMutualInformation(\outputStateEve),
\]
the preceding lemmas imply the following equivalences and implications for the asymptotic notions about sequences of codes in parallel to the classical case~\cite[Figure 2]{bellare2012cryptographic}
\[
\text{semantically secure}
\Leftrightarrow
\text{distinction secure}
\Leftrightarrow
\text{mutual information secure}
\Rightarrow
\text{strongly secure}
\Rightarrow
\text{weakly secure}.
\]
It should be noted that Lemma~\ref{lemma:distinguishing-to-information-specialized} shows
\[
\text{distinction secure} \Rightarrow \text{mutual information secure}
\]
only under additional assumptions which capture, however, some practically important cases. But even in the general case, both mutual information secure and distinction secure sequences of codes are also semantically secure according to Definition~\ref{def:semantic-security}.

\section{\texorpdfstring{\gls{cq}}{cq} Channel Resolvability and Coding}
\label{sec:res-code}
\begin{figure}
    \centering
    \begin{subfigure}[t]{0.45\textwidth}
        \centering
        \includegraphics[width=.9\textwidth, trim=0 12 0 12 mm]{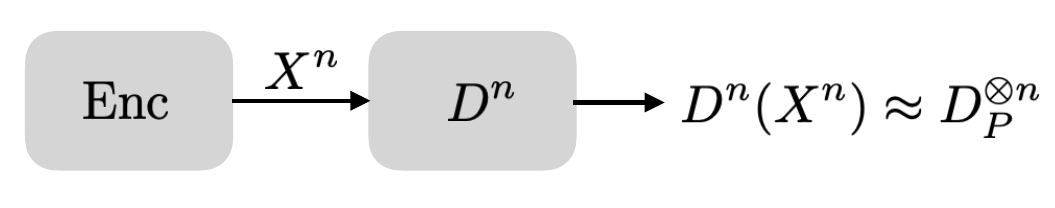}
        \caption{Generic \gls{cq} channel model for channel resolvability. }
        \label{fig:cq-resolv}
    \end{subfigure}
    \hfill
    \begin{subfigure}[t]{0.45\textwidth}
        \centering
        \includegraphics[width=\textwidth]{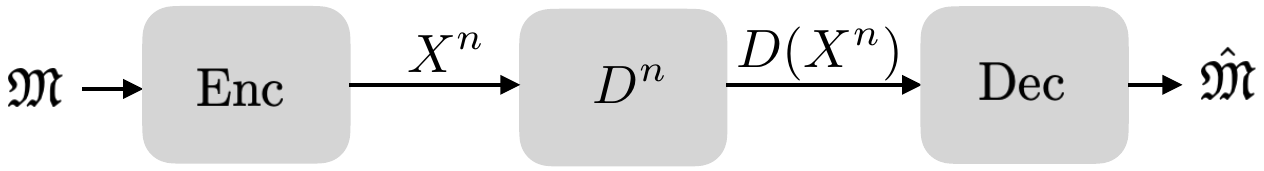}
        \caption{Generic \gls{cq} channel model for transmission of messages.}
        \label{fig:cq-coding}
    \end{subfigure}
    \caption{Point-to-point \gls{cq} channel models considered in Section~\ref{sec:res-code}.}
    \label{fig:resolvability}
\end{figure}

Important prerequisites in proving achievability of rates for the wiretap channel will be results of achievability for the channel resolvability problem, depicted in Fig.~\ref{fig:cq-resolv}, and the channel coding problem, depicted in Fig.~\ref{fig:cq-coding}.

Here, we just consider a point-to-point \gls{cq} channel described by a measurable map $\classicalQuantumChannel: \inputAlphabet \rightarrow \densityOperators{\hilbertSpace}$ with some separable Hilbert space $\hilbertSpace$. We focus on two different problems: In \emph{channel resolvability}, the question will be which transmit strategies (i.e., rules for generating $\inputRV^\blocklength$) the transmitter can employ to approximate the output state $\classicalQuantumChannel_\inputDistribution^{\otimes \blocklength}$ by $\classicalQuantumChannel^\blocklength(\inputRV^\blocklength)$ at the receiver. In \emph{channel coding}, the task is to encode a given message for transmission through the channel in such a way that the receiver can with high probability decode the message. In the proof of Theorems~\ref{theorem:wiretap-ccq} and~\ref{theorem:wiretap-cq}, the security criterion can then be shown by applying the resolvability result to the channel $\classicalQuantumChannel = \eveChannel$, and for the case of a quantum output at the legitimate receiver, the average decoding error criterion can be shown by applying the channel coding result to the channel $\classicalQuantumChannel = \bobcqChannel$ (for the case of classical output at $\bob$, we cite a classical channel coding result which is used instead).

In the solution to both of these problems, we use the same standard random codebook construction: The random codebook $\codebook$ of block length $\blocklength$ and size $\codebookSize$ is of the form $\codebook := (\codebook(\codewordIndex))_{\codewordIndex=1}^\codebookSize$ where the codewords $\codebook(\codewordIndex)$ are vectors with entries from $\inputAlphabet$ and length $\blocklength$ where all entries across all codewords are i.i.d. and follow the distribution $\inputDistribution$.

Both theorems stated in this section use the technical assumption that there is some $\renyiorder_{\min} \in (0,1)$ such that:
\begin{equation}
\label{eq:technical-assumptions}
\begin{aligned}
&\classicalQuantumChannel_\inputDistribution^{\renyiorder_{\min}} \in \traceClass{\hilbertSpace},
\\
&\text{for $\inputDistribution$-almost all }
\inputAlphabetElement:~
\classicalQuantumChannel(\inputAlphabetElement)^{\renyiorder_{\min}} \in \traceClass{\hilbertSpace},
\\
&\text{the Bochner integral }
\Expectation (\classicalQuantumChannel(\inputRV)^{\renyiorder_{\min}}) \in \traceClass{\hilbertSpace}
\text{ exists}.
\end{aligned}
\end{equation}
It is an immediate consequence of Lemmas~\ref{lemma:renyi-entropy-basics-preliminary} and~\ref{lemma:renyi-entropy-basics} that $\holevoInformation{\inputDistribution}{\classicalQuantumChannel} < \infty$ whenever (\ref{eq:technical-assumptions}) holds, and we will implicitly use this fact in the following.

For channel resolvability, each fixed realization of $\codebook$ defines an induced channel output density
\begin{equation}
\label{eq:codebook-channel-output-definition}
\classicalQuantumChannel_\codebook
:=
\frac{1}{\codebookSize}
\sum\limits_{\codewordIndex=1}^\codebookSize
  \classicalQuantumChannel^\blocklength(\codebook(\codewordIndex))
\end{equation}
which results if the transmitter chooses a codeword for transmission through the channel uniformly at random. With these definitions, we now have the necessary terminology to state our resolvability  and coding results. The proofs are deferred to Sections~\ref{sec:resolvability-proof} and~\ref{sec:coding-proof}, and in Sections~\ref{sec:proof-concentration} and ~\ref{sec:proofs-cost-constraint}, we analyze the concentration behavior of the errors and extend both results to the case of cost-constrained channel inputs.

\begin{theorem}
\label{theorem:q-resolvability}
Let $\codebookRate > \quantumInformation$, and suppose $\codebookSize \geq \exp(\blocklength\codebookRate)$. Moreover, assume that (\ref{eq:technical-assumptions}) holds. Then, for all $\blocklength \in \naturals$, we have
\begin{equation}
\label{eq:q-resolvability-all-blocklengths}
\Expectation_\codebook
  \traceNorm{
    \classicalQuantumChannel_\codebook
    -
    \classicalQuantumChannel_\inputDistribution^{\otimes \blocklength}
  }
\leq
\resolvabilityBoundFunc(\codebookRate,\blocklength),
\end{equation}
with $\resolvabilityBoundFunc$ defined in \eqref{eq:resolvabilityBoundFunc}. Furthermore, $\resolvabilityBoundFunc(\codebookRate,\blocklength)$ tends to $0$ exponentially as $\blocklength$ tends to $\infty$. That is, for sufficiently large $\blocklength \in \naturals$, we have $\finalconst \in (0,\infty)$ with
\begin{equation}
\label{eq:q-resolvability-large-blocklengths}
\resolvabilityBoundFunc(\codebookRate,\blocklength)
\leq
\exp(-\blocklength\finalconst).
\end{equation}
\end{theorem}

For channel coding, given a fixed codebook $\codebook$, the transmitter chooses the codeword $\codebook(\messageRV)$ to encode a message $\messageRV$.

\begin{theorem}
\label{theorem:decoding}
Let $\codebookRate < \quantumInformation$, and suppose $\codebookSize \leq \exp(\blocklength\codebookRate)$. Moreover, assume that (\ref{eq:technical-assumptions}) holds. Then, for each $\codebook$, there is a decoding \gls{povm} $(\decoderpovm_\codewordIndex)_{\codewordIndex=1}^\codebookSize$ such that every $\decoderpovm_\codewordIndex$ is measurable as a function of $\codebook$, and for all $\blocklength \in \naturals$, we have
\begin{equation}
\label{eq:decoding-error-all-blocklengths}
\Expectation_\codebook\left(
  \frac{1}{\codebookSize}
  \sum_{\codewordIndex=1}^\codebookSize
    \trace\left(
      \classicalQuantumChannel(\codebook(\codewordIndex))
      (\identityOperator - \decoderpovm_\codewordIndex)
    \right)
\right)
\leq
\codingBoundFunc(\codebookRate,\blocklength),
\end{equation}
with $\codingBoundFunc$ defined in \eqref{eq:codingBoundFunc}. Furthermore, $\codingBoundFunc(\codebookRate,\blocklength)$ tends to $0$ exponentially. That is, there is $\finalconst \in (0,\infty)$ such that for sufficiently large $\blocklength \in \naturals$, we have
\begin{equation}
\label{eq:decoding-error-large-blocklengths}
\codingBoundFunc(\codebookRate,\blocklength)
\leq
\exp(-\blocklength\finalconst).
\end{equation}
\end{theorem}

\section{Proofs}
\label{sec:proofs}
In this section, we prove the theorems which are stated in the preceding sections. In Section~\ref{sec:typicality-preliminaries}, we introduce notions of typicality and state two lemmas around typicality that will be needed both for the resolvability and coding results. We then proceed to proving Theorem~\ref{theorem:q-resolvability} on resolvability in Section~\ref{sec:resolvability-proof} and Theorem~\ref{theorem:decoding} on coding in Section~\ref{sec:coding-proof}. In Section~\ref{sec:proof-concentration}, we show that the error terms are very tightly concentrated around their expectations. Extensions that incorporate an additive input cost constraint can be found in Section~\ref{sec:proofs-cost-constraint}. Finally, everything is put together in Section~\ref{sec:proofs-main-result} where Theorems~\ref{theorem:wiretap-ccq-all-blocklengths}, \ref{theorem:wiretap-ccq}, \ref{theorem:wiretap-cq-all-blocklengths}, and \ref{theorem:wiretap-cq} on coding for the wiretap channel are proved. The proofs of technical lemmas are relegated to Appendix~\ref{appendix:qit}.

\subsection{Prerequisites on Typicality}
\label{sec:typicality-preliminaries}
In this section, we introduce typicality notions and related technical lemmas that are used in the proofs of Theorems~\ref{theorem:q-resolvability} and~\ref{theorem:decoding}.

For $\blocklength$-fold uses of the channel $\classicalQuantumChannel$, we note that via the definitions
\begin{align*}
\eigenvector_{\classicalOutputIndex^\blocklength | \inputAlphabetElement^\blocklength}
&:=
\bigotimes\limits_{\blockindex=1}^\blocklength
  \eigenvector_{\classicalOutputIndex_\blockindex | \inputAlphabetElement_\blockindex}
\\
\eigenvector_{\classicalOutputIndex^\blocklength}
&:=
\bigotimes\limits_{\blockindex=1}^\blocklength
  \eigenvector_{\classicalOutputIndex_\blockindex}
\\
\inputDistribution(\inputAlphabetElement^\blocklength)
&:=
\prod\limits_{\blockindex=1}^\blocklength
  \inputDistribution(\inputAlphabetElement_\blockindex)
\\
\inputDistribution(\classicalOutputIndex^\blocklength | \inputAlphabetElement^\blocklength)
&:=
\prod\limits_{\blockindex=1}^\blocklength
  \inputDistribution(\classicalOutputIndex_\blockindex | \inputAlphabetElement_\blockindex)
\\
\outputDistribution(\classicalOutputIndex^\blocklength)
&:=
\prod\limits_{\blockindex=1}^\blocklength
  \outputDistribution(\classicalOutputIndex_\blockindex),
\end{align*}
we identify $\inputDistribution$ and $\outputDistribution$ with corresponding distributions on $\inputAlphabet^\blocklength \times \naturals^\blocklength$ (and the resulting marginals) and $\naturals^\blocklength$, and we have
\begin{align*}
\classicalQuantumChannel^\blocklength(\inputAlphabetElement^\blocklength)
&=
\sum_{\classicalOutputIndex^\blocklength \in \naturals^\blocklength}
  \inputDistribution(\classicalOutputIndex^\blocklength | \inputAlphabetElement^\blocklength)
  \rankOneOperator{\eigenvector_{\classicalOutputIndex^\blocklength |  \inputAlphabetElement^\blocklength}},
\\
\classicalQuantumChannel_\inputDistribution^{\otimes \blocklength}
&=
\sum_{\classicalOutputIndex^\blocklength \in \naturals^\blocklength}
  \outputDistribution(\classicalOutputIndex^\blocklength)
  \rankOneOperator{\eigenvector_{\classicalOutputIndex^\blocklength}}.
\end{align*}

For any $\typicalityParameter \in (0, \infty)$, we define
\begin{align*}
\jointTypicalitySet_{\typicalityParameter,\blocklength}
&:=
\left\{
  (
    \inputAlphabetElement^\blocklength,
    \classicalOutputIndex^\blocklength
  )
  \in
  \inputAlphabet^\blocklength
  \times
  \naturals^\blocklength
  :~
  \blocklength(\jointEntropy - \typicalityParameter)
  <
  -\log
  \inputDistribution(
    \classicalOutputIndex^\blocklength
    ~|~
    \inputAlphabetElement^\blocklength
  )
  <
  \blocklength(\jointEntropy + \typicalityParameter)
\right\}
\\
\outputTypicalitySet_{\typicalityParameter,\blocklength}
&:=
\left\{
  \classicalOutputIndex^\blocklength
  \in
  \naturals^\blocklength
  :~
  \blocklength(\outputEntropy - \typicalityParameter)
  <
  -\log
  \outputDistribution(
    \classicalOutputIndex^\blocklength
  )
  <
  \blocklength(\outputEntropy + \typicalityParameter)
\right\}
\end{align*}
and, based on these definitions,
\begin{align}
\label{eq:spectral-decomposition-joint-typicality-operator}
\jointTypicalityOperator_{\typicalityParameter,\blocklength}
  (\inputAlphabetElement^\blocklength)
&:=
\sum\limits_{\classicalOutputIndex^\blocklength \in \naturals^\blocklength}
  \indicatorFunction{\jointTypicalitySet_{\typicalityParameter,\blocklength}}
    (
      \inputAlphabetElement^\blocklength,
      \classicalOutputIndex^\blocklength
    )
  \rankOneOperator{\eigenvector_{
    \classicalOutputIndex^\blocklength
    |
    \inputAlphabetElement^\blocklength
  }}
\displaybreak[0] \\
\nonumber
\outputTypicalityOperator_{\typicalityParameter,\blocklength}
&:=
\sum\limits_{\classicalOutputIndex^\blocklength \in \naturals^\blocklength}
  \indicatorFunction{\outputTypicalitySet_{\typicalityParameter,\blocklength}}
    (\classicalOutputIndex^\blocklength)
  \rankOneOperator{\eigenvector_{\classicalOutputIndex^\blocklength}}.
\end{align}
While $\outputTypicalityOperator_{\typicalityParameter,\blocklength}$ is a fixed operator, $\jointTypicalityOperator_{\typicalityParameter,\blocklength}$ is an operator-valued function and its measurability is not immediately clear from the definition. Therefore, we note that it can also be written as
\[
\jointTypicalityOperator_{\typicalityParameter,\blocklength}
  (\inputAlphabetElement^\blocklength)
=
\indicatorFunction{(
  \exp(-\blocklength(\jointEntropy + \typicalityParameter)),
  \exp(-\blocklength(\jointEntropy - \typicalityParameter))
)}
\big(
  \classicalQuantumChannel^\blocklength(\inputAlphabetElement^\blocklength)
\big)
\]
and is therefore a measurable map $\inputAlphabet^\blocklength \rightarrow \traceClass{\hilbertSpace^{\otimes \blocklength}}$ by Lemma~\ref{lemma:spectral-decompositions-measurable}-\ref{item:spectral-decompositions-measurable-indicators}. We also note the following basic properties of these projections which are straightforward to verify from their definitions:
\begin{equation}
\label{eq:typicality-operators-basic-properties}
\outputTypicalityOperator_{\typicalityParameter,\blocklength}^2
=
\outputTypicalityOperator_{\typicalityParameter,\blocklength}
,
~~
\jointTypicalityOperator_{\typicalityParameter,\blocklength}
  (\inputAlphabetElement^\blocklength)^2
=
\jointTypicalityOperator_{\typicalityParameter,\blocklength}
  (\inputAlphabetElement^\blocklength)
,
~~
\classicalQuantumChannel^\blocklength(\inputAlphabetElement^\blocklength)
\jointTypicalityOperator_{\typicalityParameter,\blocklength}
  (\inputAlphabetElement^\blocklength)
=
\jointTypicalityOperator_{\typicalityParameter,\blocklength}
  (\inputAlphabetElement^\blocklength)
\classicalQuantumChannel^\blocklength(\inputAlphabetElement^\blocklength).
\end{equation}

Finally, define 
\begin{align}
\label{eq:typicality-operator-product}
\typicalityOperatorProduct_{\typicalityParameter,\blocklength}:~
\inputAlphabet^\blocklength \rightarrow \traceClass{\hilbertSpace^{\otimes \blocklength}},~
\inputAlphabetElement^\blocklength
&\mapsto
\outputTypicalityOperator_{\typicalityParameter,\blocklength}
\jointTypicalityOperator_{\typicalityParameter,\blocklength}
  (\inputAlphabetElement^\blocklength)
\\
\label{eq:typicality-operator-product-decoder}
\typicalityOperatorProductDecoder_{\typicalityParameter,\blocklength}:~
\inputAlphabet^\blocklength \rightarrow \traceClass{\hilbertSpace^{\otimes \blocklength}},~
\inputAlphabetElement^\blocklength
&\mapsto
\outputTypicalityOperator_{\typicalityParameter,\blocklength}
\jointTypicalityOperator_{\typicalityParameter,\blocklength}(\inputAlphabetElement^\blocklength)
\outputTypicalityOperator_{\typicalityParameter,\blocklength}.
\end{align}
which clearly are also measurable.

These notions of typicality will allow us to split the error terms that appear in the proofs of Theorems~\ref{theorem:q-resolvability} and~\ref{theorem:decoding} into a typical and an atypical part. These terms can be bounded separately with the help of the next two lemmas. These are fairly well-known facts in quantum information theory. For this reason, we omit the proofs here, but include them in Appendix~\ref{appendix:qit} for sake of self-containedness of the paper.

\begin{lemma}
\label{lemma:q-resolvability-typical-terms}
\emph{(Bound for typical terms).}
We have
\begin{enumerate}
  \item \label{item:q-resolvability-typical-terms-joint}
  $
    \forall \inputAlphabetElement^\blocklength \in \inputAlphabet^\blocklength~~
    \jointTypicalityOperator_{\typicalityParameter,\blocklength}
      (\inputAlphabetElement^\blocklength)
    \classicalQuantumChannel^\blocklength
      (\inputAlphabetElement^\blocklength)
    \jointTypicalityOperator_{\typicalityParameter,\blocklength}
      (\inputAlphabetElement^\blocklength)
    \leq
    \exp\big(
      -
      \blocklength
      (\jointEntropy - \typicalityParameter)
    \big)
    \jointTypicalityOperator_{\typicalityParameter,\blocklength}
      (\inputAlphabetElement^\blocklength)
  $
  \item \label{item:q-resolvability-typical-terms-joint-trace}
  $
    \forall \inputAlphabetElement^\blocklength \in \inputAlphabet^\blocklength~~
    \trace \jointTypicalityOperator_{\typicalityParameter,\blocklength} (\inputAlphabetElement^\blocklength)
    <
    \exp\big(
        \blocklength
        (\jointEntropy + \typicalityParameter)
      \big)
  $
  \item \label{item:q-resolvability-typical-terms-output}
  $
    \outputTypicalityOperator_{\typicalityParameter,\blocklength}
    \classicalQuantumChannel_\inputDistribution^{\otimes \blocklength}
    \outputTypicalityOperator_{\typicalityParameter,\blocklength}
    \leq
    \exp\big(
      -
      \blocklength
      (\outputEntropy - \typicalityParameter)
    \big)
    \outputTypicalityOperator_{\typicalityParameter,\blocklength}
  $
  \item \label{item:q-resolvability-typical-terms-output-trace}
  $
  \trace \outputTypicalityOperator_{\typicalityParameter,\blocklength}
  <
  \exp\big(
      \blocklength
      (\outputEntropy + \typicalityParameter)
    \big)
  $.
\end{enumerate}
\end{lemma}

\begin{lemma}
\label{lemma:q-resolvability-atypical-terms}
\emph{(Bound for atypical terms).}
For all $\blocklength \in \naturals, \renyiorder_1, \renyiorder_3 \in (1,\infty), \renyiorder_2, \renyiorder_4 \in [\renyiorder_{\min},1)$, we have
\begin{align}
\label{eq:q-resolvability-atypical-terms}
\Expectation\trace\left(
  \classicalQuantumChannel^\blocklength(\inputRV^\blocklength)
  \adjoint{
    \typicalityOperatorProduct_{\typicalityParameter,\blocklength}
      (\inputRV^\blocklength)
  }
\right)
&\geq
1
-
\atypicalTermsFunc_1(\typicalityParameter, \blocklength)
-
\atypicalTermsFunc_2(\typicalityParameter, \blocklength)
-
\atypicalTermsFunc_3(\typicalityParameter, \blocklength)
-
\atypicalTermsFunc_4(\typicalityParameter, \blocklength)
\\
\label{eq:q-resolvability-atypical-terms-decoder}
\Expectation\trace\left(
  \classicalQuantumChannel^\blocklength(\inputRV^\blocklength)
  \typicalityOperatorProductDecoder_{\typicalityParameter,\blocklength}
    (\inputRV^\blocklength)
\right)
&\geq
1
-
\atypicalTermsFunc_1(\typicalityParameter, \blocklength)
-
\atypicalTermsFunc_2(\typicalityParameter, \blocklength)
-
2\atypicalTermsFunc_3(\typicalityParameter, \blocklength)
-
2\atypicalTermsFunc_4(\typicalityParameter, \blocklength),
\end{align}
where $\typicalityOperatorProduct_{\typicalityParameter,\blocklength}$ is defined in (\ref{eq:typicality-operator-product}), $\typicalityOperatorProductDecoder_{\typicalityParameter,\blocklength}$ is defined in (\ref{eq:typicality-operator-product-decoder}), $\atypicalTermsFunc_1$, $\atypicalTermsFunc_2$, $\atypicalTermsFunc_3$, and $\atypicalTermsFunc_4$ are defined in \eqref{eq:atypicalTermsFunc-start} to \eqref{eq:atypicalTermsFunc-end}. Furthermore, the lower bounds \eqref{eq:q-resolvability-atypical-terms} and \eqref{eq:q-resolvability-atypical-terms-decoder} tend to $1$ exponentially as $\blocklength$ tends to $\infty$. That is, there is $\finalconst_1 \in (0,\infty)$ such that for sufficiently large $\blocklength$, we have
\begin{equation}
\label{eq:atypicalTermsFunc-bounds}
\atypicalTermsFunc_1(\typicalityParameter, \blocklength),
\atypicalTermsFunc_2(\typicalityParameter, \blocklength),
\atypicalTermsFunc_3(\typicalityParameter, \blocklength),
\atypicalTermsFunc_4(\typicalityParameter, \blocklength),
<
\exp(-\finalconst_1 \blocklength).
\end{equation}
\end{lemma}

\subsection{Proof of Theorem~\ref{theorem:q-resolvability} for \texorpdfstring{\gls{cq}}{cq} Channel Resolvability}
\label{sec:resolvability-proof}
We start with a preliminary lemma which is used in the proof of Theorem~\ref{theorem:q-resolvability} and encapsulates a symmetrization argument similar to the one used in \cite[Theorem 4.10]{wainwright2019high}.

\begin{lemma}
\label{lemma:symmetrization}
Let $(\inputAlphabet, \inputAlgebra)$ be a measurable space and $\operatorValuedMap: \inputAlphabet \rightarrow \traceClass{\hilbertSpace}$ a measurable map. Let $\inputDistribution$ be a probability measure on $\inputAlphabet$, and $\inputRV := (\inputRV_1, \dots, \inputRV_\generalIndexMax)$ be a tuple of $\inputAlphabet$-valued random variables i.i.d. according to $\inputDistribution$. Let
\[
\operatorValuedMap_\inputRV
:=
\frac{1}{\generalIndexMax}
\sum\limits_{\generalIndex=1}^\generalIndexMax
  \operatorValuedMap(\inputRV_\generalIndex).
\]
Then, we have
\begin{align}
\label{eq:symmetrization-first-alternative}
\Expectation\traceNorm{
  \operatorValuedMap_\inputRV
  -
  \Expectation
    \operatorValuedMap_\inputRV
}
&\leq
2
\trace\sqrt{
  \frac{1}{\generalIndexMax}
  \Expectation\big(
    \adjoint{\operatorValuedMap(\inputRV_1)}
    \operatorValuedMap(\inputRV_1)
  \big),
}
\\
\label{eq:symmetrization-second-alternative}
\Expectation\traceNorm{
  \operatorValuedMap_\inputRV
  -
  \Expectation
    \operatorValuedMap_\inputRV
}
&\leq
2
\trace\sqrt{
  \frac{1}{\generalIndexMax}
  \Expectation\big(
    \operatorValuedMap(\inputRV_1)
    \adjoint{\operatorValuedMap(\inputRV_1)}
  \big)
}.
\end{align}
\end{lemma}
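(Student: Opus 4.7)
The plan is a standard symmetrization-then-Rademacher argument combined with two applications of Jensen's inequality for the concave functional $\generalOperator \mapsto \trace\sqrt{\generalOperator}$ on positive operators. First, I introduce an independent copy $\inputRV' = (\inputRV_1', \dots, \inputRV_\generalIndexMax')$ of $\inputRV$. Since $\Expectation \operatorValuedMap_{\inputRV'} = \Expectation \operatorValuedMap_\inputRV$, conditioning on $\inputRV$ and applying the trace-norm Jensen inequality from Lemma~\ref{lemma:bochner-integral-basics}-\ref{item:bochner-integral-basics-trace-norm-jensen}) yields
\begin{equation*}
\Expectation \traceNorm{\operatorValuedMap_\inputRV - \Expectation \operatorValuedMap_\inputRV}
\leq
\Expectation \traceNorm{\operatorValuedMap_\inputRV - \operatorValuedMap_{\inputRV'}}.
\end{equation*}
Next I introduce i.i.d. Rademacher variables $\rademacherRV_1, \dots, \rademacherRV_\generalIndexMax$ independent of $(\inputRV, \inputRV')$. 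Because swapping $\inputRV_\generalIndex$ and $\inputRV_\generalIndex'$ leaves the joint distribution invariant, $\operatorValuedMap(\inputRV_\generalIndex) - \operatorValuedMap(\inputRV_\generalIndex')$ is equal in law to $\rademacherRV_\generalIndex(\operatorValuedMap(\inputRV_\generalIndex) - \operatorValuedMap(\inputRV_\generalIndex'))$, so by the triangle inequality
\begin{equation*}
\Expectation \traceNorm{\operatorValuedMap_\inputRV - \operatorValuedMap_{\inputRV'}}
\leq
2 \, \Expectation \traceNormBiggLeft \frac{1}{\generalIndexMax} \sum_{\generalIndex=1}^{\generalIndexMax} \rademacherRV_\generalIndex \operatorValuedMap(\inputRV_\generalIndex) \traceNormBiggRight.
\end{equation*}

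For the Rademacher average I rewrite the trace norm as $\traceNorm{\generalOperator} = \trace\sqrt{\adjoint{\generalOperator}\generalOperator}$, so the integrand equals $\trace\sqrt{\generalIndexMax^{-2} \sum_{\generalIndex, \generalIndexTwo} \rademacherRV_\generalIndex \rademacherRV_\generalIndexTwo \adjoint{\operatorValuedMap(\inputRV_\generalIndex)} \operatorValuedMap(\inputRV_\generalIndexTwo)}$. Using $\Expectation \rademacherRV_\generalIndex \rademacherRV_\generalIndexTwo = \delta_{\generalIndex\generalIndexTwo}$ and applying Jensen's inequality to $\generalOperator \mapsto \trace\sqrt{\generalOperator}$ first over $\rademacherRV$ and then over $\inputRV$ (exploiting that the $\inputRV_\generalIndex$ are i.i.d.) collapses the double sum to
\begin{equation*}
\Expectation \trace \sqrt{\frac{1}{\generalIndexMax^2} \sum_{\generalIndex, \generalIndexTwo} \rademacherRV_\generalIndex \rademacherRV_\generalIndexTwo \adjoint{\operatorValuedMap(\inputRV_\generalIndex)} \operatorValuedMap(\inputRV_\generalIndexTwo)}
\leq
\trace \sqrt{\frac{1}{\generalIndexMax} \Expectation \adjoint{\operatorValuedMap(\inputRV_1)} \operatorValuedMap(\inputRV_1)},
\end{equation*}
which, combined with the previous two displays, gives (\ref{eq:symmetrization-first-alternative}). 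The bound (\ref{eq:symmetrization-second-alternative}) is obtained by the identical argument with $\traceNorm{\generalOperator} = \trace\sqrt{\generalOperator \adjoint{\generalOperator}}$, which swaps the order of the factors in the variance term.

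The main obstacle is rigorously justifying the two Jensen steps in the infinite-dimensional trace-class setting. Operator concavity of $\generalReal \mapsto \sqrt{\generalReal}$ on $[0,\infty)$ implies that $\generalOperator \mapsto \trace\sqrt{\generalOperator}$ is concave on positive trace-class operators, but extending this from finite convex combinations to a Jensen inequality against a general (possibly continuous) probability measure requires that $\Expectation \adjoint{\operatorValuedMap(\inputRV_1)} \operatorValuedMap(\inputRV_1)$ and $\Expectation \operatorValuedMap(\inputRV_1) \adjoint{\operatorValuedMap(\inputRV_1)}$ exist as Bochner integrals in $\traceClass{\outputHilbertSpace}$, and that all intermediate operator-valued quantities are measurable in the appropriate trace-norm sense. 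These measurability and integrability prerequisites are precisely what the appendix lemmas on spectral decompositions and Bochner integration supply; once they are invoked, the argument above is purely algebraic.
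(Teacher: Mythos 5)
Your proposal is correct and follows essentially the same route as the paper: introduce an independent copy, apply the Bochner-integral Jensen inequality for the trace norm, symmetrize with Rademacher signs, split by the triangle inequality, rewrite the trace norm as $\trace\sqrt{\adjoint{\generalOperator}\generalOperator}$ (or $\trace\sqrt{\generalOperator\adjoint{\generalOperator}}$ for the second alternative), and then apply the operator square-root Jensen inequality together with $\Expectation\rademacherRV_\generalIndex\rademacherRV_\generalIndexTwo = \delta_{\generalIndex\generalIndexTwo}$ to collapse the Rademacher cross terms. The only cosmetic differences are that you apply Jensen in two stages (over $\rademacherRV$, then over $\inputRV$) whereas the paper does it in one step over the joint law, and you package the key inequality as concavity of $\generalOperator\mapsto\trace\sqrt{\generalOperator}$ rather than as the operator-level statement $\Expectation\sqrt{\generalOperatorRV}\leq\sqrt{\Expectation\generalOperatorRV}$ followed by linearity and positivity of the trace; your closing remark correctly identifies that the real work is the Bochner-integral version of this Jensen step, which is exactly what Lemma~\ref{lemma:bochner-integral-basics}-\ref{item:bochner-integral-basics-square-root-jensen}) provides.
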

\begin{proof}
Let $\hat{\inputRV_1}, \dots, \hat{\inputRV_\generalIndexMax}$ be i.i.d. independent copies of $\inputRV_1, \dots, \inputRV_\generalIndexMax$, and let $\rademacherRV_1, \dots, \rademacherRV_\generalIndexMax$ be i.i.d. uniformly on $\{-1,1\}$. The following derivations are adapted from the proof of~\cite[Theorem 4.10]{wainwright2019high}.
\begin{align*}
\Expectation\traceNorm{
  \operatorValuedMap_\inputRV
  -
  \Expectation
    \operatorValuedMap_\inputRV
}
&=
\Expectation_\inputRV
  \traceNorm{
    \Expectation_{\hat{\inputRV}}
      \sum\limits_{\generalIndex=1}^\generalIndexMax
      \frac{1}{\generalIndexMax}
      \left(
        \operatorValuedMap(\inputRV_\generalIndex)
        -
        \operatorValuedMap(\hat{\inputRV}_\generalIndex)
      \right)
  }
\\
\overset{(a)}&{\leq}
\Expectation_{\inputRV,\hat{\inputRV}}
  \traceNorm{
    \sum\limits_{\generalIndex=1}^\generalIndexMax
    \frac{1}{\generalIndexMax}
    \left(
      \operatorValuedMap(\inputRV_\generalIndex)
      -
      \operatorValuedMap(\hat{\inputRV}_\generalIndex)
    \right)
  }
\displaybreak[0] \\
\overset{(b)}&{=}
\Expectation_{\inputRV,\hat{\inputRV},\rademacherRV}
  \traceNorm{
    \sum\limits_{\generalIndex=1}^\generalIndexMax
    \frac{1}{\generalIndexMax}
    \rademacherRV_\generalIndex
    \left(
      \operatorValuedMap(\inputRV_\generalIndex)
      -
      \operatorValuedMap(\hat{\inputRV}_\generalIndex)
    \right)
  }
\displaybreak[0] \\
&\leq
\Expectation_{\inputRV,\rademacherRV}
  \traceNorm{
    \sum\limits_{\generalIndex=1}^\generalIndexMax
    \frac{1}{\generalIndexMax}
    \rademacherRV_\generalIndex
    \operatorValuedMap(\inputRV_\generalIndex)
  }
+
\Expectation_{\hat{\inputRV},\rademacherRV}
  \traceNorm{
    \sum\limits_{\generalIndex=1}^\generalIndexMax
    \frac{1}{\generalIndexMax}
    \rademacherRV_\generalIndex
    \operatorValuedMap(\hat{\inputRV}_\generalIndex)
  }
\displaybreak[0] \\
&=
2
\Expectation_{\inputRV,\rademacherRV}
  \traceNorm{
    \sum\limits_{\generalIndex=1}^\generalIndexMax
    \frac{1}{\generalIndexMax}
    \rademacherRV_\generalIndex
    \operatorValuedMap(\inputRV_\generalIndex)
  }
\displaybreak[0] \\
&=
2
\Expectation_{\inputRV,\rademacherRV}
  \trace
  \sqrt{
    \adjoint{
      \left(
        \sum\limits_{\generalIndex=1}^\generalIndexMax
        \frac{1}{\generalIndexMax}
        \rademacherRV_\generalIndex
        \operatorValuedMap(\inputRV_\generalIndex)
      \right)
    }
    \left(
      \sum\limits_{\generalIndex=1}^\generalIndexMax
      \frac{1}{\generalIndexMax}
      \rademacherRV_\generalIndex
      \operatorValuedMap(\inputRV_\generalIndex)
    \right)
  }
\displaybreak[0] \\
\overset{(c)}&{\leq}
2
\trace
\sqrt{
  \Expectation_{\inputRV,\rademacherRV}
  \left(
    \adjoint{
      \left(
        \sum\limits_{\generalIndex=1}^\generalIndexMax
        \frac{1}{\generalIndexMax}
        \rademacherRV_\generalIndex
        \operatorValuedMap(\inputRV_\generalIndex)
      \right)
    }
    \left(
      \sum\limits_{\generalIndex=1}^\generalIndexMax
      \frac{1}{\generalIndexMax}
      \rademacherRV_\generalIndex
      \operatorValuedMap(\inputRV_\generalIndex)
    \right)
  \right)
}
\displaybreak[0] \\
&=
2
\trace
\sqrt{
  \frac{1}{\generalIndexMax^2}
  \Expectation_\inputRV
    \sum\limits_{\generalIndex_1,\generalIndex_2=1}^\generalIndexMax
      \Expectation_\rademacherRV
        (
          \rademacherRV_{\generalIndex_1}
          \rademacherRV_{\generalIndex_2}
        )
        \adjoint{
          \operatorValuedMap(\inputRV_{\generalIndex_1})
        }
        \operatorValuedMap(\inputRV_{\generalIndex_2})
}
\displaybreak[0] \\
&\overset{(d)}{=}
2
\trace
\sqrt{
  \frac{1}{\generalIndexMax^2}
  \Expectation_\inputRV
    \sum\limits_{\generalIndex=1}^\generalIndexMax
        \adjoint{
          \operatorValuedMap(\inputRV_{\generalIndex})
        }
        \operatorValuedMap(\inputRV_{\generalIndex})
}
\\
&=
2
\trace
\sqrt{
  \frac{1}{\generalIndexMax}
  \Expectation_{\inputRV_1}
    \adjoint{
      \operatorValuedMap(\inputRV_{1})
    }
    \operatorValuedMap(\inputRV_{1})
}.
\end{align*}
Step (a) follows by Lemma~\ref{lemma:bochner-integral-basics}-\ref{item:bochner-integral-basics-trace-norm-jensen}. For step (b), we observe that the equality holds conditioned on any realization of $\rademacherRV_1, \dots, \rademacherRV_\generalIndexMax$ since $\inputRV_\generalIndex$ and $\hat{\inputRV}_\generalIndex$ are identically distributed and therefore can be swapped if $\rademacherRV_\generalIndex = -1$. Inequality (c) is by Lemma~\ref{lemma:bochner-integral-basics}-\ref{item:bochner-integral-basics-square-root-jensen}. Finally, equality (d) holds because
$
\Expectation_\rademacherRV
  (
    \rademacherRV_{\generalIndex_1}
    \rademacherRV_{\generalIndex_2}
  )
$
equals $1$ if $\generalIndex_1=\generalIndex_2$ and $0$ otherwise. This concludes the proof of (\ref{eq:symmetrization-first-alternative}). (\ref{eq:symmetrization-second-alternative}) follows similarly using $\traceNorm{\generalOperator} = \traceNorm{\adjoint{\generalOperator}}$ before expanding the trace norm.
\end{proof}
\begin{remark}
\label{remark:rademacher}
\cite[Theorem 4.10]{wainwright2019high} (from the proof of which the first part of the calculation above is adapted) bounds the absolute deviation of an empirical average from its expectation in terms of the Rademacher complexity of a suitably defined class of functions. Indeed, via the trace norm duality stated in Lemma~\ref{lemma:norm-basics}-\ref{item:norm-basics-trace-norm-duality}, the term
\[
\Expectation_{\inputRV,\rademacherRV}
  \traceNorm{
    \sum\limits_{\generalIndex=1}^\generalIndexMax
    \frac{1}{\generalIndexMax}
    \rademacherRV_\generalIndex
    \operatorValuedMap(\inputRV_\generalIndex)
  }
\]
which appears in the calculation can be argued to be equal to the Rademacher complexity of the function class
\[
\{
  \inputAlphabet
  \ni
  \inputAlphabetElement
  \mapsto
  \trace\left(
    \generalOperator
    \operatorValuedMap(\inputAlphabetElement)
  \right)
  \in
  \complexNumbers
  ~:~
  \generalOperator \in \boundedOperators{\hilbertSpace},~
  \operatorNorm{\generalOperator} \leq 1
\}.
\]
\end{remark}

\begin{proof}[Proof of Theorem~\ref{theorem:q-resolvability}]
In this proof, we use the definitions of Section~\ref{sec:typicality-preliminaries}. We multiply with identities and use the triangle inequality to bound
\begin{align}
\nonumber
&\hphantom{{}={}}
\Expectation_\codebook
  \traceNorm{
    \classicalQuantumChannel_\codebook
    -
    \classicalQuantumChannel_\inputDistribution^{\otimes \blocklength}
  }
\\
\nonumber
&=
\Expectation_\codebook
  \traceNorm{
    \frac{1}{\codebookSize}
    \sum\limits_{\codewordIndex=1}^\codebookSize\Big(
      \classicalQuantumChannel^\blocklength(\codebook(\codewordIndex))
      -
      \Expectation_\codebook
        \classicalQuantumChannel^\blocklength(\codebook(\codewordIndex))
    \Big)
  }
\displaybreak[0] \\
\nonumber
&=
\Expectation_\codebook
  \traceNorm{
    \frac{1}{\codebookSize}
    \sum\limits_{\codewordIndex=1}^\codebookSize\bigg(
      \big(
        \typicalityOperatorProduct_{\typicalityParameter, \blocklength}
          (\codebook(\codewordIndex))
        +
        \identityOperator
        -
        \typicalityOperatorProduct_{\typicalityParameter, \blocklength}
          (\codebook(\codewordIndex))
      \big)
      \classicalQuantumChannel^\blocklength(\codebook(\codewordIndex))
      -
      \Expectation_\codebook\Big(
        \big(
          \typicalityOperatorProduct_{\typicalityParameter, \blocklength}
            (\codebook(\codewordIndex))
          +
          \identityOperator
          -
          \typicalityOperatorProduct_{\typicalityParameter, \blocklength}
            (\codebook(\codewordIndex))
        \big)
        \classicalQuantumChannel^\blocklength(\codebook(\codewordIndex))
      \Big)
    \bigg)
  }
\displaybreak[0] \\
\nonumber
&\leq
\begin{aligned}[t]
  &\Expectation_\codebook
    \traceNorm{
      \frac{1}{\codebookSize}
      \sum\limits_{\codewordIndex=1}^\codebookSize\Big(
        \typicalityOperatorProduct_{\typicalityParameter, \blocklength}
          (\codebook(\codewordIndex))
        \classicalQuantumChannel^\blocklength(\codebook(\codewordIndex))
        -
        \Expectation_\codebook\big(
          \typicalityOperatorProduct_{\typicalityParameter, \blocklength}
            (\codebook(\codewordIndex))
          \classicalQuantumChannel^\blocklength(\codebook(\codewordIndex))
        \big)
    }
\\
  &+
  \Expectation_\codebook
    \traceNorm{
      \frac{1}{\codebookSize}
      \sum\limits_{\codewordIndex=1}^\codebookSize\bigg(
        \big(
          \identityOperator
          -
          \typicalityOperatorProduct_{\typicalityParameter, \blocklength}
            (\codebook(\codewordIndex))
        \big)
        \classicalQuantumChannel^\blocklength(\codebook(\codewordIndex))
        -
        \Expectation_\codebook\Big(
          \big(
            \identityOperator
            -
            \typicalityOperatorProduct_{\typicalityParameter, \blocklength}
              (\codebook(\codewordIndex))
          \big)
          \classicalQuantumChannel^\blocklength(\codebook(\codewordIndex))
        \Big)
      \bigg)
    }
\end{aligned}
\displaybreak[0] \\
\nonumber
&=
\begin{aligned}[t]
  &\Expectation_\codebook
    \traceNorm{
      \frac{1}{\codebookSize}
      \sum\limits_{\codewordIndex=1}^\codebookSize\Big(
        \typicalityOperatorProduct_{\typicalityParameter, \blocklength}
          (\codebook(\codewordIndex))
        \classicalQuantumChannel^\blocklength(\codebook(\codewordIndex))
        -
        \Expectation_\codebook\big(
          \typicalityOperatorProduct_{\typicalityParameter, \blocklength}
            (\codebook(\codewordIndex))
          \classicalQuantumChannel^\blocklength(\codebook(\codewordIndex))
        \big)
    }
\\
  &+
  \begin{multlined}[t]
    \Expectation_\codebook
      \traceNormBiggLeft
        \frac{1}{\codebookSize}
        \sum\limits_{\codewordIndex=1}^\codebookSize\bigg(
          \big(
            \identityOperator
            -
            \typicalityOperatorProduct_{\typicalityParameter, \blocklength}
              (\codebook(\codewordIndex))
          \big)
          \classicalQuantumChannel^\blocklength(\codebook(\codewordIndex))
          \adjoint{\big(
            \typicalityOperatorProduct_{\typicalityParameter, \blocklength}
              (\codebook(\codewordIndex))
            +
            \identityOperator
            -
            \typicalityOperatorProduct_{\typicalityParameter, \blocklength}
              (\codebook(\codewordIndex))
          \big)}
\\ \hspace{4cm}
          -
          \Expectation_\codebook\Big(
            \big(
              \identityOperator
              -
              \typicalityOperatorProduct_{\typicalityParameter, \blocklength}
                (\codebook(\codewordIndex))
            \big)
            \classicalQuantumChannel^\blocklength(\codebook(\codewordIndex))
            \adjoint{\big(
              \typicalityOperatorProduct_{\typicalityParameter, \blocklength}
                (\codebook(\codewordIndex))
              +
              \identityOperator
              -
              \typicalityOperatorProduct_{\typicalityParameter, \blocklength}
                (\codebook(\codewordIndex))
            \big)}
          \Big)
        \bigg)
       \traceNormBiggRight
  \end{multlined}
\end{aligned}
\displaybreak[0] \\
\label{eq:q-resolvability-split}
&\begin{aligned}[c]
  {}\leq{}
  &\Expectation_\codebook
    \traceNorm{
      \frac{1}{\codebookSize}
      \sum\limits_{\codewordIndex=1}^\codebookSize\Big(
        \typicalityOperatorProduct_{\typicalityParameter, \blocklength}
          (\codebook(\codewordIndex))
        \classicalQuantumChannel^\blocklength(\codebook(\codewordIndex))
        -
        \Expectation_\codebook\big(
          \typicalityOperatorProduct_{\typicalityParameter, \blocklength}
            (\codebook(\codewordIndex))
          \classicalQuantumChannel^\blocklength(\codebook(\codewordIndex))
        \big)
    }
\\
  &+
  \Expectation_\codebook
    \traceNorm{
      \frac{1}{\codebookSize}
      \sum\limits_{\codewordIndex=1}^\codebookSize\Big(
        \classicalQuantumChannel^\blocklength(\codebook(\codewordIndex))
        \adjoint{\typicalityOperatorProduct_{\typicalityParameter, \blocklength}(\codebook(\codewordIndex))}
        -
        \Expectation_\codebook\big(
          \classicalQuantumChannel^\blocklength(\codebook(\codewordIndex))
          \adjoint{\typicalityOperatorProduct_{\typicalityParameter, \blocklength}(\codebook(\codewordIndex))}
        \big)
      \Big)
    }
\\
  &+
  \Expectation_\codebook
    \traceNorm{
      \frac{1}{\codebookSize}
      \sum\limits_{\codewordIndex=1}^\codebookSize\bigg(
        \typicalityOperatorProduct_{\typicalityParameter, \blocklength}
          (\codebook(\codewordIndex))
        \classicalQuantumChannel^\blocklength(\codebook(\codewordIndex))
        \adjoint{\typicalityOperatorProduct_{\typicalityParameter, \blocklength}(\codebook(\codewordIndex))}
        -
        \Expectation_\codebook\Big(
          \typicalityOperatorProduct_{\typicalityParameter, \blocklength}
            (\codebook(\codewordIndex))
          \classicalQuantumChannel^\blocklength(\codebook(\codewordIndex))
          \adjoint{\typicalityOperatorProduct_{\typicalityParameter, \blocklength}(\codebook(\codewordIndex))}
        \Big)
      \bigg)
    }
\\
  &+
  \begin{multlined}[t]
  \Expectation_\codebook
    \traceNormBiggLeft
      \frac{1}{\codebookSize}
      \sum\limits_{\codewordIndex=1}^\codebookSize\bigg(
        \big(
          \identityOperator
          -
          \typicalityOperatorProduct_{\typicalityParameter, \blocklength}
            (\codebook(\codewordIndex))
        \big)
        \classicalQuantumChannel^\blocklength(\codebook(\codewordIndex))
        \adjoint{\big(
          \identityOperator
          -
          \typicalityOperatorProduct_{\typicalityParameter, \blocklength}
            (\codebook(\codewordIndex))
        \big)}
        \hspace{6cm}
        \\
        -
        \Expectation_\codebook\Big(
          \big(
            \identityOperator
            -
            \typicalityOperatorProduct_{\typicalityParameter, \blocklength}
              (\codebook(\codewordIndex))
          \big)
          \classicalQuantumChannel^\blocklength(\codebook(\codewordIndex))
          \adjoint{\big(
            \identityOperator
            -
            \typicalityOperatorProduct_{\typicalityParameter, \blocklength}
              (\codebook(\codewordIndex))
          \big)}
        \Big)
      \bigg)
    \traceNormBiggRight
  \end{multlined}
\end{aligned}
\end{align}
We next bound the summands in (\ref{eq:q-resolvability-split}) separately. For the last summand, we calculate
\begin{align*}
&\hphantom{{}={}}
\begin{multlined}[t]
\Expectation_\codebook
  \traceNormBiggLeft
    \frac{1}{\codebookSize}
    \sum\limits_{\codewordIndex=1}^\codebookSize\bigg(
      \big(
        \identityOperator
        -
        \typicalityOperatorProduct_{\typicalityParameter, \blocklength}
          (\codebook(\codewordIndex))
      \big)
      \classicalQuantumChannel^\blocklength(\codebook(\codewordIndex))
      \adjoint{\big(
        \identityOperator
        -
        \typicalityOperatorProduct_{\typicalityParameter, \blocklength}
          (\codebook(\codewordIndex))
      \big)}
      \hspace{4cm} \\
      -
      \Expectation_\codebook\Big(
        \big(
          \identityOperator
          -
          \typicalityOperatorProduct_{\typicalityParameter, \blocklength}
            (\codebook(\codewordIndex))
        \big)
        \classicalQuantumChannel^\blocklength(\codebook(\codewordIndex))
        \adjoint{\big(
          \identityOperator
          -
          \typicalityOperatorProduct_{\typicalityParameter, \blocklength}
            (\codebook(\codewordIndex))
        \big)}
      \Big)
    \bigg)
  \traceNormBiggRight
\end{multlined}
\\
&\overset{(a)}{\leq}
\begin{aligned}[t]
  &\Expectation_\codebook
    \traceNorm{
      \frac{1}{\codebookSize}
      \sum\limits_{\codewordIndex=1}^\codebookSize\Big(
        \big(
          \identityOperator
          -
          \typicalityOperatorProduct_{\typicalityParameter, \blocklength}
            (\codebook(\codewordIndex))
        \big)
        \classicalQuantumChannel^\blocklength(\codebook(\codewordIndex))
        \adjoint{\big(
          \identityOperator
          -
          \typicalityOperatorProduct_{\typicalityParameter, \blocklength}
            (\codebook(\codewordIndex))
        \big)}
      \Big)
    }
  \\
  &+
  \traceNorm{
    \frac{1}{\codebookSize}
    \Expectation_\codebook
    \sum\limits_{\codewordIndex=1}^\codebookSize\Big(
      \big(
        \identityOperator
        -
        \typicalityOperatorProduct_{\typicalityParameter, \blocklength}
          (\codebook(\codewordIndex))
      \big)
      \classicalQuantumChannel^\blocklength(\codebook(\codewordIndex))
      \adjoint{\big(
        \identityOperator
        -
        \typicalityOperatorProduct_{\typicalityParameter, \blocklength}
          (\codebook(\codewordIndex))
      \big)}
    \Big)
  }
\end{aligned}
\\
&\overset{(b)}{=}
2\frac{1}{\codebookSize}
\sum\limits_{\codewordIndex=1}^\codebookSize
  \Expectation_\codebook\trace{\Big(
    \big(
      \identityOperator
      -
      \typicalityOperatorProduct_{\typicalityParameter, \blocklength}
        (\codebook(\codewordIndex))
    \big)
    \classicalQuantumChannel^\blocklength(\codebook(\codewordIndex))
    \adjoint{\big(
      \identityOperator
      -
      \typicalityOperatorProduct_{\typicalityParameter, \blocklength}
        (\codebook(\codewordIndex))
    \big)}
  \Big)}
\\
&=
2
\Expectation_{\inputRV^\blocklength}\trace{\Big(
  \big(
    \identityOperator
    -
    \typicalityOperatorProduct_{\typicalityParameter, \blocklength}
      (\inputRV^\blocklength)
  \big)
  \classicalQuantumChannel^\blocklength(\inputRV^\blocklength)
  \adjoint{\big(
    \identityOperator
    -
    \typicalityOperatorProduct_{\typicalityParameter, \blocklength}
      (\inputRV^\blocklength)
  \big)}
\Big)}
\\
&=
2\Expectation_{\inputRV^\blocklength}\left(
  1
  -
  \trace{\Big(
    \outputTypicalityOperator_{\typicalityParameter,\blocklength}
    \jointTypicalityOperator_{\typicalityParameter,\blocklength}
      (\inputRV^\blocklength)
    \classicalQuantumChannel^\blocklength(\inputRV^\blocklength)
  \Big)}
  -
  \trace{\Big(
    \classicalQuantumChannel^\blocklength(\inputRV^\blocklength)
    \adjoint{\typicalityOperatorProduct_{\typicalityParameter, \blocklength}(\inputRV^\blocklength)}
  \Big)}
  +
  \trace{\Big(
    \outputTypicalityOperator_{\typicalityParameter,\blocklength}
    \jointTypicalityOperator_{\typicalityParameter,\blocklength}
      (\inputRV^\blocklength)
    \classicalQuantumChannel^\blocklength(\inputRV^\blocklength)
    \jointTypicalityOperator_{\typicalityParameter,\blocklength}
      (\inputRV^\blocklength)
    \outputTypicalityOperator_{\typicalityParameter,\blocklength}
  \Big)}
\right)
\\
&\overset{(c)}{=}
2\left(
  1
  -
  \Expectation_{\inputRV^\blocklength}\Big(
    \trace{\big(
      \classicalQuantumChannel^\blocklength(\inputRV^\blocklength)
      \adjoint{\typicalityOperatorProduct_{\typicalityParameter, \blocklength}(\inputRV^\blocklength)}
    \big)}
  \Big)
\right)
\\
&\overset{(d)}{\leq}
2\atypicalTermsFunc_1(\typicalityParameter, \blocklength)
+
2\atypicalTermsFunc_2(\typicalityParameter, \blocklength)
+
2\atypicalTermsFunc_3(\typicalityParameter, \blocklength)
+
2\atypicalTermsFunc_4(\typicalityParameter, \blocklength).
\end{align*}
(a) is due to the triangle inequality, (b) uses Lemma~\ref{lemma:bochner-integral-basics}-\ref{item:bochner-integral-basics-bounded-functional} and the fact that trace and trace norm are equal for positive semi-definite operators, (c) is by cyclic permutations inside the trace and using (\ref{eq:typicality-operators-basic-properties}). Finally, (d) is due to Lemma~\ref{lemma:q-resolvability-atypical-terms}.

For the first summand in (\ref{eq:q-resolvability-split}), we apply (\ref{eq:symmetrization-second-alternative}) of Lemma~\ref{lemma:symmetrization} with
$
\operatorValuedMap:
\inputAlphabet^\blocklength
\rightarrow
\traceClass{\hilbertSpace},
\inputAlphabetElement^\blocklength
\mapsto
\typicalityOperatorProduct_{\typicalityParameter, \blocklength}
  (\inputAlphabetElement^\blocklength)
\classicalQuantumChannel^\blocklength(\inputAlphabetElement^\blocklength)
$ (which is measurable by Lemma~\ref{lemma:operator-compositions-continuous}),
$\generalIndexMax = \codebookSize$ and $\inputRV_\codewordIndex = \codebook(\codewordIndex)$. This yields
\begin{align*}
&\hphantom{{}={}}
\Expectation_\codebook
  \traceNorm{
    \frac{1}{\codebookSize}
    \sum\limits_{\codewordIndex=1}^\codebookSize\Big(
      \typicalityOperatorProduct_{\typicalityParameter, \blocklength}
        (\codebook(\codewordIndex))
      \classicalQuantumChannel^\blocklength(\codebook(\codewordIndex))
      -
      \Expectation_\codebook\big(
        \typicalityOperatorProduct_{\typicalityParameter, \blocklength}
          (\codebook(\codewordIndex))
        \classicalQuantumChannel^\blocklength(\codebook(\codewordIndex))
      \big)
    \Big)
  }
\\
&\leq
2\trace{
  \sqrt{
    \frac{1}{\codebookSize}
    \Expectation_{\inputRV^\blocklength}\big(
      \typicalityOperatorProduct_{\typicalityParameter, \blocklength}
        (\inputRV^\blocklength)
      \classicalQuantumChannel^\blocklength(\inputRV^\blocklength)^2
      \adjoint{
        \typicalityOperatorProduct_{\typicalityParameter, \blocklength}
          (\inputRV^\blocklength)
      }
    \big)
  }
}
\\
\overset{(\ref{eq:typicality-operators-basic-properties})}&{=}
2\trace{
  \sqrt{
    \frac{1}{\codebookSize}
    \Expectation_{\inputRV^\blocklength}\big(
      \outputTypicalityOperator_{\typicalityParameter, \blocklength}
      \classicalQuantumChannel^\blocklength
        (\inputRV^\blocklength)^{\frac{1}{2}}
      \jointTypicalityOperator_{\typicalityParameter, \blocklength}
        (\inputRV^\blocklength)
      \classicalQuantumChannel^\blocklength(\inputRV^\blocklength)
      \jointTypicalityOperator_{\typicalityParameter, \blocklength}
        (\inputRV^\blocklength)
      \classicalQuantumChannel^\blocklength
        (\inputRV^\blocklength)^{\frac{1}{2}}
      \outputTypicalityOperator_{\typicalityParameter, \blocklength}
    \big)
  }
}.
\end{align*}
For the second summand in (\ref{eq:q-resolvability-split}), we use (\ref{eq:symmetrization-first-alternative}) of Lemma~\ref{lemma:symmetrization} with
$
\operatorValuedMap:
\inputAlphabet^\blocklength
\rightarrow
\traceClass{\hilbertSpace},
\inputAlphabetElement^\blocklength
\mapsto
\classicalQuantumChannel^\blocklength(\inputAlphabetElement^\blocklength)
\adjoint{\typicalityOperatorProduct_{\typicalityParameter, \blocklength}
  (\inputAlphabetElement^\blocklength)}
$ (which is again measurable by Lemma~\ref{lemma:operator-compositions-continuous})
and obtain the exact same upper bound.

For the third summand in (\ref{eq:q-resolvability-split}), we use Lemma~\ref{lemma:symmetrization} one more time with
$
\operatorValuedMap:
\inputAlphabet^\blocklength
\rightarrow
\traceClass{\hilbertSpace},
\inputAlphabetElement^\blocklength
\mapsto
\typicalityOperatorProduct_{\typicalityParameter, \blocklength}
  (\inputAlphabetElement^\blocklength)
\classicalQuantumChannel^\blocklength(\inputAlphabetElement^\blocklength)
\adjoint{\typicalityOperatorProduct_{\typicalityParameter, \blocklength}
  (\inputAlphabetElement^\blocklength)}
$
(measurability is again by Lemma~\ref{lemma:operator-compositions-continuous}; this time it does not matter which alternative we use because $\operatorValuedMap$ has self-adjoint values) and obtain
\begin{align*}
&\hphantom{{}={}}
\Expectation_\codebook
  \traceNorm{
    \frac{1}{\codebookSize}
    \sum\limits_{\codewordIndex=1}^\codebookSize\Big(
      \typicalityOperatorProduct_{\typicalityParameter, \blocklength}
        (\codebook(\codewordIndex))
      \classicalQuantumChannel^\blocklength(\codebook(\codewordIndex))
      \adjoint{\typicalityOperatorProduct_{\typicalityParameter, \blocklength}(\codebook(\codewordIndex))}
      -
      \Expectation_\codebook\big(
        \typicalityOperatorProduct_{\typicalityParameter, \blocklength}
          (\codebook(\codewordIndex))
        \classicalQuantumChannel^\blocklength(\codebook(\codewordIndex))
        \adjoint{\typicalityOperatorProduct_{\typicalityParameter, \blocklength}(\codebook(\codewordIndex))}
      \big)
    \Big)
  }
\\
&\leq
2\trace\sqrt{
  \frac{1}{\codebookSize}
  \Expectation_{\inputRV^\blocklength}\left(
  \typicalityOperatorProduct_{\typicalityParameter, \blocklength}
    (\inputRV^\blocklength)
  \classicalQuantumChannel^\blocklength(\inputRV^\blocklength)
  \adjoint{
    \typicalityOperatorProduct_{\typicalityParameter, \blocklength}
      (\inputRV^\blocklength)
  }
  \typicalityOperatorProduct_{\typicalityParameter, \blocklength}
    (\inputRV^\blocklength)
  \classicalQuantumChannel^\blocklength(\inputRV^\blocklength)
  \adjoint{
    \typicalityOperatorProduct_{\typicalityParameter, \blocklength}
      (\inputRV^\blocklength)
  }
  \right)
}
\\
\overset{(\ref{eq:typicality-operators-basic-properties})}&{=}
2\trace{
  \sqrt{
    \frac{1}{\codebookSize}
    \Expectation_{\inputRV^\blocklength}\big(
      \outputTypicalityOperator_{\typicalityParameter, \blocklength}
      \classicalQuantumChannel^\blocklength
        (\inputRV^\blocklength)^{\frac{1}{2}}
      \jointTypicalityOperator_{\typicalityParameter, \blocklength}
        (\inputRV^\blocklength)
      \classicalQuantumChannel^\blocklength
        (\inputRV^\blocklength)^{\frac{1}{2}}
      \jointTypicalityOperator_{\typicalityParameter, \blocklength}
        (\inputRV^\blocklength)
      \outputTypicalityOperator_{\typicalityParameter, \blocklength}
      \jointTypicalityOperator_{\typicalityParameter, \blocklength}
        (\inputRV^\blocklength)
      \classicalQuantumChannel^\blocklength
        (\inputRV^\blocklength)^{\frac{1}{2}}
      \jointTypicalityOperator_{\typicalityParameter, \blocklength}
        (\inputRV^\blocklength)
      \classicalQuantumChannel^\blocklength
        (\inputRV^\blocklength)^{\frac{1}{2}}
      \outputTypicalityOperator_{\typicalityParameter, \blocklength}
    \big)
  }
}
\\
\overset{(a)}&{\leq}
2\trace{
  \sqrt{
    \frac{1}{\codebookSize}
    \Expectation_{\inputRV^\blocklength}\big(
      \outputTypicalityOperator_{\typicalityParameter, \blocklength}
      \classicalQuantumChannel^\blocklength
        (\inputRV^\blocklength)^{\frac{1}{2}}
      \jointTypicalityOperator_{\typicalityParameter, \blocklength}
        (\inputRV^\blocklength)
      \classicalQuantumChannel^\blocklength(\inputRV^\blocklength)
      \jointTypicalityOperator_{\typicalityParameter, \blocklength}
        (\inputRV^\blocklength)
      \classicalQuantumChannel^\blocklength
        (\inputRV^\blocklength)^{\frac{1}{2}}
      \outputTypicalityOperator_{\typicalityParameter, \blocklength}
    \big)
  }
}.
\end{align*}
(a) is by $\outputTypicalityOperator_{\typicalityParameter,\blocklength} \leq \identityOperator$ and once more applying (\ref{eq:typicality-operators-basic-properties}). We have obtained matching upper bounds in all three cases and can conclude our calculation with a successive application of the sub-items of Lemma~\ref{lemma:q-resolvability-typical-terms} in conjunction with the operator monotonicity of the square root~\cite{pedersen1972some}. The number of the subitem of Lemma~\ref{lemma:q-resolvability-typical-terms} is indicated above the inequality sign where it is applied.
\begin{align*}
&\hphantom{{}={}}
\trace{
  \sqrt{
    \frac{1}{\codebookSize}
    \Expectation_{\inputRV^\blocklength}\big(
      \outputTypicalityOperator_{\typicalityParameter, \blocklength}
      \classicalQuantumChannel^\blocklength
        (\inputRV^\blocklength)^{\frac{1}{2}}
      \jointTypicalityOperator_{\typicalityParameter, \blocklength}
        (\inputRV^\blocklength)
      \classicalQuantumChannel^\blocklength(\inputRV^\blocklength)
      \jointTypicalityOperator_{\typicalityParameter, \blocklength}
        (\inputRV^\blocklength)
      \classicalQuantumChannel^\blocklength
        (\inputRV^\blocklength)^{\frac{1}{2}}
      \outputTypicalityOperator_{\typicalityParameter, \blocklength}
    \big)
  }
}
\\
&\overset{\ref{item:q-resolvability-typical-terms-joint})}{\leq}
\trace{
  \sqrt{
    \frac{1}{\codebookSize}
    \exp\big(
      -
      \blocklength
      (\jointEntropy - \typicalityParameter)
    \big)
    \Expectation_{\inputRV^\blocklength}\big(
      \outputTypicalityOperator_{\typicalityParameter, \blocklength}
      \classicalQuantumChannel^\blocklength
        (\inputRV^\blocklength)
      \outputTypicalityOperator_{\typicalityParameter, \blocklength}
    \big)
  }
}
\displaybreak[0]
\\
&=
\trace{
  \sqrt{
    \frac{1}{\codebookSize}
    \exp\big(
      -
      \blocklength
      (\jointEntropy - \typicalityParameter)
    \big)
      \outputTypicalityOperator_{\typicalityParameter, \blocklength}
      \classicalQuantumChannel_\inputDistribution^{\otimes \blocklength}
      \outputTypicalityOperator_{\typicalityParameter, \blocklength}
  }
}
\displaybreak[0]
\\
&\overset{\ref{item:q-resolvability-typical-terms-output})}{\leq}
\trace{
  \sqrt{
    \frac{1}{\codebookSize}
    \exp\big(
      -
      \blocklength
      (\jointEntropy + \outputEntropy - 2\typicalityParameter)
    \big)
      \outputTypicalityOperator_{\typicalityParameter, \blocklength}
  }
}
\displaybreak[0]
\\
&\overset{(a)}{\leq}
\exp\left(
  -
  \frac{1}{2}
  \blocklength
  (\jointEntropy + \outputEntropy + \codebookRate - 2\typicalityParameter)
\right)
\trace{
  \outputTypicalityOperator_{\typicalityParameter, \blocklength}
}
\displaybreak[0]
\\
&\overset{\ref{item:q-resolvability-typical-terms-output-trace})}{\leq}
\exp\left(
  -
  \frac{1}{2}
  \blocklength
  (\jointEntropy - \outputEntropy + \codebookRate - 4\typicalityParameter)
\right)
\\
\overset{eq. (\ref{eq:quantum-information})}&{=}
\exp\left(
  -
  \frac{1}{2}
  \blocklength
  (\codebookRate - \quantumInformation - 4\typicalityParameter)
\right).
\end{align*}
Step (a) is due to the assumption $\codebookSize \geq \exp(\blocklength\codebookRate)$ and (\ref{eq:typicality-operators-basic-properties}). \eqref{eq:q-resolvability-all-blocklengths} now follows from \eqref{eq:q-resolvability-split} and the upper bounds for the four summands on its right hand side which we have calculated. To prove \eqref{eq:q-resolvability-large-blocklengths}, we choose $\typicalityParameter \in (0,(\codebookRate-\quantumInformation)/4)$ and invoke Lemma~\ref{lemma:q-resolvability-atypical-terms} to fix $\finalconst_1$ which satisfies \eqref{eq:atypicalTermsFunc-bounds} for this choice of $\typicalityParameter$. The infimum in \eqref{eq:resolvabilityBoundFunc} is clearly upper bounded by the realization for our choice of $\typicalityParameter$, so we have argued that \eqref{eq:q-resolvability-large-blocklengths} holds for any choice
\[
\finalconst
\in
\left(
  0,
  \min\left(
    \finalconst_1,
    \frac{1}{2}
    (\codebookRate - \quantumInformation - 4\typicalityParameter)
  \right)
\right).
\qedhere
\]
\end{proof}

\subsection{Proof of Theorem~\ref{theorem:decoding} for \texorpdfstring{\gls{cq}}{cq} Channel Coding}
\label{sec:coding-proof}
In this section, we follow the methodology in~\cite{hayashi2003general}, making adaptations as needed to derive the exponential error bound as stated in Theorem~\ref{theorem:decoding}. An essential ingredient will be the following lemma from~\cite{hayashi2003general}. In the statement of the lemma, we need the notion of Moore-Penrose pseudoinverse which assigns to any $\generalOperator\in\boundedOperators{\hilbertSpace}$ an (unbounded) operator $\pseudoinverse{A}$ acting on  $\hilbertSpace$ \cite[Definition 2.2]{engl2000regularization}. Moreover, for $\generalOperator\in\boundedOperators{\hilbertSpace}$, we use the notation $\image (A):=\{\hilbertSpaceElement\in \hilbertSpace: \exists \tilde{\hilbertSpaceElement} \in \hilbertSpace \textrm{ such that }\hilbertSpaceElement= \generalOperator\tilde{\hilbertSpaceElement} \}$. The following lemma is a well-established fact in quantum information theory, but some care needs to be taken in the infinite-dimensional case to ensure the Moore-Penrose pseudoinverse which appears is well-behaved. Therefore, we include a proof of this part of the lemma in Appendix~\ref{appendix:qit}.
\begin{lemma}
\label{lemma:hninequality}
\emph{(Hayashi-Nagaoka~\cite[Lemma 2]{hayashi2003general})}~
Let $\hnconstant \in (0,\infty)$, and let $\generalOperator, \generalOperatorTwo \in \boundedOperators{\hilbertSpace}$ with $0 \leq \generalOperator \leq \identityOperator$ and $0 \leq \generalOperatorTwo$ such that $\image(\generalOperator+\generalOperatorTwo)$ is closed. Then the following statements hold true:
\begin{enumerate}
    \item\label{item:hninequality-bounded} The Moore-Penrose pseudoinverse $\pseudoinverse{\sqrt{\generalOperator+\generalOperatorTwo}}$ is a bounded linear operator, i.e., $\pseudoinverse{\sqrt{\generalOperator+\generalOperatorTwo}} \in \boundedOperators{\hilbertSpace}$.
    \item\label{item:hninequality-main} For any real number $\hnconstant>0$, we have
\[
\identityOperator
-
\sqrt{\generalOperator + \generalOperatorTwo}^{-1}
\generalOperator
\sqrt{\generalOperator + \generalOperatorTwo}^{-1}
\leq
(1+\hnconstant)
(\identityOperator - \generalOperator)
+
(2+\hnconstant+\hnconstant^{-1})
\generalOperatorTwo.
\]
\end{enumerate}
\end{lemma}
For the proof of Theorem~\ref{theorem:decoding}, we use the definitions from Section~\ref{sec:typicality-preliminaries}. For decoding, we choose the \gls{povm} $(\decoderpovm_\codewordIndex)_{\codewordIndex=1}^\codebookSize$ defined as
\begin{equation*}
\decoderpovm_\codewordIndex
:=
\sqrt{
  \sum_{\hat{\codewordIndex}=1}^\codebookSize
    \typicalityOperatorProductDecoder_{\typicalityParameter,\blocklength}
      (\codebook(\hat{\codewordIndex}))
}^{-1}
\typicalityOperatorProductDecoder_{\typicalityParameter,\blocklength}
      (\codebook(\codewordIndex))
\sqrt{
  \sum_{\hat{\codewordIndex}=1}^\codebookSize
    \typicalityOperatorProductDecoder_{\typicalityParameter,\blocklength}
      (\codebook(\hat{\codewordIndex}))
}^{-1},
\end{equation*}
where we use $\typicalityOperatorProductDecoder_{\typicalityParameter,\blocklength}$ defined in (\ref{eq:typicality-operator-product-decoder}). We note that
\begin{equation*}
\dim \image\left(\sum_{\hat{\codewordIndex}=1}^\codebookSize
    \typicalityOperatorProductDecoder_{\typicalityParameter,\blocklength}
      (\codebook(\hat{\codewordIndex}))\right)< \infty, 
\end{equation*}
by Lemma \ref{lemma:q-resolvability-typical-terms}-\ref{item:q-resolvability-typical-terms-output-trace} which implies that
\begin{equation*}
\image\left(\sum_{\hat{\codewordIndex}=1}^\codebookSize
    \typicalityOperatorProductDecoder_{\typicalityParameter,\blocklength}
      (\codebook(\hat{\codewordIndex}))\right)
\end{equation*}
is closed. Consequently, by the first statement of Lemma \ref{lemma:hninequality} we have
\begin{equation*}
\sqrt{
  \sum_{\hat{\codewordIndex}=1}^\codebookSize
    \typicalityOperatorProductDecoder_{\typicalityParameter,\blocklength}
      (\codebook(\hat{\codewordIndex}))
}^{-1}
\in \boundedOperators{\hilbertSpace}.
\end{equation*}

The first part of the statement of Theorem~\ref{theorem:decoding}, namely that the $\decoderpovm_\codewordIndex$ are measurable functions of $\codebook$, is proven in the next two lemmas. This measurability is also essential for the proof of the remainder of Theorem~\ref{theorem:decoding}.

\begin{lemma}
\label{lemma:pseudoinverse-measurable}
Let $\hilbertSpace$ be a finite-dimensional complex Hilbert space. Then the function
\[
\pseudoinverse{\cdot}:~
\boundedOperators{\hilbertSpace}
\rightarrow
\boundedOperators{\hilbertSpace}
,~~
\generalOperator
\mapsto
\pseudoinverse{\generalOperator}
\]
which maps every operator to its Moore-Penrose pseudoinverse is measurable.
\end{lemma}
\begin{proof}
We can represent the Moore-Penrose pseudoinverse as a limit (see \cite[Chapter 3, Ex. 25]{ben2003generalized})
\[
\pseudoinverse{\generalOperator}
=
\lim_{\generalNatural \rightarrow \infty}
  \left(
    \adjoint{\generalOperator} \generalOperator
    +
    \frac{1}{\generalNatural} \identityOperator
  \right)^{-1}
  \adjoint{\generalOperator}.
\]
Matrix inversion is continuous (see~\cite[Chapter 6, eq. (127)]{ben2003generalized}), and so are addition and multiplication. Therefore, $\pseudoinverse{\cdot}$ is represented as a pointwise limit of continuous (and therefore measurable) functions, hence it is measurable.
\end{proof}

\begin{lemma}
For every $\codewordIndex \in \{1, \dots, \codebookSize\}$,
$
\inputAlphabet^{\blocklength\codebookSize}
\rightarrow
\traceClass{\hilbertSpace}
,~~
\codebook \mapsto \decoderpovm_\codewordIndex
$
is measurable.
\end{lemma}
\begin{proof}
Clearly,
\[
\sum_{\hat{\codewordIndex}=1}^\codebookSize
  \typicalityOperatorProductDecoder_{\typicalityParameter,\blocklength}
    (\codebook(\hat{\codewordIndex}))
\]
is a measurable function of $\codebook$, so by Lemma~\ref{lemma:continuous-functions-operator-measurable}, its square root is also measurable. Denote the restriction of
\[
\sqrt{
  \sum_{\hat{\codewordIndex}=1}^\codebookSize
    \typicalityOperatorProductDecoder_{\typicalityParameter,\blocklength}
      (\codebook(\hat{\codewordIndex}))
}
\]
to $\image (\outputTypicalityOperator_{\typicalityParameter, \blocklength})$ by $\generalOperator$. It can be seen in (\ref{eq:typicality-operator-product-decoder}) that $\generalOperator: \image (\outputTypicalityOperator_{\typicalityParameter, \blocklength}) \rightarrow \image (\outputTypicalityOperator_{\typicalityParameter, \blocklength})$, and that we can write
\begin{equation}
\label{eq:inverse-representation}
\sqrt{
  \sum_{\hat{\codewordIndex}=1}^\codebookSize
    \typicalityOperatorProductDecoder_{\typicalityParameter,\blocklength}
      (\codebook(\hat{\codewordIndex}))
}^{-1}
(\hilbertSpaceElement)
=
\begin{cases}
\pseudoinverse{\generalOperator} \hilbertSpaceElement, &\hilbertSpaceElement \in \image (\outputTypicalityOperator_{\typicalityParameter, \blocklength}) \\
0, &\text{otherwise.}
\end{cases}
\end{equation}
By Lemma~\ref{lemma:q-resolvability-typical-terms}-\ref{item:q-resolvability-typical-terms-output-trace}, $\image (\outputTypicalityOperator_{\typicalityParameter, \blocklength})$ is finite-dimensional, so we may apply Lemma~\ref{lemma:pseudoinverse-measurable} to argue that the operator represented in (\ref{eq:inverse-representation}) is a measurable function of $\codebook$. Since all norms are equivalent on finite-dimensional spaces, this measurability also applies with respect to the trace norm. The measurability of $\decoderpovm_\codewordIndex$ is then a straightforward consequence of Lemma~\ref{lemma:operator-compositions-continuous}.
\end{proof}

\begin{proof}[Proof of Theorem~\ref{theorem:decoding}]
Clearly, $\decoderpovm_\codewordIndex \geq 0$ and $\decoderpovm_1 + \dots + \decoderpovm_\codebookSize = \identityOperator$, so $(\decoderpovm_\codewordIndex)_{\codewordIndex=1}^\codebookSize$ is a $\{1, \dots, \codebookSize\}$-valued \gls{povm}. We have for the decoding error
\begin{align}
\nonumber
&\hphantom{{}={}}
\frac{1}{\codebookSize}
\sum_{\codewordIndex=1}^\codebookSize
  \trace\left(
    \classicalQuantumChannel(\codebook(\codewordIndex))
    (\identityOperator - \decoderpovm_\codewordIndex)    
  \right)
\\
\nonumber
&=
\frac{1}{\codebookSize}
\sum_{\codewordIndex=1}^\codebookSize
  \trace\left(
    \sqrt{\classicalQuantumChannel(\codebook(\codewordIndex))}
    \left(
      \identityOperator
      -
      \sqrt{
        \sum_{\hat{\codewordIndex}=1}^\codebookSize
          \typicalityOperatorProductDecoder_{\typicalityParameter,\blocklength}
            (\codebook(\hat{\codewordIndex}))
      }^{-1}
      \typicalityOperatorProductDecoder_{\typicalityParameter,\blocklength}
            (\codebook(\codewordIndex))
      \sqrt{
        \sum_{\hat{\codewordIndex}=1}^\codebookSize
          \typicalityOperatorProductDecoder_{\typicalityParameter,\blocklength}
            (\codebook(\hat{\codewordIndex}))
      }^{-1}
    \right)
    \sqrt{\classicalQuantumChannel(\codebook(\codewordIndex))}
  \right)
\\
\label{eq:decoding-typical-split}
\overset{(a)}&{\leq}
\frac{2}{\codebookSize}
\sum_{\codewordIndex=1}^\codebookSize
  \trace\Big(
    \classicalQuantumChannel(\codebook(\codewordIndex))
    \big(\identityOperator-\typicalityOperatorProductDecoder_{\typicalityParameter,\blocklength}(\codebook(\codewordIndex))\big)
  \Big)
+
\frac{4}{\codebookSize}
\sum_{\codewordIndex=1}^\codebookSize
  \trace\left(
    \classicalQuantumChannel(\codebook(\codewordIndex))
    \sum_{\substack{\hat{\codewordIndex}=1\\ \hat{\codewordIndex} \neq \codewordIndex}}^\codebookSize
      \typicalityOperatorProductDecoder_{\typicalityParameter,\blocklength}
        (\codebook(\hat{\codewordIndex}))
  \right),
\end{align}
where (a) is an application of Lemma~\ref{lemma:hninequality} with
\[
\generalOperator
:=
\typicalityOperatorProductDecoder_{\typicalityParameter,\blocklength}
  (\codebook(\codewordIndex))
,~~
\generalOperatorTwo
:=
\sum_{\substack{\hat{\codewordIndex}=1\\ \hat{\codewordIndex} \neq \codewordIndex}}^\codebookSize
  \typicalityOperatorProductDecoder_{\typicalityParameter,\blocklength}
    (\codebook(\hat{\codewordIndex}))
,~~
\hnconstant := 1,
\]
where $\image(\generalOperator+\generalOperatorTwo)$ is closed because it is finite-dimensional by Lemma~\ref{lemma:q-resolvability-typical-terms}-\ref{item:q-resolvability-typical-terms-output-trace}.

As a prerequisite to bounding the expectation of the second summand in (\ref{eq:decoding-typical-split}), we calculate
\begin{align*}
\trace\left(
  \classicalQuantumChannel_\inputDistribution^{\otimes \blocklength}
  \typicalityOperatorProductDecoder_{\typicalityParameter,\blocklength}
    (\codebook(\codewordIndex))
\right)
\overset{(\ref{eq:typicality-operator-product-decoder})}&{=}
\trace\left(
  \classicalQuantumChannel_\inputDistribution^{\otimes \blocklength}
  \outputTypicalityOperator_{\typicalityParameter,\blocklength}
  \jointTypicalityOperator_{\typicalityParameter,\blocklength}(\codebook(\codewordIndex))
  \outputTypicalityOperator_{\typicalityParameter,\blocklength}
\right)
\\
\overset{(a)}&{=}
\trace\left(
  \jointTypicalityOperator_{\typicalityParameter,\blocklength}(\codebook(\codewordIndex))
  \outputTypicalityOperator_{\typicalityParameter,\blocklength}
  \classicalQuantumChannel_\inputDistribution^{\otimes \blocklength}
  \outputTypicalityOperator_{\typicalityParameter,\blocklength}
  \jointTypicalityOperator_{\typicalityParameter,\blocklength}(\codebook(\codewordIndex))
\right)
\\
\overset{(b)}&{\leq}
\exp\big(
  -
  \blocklength
  (\outputEntropy - \typicalityParameter)
\big)
\trace\left(
  \jointTypicalityOperator_{\typicalityParameter,\blocklength}(\codebook(\codewordIndex))
  \outputTypicalityOperator_{\typicalityParameter,\blocklength}
  \jointTypicalityOperator_{\typicalityParameter,\blocklength}(\codebook(\codewordIndex))
\right)
\\
\overset{(c)}&{\leq}
\exp\big(
  -
  \blocklength
  (\outputEntropy - \typicalityParameter)
\big)
\trace\left(
  \jointTypicalityOperator_{\typicalityParameter,\blocklength}(\codebook(\codewordIndex))
\right)
\\
\overset{(d)}&{\leq}
\exp\big(
  -
  \blocklength
  (\outputEntropy - \jointEntropy - 2\typicalityParameter)
\big)
\\
\overset{(\ref{eq:quantum-information})}&{=}
\exp\big(
  -
  \blocklength
  (\quantumInformation - 2\typicalityParameter)
\big),
\end{align*}
where (a) is by (\ref{eq:typicality-operators-basic-properties}) and the cyclic property of the trace, (b) is by Lemma~\ref{lemma:q-resolvability-typical-terms}-\ref{item:q-resolvability-typical-terms-output}, (c) is due to $\outputTypicalityOperator_{\typicalityParameter,\blocklength} \leq \identityOperator$ and (\ref{eq:typicality-operators-basic-properties}), and (d) is by Lemma~\ref{lemma:q-resolvability-typical-terms}-\ref{item:q-resolvability-typical-terms-joint-trace}. We can use this in conjunction with the independence of codewords and bound
\begin{align}
\nonumber
\Expectation_\codebook
  \trace\left(
    \classicalQuantumChannel(\codebook(\codewordIndex))
    \sum_{\substack{\hat{\codewordIndex}=1\\ \hat{\codewordIndex} \neq \codewordIndex}}^\codebookSize
      \typicalityOperatorProductDecoder_{\typicalityParameter,\blocklength}
        (\codebook(\hat{\codewordIndex}))
  \right)
&=
\Expectation_\codebook
  \trace\left(
    \classicalQuantumChannel_{\inputDistribution}^{\otimes \blocklength}
    \sum_{\substack{\hat{\codewordIndex}=1\\ \hat{\codewordIndex} \neq \codewordIndex}}^\codebookSize
      \typicalityOperatorProductDecoder_{\typicalityParameter,\blocklength}
        (\codebook(\hat{\codewordIndex}))
  \right)
\\
\nonumber
&\leq
\codebookSize
\exp\big(
  -
  \blocklength
  (\quantumInformation - 2\typicalityParameter)
\big)
\\
\label{eq:decoding-typical-expectation}
&\leq
\exp\big(
  -
  \blocklength
  (\quantumInformation - \codebookRate - 2\typicalityParameter)
\big).
\end{align}

Next, we apply $\Expectation_\codebook$ in (\ref{eq:decoding-typical-split}). We use Lemma~\ref{lemma:q-resolvability-atypical-terms} in the first summand and (\ref{eq:decoding-typical-expectation}) in the second summand which yields \eqref{eq:decoding-error-all-blocklengths}. Next, we note that the infimum in \eqref{eq:codingBoundFunc} is upper bounded by the realization for any fixed $\typicalityParameter$. So we pick any $\typicalityParameter \in (0,(\quantumInformation - \codebookRate)/2)$, then invoke Lemma~\ref{lemma:q-resolvability-atypical-terms} and fix some $\finalconst_1$ which satisfies \eqref{eq:atypicalTermsFunc-bounds}. With the choice $\finalconst \in (0, \min(\finalconst_1, \quantumInformation - \codebookRate - 2\typicalityParameter))$, this proves \eqref{eq:decoding-error-large-blocklengths}.
\end{proof}

\subsection{Concentration of Error}
\label{sec:proof-concentration}
Theorems~\ref{theorem:q-resolvability} and~\ref{theorem:decoding} are formulated in terms of expectation, but both the decoding error and the trace distance from the ideal output distribution are concentrated around their mean, as can be seen in the following corollaries.

\begin{cor}
\label{cor:q-resolvability}
Under the assumptions of Theorem~\ref{theorem:q-resolvability} and for every $\finalconst \in (0,\codebookRate/2)$, we have
\begin{equation}
\label{eq:q-resolvability-cor-statement-all-blocklengths}
\Probability_\codebook\left(
  \traceNorm{
    \classicalQuantumChannel_\codebook
    -
    \classicalQuantumChannel_\inputDistribution^{\otimes \blocklength}
  }
  \geq
  \resolvabilityBoundFunc(\codebookRate,\blocklength)
  +
  \exp(-\finalconst\blocklength)
\right)
\leq
\exp\left(
  -\frac{1}{2}\exp(
    \blocklength
    (\codebookRate-2\finalconst)
  )
\right).
\end{equation}

Furthermore, \eqref{eq:q-resolvability-cor-statement-all-blocklengths} implies that
$
\traceNorm{
  \classicalQuantumChannel_\codebook
  -
  \classicalQuantumChannel_\inputDistribution^{\otimes \blocklength}
}
$
tends to $0$ exponentially with a doubly exponentially small error probability as $\blocklength \rightarrow \infty$. That is, there are $\finalconst_1,\finalconst_2 \in (0,\infty)$ such that \eqref{eq:q-resolvability-cor-statement-all-blocklengths} implies for all sufficiently large $\blocklength$ that
\begin{equation}
\label{eq:q-resolvability-cor-statement-large-blocklengths}
\Probability_\codebook\left(
  \traceNorm{
    \classicalQuantumChannel_\codebook
    -
    \classicalQuantumChannel_\inputDistribution^{\otimes \blocklength}
  }
  \geq
  \exp(-\finalconst_1\blocklength)
\right)
\leq
\exp(
  -\exp(
    \finalconst_2
    \blocklength
  )
).
\end{equation}
\end{cor}

\begin{cor}
\label{cor:decoding}
Under the assumptions of Theorem~\ref{theorem:decoding} and for any choice of $\finalconst_1 \in (0,\infty)$, we have
\begin{equation}
\label{eq:decoding-cor-all-blocklengths}
\Probability_\codebook\left(
  \frac{1}{\codebookSize}
  \sum_{\codewordIndex=1}^\codebookSize
    \trace\left(
      \classicalQuantumChannel^\blocklength(\codewordIndex) (\identityOperator-\decoderpovm_\codewordIndex)
    \right)
  \geq
  \exp(-\blocklength\finalconst_1)
\right)
\leq
\exp(\blocklength\finalconst_1)
\codingBoundFunc(\codebookRate,\blocklength).
\end{equation}
Furthermore, \eqref{eq:decoding-cor-all-blocklengths} implies exponentially small decoding error with exponentially small error. That is, there is a suitable choice for $\finalconst_1$ and some $\finalconst_2 \in (0,\infty)$ such that for sufficiently large $\blocklength$,
\begin{equation}
\label{eq:decoding-cor-large-blocklengths}
\Probability_\codebook\left(
  \frac{1}{\codebookSize}
  \sum_{\codewordIndex=1}^\codebookSize
    \trace\left(
      \classicalQuantumChannel^\blocklength(\codewordIndex) (\identityOperator-\decoderpovm_\codewordIndex)
    \right)
  \geq
  \exp(-\blocklength\finalconst_1)
\right)
\leq
\exp(-\blocklength\finalconst_2).
\end{equation}
\end{cor}

The proof of Corollary~\ref{cor:q-resolvability} is essentially an application of the bounded differences inequality~\cite{mcdiarmid1989method}. For the reader's convenience, we reproduce the result here in the form that we will be using.

\begin{theorem}
\label{theorem:bounded-differences}
\emph{(Bounded differences inequality as stated in~\cite[Theorem 6.2]{boucheron2013concentration}.)}
Let $\generalProbabilitySpace$ be a measurable space, and let $f: \generalProbabilitySpace^\generalIndexMax \rightarrow \reals$ be measurable. Assume that there are nonnegative constants $\generalConstant_1, \dots, \generalConstant_\generalIndexMax$ with the property
\begin{equation}
\label{eq:bounded-differences}
\forall \generalIndex \in \{1, \dots, \generalIndexMax\}~
\sup\Big\{
  \absolute{
    \generalFunction(\generalProbabilitySpaceElement_1, \dots, \generalProbabilitySpaceElement_\generalIndexMax)
    -
    \generalFunction(
      \generalProbabilitySpaceElement_1, \dots, \generalProbabilitySpaceElement_{\generalIndex - 1},
      \generalProbabilitySpaceElement_\generalIndex',
      \generalProbabilitySpaceElement_{\generalIndex+1}, \dots, \generalProbabilitySpaceElement_\generalIndexMax
    )
  }
  :~
  \generalProbabilitySpaceElement_1, \dots, \generalProbabilitySpaceElement_\generalIndexMax, \generalProbabilitySpaceElement_\generalIndex'
  \in
  \generalProbabilitySpace
\Big\}
\leq
\generalConstant_\generalIndex,
\end{equation}
and denote
\[
\differenceBounds := \frac{1}{4} \sum_{\generalIndex=1}^\generalIndexMax \generalConstant_\generalIndex^2.
\]
Let $\generalRV_1, \dots, \generalRV_\generalIndexMax$ be independent random variables such that $\Expectation \generalFunction(\generalRV_1, \dots, \generalRV_\generalIndexMax)$ exists. Then, for any $\generalReal \in (0, \infty)$,
\[
\Probability\left(
  \generalFunction(\generalRV_1, \dots, \generalRV_\generalIndexMax)
  -
  \Expectation \generalFunction(\generalRV_1, \dots, \generalRV_\generalIndexMax)
  >
  \generalReal
\right)
\leq
\exp\left(
  -\frac{\generalReal^2}{2\differenceBounds}
\right).
\]
\end{theorem}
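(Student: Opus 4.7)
The plan is to prove this by the classical Doob martingale method of Azuma and McDiarmid. I would first introduce the Doob martingale
\[
Z_\generalIndex := \Expectation[\generalFunction(\generalRV_1, \dots, \generalRV_\generalIndexMax) \mid \generalRV_1, \dots, \generalRV_\generalIndex],
\quad
\generalIndex = 0, 1, \dots, \generalIndexMax,
\]
so that $Z_0 = \Expectation \generalFunction(\generalRV_1, \dots, \generalRV_\generalIndexMax)$ and $Z_\generalIndexMax = \generalFunction(\generalRV_1, \dots, \generalRV_\generalIndexMax)$. The increments $D_\generalIndex := Z_\generalIndex - Z_{\generalIndex-1}$ form a martingale difference sequence with respect to the natural filtration, and the tail event under investigation is exactly $\big\{\sum_{\generalIndex=1}^\generalIndexMax D_\generalIndex > \generalReal\big\}$, so the problem reduces to a tail bound for the sum of martingale differences.

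The crucial step is to show that, conditionally on $\generalRV_1, \dots, \generalRV_{\generalIndex-1}$, the increment $D_\generalIndex$ almost surely lies in an interval of length at most $\generalConstant_\generalIndex$. Using independence together with Fubini's theorem, I would write both $Z_\generalIndex$ and $Z_{\generalIndex-1}$ as integrals of $\generalFunction$ against the product law of $(\generalRV_{\generalIndex+1}, \dots, \generalRV_\generalIndexMax)$ (and, for $Z_{\generalIndex-1}$, also against the law of $\generalRV_\generalIndex$). Subtracting the two representations and applying hypothesis~(\ref{eq:bounded-differences}) pointwise under the integral sign then shows that the essential range of $D_\generalIndex$ given $\generalRV_1, \dots, \generalRV_{\generalIndex-1}$ has diameter bounded by $\generalConstant_\generalIndex$ almost surely.

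With bounded martingale differences in hand, I would close the argument with the standard Hoeffding/Azuma chaining. By Hoeffding's lemma, any centered random variable confined to an interval of length $\generalConstant_\generalIndex$ has moment generating function bounded by $\exp(s^2 \generalConstant_\generalIndex^2 / 8)$. Applied conditionally to each $D_\generalIndex$ and unwrapped via the tower property, this gives
\[
\Expectation \exp\!\Big(s \sum_{\generalIndex=1}^\generalIndexMax D_\generalIndex\Big)
\leq
\exp\!\Big(\frac{s^2}{8} \sum_{\generalIndex=1}^\generalIndexMax \generalConstant_\generalIndex^2\Big)
=
\exp\!\big(s^2 \differenceBounds / 2 \big),
\]
valid for every $s \in \reals$. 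The exponential Markov inequality applied to $s \sum_\generalIndex D_\generalIndex$ and optimization at $s = \generalReal / \differenceBounds$ then yields the claimed bound $\exp(-\generalReal^2/(2\differenceBounds))$. The main technical obstacle I anticipate is the bounded-difference step for the martingale increments: turning the pointwise hypothesis~(\ref{eq:bounded-differences}) on $\generalFunction$ into an almost-sure pointwise bound on $D_\generalIndex$ requires a careful appeal to Fubini together with the independence of the $\generalRV_\generalIndexTwo$, and one must also verify enough joint measurability of $\generalFunction$ for all conditional expectations to be well-defined. The remainder is a textbook Chernoff calculation, with the constant $1/2$ in the exponent emerging naturally from Hoeffding's lemma.
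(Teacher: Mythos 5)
The paper does not prove this statement; it is cited verbatim as a known external result (McDiarmid's bounded differences inequality, Theorem 6.2 in Boucheron–Lugosi–Massart), so there is no internal proof to compare against. Your proposal is the standard and correct proof of that result: the Doob martingale decomposition $Z_\generalIndex = \Expectation[\generalFunction(\generalRV_1,\dots,\generalRV_\generalIndexMax)\mid\generalRV_1,\dots,\generalRV_\generalIndex]$, the conditional bounded-range step (which, as you correctly anticipate, uses independence plus Fubini to push the pointwise hypothesis~(\ref{eq:bounded-differences}) through the conditional expectation so that each increment $D_\generalIndex$ is a.s.\ confined to an interval of length $\generalConstant_\generalIndex$), the conditional Hoeffding-lemma MGF bound $\exp(s^2\generalConstant_\generalIndex^2/8)$ unwound by the tower property, and the Chernoff bound optimized at $s=\generalReal/\differenceBounds$, which yields exactly $\exp(-\generalReal^2/(2\differenceBounds))$ with $\differenceBounds=\tfrac14\sum_\generalIndex\generalConstant_\generalIndex^2$. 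The argument is complete and the constants check out; the only point worth being explicit about in a full write-up is the measurability/integrability needed so that the conditional expectations defining the martingale are well defined, which you already flag.
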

Condition (\ref{eq:bounded-differences}) is called the \emph{bounded differences property} which gives Theorem~\ref{theorem:bounded-differences} its name.
\begin{proof}[Proof of Corollary~\ref{cor:q-resolvability}]
We apply Theorem~\ref{theorem:bounded-differences} to the function
\[
\generalFunction
:
(\codebook(1), \dots, \codebook(\codebookSize))
\mapsto
\traceNorm{
  \classicalQuantumChannel_\codebook
  -
  \classicalQuantumChannel_\inputDistribution^{\otimes \blocklength}
},
\]
so we first need to compute $\generalConstant_{\hat{\codewordIndex}}$ with
\[
\sup\left\{
  \vphantom{\bigg(}
  \absolute{
    \vphantom{\Big(}
      \traceNorm{
        \classicalQuantumChannel_\codebook
        -
        \classicalQuantumChannel_\inputDistribution^{\otimes \blocklength}
      }
      -
      \traceNorm{
        \classicalQuantumChannel_{\codebook'}
        -
        \classicalQuantumChannel_\inputDistribution^{\otimes \blocklength}
      }
  }
  :~
  \forall
  \codewordIndex
  \in
  \{
    1,
    \dots,
    \hat{\codewordIndex}-1,
    \hat{\codewordIndex}+1,
    \dots,
    \codebookSize
  \}
  :~
  \codebook(\codewordIndex)
  =
  \codebook'(\codewordIndex)
\right\}
\leq
\generalConstant_{\hat{\codewordIndex}}.
\]
To this end, we calculate
\begin{align*}
\absolute{
  \vphantom{\Big(}
    \traceNorm{
      \classicalQuantumChannel_\codebook
      -
      \classicalQuantumChannel_\inputDistribution^{\otimes \blocklength}
    }
    -
    \traceNorm{
      \classicalQuantumChannel_{\codebook'}
      -
      \classicalQuantumChannel_\inputDistribution^{\otimes \blocklength}
    }
}
\overset{(a)}&{\leq}
\traceNorm{
  \classicalQuantumChannel_\codebook
  -
  \classicalQuantumChannel_\inputDistribution^{\otimes \blocklength}
  -
  \classicalQuantumChannel_{\codebook'}
  +
  \classicalQuantumChannel_\inputDistribution^{\otimes \blocklength}
}
\\
\overset{(\ref{eq:codebook-channel-output-definition})}&{=}
\traceNorm{
  \frac{1}{\codebookSize}
  \sum_{\codewordIndex=1}^\codebookSize\Big(
    \classicalQuantumChannel^\blocklength(\codebook(\codewordIndex))
    -
    \classicalQuantumChannel^\blocklength(\codebook'(\codewordIndex))
  \Big)
}
\\
\overset{(b)}&{=}
\frac{1}{\codebookSize}
\traceNorm{
  \classicalQuantumChannel^\blocklength(\codebook(\hat{\codewordIndex}))
  -
  \classicalQuantumChannel^\blocklength(\codebook'(\hat{\codewordIndex}))
}
\\
\overset{(c)}&{\leq}
\frac{2}{\codebookSize},
\end{align*}
where (a) is due to the triangle inequality, (b) is because $\codebook(\codewordIndex) = \codebook'(\codewordIndex)$ whenever $\codewordIndex \neq \hat{\codewordIndex}$, and (c) is an application of the triangle inequality and the fact $\classicalQuantumChannel$ is a map to $\densityOperators{\hilbertSpace}$.

This means that we can choose $\generalConstant_{\hat{\codewordIndex }}:= 2/\codebookSize$ for all $\hat{\codewordIndex}$. Consequently, we obtain
\[
\differenceBounds
=
\frac{1}{4}
\sum_{\hat{\codewordIndex}=1}^\codebookSize
  \frac{2^2}{\codebookSize^2}
=
\codebookSize^{-1}.
\]

Hence,
\begin{align*}
\Probability_\codebook\left(
  \traceNorm{
    \classicalQuantumChannel_\codebook
    -
    \classicalQuantumChannel_\inputDistribution^{\otimes \blocklength}
  }
  \geq
  \resolvabilityBoundFunc(\codebookRate,\blocklength)
  +
  \exp(-\finalconst\blocklength)
\right)
\overset{(a)}&{\leq}
\Probability_\codebook\left(
  \traceNorm{
    \classicalQuantumChannel_\codebook
    -
    \classicalQuantumChannel_\inputDistribution^{\otimes \blocklength}
  }
  -
  \Expectation_\codebook
    \traceNorm{
      \classicalQuantumChannel_\codebook
      -
      \classicalQuantumChannel_\inputDistribution^{\otimes \blocklength}
    }
  \geq
  \exp(-\finalconst\blocklength)
\right)
\\
\overset{(b)}&{\leq}
\exp\left(
  -\frac{1}{2}\exp(-2\finalconst\blocklength)\codebookSize
\right),
\end{align*}
where (a) is the application of Theorem~\ref{theorem:q-resolvability} and (b) the application of Theorem~\ref{theorem:bounded-differences}. \eqref{eq:q-resolvability-cor-statement-all-blocklengths} then follows from $\codebookSize \geq \exp(\blocklength\codebookRate)$. In order to prove \eqref{eq:q-resolvability-cor-statement-large-blocklengths}, we invoke \eqref{eq:q-resolvability-large-blocklengths} of Theorem~\ref{theorem:q-resolvability} to find $\proofconst \in (0,\infty)$ such that $\resolvabilityBoundFunc(\codebookRate,\blocklength) \leq \exp(-\proofconst\blocklength)$ for large enough $\blocklength$. Then, fixing some $\finalconst_1 \in (0,\min(\proofconst,\finalconst))$, we have for sufficiently large $\blocklength$,
\begin{align*}
\Probability_\codebook\left(
  \traceNorm{
    \classicalQuantumChannel_\codebook
    -
    \classicalQuantumChannel_\inputDistribution^{\otimes \blocklength}
  }
  \geq
  \exp(-\finalconst_1\blocklength)
\right)
&\leq
\Probability_\codebook\left(
  \traceNorm{
    \classicalQuantumChannel_\codebook
    -
    \classicalQuantumChannel_\inputDistribution^{\otimes \blocklength}
  }
  \geq
  \exp(-\proofconst\blocklength)
  +
  \exp(-\finalconst\blocklength)
\right)
\\
&\leq
\Probability_\codebook\left(
  \traceNorm{
    \classicalQuantumChannel_\codebook
    -
    \classicalQuantumChannel_\inputDistribution^{\otimes \blocklength}
  }
  \geq
  \resolvabilityBoundFunc(\codebookRate,\blocklength)
  +
  \exp(-\finalconst\blocklength)
\right)
\\
\overset{\eqref{eq:q-resolvability-cor-statement-all-blocklengths}}&{\leq}
\exp\left(
  -\frac{1}{2}\exp(
    \blocklength
    (\codebookRate-2\finalconst)
  )
\right).
\end{align*}
This proves \eqref{eq:q-resolvability-cor-statement-large-blocklengths} with any choice of $\finalconst_2 \in (0,\codebookRate-2\finalconst)$.
\end{proof}

\begin{proof}[Proof of Corollary~\ref{cor:decoding}]
\eqref{eq:decoding-cor-all-blocklengths} is an immediate consequence of Theorem~\ref{theorem:decoding} and Markov's inequality. In order to argue \eqref{eq:decoding-cor-large-blocklengths}, we first invoke Theorem~\ref{theorem:decoding} to obtain $\finalconst \in (0,\infty)$ such that the \eqref{eq:decoding-error-large-blocklengths} holds. \eqref{eq:decoding-cor-large-blocklengths} then follows from \eqref{eq:decoding-cor-all-blocklengths} with the choices $\finalconst_1 \in (0,\finalconst)$ and $\finalconst_2 := \finalconst - \finalconst_1$.
\end{proof}

\subsection{Cost constraint}
\label{sec:proofs-cost-constraint}
In this section, we extend Corollaries~\ref{cor:q-resolvability} and~\ref{cor:decoding} to a cost-constrained version $\codebook_{\costFunction,\costConstraint}$ of the codebook $\codebook$, where $(\costFunction,\costConstraint)$ is an additive cost constraint compatible with the input distribution $\inputDistribution$ chosen for the generation of $\codebook$.

As long as there exists at least one $\inputAlphabetElement^\blocklength \in \inputAlphabet^\blocklength$ which satisfies the cost constraint (which is always the case if there is an input distribution compatible with the cost constraint), we can (given any codebook $\codebook$) define the \emph{cost-constrained codebook} $\codebook_{\costFunction,\costConstraint}$ via
\[
\codebook_{\costFunction,\costConstraint}(\codewordIndex)
=
\begin{cases}
\codebook(\codewordIndex),
&\codebook(\codewordIndex) \text{ satisfies the cost constraint } (\costFunction,\costConstraint),
\\
\inputAlphabetElement^\blocklength, &\text{otherwise.}
\end{cases}
\]

We define a set of \emph{bad codeword indices}
\[
\badCodewordsSet
:=
\left\{
  \codewordIndex \in \{1, \dots, \codebookSize\}
  :~~
  \codebook(\codewordIndex)
  \neq
  \codebook_{\costFunction,\costConstraint}(\codewordIndex)
\right\}.
\]

It is possible with methods similar to the ones used in~\cite[Section 3.3]{elgamal2011network} to bound the probability that the codebook contains a large number of bad code words as in the following lemma. For completeness, we give a full proof in Appendix~\ref{appendix:qit}.

\begin{lemma}
\label{lemma:bad-codewords}
Let $\codebook$ be a random codebook generated from a channel input distribution which is compatible with the additive cost constraint ${(\costFunction,\costConstraint)}$. Let $\proofconst_1$ be as defined in \eqref{eq:mgf-parameter-choice}. Then, $\proofconst_1 > 0$ and for all $\proofconst \in (0,\proofconst_1)$, we have
\begin{equation}
\label{eq:bad-codewords-all-blocklengths}
\Probability_\codebook\left(
  \cardinality{\badCodewordsSet}
  \geq
  \codebookSize
  \exp(-\blocklength\proofconst)
\right)
\leq
\costBoundFunc^{(\costFunction,\costConstraint)}\left(
  \proofconst,
  \frac{\log \codebookSize}{\blocklength},
  \blocklength
\right),
\end{equation}
with $\costBoundFunc^{(\costFunction,\costConstraint)}$ defined in \eqref{eq:costBoundFunc}. Furthermore, for all $\codebookRate \in (0,\infty)$, $\proofconst \in (0,\min(\codebookRate/2,\proofconst_1)$, $\costBoundFunc^{(\costFunction,\costConstraint)}(\proofconst,\codebookRate,\blocklength)$ tends to $0$ doubly exponentially fast as $\blocklength$ tends to $\infty$. That is, for every $\codebookRate$ and $\proofconst \in (0,\min(\codebookRate/2,\proofconst_1)$, $\finalconst \in (0,\codebookRate-2\proofconst)$, we have, for sufficiently large $\blocklength$,
\begin{equation}
\label{eq:bad-codewords-large-blocklengths}
\costBoundFunc^{(\costFunction,\costConstraint)}\left(
  \proofconst,
  \codebookRate,
  \blocklength
\right)
\leq
\exp(-\exp(\finalconst\blocklength)).
\end{equation}
\end{lemma}

\begin{cor}
\label{cor:q-resolvability-constrained}
Make the same assumptions as in Theorem~\ref{theorem:q-resolvability}, and let $(\costFunction, \costConstraint)$ be an additive cost constraint which is compatible with the input distribution $\inputDistribution$ and induces the cost-constrained random codebook $\codebook_{\costFunction,\costConstraint}$. Then, we have for $\proofconst_1$ chosen as in \eqref{eq:mgf-parameter-choice} and all $\proofconst_2 \in (0,\min(\proofconst_1,\codebookRate/2)), \proofconst_3 \in (0,\codebookRate/2)$
\begin{equation}
\label{eq:q-resolvability-constrained-all-blocklengths}
\hphantom{{}={}}
\Probability_\codebook\bigg(
  \traceNorm{
    \classicalQuantumChannel_{\codebook_{\costFunction,\costConstraint}}
    -
    \classicalQuantumChannel_\inputDistribution^{\otimes \blocklength}
  }
  \geq
  \resolvabilityBoundFunc(\codebookRate,\blocklength)
  +
  2\exp(-\proofconst_2\blocklength)
  +
  \exp(-\proofconst_3\blocklength)
\bigg)
\leq
\costBoundFunc^{(\costFunction,\costConstraint)}(
  \proofconst_2,
  \codebookRate,
  \blocklength
)
+
\exp\left(
  -\frac{1}{2}\exp(
    \blocklength
    (\codebookRate-2\proofconst_3)
  )
\right)
\end{equation}

Furthermore, \eqref{eq:q-resolvability-constrained-all-blocklengths} implies that
$
\traceNorm{
  \classicalQuantumChannel_{\codebook_{\costFunction,\costConstraint}}
  -
  \classicalQuantumChannel_\inputDistribution^{\otimes \blocklength}
}
$
tends to $0$ exponentially with a doubly exponentially small error probability as $\blocklength \rightarrow \infty$. That is, there are $\finalconst_1,\finalconst_2 \in (0,\infty)$ such that \eqref{eq:q-resolvability-constrained-all-blocklengths} implies for all sufficiently large $\blocklength$ that
\begin{equation}
\label{eq:q-resolvability-constrained-large-blocklengths}
\Probability_\codebook\left(
  \traceNorm{
    \classicalQuantumChannel_{\codebook_{\costFunction,\costConstraint}}
    -
    \classicalQuantumChannel_\inputDistribution^{\otimes \blocklength}
  }
  \geq
  \exp(-\finalconst_1\blocklength)
\right)
\leq
\exp(
  -\exp(
    \finalconst_2
    \blocklength
  )
).
\end{equation}
\end{cor}

\begin{proof}
We will bound $\traceNorm{\classicalQuantumChannel_\codebook - \classicalQuantumChannel_{\codebook_{\costFunction,\costConstraint}}}$ in such a way that this corollary follows as an immediate consequence of Corollary~\ref{cor:q-resolvability}.

Conditioned on the event
$
\cardinality{\badCodewordsSet}
<
\codebookSize
\exp(-\blocklength\proofconst_2)
$,
we have almost surely
\begin{align}
\nonumber
\traceNorm{
  \classicalQuantumChannel_\codebook - \classicalQuantumChannel_{\codebook_{\costFunction,\costConstraint}}
}
&=
\traceNorm{
  \frac{1}{\codebookSize}
  \sum_{\codewordIndex\in\badCodewordsSet}
    \Big(
      \classicalQuantumChannel^\blocklength\big(
        \codebook(\codewordIndex)
      \big)
      -
      \classicalQuantumChannel^\blocklength\big(
        \codebook_{\costFunction,\costConstraint}(\codewordIndex)
      \big)
    \Big)
}
\\
\nonumber
\overset{(a)}&{\leq}
\frac{1}{\codebookSize}
\sum_{\codewordIndex\in\badCodewordsSet}\left(
  \traceNorm{
    \classicalQuantumChannel^\blocklength\big(
      \codebook(\codewordIndex)
    \big)
  }
  +
  \traceNorm{
    \classicalQuantumChannel^\blocklength\big(
      \codebook_{\costFunction,\costConstraint}(\codewordIndex)
    \big)
  }
\right)
\displaybreak[0] \\
\nonumber
&=
\frac{2\cardinality{\badCodewordsSet}}{\codebookSize}
\\
\label{eq:resolvability-cost-constraint-bad-codeword-conditioned}
&<
2\exp(-\blocklength\proofconst_2),
\end{align}
where step (a) is due to the triangle inequality. Consequently, we obtain
\begin{align*}
&\hphantom{{}={}}
\Probability_\codebook\bigg(
  \traceNorm{
    \classicalQuantumChannel_{\codebook_{\costFunction,\costConstraint}}
    -
    \classicalQuantumChannel_\inputDistribution^{\otimes \blocklength}
  }
  \geq
  \resolvabilityBoundFunc(\codebookRate,\blocklength)
  +
  2\exp(-\proofconst_2\blocklength)
  +
  \exp(-\proofconst_3\blocklength)
\bigg)
\\
\overset{(a)}&{\leq}
\Probability_\codebook\bigg(
  \traceNorm{
    \classicalQuantumChannel_{\codebook_{\costFunction,\costConstraint}}
    -
    \classicalQuantumChannel_{\codebook}
  }
  +
  \traceNorm{
    \classicalQuantumChannel_{\codebook}
    -
    \classicalQuantumChannel_\inputDistribution^{\otimes \blocklength}
  }
  \geq
  \resolvabilityBoundFunc(\codebookRate,\blocklength)
  +
  2\exp(-\proofconst_2\blocklength)
  +
  \exp(-\proofconst_3\blocklength)
\bigg)
\\
\overset{(b)}&{\leq}
\Probability_\codebook\left(
  \traceNorm{
    \classicalQuantumChannel_{\codebook_{\costFunction,\costConstraint}}
    -
    \classicalQuantumChannel_{\codebook}
  }
  \geq
  2\exp(-\proofconst_2\blocklength)
\right)
+
\Probability_\codebook\left(
  \traceNorm{
    \classicalQuantumChannel_{\codebook}
    -
    \classicalQuantumChannel_\inputDistribution^{\otimes \blocklength}
  }
  \geq
  \resolvabilityBoundFunc(\codebookRate,\blocklength)
  +
  \exp(-\proofconst_3\blocklength)
\right),
\end{align*}
where (a) is due to the triangle inequality and (b) is by the union bound. We conclude the proof of \eqref{eq:q-resolvability-constrained-all-blocklengths} by applying Lemma~\ref{lemma:bad-codewords} along with \eqref{eq:resolvability-cost-constraint-bad-codeword-conditioned} and $\codebookSize \geq \exp(\blocklength\codebookRate)$ in the first summand and \eqref{eq:q-resolvability-cor-statement-all-blocklengths} of Corollary~\ref{cor:q-resolvability} in the second summand.

In order to prove \eqref{eq:q-resolvability-constrained-large-blocklengths}, we invoke \eqref{eq:q-resolvability-large-blocklengths} of Theorem~\ref{theorem:q-resolvability} to find $\proofconst_4 \in (0,\infty)$ such that $\resolvabilityBoundFunc(\codebookRate,\blocklength) \leq \exp(-\proofconst_4 \blocklength)$ for large enough $\blocklength$. Then with any choice of $\finalconst_1 \in (0,\min(\proofconst_2,\proofconst_3,\proofconst_4)), \finalconst_2 \in (0,\min(\codebookRate-2\proofconst_2,\codebookRate-2\proofconst_3))$,we obtain for sufficiently large $\blocklength$
\begin{align*}
&\hphantom{{}={}}
\Probability_\codebook\bigg(
  \traceNorm{
    \classicalQuantumChannel_{\codebook_{\costFunction,\costConstraint}}
    -
    \classicalQuantumChannel_\inputDistribution^{\otimes \blocklength}
  }
  \geq
  \exp(-\blocklength\finalconst_1)
\bigg)
\\
&\leq
\Probability_\codebook\bigg(
  \traceNorm{
    \classicalQuantumChannel_{\codebook_{\costFunction,\costConstraint}}
    -
    \classicalQuantumChannel_\inputDistribution^{\otimes \blocklength}
  }
  \geq
  \exp(-\proofconst_4 \blocklength)
  +
  2\exp(-\proofconst_2\blocklength)
  +
  \exp(-\proofconst_3\blocklength)
\bigg)
\\
&\leq
\Probability_\codebook\bigg(
  \traceNorm{
    \classicalQuantumChannel_{\codebook_{\costFunction,\costConstraint}}
    -
    \classicalQuantumChannel_\inputDistribution^{\otimes \blocklength}
  }
  \geq
  \resolvabilityBoundFunc(\codebookRate,\blocklength)
  +
  2\exp(-\proofconst_2\blocklength)
  +
  \exp(-\proofconst_3\blocklength)
\bigg)
\\
\overset{\eqref{eq:q-resolvability-constrained-all-blocklengths}}&{\leq}
\costBoundFunc^{(\costFunction,\costConstraint)}(
  \proofconst_2,
  \codebookRate,
  \blocklength
)
+
\exp\left(
  -\frac{1}{2}\exp(
    \blocklength
    (\codebookRate-2\proofconst_3)
  )
\right)
\\
\overset{\eqref{eq:bad-codewords-large-blocklengths}}&{\leq}
\exp(-\exp(\blocklength\finalconst_2)).
\qedhere
\end{align*}
\end{proof}

\begin{cor}
\label{cor:decoding-constrained}
Make the same assumptions as in Theorem~\ref{theorem:decoding}, and let $(\costFunction,\costConstraint)$ be an additive cost constraint which is compatible with the input distribution $\inputDistribution$ and induces the cost-constrained random codebook $\codebook_{\costFunction,\costConstraint}$. Then, for each $\codebook$, there is a decoding \gls{povm} $(\decoderpovm_\codewordIndex)_{\codewordIndex=1}^\codebookSize$ such that every $\decoderpovm_\codewordIndex$ is measurable as a function of $\codebook$, and for all $\proofconst_2 \in (0,\infty) , \proofconst_3 \in (0,\proofconst_1)$ with $\proofconst_1$ defined in \eqref{eq:mgf-parameter-choice}, we have
\begin{equation}
\label{eq:decoding-constrained-all-blocklengths}
\Probability_\codebook\Bigg(
  \frac{1}{\codebookSize}
  \sum_{\codewordIndex=1}^\codebookSize
    \trace\left(
      \classicalQuantumChannel(\codebook_{\costFunction,\costConstraint}(\codewordIndex))
      (\identityOperator - \decoderpovm_\codewordIndex)
    \right)
  \geq
  \exp(-\blocklength\proofconst_2)
  +
  2\exp(-\blocklength\proofconst_3)
\Bigg)
\leq
\codingBoundFunc(\codebookRate,\blocklength)\exp(\blocklength\proofconst_2)
+
\costBoundFunc^{(\costFunction,\costConstraint)}\left(\proofconst_3,\frac{\log \codebookSize}{\blocklength},\blocklength\right).
\end{equation}
Furthermore, \eqref{eq:decoding-constrained-all-blocklengths} implies that if $\codebookSize \geq \exp(\blocklength\codebookRate_{\min})$ for some $\codebookRate_{\min} \in (0,\codebookRate]$ and all $\blocklength$, the decoding error tends to $0$ exponentially with an exponentially small error probability as $\blocklength \rightarrow \infty$. That is, there are $\finalconst_1, \finalconst_2 \in (0,\infty)$ such that for sufficiently large $\blocklength$,
\begin{equation}
\label{eq:decoding-constrained-large-blocklengths}
\Probability_\codebook\left(
  \frac{1}{\codebookSize}
  \sum_{\codewordIndex=1}^\codebookSize
    \trace\left(
      \classicalQuantumChannel(\codebook(\codewordIndex))
      (\identityOperator - \decoderpovm_\codewordIndex)
    \right)
  \geq
  \exp(-\blocklength\finalconst_1)
\right)
\leq
\exp(-\blocklength\finalconst_2).
\end{equation}
\end{cor}
\begin{proof}
We apply Corollary~\ref{cor:decoding} to obtain a \gls{povm} $(\decoderpovm_\codewordIndex)_{\codewordIndex=1}^\codebookSize$ which is measurable as a function of $\codebook$. Due to the union bound,
\begin{align}
\nonumber
&\hphantom{{}={}}
\Probability_\codebook\left(
  \frac{1}{\codebookSize}
  \sum_{\codewordIndex=1}^\codebookSize
    \trace\left(
      \classicalQuantumChannel(\codebook_{\costFunction,\costConstraint}(\codewordIndex))
      (\identityOperator - \decoderpovm_\codewordIndex)
    \right)
  \geq
  \exp(-\blocklength\proofconst_2)
  +
  2\exp(-\blocklength\proofconst_3)
\right)
\\
\label{eq:decoding-constrained-split}
&\leq
\begin{multlined}[t]
\Probability_\codebook\left(
  \frac{1}{\codebookSize}
  \sum_{\codewordIndex=1}^\codebookSize
    \trace\left(
      \classicalQuantumChannel(\codebook(\codewordIndex))
      (\identityOperator - \decoderpovm_\codewordIndex)
    \right)
  \geq
  \exp(-\blocklength\proofconst_2)
\right)
\\+
\Probability_\codebook\left(
  \frac{1}{\codebookSize}
  \sum_{\codewordIndex=1}^\codebookSize
    \trace\Big(
      \big(
        \classicalQuantumChannel(\codebook_{\costFunction,\costConstraint}(\codewordIndex))
        -
        \classicalQuantumChannel(\codebook(\codewordIndex))
      \big)
      \big(\identityOperator - \decoderpovm_\codewordIndex\big)
    \Big)
  \geq
  2\exp(-\blocklength\proofconst_3)
\right).
\end{multlined}
\end{align}
The first summand in \eqref{eq:decoding-constrained-split} can now be upper bounded by $\codingBoundFunc(\codebookRate,\blocklength)\exp(\blocklength\proofconst_2)$ due to Corollary~\ref{cor:decoding}. For the second summand, we note that
\begin{align*}
\trace\Big(
  \big(
    \classicalQuantumChannel(\codebook_{\costFunction,\costConstraint}(\codewordIndex))
    -
    \classicalQuantumChannel(\codebook(\codewordIndex))
  \big)
  \big(\identityOperator - \decoderpovm_\codewordIndex\big)
\Big)
\overset{(a)}&{\leq}
\traceNorm{
  \big(
    \classicalQuantumChannel(\codebook_{\costFunction,\costConstraint}(\codewordIndex))
    -
    \classicalQuantumChannel(\codebook(\codewordIndex))
  \big)
  \big(\identityOperator - \decoderpovm_\codewordIndex\big)
}
\\
\overset{(b)}&{\leq}
\traceNorm{
    \classicalQuantumChannel(\codebook_{\costFunction,\costConstraint}(\codewordIndex))
    -
    \classicalQuantumChannel(\codebook(\codewordIndex))
}
\operatorNorm{
  \identityOperator - \decoderpovm_\codewordIndex
}
\\
\overset{(c)}&{\leq}
2,
\end{align*}
where (a) is by Lemma~\ref{lemma:norm-basics}-\ref{item:norm-basics-trace-norm-duality}, (b) is by Lemma~\ref{lemma:norm-basics}-\ref{item:norm-basics-trace-norm-submultiplicative} and (c) follows because $\decoderpovm_\codewordIndex \leq \identityOperator$ and $\classicalQuantumChannel$ maps to $\densityOperators{\hilbertSpace}$. Hence, the second summand in \eqref{eq:decoding-constrained-split} is upper bounded by
\[
\Probability_\codebook\left(
  \frac{2\cardinality{\badCodewordsSet}}
       {\codebookSize}
  \geq
  2\exp(-\blocklength\proofconst_3)
\right)
\leq
\costBoundFunc^{(\costFunction,\costConstraint)}\left(\proofconst_3,\frac{\log \codebookSize}{\blocklength},\blocklength\right)
\]
where the inequality follows by Lemma~\ref{lemma:bad-codewords}. This concludes the proof of \eqref{eq:decoding-constrained-all-blocklengths}. In order to prove \eqref{eq:decoding-constrained-large-blocklengths}, we invoke \eqref{eq:decoding-error-large-blocklengths} of Theorem~\ref{theorem:decoding} to obtain $\proofconst_4 \in (0,\infty)$ with $\codingBoundFunc(\codebookRate,\blocklength) \leq \exp(-\blocklength\proofconst_4)$ for large enough $\blocklength$ and we invoke \eqref{eq:bad-codewords-large-blocklengths} of Lemma~\ref{lemma:bad-codewords} to obtain $\proofconst_5 \in (0,\infty)$ with $\costBoundFunc^{(\costBoundFunc,\costConstraint)}(\proofconst_3,\codebookRate_{\min},\blocklength) \leq \exp(-\exp(\blocklength\proofconst_5))$ for sufficiently large $\blocklength$. We are allowed to make $\proofconst_2$ small enough so that $\proofconst_2 \in (0,\proofconst_4)$. With any choice of $\finalconst_1 \in (0, \min(\proofconst_2, \proofconst_3))$ and $\finalconst_2 \in (0,\proofconst_4-\proofconst_2)$, we then have
\begin{align*}
\Probability_\codebook\left(
  \frac{1}{\codebookSize}
  \sum_{\codewordIndex=1}^\codebookSize
    \trace\left(
      \classicalQuantumChannel(\codebook(\codewordIndex))
      (\identityOperator - \decoderpovm_\codewordIndex)
    \right)
  \geq
  \exp(-\blocklength\finalconst_1)
\right)
&\leq
\Probability_\codebook\left(
  \frac{1}{\codebookSize}
  \sum_{\codewordIndex=1}^\codebookSize
    \trace\left(
      \classicalQuantumChannel(\codebook(\codewordIndex))
      (\identityOperator - \decoderpovm_\codewordIndex)
    \right)
  \geq
  \exp(-\blocklength\proofconst_2)
  +
  2\exp(-\blocklength\proofconst_3)
\right)
\\
\overset{\eqref{eq:decoding-constrained-all-blocklengths}}&{\leq}
\codingBoundFunc(\codebookRate,\blocklength)\exp(\blocklength\proofconst_2)
+
\costBoundFunc^{(\costFunction,\costConstraint)}\left(\proofconst_3,\codebookRate_{\min},\blocklength\right)
\\
&\leq
\exp(-\blocklength(\proofconst_4-\proofconst_2)
+
\exp(-\exp(\proofconst_5 \blocklength))
\\
&\leq
\exp(-\blocklength\gamma_2).
\qedhere
\end{align*}
\end{proof}

\subsection{Proof of the Main Theorems for Wiretap Coding}
\label{sec:proofs-main-result}
We now have everything needed to prove the main results of this paper.

\begin{proof}[Proof of Theorem~\ref{theorem:wiretap-ccq-all-blocklengths}]
We prove the existence of a codebook by arguing that if we draw a codebook at random, it has all the properties claimed in the theorem statement with a positive probability.
\paragraph{Codebook generation}
We generate a wiretap codebook $\codebook := (\codebook_1, \dots, \codebook_\numCodebooks)$ which is an $\numCodebooks$-tuple of i.i.d. standard random codebooks drawn according to $\inputDistribution$. We also define the associated cost-constrained wiretap codebook $\codebook_{\costFunction,\costConstraint} = (\codebook_{1, \costFunction,\costConstraint}, \dots, \codebook_{\numCodebooks, \costFunction,\costConstraint})$.

\paragraph{Encoding procedure}
Let $\messageRV \in \{1, \dots, \numCodebooks\}$. In order to generate the corresponding output of $\encoder$, we draw a random number $\codewordIndex \in \{1, \dots, \codebookSize\}$ and output $\codebook_{\messageRV, \costFunction,\costConstraint}(\codewordIndex)$. Note that (\ref{eq:wiretap-codebook-num}) ensures that our code has a rate of at least $\codebookRate$.

\paragraph{Decoding procedure}
We use a joint typicality decoder; i.e., $\decoder$ outputs $\hat{\messageRV}$ if there is $\codewordIndex \in \{1, \dots, \codebookSize\}$ such that $\codebook_{\hat{\messageRV}}(\codewordIndex)$ is jointly typical with $\legitOutputRV^\blocklength$ and $\hat{\messageRV}, \codewordIndex$ are unique with this property. If no such $\hat{\messageRV}, \codewordIndex$ exist, the decoder outputs $1$. The definition of typicality we use is in terms of information density (cf. \cite[Def. 4.34]{bender2021nonasymptotic}). Let
\[
\decodingError_\codebook:~
\{1, \dots, \numCodebooks\} \times \{1, \dots, \codebookSize\}
\rightarrow
[0,1]
,~~
(\indexCodebooks,\codewordIndex)
\mapsto
\Probability\left(
  \hat{\messageRV}
  \neq
  \indexCodebooks
  ~|~
  \inputRV^\blocklength = \codebook_\indexCodebooks(\codewordIndex)
\right).
\]

Let us first look at the average decoding error of an alternative encoder $\encoder'$ which draws a random number $\codewordIndex \in \{1, \dots, \codebookSize\}$ and outputs $\codebook_{\messageRV}(\codewordIndex)$ to transmit message $\messageRV$. The average decoding error in case $\encoder'$ is used can be expressed as
\[
\frac{1}{\numCodebooks\codebookSize}
\sum_{\indexCodebooks=1}^\numCodebooks
\sum_{\codewordIndex=1}^\codebookSize
  \decodingError_\codebook(\indexCodebooks,\codewordIndex),
\]

and so we have by~\cite[Lemma 4.37]{bender2021nonasymptotic} that
\[
\Expectation_\codebook\left(
  \frac{1}{\numCodebooks\codebookSize}
  \sum_{\indexCodebooks=1}^\numCodebooks
  \sum_{\codewordIndex=1}^\codebookSize
    \decodingError_\codebook(\indexCodebooks,\codewordIndex)
\right)
\leq
\classicalCodingBoundFunc^{\bobChannel}(\combinedRate,\blocklength).
\]
We use Markov's inequality to infer that for any $\proofconst_3 \in (0,\infty)$, we have
\[
\Probability_\codebook\left(
  \frac{1}{\numCodebooks\codebookSize}
  \sum_{\indexCodebooks=1}^\numCodebooks
  \sum_{\codewordIndex=1}^\codebookSize
    \decodingError_\codebook(\indexCodebooks,\codewordIndex)
  \geq
  \exp(-\blocklength\proofconst_3)
\right)
\leq
\classicalCodingBoundFunc^{\bobChannel}(\combinedRate,\blocklength)\exp(\blocklength\proofconst_3).
\]
Noting that
\[
\frac{1}{\numCodebooks\codebookSize}
\sum_{\indexCodebooks=1}^\numCodebooks
\sum_{\codewordIndex=1}^\codebookSize
  \decodingError_\codebook(\indexCodebooks,\codewordIndex)
-
\frac{1}{\numCodebooks\codebookSize}
\sum_{\indexCodebooks=1}^\numCodebooks
\sum_{\codewordIndex=1}^\codebookSize
  \decodingError_{\codebook_{\costFunction, \costConstraint}}(\indexCodebooks,\codewordIndex)
\leq
\frac{\cardinality{\badCodewordsSet}}{\numCodebooks\codebookSize}
\]
and using Lemma~\ref{lemma:bad-codewords}, we obtain for every $\proofconst_2 \in (0,(\codebookRate+\randomnessRate)/2)$,
\begin{align}
\nonumber
&\hphantom{{}={}}
\Probability_\codebook\left(
  \frac{1}{\numCodebooks\codebookSize}
  \sum_{\indexCodebooks=1}^\numCodebooks
  \sum_{\codewordIndex=1}^\codebookSize
    \decodingError_{\codebook_{\costFunction, \costConstraint}}(\indexCodebooks,\codewordIndex)
  \geq
  \exp(-\blocklength\proofconst_2)
  +
  \exp(-\blocklength\proofconst_3)
\right)
\\
\nonumber
&\leq
\Probability_\codebook\left(
  \frac{1}{\numCodebooks\codebookSize}
  \sum_{\indexCodebooks=1}^\numCodebooks
  \sum_{\codewordIndex=1}^\codebookSize
    \decodingError_\codebook(\indexCodebooks,\codewordIndex)
  \geq
  \exp(-\blocklength\proofconst_3)
\right)
+
\Probability_\codebook\left(
  \frac{1}{\numCodebooks\codebookSize}
  \sum_{\indexCodebooks=1}^\numCodebooks
  \sum_{\codewordIndex=1}^\codebookSize
    \decodingError_\codebook(\indexCodebooks,\codewordIndex)
  -
  \frac{1}{\numCodebooks\codebookSize}
  \sum_{\indexCodebooks=1}^\numCodebooks
  \sum_{\codewordIndex=1}^\codebookSize
    \decodingError_{\codebook_{\costFunction, \costConstraint}}(\indexCodebooks,\codewordIndex)
  \geq
  \exp(-\blocklength\proofconst_2)
\right)
\\
\label{eq:wiretap-ccq-decoding-error}
&\leq
\classicalCodingBoundFunc^{\bobChannel}(\combinedRate,\blocklength)\exp(\blocklength\proofconst_3)
+
\costBoundFunc^{(\costFunction,\costConstraint)}(\proofconst_2,\codebookRate+\randomnessRate,\blocklength).
\end{align}

\paragraph{Distinguishing security level}
We first note that by passing $\messageRV$ through $\encoder$ and $\eveChannel^\blocklength$, we obtain the density operator $\eveChannelOutput{\codebook_{\messageRV, \costFunction,\costConstraint}}$. We apply Corollary~\ref{cor:q-resolvability-constrained} to obtain, for every $\indexCodebooks \in \{1, \dots, \numCodebooks\}$,
\[
\Probability_\codebook\bigg(
  \traceNorm{
    \eveChannelOutput{\codebook_{\indexCodebooks, \costFunction,\costConstraint}}
    -
    \eveChannelOutput{\inputDistribution}^{\otimes \blocklength}
  }
  \geq
  \resolvabilityBoundFunc^{\classicalQuantumChannel}(\randomnessRate,\blocklength)
  +
  2\exp(-\proofconst_4\blocklength)
  +
  \exp(-\proofconst_5\blocklength)
\bigg)
\leq
\costBoundFunc^{(\costFunction,\costConstraint)}(
  \proofconst_4,
  \randomnessRate,
  \blocklength
)
+
\exp\left(
  -\frac{1}{2}\exp(
    \blocklength
    (\randomnessRate-2\proofconst_5)
  )
\right).
\]

Hence,
\begin{multline}
\label{eq:ccq-wiretap-tvdist}
\Probability_\codebook\bigg(
  \exists \indexCodebooks \in \{1, \dots, \numCodebooks\}~
  \traceNorm{
    \eveChannelOutput{\codebook_{\indexCodebooks, \costFunction,\costConstraint}}
    -
    \eveChannelOutput{\inputDistribution}^{\otimes \blocklength}
  }
  \geq
  \resolvabilityBoundFunc^{\classicalQuantumChannel}(\randomnessRate,\blocklength)
  +
  2\exp(-\proofconst_4\blocklength)
  +
  \exp(-\proofconst_5\blocklength)
\bigg)
\\
\leq
\costBoundFunc^{(\costFunction,\costConstraint)}(
  \proofconst_4,
  \randomnessRate,
  \blocklength
)
\exp(\blocklength(\combinedRate-\randomnessRate))
+
\exp\left(
  -\frac{1}{2}\exp(
    \blocklength
    (\randomnessRate-2\proofconst_5)
  )
  +
  \blocklength(
    \combinedRate-\randomnessRate
  )
\right)
\end{multline}

Conditioned on the event that we have for every $\indexCodebooks \in \{1, \dots, \numCodebooks\}$
\begin{equation}
\label{eq:wiretap-security-assumption}
\traceNorm{
  \eveChannelOutput{\codebook_{\indexCodebooks, \costFunction,\costConstraint}}
  -
  \eveChannelOutput{\inputDistribution}^{\otimes \blocklength}
}
<
\resolvabilityBoundFunc^{\classicalQuantumChannel}(\randomnessRate,\blocklength)
+
2\exp(-\proofconst_4\blocklength)
+
\exp(-\proofconst_5\blocklength),
\end{equation}
the triangle inequality yields, for all $\indexCodebooks_1, \indexCodebooks_2$,
\begin{align*}
\traceNorm{
  \eveChannelOutput{\codebook_{\indexCodebooks_1, \costFunction,\costConstraint}}
  -
  \eveChannelOutput{\codebook_{\indexCodebooks_2, \costFunction,\costConstraint}}
}
&\leq
\traceNorm{
  \eveChannelOutput{\codebook_{\indexCodebooks_1, \costFunction,\costConstraint}}
  -
  \eveChannelOutput{\inputDistribution}^{\otimes \blocklength}
}
+
\traceNorm{
  \eveChannelOutput{\codebook_{\indexCodebooks_2, \costFunction,\costConstraint}}
  -
  \eveChannelOutput{\inputDistribution}^{\otimes \blocklength}
}
\\
\overset{(\ref{eq:wiretap-security-assumption})}&{\leq}
2\resolvabilityBoundFunc^{\classicalQuantumChannel}(\randomnessRate,\blocklength)
+
4\exp(-\proofconst_4\blocklength)
+
2\exp(-\proofconst_5\blocklength),
\end{align*}
which means that our wiretap code has distinguishing security level
$
2\resolvabilityBoundFunc^{\classicalQuantumChannel}(\randomnessRate,\blocklength)
+
4\exp(-\proofconst_4\blocklength)
+
2\exp(-\proofconst_5\blocklength)
$.

In conclusion, if \eqref{eq:wiretap-ccq-prob-one} is satisfied, then we can combine \eqref{eq:wiretap-ccq-decoding-error} and \eqref{eq:ccq-wiretap-tvdist} with the union bound to argue that if we carry out the random codebook construction as described above, there is a nonzero probability that the resulting cost-constrained codebook simultaneously satisfies the decoding error property claimed in item \ref{item:wiretap-ccq-all-blocklengths-error} and the security property claimed in item \ref{item:wiretap-ccq-all-blocklengths-security} of the theorem statement. Furthermore, it is clear that all codebooks that could be drawn randomly in the fashion described satisfy items \ref{item:wiretap-ccq-all-blocklengths-codebook-size} and \ref{item:wiretap-ccq-all-blocklengths-cost}. This means that we have shown that at least one codebook must exist that has all the properties claimed in the statement of Theorem~\ref{theorem:wiretap-ccq-all-blocklengths}.
\end{proof}

\begin{proof}[Proof of Theorem~\ref{theorem:wiretap-ccq}]
We have $\codebookRate < \legitInformation - \holevoInformation{\inputDistribution}{\eveChannel}$. This allows us to fix $\randomnessRate \in (\holevoInformation{\inputDistribution}{\eveChannel}, \legitInformation - \codebookRate)$, and, subsequently, $\combinedRate \in (\codebookRate + \randomnessRate, \legitInformation)$. For sufficiently large $\blocklength$, this then allows us to fix $\codebookSize, \numCodebooks \in \naturals$ satisfying \eqref{eq:wiretap-codebook-size}, \eqref{eq:wiretap-codebook-size-num}, and \eqref{eq:wiretap-codebook-num}.

We fix $\proofconst_2 \in (0, \min(\proofconst_1,(\codebookRate+\numCodebooks)/2))$ and $\proofconst_4 \in (0, \min(\proofconst_1,\codebookRate/2))$, where $\proofconst_1$ is defined in \eqref{eq:mgf-parameter-choice}. Then we invoke~\cite[Lemma 4.37]{bender2021nonasymptotic}, Lemma~\ref{lemma:bad-codewords}, and Theorem~\ref{theorem:q-resolvability} to find $\proofconst_6, \proofconst_7, \proofconst_8, \proofconst_9 \in(0,\infty)$ with the properties that for sufficiently large $\blocklength$,
\begin{align*}
\classicalCodingBoundFunc^{\bobChannel}(\combinedRate,\blocklength)
&\leq
\exp(-\proofconst_6 \blocklength)
\\
\costBoundFunc^{(\costFunction,\costConstraint)}(\proofconst_2,\codebookRate+\randomnessRate,\blocklength)
&\leq
\exp(-\exp(\proofconst_7 \blocklength))
\\
\costBoundFunc^{(\costFunction,\costConstraint)}(\proofconst_4,\randomnessRate,\blocklength)
&\leq
\exp(-\exp(\proofconst_8 \blocklength))
\\
\resolvabilityBoundFunc^{\classicalQuantumChannel}(\randomnessRate,\blocklength)
&\leq
\exp(-\proofconst_9 \blocklength).
\end{align*}
Finally, we pick $\proofconst_3 \in (0,\proofconst_6)$ and $\proofconst_5 \in (0,\codebookRate/2)$. Then the left hand side in \eqref{eq:wiretap-ccq-prob-one} is upper bounded by
\[
\exp(-(\proofconst_6-\proofconst_3) \blocklength)
+
\exp(-\exp(\proofconst_7 \blocklength))
+
\exp(-\exp(\proofconst_8 \blocklength) + \blocklength(\combinedRate-\randomnessRate))
+
\exp\left(
  -\frac{1}{2}\exp(
    \blocklength
    (\codebookRate-2\proofconst_5)
  )
  +
  \blocklength(\combinedRate-\randomnessRate)
\right),
\]
which clearly tends to $0$ for $\blocklength \rightarrow \infty$. In particular, for large enough $\blocklength$, condition \eqref{eq:wiretap-ccq-prob-one} is satisfied. With the choices $\finalconst_1 \in (0,\min(\proofconst_2,\proofconst_3)$ and $\finalconst_2 \in (0, \min(\proofconst_4,\proofconst_5,\proofconst_9))$, Theorem~\ref{theorem:wiretap-ccq-all-blocklengths} gives us a wiretap code which by item~\ref{item:wiretap-ccq-all-blocklengths-error} has, for sufficiently large $\blocklength$, average error
\[
 \decodingError
 =
 \exp(-\blocklength\proofconst_2)
 +
 \exp(-\blocklength\proofconst_3)
 <
 \exp(-\blocklength\finalconst_1)
\]
and by item~\ref{item:wiretap-ccq-all-blocklengths-security} distinguishing security level
\[
  \securityNumber
  =
  2\resolvabilityBoundFunc^{\classicalQuantumChannel}(\randomnessRate,\blocklength)
  +
  4\exp(-\proofconst_4\blocklength)
  +
  2\exp(-\proofconst_5\blocklength)
  \leq
  2\exp(-\proofconst_9 \blocklength)
  +
  4\exp(-\proofconst_4\blocklength)
  +
  2\exp(-\proofconst_5\blocklength)
  <
  \exp(-\blocklength\finalconst_2).
  \qedhere
\]
\end{proof}

\setcounter{paragraph}{0}

\begin{proof}[Proof of Theorem~\ref{theorem:wiretap-cq-all-blocklengths}]
The proof is similar to that of Theorem~\ref{theorem:wiretap-ccq-all-blocklengths}, except that we invoke Corollary~\ref{cor:decoding} for the average decoding error at the legitimate receiver.

\paragraph{Codebook generation}
See proof of Theorem~\ref{theorem:wiretap-ccq}.

\paragraph{Encoding procedure}
See proof of Theorem~\ref{theorem:wiretap-ccq}.

\paragraph{Decoding procedure}
We treat the wiretap codebook $\codebook$ as one large codebook $\tilde{\codebook}$ of size $\numCodebooks\codebookSize$ defined by $\tilde{\codebook}(\indexCodebooks,\codewordIndex) := \codebook_\indexCodebooks(\codewordIndex)$. Together with (\ref{eq:wiretap-codebook-size-num}), this allows us to invoke Corollary~\ref{cor:decoding-constrained} to obtain a decoding \gls{povm} $(\decoderpovm_{\indexCodebooks,\codewordIndex})_{\indexCodebooks,\codewordIndex=1}^{\numCodebooks,\codebookSize}$ for the codebook $\tilde{\codebook}_{\costFunction,\costConstraint}$. Define a corresponding decoding \gls{povm} $(\decoderpovm_{\indexCodebooks})_{\indexCodebooks=1}^{\numCodebooks}$ for the wiretap channel by
\[
\decoderpovm_{\indexCodebooks}
:=
\sum_{\codewordIndex=1}^\codebookSize
  \decoderpovm_{\indexCodebooks,\codewordIndex}.
\]

Then
\begin{align*}
\Expectation_\messageRV
  \trace\left(
    \bobcqChannel^\blocklength \circ \encoder (\messageRV)
    (\identityOperator - \decoderpovm_\messageRV)
  \right)
&=
\frac{1}{\numCodebooks}
\sum_{\indexCodebooks=1}^{\numCodebooks}
  \trace\left(
    \bobcqChannel^\blocklength \circ \encoder (\indexCodebooks)
    \left(\identityOperator - \decoderpovm_{\indexCodebooks}\right)
  \right)
\\
&=
\frac{1}{\numCodebooks}
\sum_{\indexCodebooks=1}^{\numCodebooks}
  \trace\left(
    \left(
      \frac{1}{\codebookSize}
      \sum_{\codewordIndex=1}^\codebookSize
        \bobcqChannel^\blocklength\left(\tilde{\codebook}_{\costFunction,\costConstraint}(\indexCodebooks,\codewordIndex)\right)
    \right)
    \left(
      \identityOperator
      -
      \sum_{\codewordIndex=1}^\codebookSize \decoderpovm_{\indexCodebooks,\codewordIndex}
    \right)
  \right)
\\
&=
\frac{1}{\numCodebooks\codebookSize}
\sum_{\indexCodebooks=1}^{\numCodebooks}
\sum_{\codewordIndex=1}^\codebookSize
  \trace\left(
    \bobcqChannel^\blocklength\left(\tilde{\codebook}_{\costFunction,\costConstraint}(\indexCodebooks,\codewordIndex)\right)
    \left(
      \identityOperator
      -
      \sum_{\codewordIndex=1}^\codebookSize \decoderpovm_{\indexCodebooks,\codewordIndex}
    \right)
  \right)
\\
&\leq
\frac{1}{\numCodebooks\codebookSize}
\sum_{\indexCodebooks=1}^{\numCodebooks}
\sum_{\codewordIndex=1}^\codebookSize
  \trace\left(
    \bobcqChannel^\blocklength\left(\tilde{\codebook}_{\costFunction,\costConstraint}(\indexCodebooks,\codewordIndex)\right)
    \left(
      \identityOperator
      -
      \decoderpovm_{\indexCodebooks,\codewordIndex}
    \right)
  \right),
\end{align*}
and therefore, we have by Corollary~\ref{cor:decoding-constrained} that
\begin{align*}
&\hphantom{{}={}}
\Probability_\codebook\Big(
  \Expectation_\messageRV
    \trace\left(
      \bobcqChannel^\blocklength \circ \encoder (\messageRV)
      (\identityOperator - \decoderpovm_\messageRV)
    \right)
  \geq
  \exp(-\blocklength\proofconst_2)
  +
  2\exp(-\blocklength\proofconst_3)
\Big)
\\
&\leq
\codingBoundFunc^{\bobcqChannel}(\combinedRate,\blocklength)\exp(\blocklength\proofconst_2)
+
\costBoundFunc^{(\costFunction,\costConstraint)}\left(\proofconst_3,\frac{\log (\numCodebooks\codebookSize)}{\blocklength},\blocklength\right)
\\
&\leq
\codingBoundFunc^{\bobcqChannel}(\combinedRate,\blocklength)\exp(\blocklength\proofconst_2)
+
\costBoundFunc^{(\costFunction,\costConstraint)}\left(\proofconst_3,\codebookRate+\randomnessRate,\blocklength\right).
\end{align*}

\paragraph{Distinguishing security level}
See proof of Theorem~\ref{theorem:wiretap-ccq}.

Similarly as in the proof of Theorem~\ref{theorem:wiretap-ccq}, the existence of a wiretap code as claimed in the theorem statement is assured for sufficiently large $\blocklength$.
\end{proof}

\begin{proof}[Proof of Theorem~\ref{theorem:wiretap-cq}]
Theorem~\ref{theorem:wiretap-cq} follows from Theorem~\ref{theorem:wiretap-cq-all-blocklengths} with the same argument we used to show that Theorem~\ref{theorem:wiretap-ccq} follows from Theorem~\ref{theorem:wiretap-ccq-all-blocklengths}.
\end{proof}

\section{Specialization to the Gaussian \texorpdfstring{\gls{cq}}{cq} Wiretap Channel}
\label{sec:gaussian-cq}
In this section, we demonstrate the finite block length nature of our results on the example of the Gaussian \gls{cqq} wiretap channel. We only give a sketch of the additional steps necessary to evaluate the bounds for this channel in the rest of this section, but the full annotated Python source code that reproduces the plots in Figures~\ref{fig:plot-security}, \ref{fig:plot-decoding} and \ref{fig:plot-blocklength} is available as an electronic supplement with this paper.

\subsection{The Gaussian \texorpdfstring{\gls{cq}}{cq} Channel}
\label{sec:gaussian-cq-basics}
In the following, we introduce the Gaussian \gls{cq} channel along with some properties that we need in this section. For more details, we refer the reader to~\cite[Section 12.1]{holevo2019quantum}.

Let $\hilbertSpace := \LTwoOnReals$ be the complex Hilbert space of equivalence classes of complex-valued, square integrable functions on $\reals$. For $\photonNumberNatural=0,1,\dots$, let $\ket{\photonNumberNatural}$ be the number state vector defined in \cite[eq. (12.17)]{holevo2019quantum}. Here, it will only be important that $\ket{0}, \ket{1}, \dots$ form an orthonormal basis of $\hilbertSpace$. For every $\coherentStateVector \in \complexNumbers$, we define a \emph{coherent state vector} (see~\cite[eq. before (12.18) and eq. before (12.21)]{holevo2019quantum})
\begin{equation}
\label{eq:coherent-state-vector}
\ket{\coherentStateVector}
:=
\exp\left(
  -\frac{\absolute{\coherentStateVector}^2}{2}
\right)
\sum_{\photonNumberNatural=0}^\infty
  \frac{\coherentStateVector^\photonNumberNatural}
       {\sqrt{\photonNumberNatural !}}
  \ket{\photonNumberNatural}.
\end{equation}
The Gaussian \gls{cq} channel is then defined (cf.~\cite[eq. (12.28)]{holevo2019quantum}) as
\begin{equation*}
\classicalQuantumChannel^{(\transmittivity,\quantumNoisePower)}:~
\complexNumbers \rightarrow \densityOperators{\hilbertSpace}
,~~
\inputAlphabetElement
\mapsto
\frac{1}{\pi\quantumNoisePower}
\int_\complexNumbers
  \rankOneOperator{\coherentStateVector}
  \exp\left(
    -\frac{\absolute{\coherentStateVector-\sqrt{\transmittivity}\inputAlphabetElement}^2}
          {\quantumNoisePower}
  \right)
d\coherentStateVector.
\end{equation*}
It is parametrized by the \emph{noise power} $\quantumNoisePower>0$, and the \emph{transmittivity} $\transmittivity \in [0,1]$. With $\transmittivity=1$, this is the same definition as in ~\cite[eq. (12.28)]{holevo2019quantum}. In the following, we summarize some properties of this channel that we need in this section. They are derived in~\cite{holevo2019quantum} for the case $\transmittivity=1$, but they carry over because $\classicalQuantumChannel^{(\transmittivity,\quantumNoisePower)}(\inputAlphabetElement) = \classicalQuantumChannel^{(1,\quantumNoisePower)}(\sqrt{\transmittivity}\inputAlphabetElement)$. For every $\inputAlphabetElement \in \complexNumbers$, there is~\cite[eq. (12.30)]{holevo2019quantum} a unitary \emph{displacement operator} $\gaussiancqUnitary{\inputAlphabetElement}$ such that
\begin{equation*}
\classicalQuantumChannel^{(1,\quantumNoisePower)}(\inputAlphabetElement)
=
\gaussiancqUnitary{\inputAlphabetElement}
\classicalQuantumChannel^{(1,\quantumNoisePower)}(0)
\adjoint{\gaussiancqUnitary{\inputAlphabetElement}},
\end{equation*}
and hence
\begin{equation}
\label{eq:gaussian-cq-unitary}
\classicalQuantumChannel^{(\transmittivity,\quantumNoisePower)}(\inputAlphabetElement)
=
\classicalQuantumChannel^{(1,\quantumNoisePower)}(\sqrt{\transmittivity}\inputAlphabetElement)
=
\gaussiancqUnitary{\sqrt{\transmittivity}\inputAlphabetElement}
\classicalQuantumChannel^{(1,\quantumNoisePower)}(0)
\adjoint{\gaussiancqUnitary{\sqrt{\transmittivity}\inputAlphabetElement}}
=
\gaussiancqUnitary{\sqrt{\transmittivity}\inputAlphabetElement}
\classicalQuantumChannel^{(\transmittivity,\quantumNoisePower)}(0)
\adjoint{\gaussiancqUnitary{\sqrt{\transmittivity}\inputAlphabetElement}}.
\end{equation}
$\classicalQuantumChannel^{(\transmittivity,\quantumNoisePower)}(0)=\classicalQuantumChannel^{(1,\quantumNoisePower)}(0)$ can be written as~\cite[eq. (12.24)]{holevo2019quantum}
\begin{equation}
\label{eq:gaussian-cq-zero-decomposition}
\classicalQuantumChannel^{(\transmittivity,\quantumNoisePower)}(0)
=
\frac{1}{\quantumNoisePower+1}
\sum_{\photonNumberNatural=0}^\infty
  \left(
    \frac{\quantumNoisePower}{\quantumNoisePower+1}
  \right)^\photonNumberNatural
  \rankOneOperator{\photonNumberNatural}.
\end{equation}
This state has the von Neumann entropy~\cite[eq. (12.26)]{holevo2019quantum}
\begin{equation}
\label{eq:gaussian-cq-entropy}
\vonNeumannEntropy{\classicalQuantumChannel^{(\transmittivity,\quantumNoisePower)}(0)}
=
\gordonFunction(\quantumNoisePower),
\end{equation}
where
\begin{equation}
\label{eq:gordon}
\gordonFunction:~ [0,\infty) \rightarrow [0,\infty),~~
\generalReal \mapsto
\begin{cases}
(\generalReal+1) \log(\generalReal+1)
-
\generalReal \log \generalReal,
&\generalReal > 0,
\\
0, &\generalReal=0
\end{cases}
\end{equation}
is called the \emph{Gordon function}.

A consequence of (\ref{eq:gaussian-cq-unitary}) and (\ref{eq:gaussian-cq-zero-decomposition}) is that for every $\generalFunction: \reals \rightarrow \reals$ and every $\inputAlphabetElement \in \complexNumbers$, we have
\begin{align}
\nonumber
\trace
\generalFunction\big(
  \classicalQuantumChannel^{(\transmittivity,\quantumNoisePower)}(\inputAlphabetElement)
\big)
\overset{(\ref{eq:gaussian-cq-unitary})}&{=}
\trace
\generalFunction\big(
  \gaussiancqUnitary{\sqrt{\transmittivity}\inputAlphabetElement}
  \classicalQuantumChannel^{(\transmittivity,\quantumNoisePower)}(0)
  \adjoint{\gaussiancqUnitary{\sqrt{\transmittivity}\inputAlphabetElement}}
\big)
\\
\nonumber
\overset{(\ref{eq:gaussian-cq-zero-decomposition})}&{=}
\trace
\generalFunction\left(
  \frac{1}{\quantumNoisePower+1}
  \sum_{\photonNumberNatural=0}^\infty
    \left(
      \frac{\quantumNoisePower}{\quantumNoisePower+1}
    \right)^\photonNumberNatural
    \gaussiancqUnitary{\sqrt{\transmittivity}\inputAlphabetElement}
    \rankOneOperator{\photonNumberNatural}
    \adjoint{\gaussiancqUnitary{\sqrt{\transmittivity}\inputAlphabetElement}}
\right)
\\
\nonumber
\overset{(a)}&{=}
\trace
\generalFunction\left(
  \frac{1}{\quantumNoisePower+1}
  \sum_{\photonNumberNatural=0}^\infty
    \left(
      \frac{\quantumNoisePower}{\quantumNoisePower+1}
    \right)^\photonNumberNatural
    \rankOneOperator{\photonNumberNatural}
\right)
\\
\label{eq:gaussian-cq-trace-function}
\overset{(\ref{eq:gaussian-cq-zero-decomposition})}&{=}
\trace \generalFunction(\classicalQuantumChannel^{(\transmittivity,\quantumNoisePower)}(0)),
\end{align}
where step (a) is due to the fact that both $\ket{0}, \ket{1}, \dots$ and $\gaussiancqUnitary{\sqrt{\transmittivity}\inputAlphabetElement}\ket{0}, \gaussiancqUnitary{\sqrt{\transmittivity}\inputAlphabetElement}\ket{1}, \dots$ are orthonormal bases of $\hilbertSpace$.

As the input distribution $\inputDistribution$, we choose a complex Gaussian distribution where real and imaginary part are independent with mean $0$ and variance $\inputEnergy/2$, and $\inputEnergy$ is the average energy per channel use. In order to relate this to the case $\transmittivity=1$, we also define an auxiliary input distribution $\inputDistribution_\transmittivity$ which is complex Gaussian with mean $0$ and variance $\transmittivity\inputEnergy/2$ per complex component. With this choice, we have~\cite[eq. 12.41]{holevo2019quantum}
\begin{equation}
\label{eq:gaussian-cq-average}
\classicalQuantumChannel^{(\transmittivity,\quantumNoisePower)}_\inputDistribution
=
\classicalQuantumChannel^{(1,\quantumNoisePower)}_{\inputDistribution_\transmittivity}
=
\frac{1}{\quantumNoisePower+\transmittivity\inputEnergy+1}
\sum_{\photonNumberNatural=1}^\infty
  \left(
    \frac{\quantumNoisePower+\transmittivity\inputEnergy}
         {\quantumNoisePower+\transmittivity\inputEnergy+1}
  \right)^\photonNumberNatural
  \rankOneOperator{\photonNumberNatural}.
\end{equation}

In this section, we consider a \gls{cqq} wiretap channel where $\bobcqChannel = \classicalQuantumChannel^{(\transmittivity_\bob, \quantumNoisePower_\bob)}$ and $\eveChannel = \classicalQuantumChannel^{(\transmittivity_\eve,\quantumNoisePower_\eve)}$.

\subsection{Physical Channel Model}
\label{sec:physical-model}

\begin{figure}[t]
    \centering
    \includegraphics[width=0.7\textwidth]{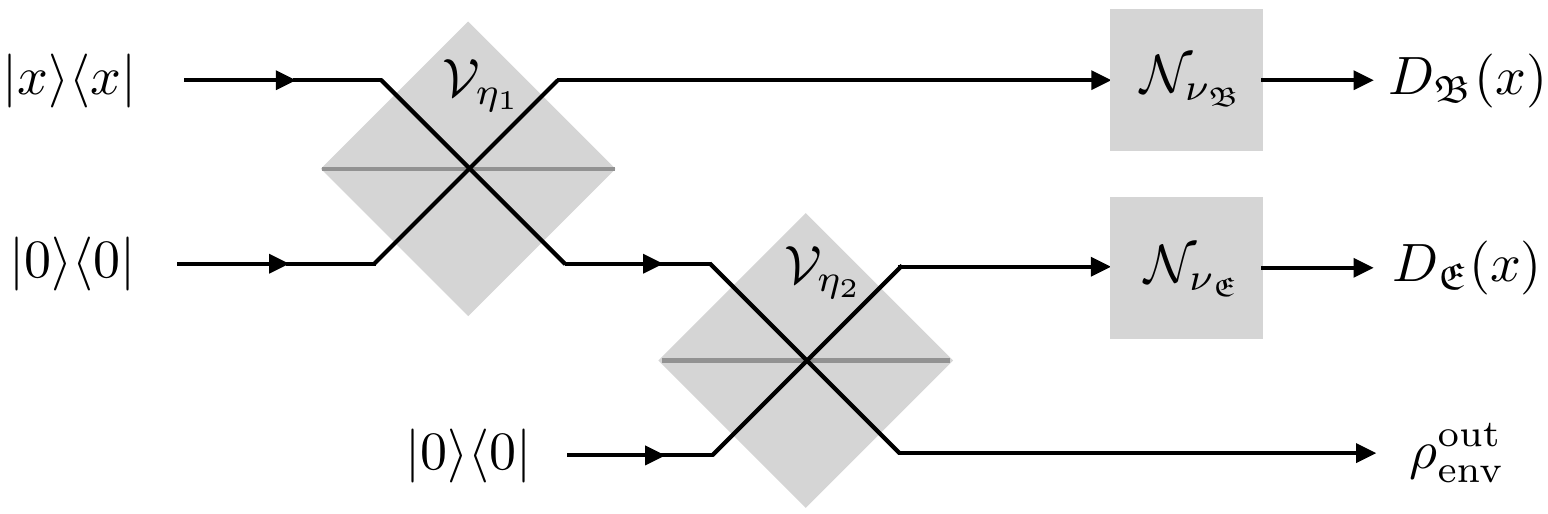}
    \caption{The cascaded channel model consisting of two beam splitters followed by additive thermal noise channels acting on the signals that are arriving at the legitimate receiver and eavesdropper respectively which is used for the numerical evaluation of the theoretical bounds. Both beam splitters are pure-loss channels which is reflected by the presence of the vacuum state $\rankOneOperator{0}$ at the second input of both devices. The third output is the state $\rho_{\textup{env}}^{\textup{out}}$ of photons detected neither by the legitimate receiver nor the eavesdropper.}
    \label{fig:beamsplitter}
\end{figure}

In the following, we describe a commonly employed channel model of physical systems~\cite{guha2008bosonicwiretap,wildeWiretap2018} that is a special case of the Gaussian \gls{cqq} wiretap channel $(\bobcqChannel,\eveChannel)$ introduced in Section~\ref{sec:gaussian-cq-basics}. It uses a quantum channel $\beamsplitterOperation_\transmittivity$ called a \emph{beam splitter} \cite[Chapter 4]{leonhardt1997measuring} with a parameter $\transmittivity\in [0,1]$ called \emph{transmittivity}.
Denoting by $a^{\ast},b^\ast$ and $a,b$ the creation, respectively annihilation operators on two copies $\hilbertSpace_\alice,\hilbertSpace'_\alice$ of the Hilbert space $\hilbertSpace=\LTwoOnReals$,
$\beamsplitterOperation_{\transmittivity}: \densityOperators{\hilbertSpace_{\alice}\otimes \hilbertSpace'_{\alice}} \to \densityOperators{\hilbertSpace_{\alice}\otimes \hilbertSpace'_{\alice}}$
is defined as
\begin{equation*}
\beamsplitterOperation_{\transmittivity}
  (\generalQuantumState)
:=
\beamSplitter_{\beamSplitterAngle}
\generalQuantumState
\adjoint{\beamSplitter_{\beamSplitterAngle}},
\end{equation*}
where $\beamSplitterAngle\in[0,\pi/2]$ is such that $\transmittivity=\cos^2(\beamSplitterAngle)$ and the unitary $\beamSplitter_{\beamSplitterAngle}: \hilbertSpace_{\alice}\otimes \hilbertSpace'_{\alice} \to\hilbertSpace_{\alice}\otimes \hilbertSpace'_{\alice}$ is given by \cite[eq. (4.16)]{leonhardt1997measuring}
\[
    \beamSplitter_\beamSplitterAngle = \exp\{\beamSplitterAngle(a^\ast \otimes b-a\otimes b^\ast)\}.
\]
For every $x,x'\in\complexNumbers$ and corresponding coherent state vectors $ \ket{x}\in\hilbertSpace_{\alice} $ and $ \ket{x'}\in \hilbertSpace'_{\alice}$, we have \cite[eq. (4.28)]{leonhardt1997measuring}
\begin{equation}
\label{eq:beam-splitter-key-property}
\beamsplitterOperation_\transmittivity\left(
  \rankOneOperator{x} \otimes \rankOneOperator{x'}
\right)
=
\rankOneOperator{\sqrt{\transmittivity}x+\sqrt{\transmittivity'}x'}
\otimes
\rankOneOperator{\sqrt{\transmittivity'}x-\sqrt{\transmittivity}x'},
\end{equation}
where $\transmittivity':=1-\transmittivity$.
Moreover, we use an \emph{additive thermal noise channel} $\additiveNoiseChannel_\quantumNoisePower(\generalQuantumState)$ which is described~\cite[eq. (3.48)]{wildeWiretap2018} by
\[
\additiveNoiseChannel_\quantumNoisePower(\generalQuantumState)
=
\frac{1}{\pi\quantumNoisePower}
\int_\complexNumbers
  \gaussiancqUnitary{\coherentStateVector}
  \generalQuantumState
  \adjoint{\gaussiancqUnitary{\coherentStateVector}}
  \exp\left(
    -
    \frac{\absolute{\coherentStateVector}^2}
         {\quantumNoisePower}
  \right)
d\coherentStateVector,
\]
where $\gaussiancqUnitary{\coherentStateVector}$ are displacement operators and $\coherentStateVector\in\complexNumbers$. For every coherent state $\rankOneOperator{\coherentStateVector}$ and $\quantumNoisePower \in (0,\infty)$, it holds that
\begin{equation}
\label{eq:additive-noise-channel-key-property}
\additiveNoiseChannel_{\quantumNoisePower}\left(
  \rankOneOperator{\coherentStateVector}
\right)
=
\classicalQuantumChannel^{(1,\quantumNoisePower)}
  (\coherentStateVector).
\end{equation}

A bosonic wiretap model \cite{guha2008bosonicwiretap,wildeWiretap2018} can be based on the assumption that every photon which is lost from the original signal can be detected by the eavesdropper. This is an extremely pessimistic model, not taking into account the photons which are detected neither by the legitimate receiver nor the eavesdropper due to channel loss. In contrast, we evaluate a channel model, depicted in Fig. \ref{fig:beamsplitter}, in which both the legitimate receiver and the eavesdropper are subject to channel loss. 
To this end, we consider Hilbert spaces $\hilbertSpace_{\alice,1}$, $\hilbertSpace'_{\alice,1}$, $\hilbertSpace_{\alice,2}$, $\hilbertSpace'_{\alice, 2}$, $\bobHilbertSpace$, $\eveHilbertSpace$, and $\hilbertSpace_{\env}$ all of which are copies of $\hilbertSpace=\LTwoOnReals$. Moreover, we choose transmittivity parameters $\transmittivity_1, \transmittivity_2 \in [0,1]$.

In this model, the transmitter's channel alphabet is given as $\inputAlphabet := \complexNumbers$. The transmission symbol $\inputAlphabetElement \in \complexNumbers$ is transformed into a coherent state $\rankOneOperator{\inputAlphabetElement}$ with $\ket{\inputAlphabetElement}\in \hilbertSpace_{\alice ,1}$ defined in (\ref{eq:coherent-state-vector}). This state $\rankOneOperator{\inputAlphabetElement}$ is passed to the first input of the beam splitter $\beamsplitterOperation_{\transmittivity_1}: \densityOperators{\hilbertSpace_{\alice ,1}\otimes \hilbertSpace'_{\alice ,1} } \to \densityOperators{\hilbertSpace_{\alice ,1}\otimes \hilbertSpace_{\alice ,2}}$, where the second input of the beam splitter receives the vacuum state $\rankOneOperator{0}\in \densityOperators{\hilbertSpace'_{\alice ,1}}$. The first output is passed through $\additiveNoiseChannel_{\receiverNoise_\bob}:\densityOperators{\hilbertSpace_{\alice ,1}}\to \densityOperators{\bobHilbertSpace}$ to model the thermal noise at the legitimate receiver. The second output is passed through another beam splitter $\beamsplitterOperation_{\transmittivity_2}: \densityOperators{\hilbertSpace_{\alice ,2}\otimes \hilbertSpace'_{\alice ,2} } \to \densityOperators{\hilbertSpace_{\alice ,2}\otimes \hilbertSpace_{\env}}$, again with the vacuum state $\rankOneOperator{0}\in \densityOperators{\hilbertSpace'_{\alice ,2}}$ at the second input. The eavesdropper receives a version of the first output of the beam splitter  $\beamsplitterOperation_{\transmittivity_2}$ which is passed through $\additiveNoiseChannel_{\receiverNoise_\eve}: \densityOperators{ \hilbertSpace_{\alice ,2} }\to \densityOperators{\eveHilbertSpace}$ to model the eavesdropper's thermal receiver noise. The remaining beam splitter output is considered an environment state which is received neither at the legitimate receiver nor at the eavesdropper. The joint quantum state at the legitimate receiver's channel output, the eavesdropper's channel output, and the environment channel output can thus be written as
\begin{align*}
&\hphantom{{}={}}
\left( \additiveNoiseChannel_{\receiverNoise_\bob}
\otimes
\additiveNoiseChannel_{\receiverNoise_\eve}
\otimes
\identityMap_{\boundedOperators{\hilbertSpace'_{\alice ,2}}}\right)
\circ
 \left(\identityMap_{\boundedOperators{\hilbertSpace_{\alice ,1}}}
\otimes
\beamsplitterOperation_{\transmittivity_2}\right) \Big(
  \beamsplitterOperation_{\transmittivity_1}\big(
    \rankOneOperator{\inputAlphabetElement} \otimes \rankOneOperator{0}
  \big)
  \otimes
  \rankOneOperator{0}
\Big)
\\
\overset{(\ref{eq:beam-splitter-key-property})}&{=}
\left(\additiveNoiseChannel_{\receiverNoise_\bob}
\otimes
\additiveNoiseChannel_{\receiverNoise_\eve}
\otimes
\identityMap_{\boundedOperators{\hilbertSpace'_{\alice ,2}}}\right)
\circ
\left(\identityMap_{\boundedOperators{\hilbertSpace_{\alice ,1}}}
\otimes
\beamsplitterOperation_{\transmittivity_2}\right)
\Big(
  \rankOneOperator{\sqrt{\transmittivity_1}\inputAlphabetElement} 
  \otimes
  \rankOneOperator{\sqrt{\transmittivity'_1}\inputAlphabetElement} 
  \otimes
  \rankOneOperator{0}
\Big)
\\
\overset{(\ref{eq:beam-splitter-key-property})}&{=}
\left(
  \additiveNoiseChannel_{\receiverNoise_\bob}
  \otimes
  \additiveNoiseChannel_{\receiverNoise_\eve}
  \otimes
  \identityMap_{\boundedOperators{\hilbertSpace'_{\alice ,2}}}
\right)
\Big(
  \rankOneOperator{\sqrt{\transmittivity_1}\inputAlphabetElement} 
  \otimes
  \rankOneOperator{\sqrt{\transmittivity_2\transmittivity'_1}\inputAlphabetElement} 
  \otimes
  \rankOneOperator{\sqrt{\transmittivity'_2\transmittivity'_1}\inputAlphabetElement}
\Big)
\\
\overset{(\ref{eq:additive-noise-channel-key-property})}&{=}
\classicalQuantumChannel^{(1,\receiverNoise_\bob)}
  \left(\sqrt{\transmittivity_1}\inputAlphabetElement\right)
\otimes
\classicalQuantumChannel^{(1,\receiverNoise_\eve)}
  \left(\sqrt{\transmittivity_2\transmittivity'_1}\inputAlphabetElement\right)
\otimes
  \rankOneOperator{\sqrt{\transmittivity'_2\transmittivity'_1}\inputAlphabetElement}
\\
&=
\classicalQuantumChannel^{(\transmittivity_1,\receiverNoise_\bob)}
  \left(\inputAlphabetElement\right)
\otimes
\classicalQuantumChannel^{(\transmittivity_2\transmittivity'_1,\receiverNoise_\eve)}
  \left(\inputAlphabetElement\right)
\otimes
  \rankOneOperator{\sqrt{\transmittivity'_2\transmittivity'_1}\inputAlphabetElement},
\end{align*}
where we have used $\transmittivity'_1 := 1-\transmittivity_1$, $\transmittivity'_2 := 1-\transmittivity_2$, and the identity maps $\identityMap_{\boundedOperators{\hilbertSpace_{\alice ,1}}}$ and $ \identityMap_{\boundedOperators{\hilbertSpace'_{\alice ,2}}}$ on the spaces $\boundedOperators{\hilbertSpace_{\alice ,1}}$ and $\boundedOperators{\hilbertSpace'_{\alice ,2}}$.

Therefore, this system is a special case of the \gls{cqq} wiretap channel $\bobcqChannel = \classicalQuantumChannel^{(\transmittivity_\bob, \quantumNoisePower_\bob)}$ and $\eveChannel = \classicalQuantumChannel^{(\transmittivity_\eve,\quantumNoisePower_\eve)}$ described in Section~\ref{sec:gaussian-cq-basics} where $\transmittivity_\bob := \transmittivity_1$ and $\transmittivity_\eve := \transmittivity_2(1 - \transmittivity_1)$.

\subsection{Numerical Evaluation of the Error Bounds in Theorems~\ref{theorem:q-resolvability} and~\ref{theorem:decoding}}
\label{sec:gaussian-cq-expectations}

In this subsection, we fix a channel $\classicalQuantumChannel^{(\transmittivity,\quantumNoisePower)}$ and show how to compute the quantities $\jointEntropy,\outputEntropy,\conditionalRenyiEntropy{\renyiorder},\outputRenyiEntropy{\renyiorder}$ and evaluate \eqref{eq:atypicalTermsFunc-start} to \eqref{eq:costBoundFunc} for such a channel under a Gaussian input distribution with energy $\inputEnergy$ as defined in Section~\ref{sec:gaussian-cq-basics}.

The properties collected in Section~\ref{sec:gaussian-cq-basics} allow us to calculate the quantities defined in (\ref{eq:joint-entropy}) and (\ref{eq:output-entropy}) as
\begin{align}
\label{eq:gaussian-joint-entropy}
\jointEntropy
\overset{(\ref{eq:joint-entropy})}&{=}
-\Expectation_\inputRV
  \trace\left(
    \classicalQuantumChannel^{(\transmittivity,\quantumNoisePower)}(\inputRV)
    \log \classicalQuantumChannel^{(\transmittivity,\quantumNoisePower)}(\inputRV)
  \right)
\overset{(\ref{eq:gaussian-cq-trace-function})}{=}
-
\trace(
  \classicalQuantumChannel^{(\transmittivity,\quantumNoisePower)}(0)
  \log \classicalQuantumChannel^{(\transmittivity,\quantumNoisePower)}(0)
)
\overset{(\ref{eq:gaussian-cq-entropy})}{=}
\gordonFunction(\quantumNoisePower)
\\
\label{eq:gaussian-output-entropy}
\outputEntropy
\overset{(\ref{eq:output-entropy})}&{=}
\vonNeumannEntropy{\classicalQuantumChannel^{(\transmittivity,\quantumNoisePower)}_\inputDistribution}
\overset{(\ref{eq:gaussian-cq-entropy}),(\ref{eq:gaussian-cq-average})}{=}
\gordonFunction(\quantumNoisePower+\transmittivity\inputEnergy).
\end{align}

Next, we use the convergence behavior of the geometric series to argue that for $\renyiorder \in (0,\infty)$,
\begin{equation}
\label{eq:gaussian-cq-power}
\trace\left(\classicalQuantumChannel^{(\transmittivity,\quantumNoisePower)}(0)^\renyiorder\right)
=
\frac{1}{(\quantumNoisePower+1)^\renyiorder}
\sum_{\photonNumberNatural=0}^\infty
  \left(
    \frac{\quantumNoisePower}{\quantumNoisePower+1}
  \right)^{\renyiorder\photonNumberNatural}
=
\frac{1}{(\quantumNoisePower+1)^\renyiorder (1-\quantumNoisePower^\renyiorder/(\quantumNoisePower+1)^\renyiorder)}
=
\frac{1}{(\quantumNoisePower+1)^\renyiorder -\quantumNoisePower^\renyiorder}
\end{equation}

With this, we can calculate the quantities defined in (\ref{eq:conditional-renyi-entropy}) and (\ref{eq:renyi-entropy}) as
\begin{align*}
\conditionalRenyiEntropy{\renyiorder}
\overset{(\ref{eq:conditional-renyi-entropy})}&{=}
\frac{1}{1-\renyiorder}
\log\Expectation_\inputRV\left(
  \trace\left(\classicalQuantumChannel^{(\transmittivity,\quantumNoisePower)}(\inputRV)^\renyiorder\right)
\right)
\overset{(\ref{eq:gaussian-cq-trace-function})}{=}
\frac{1}{1-\renyiorder}
\log \trace\left(\classicalQuantumChannel^{(\transmittivity,\quantumNoisePower)}(0)^\renyiorder\right)
\overset{(\ref{eq:gaussian-cq-power})}{=}
\frac{1}{\renyiorder-1}
\log(
  (\quantumNoisePower+1)^\renyiorder -\quantumNoisePower^\renyiorder
)
=
\gordonFunction_\renyiorder(\quantumNoisePower)
\\
\outputRenyiEntropy{\renyiorder}
\overset{(\ref{eq:renyi-entropy})}&{=}
\frac{1}{1-\renyiorder}
\log
  \trace\left(\left(\classicalQuantumChannel^{(\transmittivity,\quantumNoisePower)}_\inputDistribution\right)^\renyiorder\right)
\overset{(\ref{eq:gaussian-cq-average}),(\ref{eq:gaussian-cq-power})}{=}
\gordonFunction_\renyiorder(\quantumNoisePower+\transmittivity\inputEnergy),
\end{align*}
where we use a modified version of the Gordon function
\begin{equation}
\label{eq:gordon-renyi}
\gordonFunction_\renyiorder:~ [0,\infty) \rightarrow [0,\infty)
,~~
\generalReal
\mapsto
\frac{\log((\generalReal+1)^\renyiorder-\generalReal^\renyiorder)}
     {\renyiorder-1}.
\end{equation}
From the considerations in Section~\ref{sec:typicality-preliminaries}, we know that with the convention $\gordonFunction_1 := \gordonFunction$, the map $(0,\infty) \rightarrow (0,\infty), \renyiorder \mapsto \gordonFunction_\renyiorder(\generalReal)$ is continuous.
We can also calculate the derivative
\begin{equation}
\label{eq:gordon-renyi-derivative}
\gordonFunction_\renyiorder'(\generalReal)
:=
\frac{\partial}{\partial\renyiorder} \gordonFunction_\renyiorder(\generalReal)
=
\frac{
  \frac{
    (\generalReal+1)^\renyiorder\log(\generalReal+1)
    -
    \generalReal^\renyiorder \log \generalReal
  }{
    (\generalReal+1)^\renyiorder-\generalReal^\renyiorder}
  -
  \gordonFunction_\renyiorder(\generalReal)
}{
  \renyiorder-1
}.
\end{equation}
With this, it is possible to numerically evaluate the functions $\atypicalTermsFunc_1,\atypicalTermsFunc_2,\atypicalTermsFunc_3,\atypicalTermsFunc_4$ defined in \eqref{eq:atypicalTermsFunc-start} to \eqref{eq:atypicalTermsFunc-end} for any given values of $\typicalityParameter,\blocklength$. Consequently, the evaluation of $\codingBoundFunc$ defined in \eqref{eq:codingBoundFunc} and $\resolvabilityBoundFunc$ defined in \eqref{eq:resolvabilityBoundFunc} requires only a one-dimensional optimization over all admissible values of the parameter $\typicalityParameter$. The evaluation of the bounds of Theorem~\ref{theorem:wiretap-cq-all-blocklengths} also requires calculation of $\costBoundFunc^{(\costFunction,\costConstraint)}$ defined in \eqref{eq:costBoundFunc} and another doubly exponential term that appears in \eqref{eq:wiretap-cq-prob-one}, however, we observe that in the regime in which we evaluate the bounds for the plots in the present paper, these terms are so small that they are negligible. This means that we can avoid a costly simultaneous optimization over the parameters $\proofconst_2$, $\proofconst_3$, $\proofconst_4$, and $\proofconst_5$ that appear in the statement of Theorem~\ref{theorem:wiretap-cq-all-blocklengths}. Instead, we only consider the terms that contain $\resolvabilityBoundFunc$ or $\codingBoundFunc$ and only check that the remaining terms are indeed neglible. For details on how small we ensure these terms are, we refer the reader to Section~\ref{sec:plots}. Full details and comments on how this is ensured can be found in the source code in the electronic supplement of this paper.

\subsection{Plots}
\label{sec:plots}
\begin{figure}
\centering
\begin{tikzpicture}
\begin{axis}[
  xmin=4000,
  xmax=100000,
  ymin=1e-5,
  ymax=1,
  ymode=log,
  xlabel={$\blocklength$},
  ylabel={$\securityNumber$},
  legend style={at={(1.2,.5)},anchor=west},
]
\addplot[ color=red, mark=*,]
coordinates {
  ( 4500 , 1.0913004454738087 )
  ( 5500 , 0.5522531180298952 )
  ( 6500 , 0.26291927625797895 )
  ( 7500 , 0.11946115381984325 )
  ( 8500 , 0.05229917215779195 )
  ( 9500 , 0.022210093044067004 )
  ( 10500 , 0.009194932789333555 )
  ( 11500 , 0.0037250554561747834 )
  ( 12500 , 0.001481119989012063 )
  ( 13500 , 0.0005793642993308349 )
  ( 14500 , 0.00022338676860809262 )
  ( 15500 , 8.503600481265565e-05 )
  ( 16500 , 3.200154637666275e-05 )
  ( 17500 , 1.1919443386568799e-05 )
  ( 18500 , 4.398258107760537e-06 )
  ( 19500 , 1.6091996397435808e-06 )
};
\addplot[ color=green, mark=*,]
coordinates {
  ( 8000 , 0.9532718622704868 )
  ( 10000 , 0.42089590982923125 )
  ( 12000 , 0.17230421776511737 )
  ( 14000 , 0.06665195177074387 )
  ( 16000 , 0.02466208848492168 )
  ( 18000 , 0.008802458366903337 )
  ( 20000 , 0.00304918803334357 )
  ( 22000 , 0.0010298231435886727 )
  ( 24000 , 0.0003403160347849025 )
  ( 26000 , 0.00011034882388674897 )
  ( 28000 , 3.518916587420508e-05 )
  ( 30000 , 1.1056580287057249e-05 )
  ( 32000 , 3.4283299262543426e-06 )
};
\addplot[ color=blue, mark=*,]
coordinates {
  ( 15000 , 1.2799941894436608 )
  ( 18000 , 0.7408714455552434 )
  ( 21000 , 0.40851348842871094 )
  ( 24000 , 0.21682088228040328 )
  ( 27000 , 0.11156480211027044 )
  ( 30000 , 0.055940681433562116 )
  ( 33000 , 0.02744025321747274 )
  ( 36000 , 0.013207258939410795 )
  ( 39000 , 0.006252249399023612 )
  ( 42000 , 0.0029167257140376096 )
  ( 45000 , 0.001343008932772206 )
  ( 48000 , 0.0006111680102118049 )
  ( 51000 , 0.00027518572174133326 )
  ( 54000 , 0.00012271258183171784 )
  ( 57000 , 5.4238452180748355e-05 )
  ( 60000 , 2.377889051642277e-05 )
  ( 63000 , 1.0346952635724127e-05 )
};
\addplot[ color=magenta, mark=*,]
coordinates {
  ( 25000 , 0.9161605204280387 )
  ( 30000 , 0.4776858351562014 )
  ( 35000 , 0.23668539241920877 )
  ( 40000 , 0.11265763023404682 )
  ( 45000 , 0.05189849911263443 )
  ( 50000 , 0.023265150344905085 )
  ( 55000 , 0.010190210508185754 )
  ( 60000 , 0.004374794012186998 )
  ( 65000 , 0.001845513102099672 )
  ( 70000 , 0.0007665531632807499 )
  ( 75000 , 0.0003140203613822133 )
  ( 80000 , 0.00012704823228562923 )
  ( 85000 , 5.0826007467173273e-05 )
  ( 90000 , 2.0125492806372263e-05 )
  ( 95000 , 7.894537585824014e-06 )
  ( 100000 , 3.0701089562485317e-06 )
  ( 105000 , 1.184444023630192e-06 )
};
\addplot[ color=brown, mark=*,]
coordinates {
  ( 35000 , 1.4061090668794576 )
  ( 40000 , 0.9814845182079178 )
  ( 45000 , 0.6676254186145156 )
  ( 50000 , 0.44445435095163005 )
  ( 55000 , 0.2905113281599896 )
  ( 60000 , 0.18690515733847862 )
  ( 65000 , 0.11859280494679904 )
  ( 70000 , 0.07433047001210587 )
  ( 75000 , 0.046080418970871236 )
  ( 80000 , 0.02828668453876748 )
  ( 85000 , 0.017209369489526085 )
  ( 90000 , 0.010385026782061919 )
  ( 95000 , 0.0062202121722694514 )
  ( 100000 , 0.0037001061254294805 )
  ( 105000 , 0.0021870520504821706 )
};
\legend{
  $\codebookRate=0.3$,
  $\codebookRate=0.35$,
  $\codebookRate=0.4$,
  $\codebookRate=0.42$,
  $\codebookRate=0.44$,
}
\end{axis}
\end{tikzpicture}
\caption{Decay of semantic security level $\securityNumber$ depending on the block length of the codebook for various rates $\codebookRate$. The decoding error is fixed at $\decodingError=0.01$.}
\label{fig:plot-security}
\end{figure}

\begin{figure}
\centering
\begin{tikzpicture}
\begin{axis}[
  xmin=4000,
  xmax=100000,
  ymin=1e-5,
  ymax=1,
  ymode=log,
  xlabel={$\blocklength$},
  ylabel={$\decodingError$},
  legend style={at={(1.2,.5)},anchor=west},
]
\addplot[ color=red, mark=*,]
coordinates {
  ( 5500 , 6.282949782960966 )
  ( 6500 , 3.0352700646734614 )
  ( 7500 , 1.0190768921016489 )
  ( 8500 , 0.2565077754266116 )
  ( 9500 , 0.05107977769315823 )
  ( 10500 , 0.008373504234366141 )
  ( 11500 , 0.0011649699628291084 )
  ( 12500 , 0.0001409027313584138 )
  ( 13500 , 1.5105323340878225e-05 )
  ( 14500 , 1.458138380450706e-06 )
};
\addplot[ color=green, mark=*,]
coordinates {
  ( 8000 , 8.452128275516927 )
  ( 10000 , 5.00637741169582 )
  ( 12000 , 1.7438065975266337 )
  ( 14000 , 0.3922706010753058 )
  ( 16000 , 0.06276014218407269 )
  ( 18000 , 0.007631108428808417 )
  ( 20000 , 0.0007399611228149032 )
  ( 22000 , 5.934121282862679e-05 )
  ( 24000 , 4.04890170458351e-06 )
  ( 26000 , 2.4040046873726575e-07 )
};
\addplot[ color=blue, mark=*,]
coordinates {
  ( 18000 , 7.401666089897124 )
  ( 21000 , 4.854708672002986 )
  ( 24000 , 2.3722994223856304 )
  ( 27000 , 0.9137713577754474 )
  ( 30000 , 0.28885989658736044 )
  ( 33000 , 0.07723592808192889 )
  ( 36000 , 0.017879591067691533 )
  ( 39000 , 0.003650288398542579 )
  ( 42000 , 0.0006671561809616723 )
  ( 45000 , 0.00011051286147575391 )
  ( 48000 , 1.6763366482344626e-05 )
  ( 51000 , 2.3488924941140053e-06 )
};
\addplot[ color=magenta, mark=*,]
coordinates {
  ( 40000 , 0.9272785858255793 )
  ( 45000 , 0.2528037332788309 )
  ( 50000 , 0.05600794609427583 )
  ( 55000 , 0.010405542789226655 )
  ( 60000 , 0.0016611356123499325 )
  ( 65000 , 0.00023231046972403075 )
  ( 70000 , 2.8910923121348946e-05 )
  ( 75000 , 3.2433577532412774e-06 )
  ( 80000 , 3.315578052397605e-07 )
  ( 85000 , 3.116898087723839e-08 )
};
\addplot[ color=brown, mark=*,]
coordinates {
  ( 60000 , 1.943921425924837 )
  ( 65000 , 1.0034622218274678 )
  ( 70000 , 0.4717873149368395 )
  ( 75000 , 0.20407976414572127 )
  ( 80000 , 0.08190574995421082 )
  ( 85000 , 0.030716426063072108 )
  ( 90000 , 0.010829231185386269 )
  ( 95000 , 0.00360794578702557 )
  ( 100000 , 0.0011411060301138184 )
  ( 105000 , 0.00034396893860631517 )
};
\legend{
  $\codebookRate=0.3$,
  $\codebookRate=0.35$,
  $\codebookRate=0.4$,
  $\codebookRate=0.42$,
  $\codebookRate=0.44$,
}
\end{axis}
\end{tikzpicture}
\caption{Decay of decoding error probability $\decodingError$ depending on the block length of the codebook for various rates. The semantic security level is fixed at $\securityNumber=0.01$.}
\label{fig:plot-decoding}
\end{figure}

\begin{figure}
\centering
\begin{tikzpicture}
\begin{axis}[
  xmin=0.15,
  xmax=0.53,
  xticklabel style={
    /pgf/number format/fixed,
  },
  ymin=0,
  ymax=1e5,
  xlabel={$\codebookRate$},
  ylabel={$\blocklength$},
  legend style={at={(1.2,.5)},anchor=west},
]
\addplot[ color=lime, mark=*,]
coordinates {
  ( 0.15 , 3628.0 )
  ( 0.16478260869565217 , 3936.0 )
  ( 0.17956521739130435 , 4287.0 )
  ( 0.1943478260869565 , 4687.0 )
  ( 0.20913043478260868 , 5147.0 )
  ( 0.22391304347826085 , 5679.0 )
  ( 0.23869565217391303 , 6300.0 )
  ( 0.2534782608695652 , 7030.0 )
  ( 0.2682608695652174 , 7897.0 )
  ( 0.2830434782608695 , 8936.0 )
  ( 0.2978260869565217 , 10198.0 )
  ( 0.31260869565217386 , 11752.0 )
  ( 0.32739130434782604 , 13693.0 )
  ( 0.3421739130434782 , 16165.0 )
  ( 0.3569565217391304 , 19379.0 )
  ( 0.37173913043478257 , 23667.0 )
  ( 0.38652173913043475 , 29567.0 )
  ( 0.4013043478260869 , 38006.0 )
  ( 0.4160869565217391 , 50683.0 )
  ( 0.4308695652173913 , 71004.0 )
  ( 0.44565217391304346 , 106639.0 )
  ( 0.46043478260869564 , 178009.0 )
  ( 0.4752173913043478 , 355783.0 )
  ( 0.49 , 1038562.0 )
};
\addplot[ color=purple, mark=*,]
coordinates {
  ( 0.15 , 4415.0 )
  ( 0.16478260869565217 , 4790.0 )
  ( 0.17956521739130435 , 5216.0 )
  ( 0.1943478260869565 , 5702.0 )
  ( 0.20913043478260868 , 6262.0 )
  ( 0.22391304347826085 , 6909.0 )
  ( 0.23869565217391303 , 7664.0 )
  ( 0.2534782608695652 , 8551.0 )
  ( 0.2682608695652174 , 9605.0 )
  ( 0.2830434782608695 , 10869.0 )
  ( 0.2978260869565217 , 12404.0 )
  ( 0.31260869565217386 , 14293.0 )
  ( 0.32739130434782604 , 16655.0 )
  ( 0.3421739130434782 , 19662.0 )
  ( 0.3569565217391304 , 23572.0 )
  ( 0.37173913043478257 , 28788.0 )
  ( 0.38652173913043475 , 35968.0 )
  ( 0.4013043478260869 , 46237.0 )
  ( 0.4160869565217391 , 61664.0 )
  ( 0.4308695652173913 , 86395.0 )
  ( 0.44565217391304346 , 129768.0 )
  ( 0.46043478260869564 , 216644.0 )
  ( 0.4752173913043478 , 433063.0 )
  ( 0.49 , 1264359.0 )
};
\addplot[ color=orange, mark=*,]
coordinates {
  ( 0.15 , 5344.0 )
  ( 0.16478260869565217 , 5798.0 )
  ( 0.17956521739130435 , 6314.0 )
  ( 0.1943478260869565 , 6903.0 )
  ( 0.20913043478260868 , 7581.0 )
  ( 0.22391304347826085 , 8365.0 )
  ( 0.23869565217391303 , 9280.0 )
  ( 0.2534782608695652 , 10356.0 )
  ( 0.2682608695652174 , 11633.0 )
  ( 0.2830434782608695 , 13166.0 )
  ( 0.2978260869565217 , 15027.0 )
  ( 0.31260869565217386 , 17317.0 )
  ( 0.32739130434782604 , 20180.0 )
  ( 0.3421739130434782 , 23826.0 )
  ( 0.3569565217391304 , 28568.0 )
  ( 0.37173913043478257 , 34894.0 )
  ( 0.38652173913043475 , 43602.0 )
  ( 0.4013043478260869 , 56058.0 )
  ( 0.4160869565217391 , 74772.0 )
  ( 0.4308695652173913 , 104776.0 )
  ( 0.44565217391304346 , 157402.0 )
  ( 0.46043478260869564 , 262820.0 )
  ( 0.4752173913043478 , 525458.0 )
  ( 0.49 , 1534401.0 )
};
\addplot[ color=violet, mark=*,]
coordinates {
  ( 0.15 , 6281.0 )
  ( 0.16478260869565217 , 6815.0 )
  ( 0.17956521739130435 , 7422.0 )
  ( 0.1943478260869565 , 8115.0 )
  ( 0.20913043478260868 , 8912.0 )
  ( 0.22391304347826085 , 9834.0 )
  ( 0.23869565217391303 , 10910.0 )
  ( 0.2534782608695652 , 12176.0 )
  ( 0.2682608695652174 , 13678.0 )
  ( 0.2830434782608695 , 15481.0 )
  ( 0.2978260869565217 , 17670.0 )
  ( 0.31260869565217386 , 20364.0 )
  ( 0.32739130434782604 , 23733.0 )
  ( 0.3421739130434782 , 28023.0 )
  ( 0.3569565217391304 , 33602.0 )
  ( 0.37173913043478257 , 41047.0 )
  ( 0.38652173913043475 , 51294.0 )
  ( 0.4013043478260869 , 65953.0 )
  ( 0.4160869565217391 , 87979.0 )
  ( 0.4308695652173913 , 123295.0 )
  ( 0.44565217391304346 , 185241.0 )
  ( 0.46043478260869564 , 309338.0 )
  ( 0.4752173913043478 , 618537.0 )
  ( 0.49 , 1806441.0 )
};
\draw[black, densely dashed] ({axis cs:0.5108022919521318,0}|-{rel axis cs:0,1}) -- ({axis cs:0.5108022919521318,0}|-{rel axis cs:0,0});
\addlegendimage{line legend,black,densely dashed}
\legend{
  {$\decodingError=0.01, \securityNumber=0.01$},
  {$\decodingError=0.0001, \securityNumber=0.01$},
  {$\decodingError=0.01, \securityNumber=0.0001$},
  {$\decodingError=0.0001, \securityNumber=0.0001$},
  {upper achievable rates bound}
}
\end{axis}
\end{tikzpicture}
\caption{Block length necessary to achieve a given decoding error $\decodingError$, semantic security level $\securityNumber$ and rate $\codebookRate$.}
\label{fig:plot-blocklength}
\end{figure}

We plot various quantities of interest for a model of an optical communication system with a noiseless source as described in Section~\ref{sec:physical-model}. In order to find reasonable system parameters, we make a series of rough estimates of what values these parameters may have in a realistic communication system. When a transmitter communicates a line of sight optical signal to a receiver, the transmittivity $\transmittivity \in [0,1]$ can be roughly estimated as 
\[
    \transmittivity = \multConstant \cdot\left(\frac{\diameter}{\distance}\right)^2
\]
where $\multConstant$ is a system-dependent constant, $\diameter$ the receiver diameter and $\distance$ the distance between transmitter and receiver~\cite{prokesLoss2009}. In slightly non-optimal situations where fog disturbs the link, this transmittivity in a free-space optical system can easily be as low as $\transmittivity=10^{-5}$ already at distances $\distance=\qty{1}{\km}$ \cite{prokesLoss2009}. For our plots, we choose $\transmittivity_1 := 10^{-5}$ and $\transmittivity_2 := 6 \cdot 10^{-6}$ to obtain a scenario in which the eavesdropper is at a slightly greater distance from the sender than the legitimate receiver (or equivalently, guaranteed to be unable to place its receive antenna close enough to the center of the optical beam, cf.~\cite{vazquez-castro}), and otherwise uses equivalent equipment. For the average transmit energy, we note that the number of photons per channel use can be estimated by considering a laser with $\qty{1}{\W}$ output power at $\qty{1550}{\nm}$. This system will emit in the order of $10^{19}$ photons per second, and typically use a modulation format such that $10^{10}$ pulses are emitted per second. Thus, $\inputEnergy := 10^9$ photons per channel use are a fair estimate. We assume that the number of noise photons per channel use is around $10^{-5}$, which is at the lower end of the plausible range of values. At a wave length of $\qty{1550}{\nm}$ and a baud rate of $10^{10}$ pulses per second, this is equivalent to a background noise power of about $\qty{1.5e-14}{\W}$. This means that in our example, we choose $\receiverNoise_\alice := \receiverNoise_\bob := 10^{-5}$.

This yields the system parameters $\transmittivity_\bob = \transmittivity_1 = 10^{-5}$, $\transmittivity_\eve = \transmittivity_2(1-\transmittivity_1) \approx 6 \cdot 10^{-6}$, $\quantumNoisePower_\bob = \quantumNoisePower_\eve = 10^{-5}$. By Theorem~\ref{theorem:wiretap-cq} and (\ref{eq:gaussian-joint-entropy}), (\ref{eq:gaussian-output-entropy}), we obtain a bound
\[
\holevoInformation{\inputDistribution}{\bobcqChannel}
-
\holevoInformation{\inputDistribution}{\eveChannel}
=
\gordonFunction(\quantumNoisePower_\bob + \transmittivity_\bob \inputEnergy)
-
\gordonFunction(\quantumNoisePower_\bob)
-
\gordonFunction(\quantumNoisePower_\eve + \transmittivity_\eve \inputEnergy)
+
\gordonFunction(\quantumNoisePower_\eve)
\approx 0.5108
\]
on the achievable rates. When calculating the achievable bounds, we do not optimize over the doubly exponential terms that appear in \eqref{eq:wiretap-cq-prob-one}, but we make sure they do not exceed $\exp(-100)$ in value. This threshold is chosen arbitrarily, but we believe it is small enough to convince the reader that neglecting this term does not have more severe effects than other limitations that are inherent in numerical evaluations such as the machine precision of the computer used. For the cost constraint and the concentration of the security level and decoding error around its expectation, we also need to allow for a little bit of slack due to the nature of the results we evaluate. We allow the security level and decoding error to differ by at most $10^{-8}$ from the expected value and the individual code words to exceed the expected energy per code word by at most $10\%$. Again, the choice of these threshold values is somewhat arbitrary, but we believe the values are small enough to convince the reader that the difference between expected and actual security level and decoding error does not significantly impact the accuracy of the plots given that we do not display values of security level and decoding error smaller than $10^{-5}$ and the possible additional energy consumption of some of the code words is not so large that it would meaningfully impact the practicality of such a code. In Fig.~\ref{fig:plot-security} and~\ref{fig:plot-decoding}, it can be seen how the security level and decoding error decay with increasing block length for various rates. As expected, the decay is always at least exponential, but the slope of the lines is steeper for larger gaps to the upper bound of achievable rates. In Fig.~\ref{fig:plot-blocklength}, we plot the block lengths necessary to achieve a few selected combinations of decoding error and security level in dependence of the rate. It can be seen that the necessary block length increases sharply when the rate gets close to the upper bound.

\appendix
\subsection{Proofs of Technical Lemmas}
\label{appendix:qit}
In this appendix, we give the proofs for the technical lemmas related to quantum information theory and stochastics that are omitted in the main part of the paper.

\begin{proof}[Proof of Lemma~\ref{lemma:q-resolvability-typical-terms}]
For \ref{item:q-resolvability-typical-terms-joint}), we use the presence of the indicator in the definition of $\jointTypicalityOperator_{\typicalityParameter,\blocklength}(\inputAlphabetElement^\blocklength)$ to bound
\begin{align*}
\jointTypicalityOperator_{\typicalityParameter,\blocklength}
  (\inputAlphabetElement^\blocklength)
\classicalQuantumChannel^\blocklength
  (\inputAlphabetElement^\blocklength)
\jointTypicalityOperator_{\typicalityParameter,\blocklength}
  (\inputAlphabetElement^\blocklength)
&=
\sum_{\classicalOutputIndex^\blocklength \in \naturals^\blocklength}
  \indicatorFunction{\jointTypicalitySet_{\typicalityParameter,\blocklength}}
    (\inputAlphabetElement^\blocklength,\classicalOutputIndex^\blocklength)
  \inputDistribution(\classicalOutputIndex^\blocklength | \inputAlphabetElement^\blocklength)
  \rankOneOperator{
    \eigenvector_{\classicalOutputIndex^\blocklength | \inputAlphabetElement^\blocklength}
  }
\\
&\leq
\sum_{\classicalOutputIndex^\blocklength \in \naturals^\blocklength}
  \indicatorFunction{\jointTypicalitySet_{\typicalityParameter,\blocklength}}
    (\inputAlphabetElement^\blocklength,\classicalOutputIndex^\blocklength)
  \exp\big(-\blocklength(\jointEntropy - \typicalityParameter)\big)
  \rankOneOperator{
    \eigenvector_{\classicalOutputIndex^\blocklength | \inputAlphabetElement^\blocklength}
  }
\\
&=
\exp\big(
  -
  \blocklength
  (\jointEntropy - \typicalityParameter)
\big)
\jointTypicalityOperator_{\typicalityParameter,\blocklength}
  (\inputAlphabetElement^\blocklength).
\end{align*}

For \ref{item:q-resolvability-typical-terms-joint-trace}, we fix some $\inputAlphabetElement^\blocklength$ and note
\begin{equation}
\label{eq:q-resolvability-typical-terms-joint-cardinality}
\trace \jointTypicalityOperator_{\typicalityParameter,\blocklength}(\inputAlphabetElement^\blocklength)
=
\trace
  \sum\limits_{\classicalOutputIndex^\blocklength \in \naturals^\blocklength}
    \indicatorFunction{\jointTypicalitySet_{\typicalityParameter,\blocklength}}
      (
        \inputAlphabetElement^\blocklength,
        \classicalOutputIndex^\blocklength
      )
    \rankOneOperator{\eigenvector_{
      \classicalOutputIndex^\blocklength
      |
      \inputAlphabetElement^\blocklength
    }}
=
\cardinality{
  \{
    \classicalOutputIndex^\blocklength:~
    (\inputAlphabetElement^\blocklength, \classicalOutputIndex^\blocklength) \in \jointTypicalitySet_{\typicalityParameter, \blocklength}
  \}
}.
\end{equation}
For every $(\inputAlphabetElement^\blocklength, \classicalOutputIndex^\blocklength) \in \jointTypicalitySet_{\typicalityParameter, \blocklength}$, we have
\[
\inputDistribution(\classicalOutputIndex^\blocklength | \inputAlphabetElement^\blocklength)
>
\exp\big(
  -
  \blocklength
  (\jointEntropy + \typicalityParameter)
\big).
\]
This allows us to argue
\[
1
\geq
\sum_{\classicalOutputIndex^\blocklength:~ (\inputAlphabetElement^\blocklength, \classicalOutputIndex^\blocklength) \in \jointTypicalitySet_{\typicalityParameter, \blocklength}}
  \inputDistribution(\classicalOutputIndex^\blocklength | \inputAlphabetElement^\blocklength)
>
\sum_{\classicalOutputIndex^\blocklength:~ (\inputAlphabetElement^\blocklength, \classicalOutputIndex^\blocklength) \in \jointTypicalitySet_{\typicalityParameter, \blocklength}}
  \exp\big(
    -
    \blocklength
    (\jointEntropy + \typicalityParameter)
  \big)
=
\cardinality{
  \{
    \classicalOutputIndex^\blocklength:~
    (\inputAlphabetElement^\blocklength, \classicalOutputIndex^\blocklength) \in \jointTypicalitySet_{\typicalityParameter, \blocklength}
  \}
}
\exp\big(
  -
  \blocklength
  (\jointEntropy + \typicalityParameter)
\big),
\]
from which item \ref{item:q-resolvability-typical-terms-joint-trace}) follows by (\ref{eq:q-resolvability-typical-terms-joint-cardinality}).

For \ref{item:q-resolvability-typical-terms-output}), we have a calculation very similar to the one for \ref{item:q-resolvability-typical-terms-joint})
\begin{align*}
\outputTypicalityOperator_{\typicalityParameter,\blocklength}
\classicalQuantumChannel_\inputDistribution^{\otimes \blocklength}
\outputTypicalityOperator_{\typicalityParameter,\blocklength}
&=
\sum_{\classicalOutputIndex^\blocklength \in \naturals^\blocklength}
  \indicatorFunction{\outputTypicalitySet_{\typicalityParameter,\blocklength}}
    (\classicalOutputIndex^\blocklength)
  \outputDistribution(\classicalOutputIndex^\blocklength)
  \rankOneOperator{\eigenvector_{\classicalOutputIndex^\blocklength}}
\\
&\leq
\sum_{\classicalOutputIndex^\blocklength \in \naturals^\blocklength}
  \indicatorFunction{\outputTypicalitySet_{\typicalityParameter,\blocklength}}
    (\classicalOutputIndex^\blocklength)
  \exp\big(
    -
    \blocklength
    (\outputEntropy - \typicalityParameter)
  \big)
  \rankOneOperator{\eigenvector_{\classicalOutputIndex^\blocklength}}
\\
&=
\exp\big(
  -
  \blocklength
  (\outputEntropy - \typicalityParameter)
\big)
\outputTypicalityOperator_{\typicalityParameter,\blocklength}.
\end{align*}
For \ref{item:q-resolvability-typical-terms-output-trace}), the argument is very similar to the one for \ref{item:q-resolvability-typical-terms-joint-trace}). We note
\begin{equation}
\label{eq:q-resolvability-typical-terms-output-cardinality}
\trace \outputTypicalityOperator_{\typicalityParameter,\blocklength}
=
\trace \sum_{\classicalOutputIndex^\blocklength \in \outputTypicalitySet_{\typicalityParameter, \blocklength}}
  \rankOneOperator{\eigenvector_{\classicalOutputIndex^\blocklength}}
=
\cardinality{
  \outputTypicalitySet_{\typicalityParameter, \blocklength}
}.
\end{equation}
For every $\classicalOutputIndex^\blocklength \in \outputTypicalitySet_{\typicalityParameter, \blocklength}$, we have
\[
\outputDistribution(\classicalOutputIndex^\blocklength)
>
\exp\big(
  -
  \blocklength
  (\outputEntropy + \typicalityParameter)
\big).
\]
This allows us to argue
\[
1
\geq
\outputDistribution(
  \classicalOutputIndex^\blocklength
  \in
  \outputTypicalitySet_{\typicalityParameter,\blocklength}
)
=
\sum_{\classicalOutputIndex^\blocklength \in \outputTypicalitySet_{\typicalityParameter, \blocklength}}
  \outputDistribution(\classicalOutputIndex^\blocklength)
>
\sum_{\classicalOutputIndex^\blocklength \in \outputTypicalitySet_{\typicalityParameter, \blocklength}}
  \exp\big(
    -
    \blocklength
    (\outputEntropy + \typicalityParameter)
  \big)
=
\cardinality{
  \outputTypicalitySet_{\typicalityParameter, \blocklength}
}
\exp\big(
  -
  \blocklength
  (\outputEntropy + \typicalityParameter)
\big),
\]
from which item \ref{item:q-resolvability-typical-terms-output-trace}) follows by (\ref{eq:q-resolvability-typical-terms-output-cardinality}).
\end{proof}

\begin{proof}[Proof of Lemma~\ref{lemma:q-resolvability-atypical-terms}]
We bound the trace in (\ref{eq:q-resolvability-atypical-terms}) as
\begin{align}
\nonumber
\trace\left(
  \classicalQuantumChannel^\blocklength(\inputRV^\blocklength)
  \adjoint{
    \typicalityOperatorProduct_{\typicalityParameter,\blocklength}
      (\inputRV^\blocklength)
  }
\right)
\overset{(\ref{eq:typicality-operator-product})}&{=}
\trace\left(
  \classicalQuantumChannel^\blocklength(\inputRV^\blocklength)
  \jointTypicalityOperator_{\typicalityParameter,\blocklength}
    (\inputRV^\blocklength)
  \outputTypicalityOperator_{\typicalityParameter,\blocklength}
\right)
\\
\nonumber
&=
\trace\left(
  \classicalQuantumChannel^\blocklength(\inputRV^\blocklength)
  \jointTypicalityOperator_{\typicalityParameter,\blocklength}
    (\inputRV^\blocklength)
\right)
-
\trace\left(
  \classicalQuantumChannel^\blocklength(\inputRV^\blocklength)
  \jointTypicalityOperator_{\typicalityParameter,\blocklength}
    (\inputRV^\blocklength)
  (\identityOperator-\outputTypicalityOperator_{\typicalityParameter,\blocklength})
\right)
\\
\nonumber
\overset{(a)}&{=}
\trace\left(
  \classicalQuantumChannel^\blocklength(\inputRV^\blocklength)
  \jointTypicalityOperator_{\typicalityParameter,\blocklength}
    (\inputRV^\blocklength)
\right)
-
\trace\left(
  (\identityOperator-\outputTypicalityOperator_{\typicalityParameter,\blocklength})
  \sqrt{\classicalQuantumChannel^\blocklength(\inputRV^\blocklength)}
  \jointTypicalityOperator_{\typicalityParameter,\blocklength}
    (\inputRV^\blocklength)
  \sqrt{\classicalQuantumChannel^\blocklength(\inputRV^\blocklength)}
  (\identityOperator-\outputTypicalityOperator_{\typicalityParameter,\blocklength})
\right)
\\
\nonumber
&\overset{(b)}{\geq}
\trace\left(
  \classicalQuantumChannel^\blocklength(\inputRV^\blocklength)
  \jointTypicalityOperator_{\typicalityParameter,\blocklength}
    (\inputRV^\blocklength)
\right)
-
\trace\left(
  \classicalQuantumChannel^\blocklength(\inputRV^\blocklength)
  (\identityOperator-\outputTypicalityOperator_{\typicalityParameter,\blocklength})
\right)
\\
\label{eq:q-resolvability-atypical-terms-split}
&=
1
-
\trace\left(
  \classicalQuantumChannel^\blocklength(\inputRV^\blocklength)
  (
    \identityOperator
    -
    \jointTypicalityOperator_{\typicalityParameter,\blocklength}
      (\inputRV^\blocklength)
  )
\right)
-
\trace\left(
  \classicalQuantumChannel^\blocklength(\inputRV^\blocklength)
  (\identityOperator-\outputTypicalityOperator_{\typicalityParameter,\blocklength})
\right)
\end{align}
where (a) and (b) use (\ref{eq:typicality-operators-basic-properties}) along with the cyclic property of the trace and (b) additionally uses $\jointTypicalityOperator_{\typicalityParameter,\blocklength}(\inputRV^\blocklength) \leq \identityOperator$, which can be verified with the spectral representation (\ref{eq:spectral-decomposition-joint-typicality-operator}).

The trace in (\ref{eq:q-resolvability-atypical-terms-decoder}) can be bounded as
\begin{align}
\nonumber
\trace\left(
  \classicalQuantumChannel(\inputRV^\blocklength)
  \typicalityOperatorProductDecoder_{\typicalityParameter,\blocklength}
    (\inputRV^\blocklength)
\right)
\overset{(\ref{eq:typicality-operator-product-decoder})}&{=}
\trace\Big(
  \classicalQuantumChannel(\inputRV^\blocklength)
  \outputTypicalityOperator_{\typicalityParameter,\blocklength}
  \jointTypicalityOperator_{\typicalityParameter,\blocklength}(\inputRV^\blocklength)
  \outputTypicalityOperator_{\typicalityParameter,\blocklength}
\Big)
\\
\nonumber
\overset{(a)}&{\geq}
\trace\Big(
  \classicalQuantumChannel(\inputRV^\blocklength)
  \jointTypicalityOperator_{\typicalityParameter,\blocklength}(\inputRV^\blocklength)
\Big)
-
2
\trace\Big(
  \classicalQuantumChannel(\inputRV^\blocklength)
  (\identityOperator - \outputTypicalityOperator_{\typicalityParameter,\blocklength})
\Big)
\\
\label{eq:decoder-atypical-terms-split}
&=
1
-
\trace\Big(
  \classicalQuantumChannel(\inputRV^\blocklength)
  (\identityOperator - \jointTypicalityOperator_{\typicalityParameter,\blocklength}(\inputRV^\blocklength))
\Big)
-
2
\trace\Big(
  \classicalQuantumChannel(\inputRV^\blocklength)
  (\identityOperator - \outputTypicalityOperator_{\typicalityParameter,\blocklength})
\Big),
\end{align}
where (a) is due to~\cite[Lemma 6]{hayashi2003general}.

The same two terms appear in (\ref{eq:q-resolvability-atypical-terms-split}) and (\ref{eq:decoder-atypical-terms-split}), and we will bound their expectations separately. For the first term, we recall the definition (\ref{eq:conditional-renyi-entropy}). Since $\trace(\classicalQuantumChannel(\inputRV)^\renyiorder)$ is nonincreasing in $\renyiorder$, it is clear that (\ref{eq:technical-assumptions}) and Lemma~\ref{lemma:bochner-integral-basics}-\ref{item:bochner-integral-basics-existence} ensure that $\conditionalRenyiEntropy{\renyiorder} < \infty$ for all $\renyiorder \in [\renyiorder_{\min},\infty)$. With that, we have for every $\renyiorder_1 \in (1,\infty)$ and $\renyiorder_2 \in [\renyiorder_{\min},1)$
\begin{align}
\nonumber
&\hphantom{{}={}}
\Expectation \trace\left(
  \classicalQuantumChannel^\blocklength(\inputRV^\blocklength)
  (
    \identityOperator
    -
    \jointTypicalityOperator_{\typicalityParameter,\blocklength}
    (\inputRV^\blocklength)
  )
\right)
\\
\nonumber
&=
\Expectation
  \trace\left(
    \sum\limits_{\classicalOutputIndex^\blocklength \in \naturals^\blocklength}
      \indicatorFunction{
        (\inputAlphabet^\blocklength \times \naturals^\blocklength)
        \setminus
        \jointTypicalitySet_{\typicalityParameter,\blocklength}
      }
        (
         \inputRV^\blocklength,
          \classicalOutputIndex^\blocklength
        )
      \inputDistribution(\classicalOutputIndex^\blocklength | \inputRV^\blocklength)
      \rankOneOperator{\eigenvector_{
        \classicalOutputIndex^\blocklength
        |
        \inputRV^\blocklength
      }}
  \right)
\displaybreak[0] \\
\nonumber
&=
\inputDistribution(
  (\inputAlphabet^\blocklength \times \naturals^\blocklength)
  \setminus
  \jointTypicalitySet_{\typicalityParameter,\blocklength}
)
\displaybreak[0] \\
\nonumber
&=
\inputDistribution\big(
  -\log(\inputDistribution(\outputRV^\blocklength | \inputRV^\blocklength))
  \leq
  \blocklength(\jointEntropy - \typicalityParameter)
\big)
+
\inputDistribution\big(
  -\log(\inputDistribution(\outputRV^\blocklength | \inputRV^\blocklength))
  \geq
  \blocklength(\jointEntropy + \typicalityParameter)
\big)
\displaybreak[0] \\
\nonumber
&=
\inputDistribution\big(
  \inputDistribution(\outputRV^\blocklength | \inputRV^\blocklength)^{\renyiorder_1 - 1}
  \geq
  \exp\left(
    -\blocklength(\renyiorder_1-1)(\jointEntropy - \typicalityParameter)
  \right)
\big)
+
\inputDistribution\big(
  \inputDistribution(\outputRV^\blocklength | \inputRV^\blocklength)^{\renyiorder_2-1}
  \geq
  \exp\left(
    -\blocklength(\renyiorder_2-1)(\jointEntropy + \typicalityParameter)
  \right)
\big)
\displaybreak[0] \\
\nonumber
\overset{(a)}&{\leq}
\Expectation_P\left(
  \inputDistribution(\outputRV^\blocklength | \inputRV^\blocklength)^{\renyiorder_1 - 1}
\right)
\exp\left(
  \blocklength(\renyiorder_1-1)(\jointEntropy - \typicalityParameter)
\right)
+
\Expectation_P\left(
  \inputDistribution(\outputRV^\blocklength | \inputRV^\blocklength)^{\renyiorder_2-1}
\right)
\exp\left(
  \blocklength(\renyiorder_2-1)(\jointEntropy + \typicalityParameter)
\right)
\\
\label{eq:q-resolvability-atypical-terms-joint}
&=
\exp\left(
  -\blocklength(\renyiorder_1-1)(\conditionalRenyiEntropy{\renyiorder_1} + \typicalityParameter - \jointEntropy)
\right)
+
\exp\left(
  -\blocklength(1-\renyiorder_2)(\jointEntropy + \typicalityParameter - \conditionalRenyiEntropy{\renyiorder_2})
\right),
\end{align}
where the inequality step (a) is due to Markov's inequality.

Similarly, we recall the definition (\ref{eq:renyi-entropy}) and note that if $\classicalQuantumChannel_\inputDistribution^\renyiorder \in \traceClass{\hilbertSpace}$, then $\outputRenyiEntropy{\renyiorder} < \infty$. For $\renyiorder_3 \in (1,\infty)$ and $\renyiorder_4 \in [\renyiorder_{\min}, 1)$, we obtain
\begin{align}
\nonumber
&\hphantom{{}={}}
\Expectation \trace\left(
  \classicalQuantumChannel^\blocklength(\inputRV^\blocklength)
  (\identityOperator-\outputTypicalityOperator_{\typicalityParameter,\blocklength})
\right)
\\
\nonumber
&=
\trace\left(
  \classicalQuantumChannel_\inputDistribution^{\otimes \blocklength}
  (\identityOperator-\outputTypicalityOperator_{\typicalityParameter,\blocklength})
\right)
\displaybreak[0]
\\
\nonumber
&=
\trace\left(
  \sum_{\classicalOutputIndex^\blocklength \in \naturals}
    \indicatorFunction{
      \naturals^\blocklength
      \setminus
      \outputTypicalitySet_{\typicalityParameter,\blocklength}
    }
  \outputDistribution(\classicalOutputIndex^\blocklength)
  \rankOneOperator{\eigenvector_{\classicalOutputIndex^\blocklength}}
\right)
\displaybreak[0]
\\
\nonumber
&=
\outputDistribution(
  \naturals^\blocklength
  \setminus
  \outputTypicalitySet_{\typicalityParameter,\blocklength}
)
\displaybreak[0]
\\
\nonumber
&=
\outputDistribution\Big(
  -\log
  \outputDistribution(
    \outputRV^\blocklength
  )
  \leq
  \blocklength(\outputEntropy - \typicalityParameter)
\Big)
+
\outputDistribution\Big(
  -\log
  \outputDistribution(
    \outputRV^\blocklength
  )
  \geq
  \blocklength(\outputEntropy + \typicalityParameter)
\Big)
\displaybreak[0]
\\
\nonumber
&=
\outputDistribution\Big(
  \outputDistribution(
    \outputRV^\blocklength
  )^{\renyiorder_3-1}
  \geq
  \exp\big(
    -
    \blocklength
    (\renyiorder_3-1)
    (\outputEntropy - \typicalityParameter)
  \big)
\Big)
+
\outputDistribution\Big(
  \outputDistribution(
    \outputRV^\blocklength
  )^{\renyiorder_4-1}
  \geq
  \exp\big(
    -
    \blocklength
    (\renyiorder_4-1)
    (\outputEntropy + \typicalityParameter)
  \big)
\Big)
\displaybreak[0]
\\
\nonumber
&\leq
\Expectation_\outputDistribution\Big(
  \outputDistribution(
    \outputRV^\blocklength
  )^{\renyiorder_3-1}
\Big)
\exp\big(
  \blocklength
  (\renyiorder_3-1)
  (\outputEntropy - \typicalityParameter)
\big)
+
\Expectation_\outputDistribution\Big(
  \outputDistribution(
    \outputRV^\blocklength
  )^{\renyiorder_4-1}
\Big)
\exp\big(
  \blocklength
  (\renyiorder_4-1)
  (\outputEntropy + \typicalityParameter)
\big)
\\
\label{eq:q-resolvability-atypical-terms-output}
&=
\exp\big(
  -
  \blocklength
  (\renyiorder_3-1)
  (
    \outputRenyiEntropy{\renyiorder_3}
    +
    \typicalityParameter
    -
    \outputEntropy
  )
\big)
+
\exp\big(
  -
  \blocklength
  (1-\renyiorder_4)
  (
    \outputEntropy
    +
    \typicalityParameter
    -
    \outputRenyiEntropy{\renyiorder_4}
  )
\big).
\end{align}

Optimizing over the choices of $\renyiorder_1$, $\renyiorder_2$, $\renyiorder_3$, and $\renyiorder_4$, \eqref{eq:q-resolvability-atypical-terms} now follows from \eqref{eq:q-resolvability-atypical-terms-split}, \eqref{eq:q-resolvability-atypical-terms-joint}, and \eqref{eq:q-resolvability-atypical-terms-output}, and \eqref{eq:q-resolvability-atypical-terms-decoder} follows from \eqref{eq:decoder-atypical-terms-split}, \eqref{eq:q-resolvability-atypical-terms-joint}, and \eqref{eq:q-resolvability-atypical-terms-output}. To argue \eqref{eq:atypicalTermsFunc-bounds}, we invoke Lemma~\ref{lemma:renyi-entropy-basics} to argue that there exist $\renyiorder_1 \in (1,\infty)$ with $\conditionalRenyiEntropy{\renyiorder_1} > \jointEntropy - \typicalityParameter$, $\renyiorder_2 \in [\renyiorder_{\min},1)$ with $\conditionalRenyiEntropy{\renyiorder_2} < \jointEntropy + \typicalityParameter$, $\renyiorder_3 \in (1,\infty)$ with $\outputRenyiEntropy{\renyiorder_3} > \outputEntropy - \typicalityParameter$, and $\renyiorder_4 \in [\renyiorder_{\min},1)$ with $\outputRenyiEntropy{\renyiorder_4} < \outputEntropy + \typicalityParameter$. These properties ensure we can choose
\begin{align*}
\finalconst_1 \in \Big(
  0,
  \min\big(
    &(\renyiorder_1-1)(\conditionalRenyiEntropy{\renyiorder_1} + \typicalityParameter - \jointEntropy)
    ,
    \\
    &(1-\renyiorder_2)(\jointEntropy + \typicalityParameter - \conditionalRenyiEntropy{\renyiorder_2})
    ,
    \\
    &(\renyiorder_3-1)
    (
      \outputRenyiEntropy{\renyiorder_3}
      +
      \typicalityParameter
      -
      \outputEntropy
    )
    ,
    \\
    &(1-\renyiorder_4)
    (
      \outputEntropy
      +
      \typicalityParameter
      -
      \outputRenyiEntropy{\renyiorder_4}
    )
  \big)
\Big).
\end{align*}
This choice of $\finalconst_1$ clearly satisfies
\begin{align*}
\finalconst_1
&<
(\renyiorder_1-1)(\conditionalRenyiEntropy{\renyiorder_1} + \typicalityParameter - \jointEntropy)
<
\sup_{\renyiorder \in (1,\infty)}
  (\renyiorder-1)(\conditionalRenyiEntropy{\renyiorder} + \typicalityParameter - \jointEntropy)
\\
\finalconst_1
&<
(1-\renyiorder_2)(\jointEntropy + \typicalityParameter - \conditionalRenyiEntropy{\renyiorder_2})
<
\sup_{\renyiorder \in [\renyiorder_{\min},1)}
  (1-\renyiorder)(\jointEntropy + \typicalityParameter - \conditionalRenyiEntropy{\renyiorder_2})
\displaybreak[0] \\
\finalconst_1
&<
(\renyiorder_3-1)
(
  \outputRenyiEntropy{\renyiorder_3}
  +
  \typicalityParameter
  -
  \outputEntropy
)
<
\sup_{\renyiorder \in (1,\infty)}
  (\renyiorder-1)
  (
    \outputRenyiEntropy{\renyiorder}
    +
    \typicalityParameter
    -
    \outputEntropy
  )
\\
\finalconst_1
&<
(1-\renyiorder_4)
(
  \outputEntropy
  +
  \typicalityParameter
  -
  \outputRenyiEntropy{\renyiorder_4}
)
<
\sup_{\renyiorder \in [\renyiorder_{\min},1)}
  (1-\renyiorder)
  (
    \outputEntropy
    +
    \typicalityParameter
    -
    \outputRenyiEntropy{\renyiorder}
  ).
\end{align*}
Since $\finalconst_1$ is strictly less than any of the exponents that appear in the definitions of $\atypicalTermsFunc_1$, $\atypicalTermsFunc_2$, $\atypicalTermsFunc_3$, and $\atypicalTermsFunc_4$, these choices ensure that \eqref{eq:atypicalTermsFunc-bounds} is also satisfied as long as $\blocklength$ is large enough.
\end{proof}

\begin{proof}[Proof of Lemma~\ref{lemma:hninequality}]
\ref{item:hninequality-bounded}) In the proof of the first claim, we will use the following fact: If $T\in \boundedOperators{\hilbertSpace}$, $T\ge 0$, and $\image(T)$ is closed, then $\image(T)=\image(\sqrt{T})$. To prove this, first observe that clearly $\image(T)\subseteq \image(\sqrt{T})$ holds since $T= \sqrt{T}\cdot \sqrt{T}$. Therefore, it suffices to show $\image(\sqrt{T})\subseteq \image(T)$. The last inclusion is a consequence of the elementary relations $\kernel(T)=\kernel(\sqrt{T})$, and $\kernel(T)=\image(T)^{\perp}$, where $\kernel(T)=\{\hilbertSpaceElement\in\hilbertSpace: T\hilbertSpaceElement=0  \}$ and $M^{\perp}$ denotes the orthogonal complement of a subspace $M\subseteq\hilbertSpace$:
\begin{equation}\label{eq:images-1}
    \image(T)^{\perp}=\kernel(T)=\kernel(\sqrt{T})=\image(\sqrt{T})^{\perp}.
\end{equation}
Passing to the orthogonal complement in (\ref{eq:images-1}), using our assumption that $\image(T)$ is closed, and the fact that for every subspace $M\subseteq \hilbertSpace$ we have $M^{\perp \perp}= \overline{M}$, we obtain
\begin{equation*}
  \image(T)= \image(\sqrt{T})^{\perp \perp}=\overline{\image(\sqrt{T})}\supseteq \image(\sqrt{T}).
\end{equation*}

Since $\image(\generalOperator+\generalOperatorTwo)$ is closed by assumption, we have $\image(\generalOperator+ \generalOperatorTwo)=\image(\sqrt{\generalOperator+\generalOperatorTwo})$. Consequently, $\image(\sqrt{\generalOperator+\generalOperatorTwo})$ is closed as well and \cite[Proposition 2.4]{engl2000regularization} states that in this case the Moore-Penrose pseudoinverse $\pseudoinverse{\sqrt{\generalOperator+\generalOperatorTwo}}$ is a bounded linear operator, i.e. $\pseudoinverse{\sqrt{\generalOperator+\generalOperatorTwo}} \in \boundedOperators{\hilbertSpace}$.\\
\ref{item:hninequality-main}) The operator inequality in the lemma follows from the proof of \cite[Lemma 2]{hayashi2003general}.
\end{proof}
\begin{proof}[Proof of Lemma~\ref{lemma:bad-codewords}]
The compatibility of the cost constraint by Definition~\ref{def:cost-constraint} clearly implies that $\generalpmf(\generalReal) < \infty$ in an interval around $0$ and that $\generalpmf'(0) = \Expectation_\inputDistribution(\costFunction(\inputRV)-\costConstraint) < 0$. Since $\generalpmf(0) = 1$, this means that there is $\hat{\generalReal} > 0$ with $\generalpmf(\hat{\generalReal}) < 1$ and hence, $\proofconst_1>0$.

Furthermore, for any $\hat{\generalReal} > 0$ with $\generalpmf(\hat{\generalReal}) < \infty$, we have
\begin{align*}
\Probability_\codebook\left(
  \sum_{\blockindex=1}^\blocklength \costFunction(\codebook(\codewordIndex)_\blockindex) > \blocklength\costConstraint
\right)
&=
\Probability_\codebook\left(
  \sum_{\blockindex=1}^\blocklength
    \exp\left(
      \hat{\generalReal}
      \left(
        \costFunction(\codebook(\codewordIndex)_\blockindex)
        -
        \costConstraint
      \right)
    \right)
  >
  1
\right)
\\
\overset{(a)}&{\leq}
\prod_{\blockindex=1}^\blocklength
  \Expectation_\codebook
    \exp\left(
      \hat{\generalReal}
      \left(
        \costFunction(\codebook(\codewordIndex)_\blockindex)
        -
        \costConstraint
      \right)
    \right)
\\
&=
\generalpmf(\hat{\generalReal})^\blocklength,
\end{align*}
where step (a) uses Markov's inequality and the independence of the codeword entries. This clearly implies
\begin{equation*}
\Probability_\codebook\left(
  \sum_{\blockindex=1}^\blocklength \costFunction(\codebook(\codewordIndex)_\blockindex) > \blocklength\costConstraint
\right)
\leq
\exp(-\blocklength\proofconst_1).
\end{equation*}
with $\proofconst_1$ defined in \eqref{eq:mgf-parameter-choice}.

This means that $\cardinality{\badCodewordsSet}$ follows a binomial distribution with $\codebookSize$ trials and some success probability $\generalRealTwo \leq \exp(-\blocklength\proofconst_1)$. Hence, for every $\proofconst \in (0,\proofconst_1)$,
\begin{align*}
\Probability_\codebook\big(
  \cardinality{\badCodewordsSet} \geq \codebookSize\exp(-\blocklength\proofconst)
\big)
&=
\Probability_\codebook\big(
  \cardinality{\badCodewordsSet}
  \geq
  \codebookSize\generalRealTwo
  +
  \codebookSize(\exp(-\blocklength\proofconst) - \generalRealTwo)
\big)
\\
&=
\Probability_\codebook\big(
  \cardinality{\badCodewordsSet}
  -
  \Expectation\cardinality{\badCodewordsSet}
  \geq
  \codebookSize(\exp(-\blocklength\proofconst) - \generalRealTwo)
\big)
\\
\overset{(a)}&{\leq}
\exp\left(
  -2\frac{\codebookSize^2(\exp(-\blocklength\proofconst) - \generalRealTwo)^2}{\codebookSize}
\right)
\\
&=
\exp\left(
  -2\codebookSize(\exp(-\blocklength\proofconst) - \generalRealTwo)^2
\right)
\\
\overset{(b)}&{\leq}
\exp\Big(
  -2\codebookSize\exp(-2\blocklength\proofconst)\big(1-\exp(-\blocklength(\proofconst_1-\proofconst))\big)^2
\Big)
\\
&=
\costBoundFunc^{(\costFunction,\costConstraint)}\left(
  \proofconst,
  \frac{\log \codebookSize}{\blocklength},
  \blocklength
\right)
\end{align*}
where step (a) is due to the Chernoff-Hoeffding inequality in the form stated, e.g., in~\cite[Theorem 2.8]{boucheron2013concentration} and step (b) uses $\generalRealTwo\leq\exp(-\blocklength\proofconst_1)\leq\exp(-\blocklength\proofconst)$. This concludes the proof of \eqref{eq:bad-codewords-all-blocklengths}. Furthermore, since $\proofconst<\proofconst_1$, it is clear that $1-\exp(-\blocklength(\proofconst_1-\proofconst))$ tends to $1$ as $\blocklength$ tends to infinity. In particular, this expression is lower bounded by, say, $1/\sqrt{2}$ for sufficiently large $\blocklength$. The bound \eqref{eq:bad-codewords-large-blocklengths} is then clear.
\end{proof}

\subsection{Preliminaries on Functional Analysis and Rényi Entropy}
\label{appendix:funcana}
In this appendix, we collect technical facts on functional analysis and Rényi entropy that we need for the proofs in this paper.

\begin{lemma}
\label{lemma:norm-basics}
The norms $\operatorNorm{\cdot}$ on $\boundedOperators{\hilbertSpace}$ and $\traceNorm{\cdot}$ on $\traceClass{\hilbertSpace}$ have the following properties:
\begin{enumerate}
 \item \label{item:norm-basics-trace-norm-bound}
 $
 \forall \generalOperator \in \traceClass{\hilbertSpace}:~
   \operatorNorm{\generalOperator} \leq \traceNorm{\generalOperator}
 $
 \item \label{item:norm-basics-trace-class-ideal}
 $
 \forall
 \generalOperator_1 \in \boundedOperators{\hilbertSpace},
 \generalOperator_2 \in \traceClass{\hilbertSpace},
 \generalOperator_3 \in \boundedOperators{\hilbertSpace}
 :~
   \generalOperator_1 \generalOperator_2 \generalOperator_3
   \in
   \traceClass{\hilbertSpace}
   \wedge
   \traceNorm{\generalOperator_1 \generalOperator_2 \generalOperator_3}
   \leq
   \operatorNorm{\generalOperator_1}
   \traceNorm{\generalOperator_2}
   \operatorNorm{\generalOperator_3}
 $
 \item \label{item:norm-basics-trace-norm-submultiplicative}
 $
 \forall
 \generalOperator_1,
 \generalOperator_2 \in \traceClass{\hilbertSpace}
 :~
   \traceNorm{\generalOperator_1 \generalOperator_2}
   \leq
   \traceNorm{\generalOperator_1} \traceNorm{\generalOperator_2}
 $
 \item \label{item:norm-basics-trace-norm-duality}
 The maps $\traceClass{\hilbertSpace} \rightarrow \dualSpace{\compactOperators{\hilbertSpace}}, \generalOperator \mapsto \trace(\generalOperator \cdot)$ and $\boundedOperators{\hilbertSpace} \rightarrow \dualSpace{\traceClass{\hilbertSpace}}, \generalOperator \mapsto \trace(\generalOperator \cdot)$ are isometric isomorphisms. In particular,
 \[
 \forall \generalOperator \in \traceClass{\hilbertSpace}:~
   \traceNorm{\generalOperator}
   =
   \sup\left\{
     \absolute{\trace(\generalOperator\generalOperatorTwo)}
     :~
     \generalOperatorTwo \in \boundedOperators{\hilbertSpace}
     \wedge
     \operatorNorm{\generalOperatorTwo} \leq 1
   \right\}.
 \]
 \item \label{item:norm-basics-zero-trace-operators}
 For any $\generalQuantumState, \generalQuantumStateTwo \in \densityOperators{\hilbertSpace}$, we have
 \[
  \traceNorm{\generalQuantumState - \generalQuantumStateTwo}
   =
   2\max\left\{
     \trace(\generalOperatorTwo(\generalQuantumState - \generalQuantumStateTwo))
     :~
     0 \leq \generalOperatorTwo \leq \identityOperator
   \right\}.
 \]

\end{enumerate}
\end{lemma}
Since these are well-known facts, we provide references to textbooks in lieu of a proof: \ref{item:norm-basics-trace-norm-bound}) follows from \cite[Section III.1, Theorem 4]{schatten1970norm}, \ref{item:norm-basics-trace-class-ideal}) follows from \cite[Section III.1, Lemma 8(v)]{schatten1970norm}, \ref{item:norm-basics-trace-norm-submultiplicative}) is immediate from \ref{item:norm-basics-trace-norm-bound}) and \ref{item:norm-basics-trace-class-ideal}), \ref{item:norm-basics-trace-norm-duality} is~\cite[Theorem VI.26]{reed1980functional}, and \ref{item:norm-basics-zero-trace-operators} is stated in~\cite[Lemma 9.1.7]{wilde2013quantum} for the finite-dimensional case, however, the proof also works for infinite-dimensional spaces without modification.

\begin{lemma}
\label{lemma:operator-compositions-continuous}
Let $\generalIndexMax \in \naturals$. Then the map $\traceClass{\hilbertSpace}^\generalIndexMax \rightarrow \traceClass{\hilbertSpace}, (\generalOperator_1, \dots, \generalOperator_\generalIndexMax) \mapsto \generalOperator_1 \circ \cdots \circ \generalOperator_\generalIndexMax$ is continuous.
\end{lemma}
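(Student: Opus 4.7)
The plan is to reduce the statement to the submultiplicativity property of the trace norm stated in Lemma~\ref{lemma:norm-basics}-\ref{item:norm-basics-trace-norm-submultiplicative}), via a standard telescoping identity. First, note that iterated application of Lemma~\ref{lemma:norm-basics}-\ref{item:norm-basics-trace-norm-submultiplicative}) already shows that $\generalOperator_1 \circ \cdots \circ \generalOperator_\generalIndexMax \in \traceClass{\outputHilbertSpace}$ whenever each factor is trace-class, so the map is well defined as stated. The case $\generalIndexMax = 1$ is immediate (the identity map is continuous), so we may assume $\generalIndexMax \geq 2$.

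Fix a point $(\generalOperatorTwo_1, \dots, \generalOperatorTwo_\generalIndexMax) \in \traceClass{\outputHilbertSpace}^\generalIndexMax$ at which we wish to establish continuity, and let $(\generalOperator_1, \dots, \generalOperator_\generalIndexMax)$ be nearby. I would use the telescoping identity
\[
\generalOperator_1 \cdots \generalOperator_\generalIndexMax
-
\generalOperatorTwo_1 \cdots \generalOperatorTwo_\generalIndexMax
=
\sum_{\generalIndex=1}^\generalIndexMax
  \generalOperatorTwo_1 \cdots \generalOperatorTwo_{\generalIndex-1}
  (\generalOperator_\generalIndex - \generalOperatorTwo_\generalIndex)
  \generalOperator_{\generalIndex+1} \cdots \generalOperator_\generalIndexMax,
\]
which is verified by a straightforward cancellation. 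Taking the trace norm on both sides, applying the triangle inequality, and then applying Lemma~\ref{lemma:norm-basics}-\ref{item:norm-basics-trace-norm-submultiplicative}) to each summand (with the conventions that an empty product is $\identityOperator$ and the corresponding operator norm factor equals $1$) yields
\[
\traceNorm{
  \generalOperator_1 \cdots \generalOperator_\generalIndexMax
  -
  \generalOperatorTwo_1 \cdots \generalOperatorTwo_\generalIndexMax
}
\leq
\sum_{\generalIndex=1}^\generalIndexMax
  \left(\prod_{\generalIndexTwo < \generalIndex} \traceNorm{\generalOperatorTwo_\generalIndexTwo}\right)
  \traceNorm{\generalOperator_\generalIndex - \generalOperatorTwo_\generalIndex}
  \left(\prod_{\generalIndexTwo > \generalIndex} \traceNorm{\generalOperator_\generalIndexTwo}\right).
\]
For $(\generalOperator_1, \dots, \generalOperator_\generalIndexMax)$ within trace-norm distance $1$ of $(\generalOperatorTwo_1, \dots, \generalOperatorTwo_\generalIndexMax)$ in each coordinate, each factor $\traceNorm{\generalOperator_\generalIndexTwo}$ is bounded by $\traceNorm{\generalOperatorTwo_\generalIndexTwo} + 1$, so the bracketed products are uniformly bounded by a constant $\generalConstant$ depending only on $(\generalOperatorTwo_1, \dots, \generalOperatorTwo_\generalIndexMax)$. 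Hence
\[
\traceNorm{
  \generalOperator_1 \cdots \generalOperator_\generalIndexMax
  -
  \generalOperatorTwo_1 \cdots \generalOperatorTwo_\generalIndexMax
}
\leq
\generalConstant
\sum_{\generalIndex=1}^\generalIndexMax
  \traceNorm{\generalOperator_\generalIndex - \generalOperatorTwo_\generalIndex},
\]
which tends to $0$ as $(\generalOperator_1, \dots, \generalOperator_\generalIndexMax) \to (\generalOperatorTwo_1, \dots, \generalOperatorTwo_\generalIndexMax)$ in $\traceClass{\outputHilbertSpace}^\generalIndexMax$, proving continuity.

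There is no real obstacle here, since submultiplicativity of the trace norm is already packaged as Lemma~\ref{lemma:norm-basics}-\ref{item:norm-basics-trace-norm-submultiplicative}); the only mild subtlety is the bookkeeping in the telescoping identity and the need to absorb the factors $\traceNorm{\generalOperator_\generalIndexTwo}$ (which involve the moving variable) into a neighborhood-independent constant by restricting to a unit neighborhood of the fixed point.
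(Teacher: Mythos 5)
Your proof is correct and rests on the same two ingredients as the paper's: a telescoping decomposition of the difference of products and the trace-norm submultiplicativity from Lemma~\ref{lemma:norm-basics}-\ref{item:norm-basics-trace-norm-submultiplicative}). The only difference is organizational: the paper proves the $\generalIndexMax = 2$ case with a two-term telescope and then inducts, whereas you write out the full $\generalIndexMax$-term telescope directly and restrict to a bounded neighborhood to control the moving factors, which is essentially the same argument unrolled.
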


\begin{proof}
The proof is by induction on $\generalIndexMax$. The case $\generalIndexMax = 1$ is clear, and the case $\generalIndexMax > 2$ follows by induction hypothesis and the case $\generalIndexMax = 2$.

Hence, the only case left to prove is $\generalIndexMax = 2$. Let $\generalOperator_1, \generalOperator_2 \in \traceClass{\hilbertSpace}$, and for all $\generalIndex \in \{1,2\}$, let $(\generalOperator_\generalIndex^{(\generalIndexTwo)})_{\generalIndexTwo \in \naturals}$ be a sequence with
\[
\lim\limits_{\generalIndexTwo \rightarrow \infty}
  \traceNorm{
    \generalOperator_\generalIndex
    -
    \generalOperator_\generalIndex^{(\generalIndexTwo)}
  }
=
0.
\]
Then we use the triangle inequality and sub-multiplicativity of the trace norm in Lemma~\ref{lemma:norm-basics}-\ref{item:norm-basics-trace-norm-submultiplicative} to argue
\begin{align*}
\traceNorm{
  \generalOperator_1
  \generalOperator_2
  -
  \generalOperator_1^{(\generalIndexTwo)}
  \generalOperator_2^{(\generalIndexTwo)}
}
&=
\traceNorm{
  \generalOperator_1
  \generalOperator_2
  -
  \generalOperator_1^{(\generalIndexTwo)}
  \generalOperator_2
  +
  \generalOperator_1^{(\generalIndexTwo)}
  \generalOperator_2
  -
  \generalOperator_1^{(\generalIndexTwo)}
  \generalOperator_2^{(\generalIndexTwo)}
}
\\
&\leq
\traceNorm{
  \generalOperator_1
  -
  \generalOperator_1^{(\generalIndexTwo)}
}
\traceNorm{
  \generalOperator_2
}
+
\traceNorm{
  \generalOperator_1^{(\generalIndexTwo)}
}
\traceNorm{
  \generalOperator_2
  -
  \generalOperator_2^{(\generalIndexTwo)}
}
\end{align*}
Since the norm of a convergent sequence is upper bounded, we can apply the limit on both sides and obtain
\[
\lim\limits_{\generalIndexTwo \rightarrow \infty}
  \traceNorm{
    \generalOperator_1
    \generalOperator_2
    -
    \generalOperator_1^{(\generalIndexTwo)}
    \generalOperator_2^{(\generalIndexTwo)}
  }
=
0.
\qedhere
\]
\end{proof}

\begin{lemma}
\label{lemma:continuous-functions-operator-measurable}
Let $\generalRealTwo \in (0,\infty)$, and let $\classicalQuantumChannel: \inputAlphabet \rightarrow \{ \generalOperator \in \traceClass{\hilbertSpace}:~ 0 \leq \generalOperator \leq \generalRealTwo \identityOperator\}$ be measurable. Let $\generalFunction: [0,\generalRealTwo] \rightarrow \reals$ be continuous with $\generalFunction(0)=0$, and assume that $\generalFunction(\classicalQuantumChannel(\inputAlphabetElement)) \in \traceClass{\hilbertSpace}$ for all $\inputAlphabetElement \in \inputAlphabet$. Then $\inputAlphabet \rightarrow \traceClass{\hilbertSpace}, \inputAlphabetElement \mapsto \generalFunction(\classicalQuantumChannel(\inputAlphabetElement))$ is measurable.

A sufficient condition for $\generalFunction(\classicalQuantumChannel(\inputAlphabetElement)) \in \traceClass{\hilbertSpace}$ is that there is $\generalConstant \in (0,\generalRealTwo]$ with $\generalFunction(t) = 0$ for all $\generalReal < \generalConstant$. 
\end{lemma}
\begin{proof}
We begin with the second part of the statement and show that if there is $\generalConstant \in (0,\generalRealTwo]$ with $\generalFunction(t) = 0$ for all $\generalReal < \generalConstant$, then $\generalFunction(\classicalQuantumChannel(\inputAlphabetElement)) \in \traceClass{\hilbertSpace}$ for all $\inputAlphabetElement \in \inputAlphabet$. To this end, we apply the spectral theorem for self-adjoint trace class operators to write
\begin{equation*}
\generalFunction(\classicalQuantumChannel(\inputAlphabetElement))
=
\sum_{\eigenvalueIndex=1}^\infty
  \generalFunction(\eigenvalue_\eigenvalueIndex(\inputAlphabetElement))
  \rankOneOperator{\eigenvector_\eigenvalueIndex(\inputAlphabetElement)},
\end{equation*}
where $(\eigenvector_\eigenvalueIndex(\inputAlphabetElement))_{\eigenvalueIndex \in \naturals}$ is a sequence of corresponding orthonormal eigenvectors and $(\eigenvalue_\eigenvalueIndex(\inputAlphabetElement))_{\eigenvalueIndex \in \naturals}$ is the non-increasing sequence of eigenvalues of $\classicalQuantumChannel(\inputAlphabetElement)$ which contains all nonzero eigenvalues counted with multiplicity and converges to $0$. The Riesz-Schauder theorem (see, e.g.,~\cite[Theorem VI.15]{reed1980functional} ensures that it is possible to arrange the eigenvalues in such a way. For every $\inputAlphabetElement$, we have $\generalFunction(\eigenvalue_\eigenvalueIndex(\inputAlphabetElement)) \neq 0$ for only finitely many $\eigenvalueIndex$, and thus $\generalFunction(\classicalQuantumChannel(\inputAlphabetElement)) \in \traceClass{\hilbertSpace}$.

Let us briefly outline the proof of the first part of the lemma statement. We will first prove that if $\generalPolynomial: \reals \rightarrow \reals$ is a polynomial function with $\generalPolynomial(0) = 0$, 
\begin{equation}
\label{eq:continuous-functions-operator-measurable-polynomials}
\inputAlphabet \rightarrow \traceClass{\hilbertSpace},
\inputAlphabetElement
\mapsto
\generalPolynomial(\classicalQuantumChannel(\inputAlphabetElement))
\textrm{ is measurable.}
\end{equation}
Next, we will infer that
\begin{equation}
\label{eq:continuous-functions-operator-measurable-weak-measurability}
\forall \hilbertSpaceElement_1, \hilbertSpaceElement_2 \in \hilbertSpace
:~
\inputAlphabet \rightarrow \complexNumbers,
\inputAlphabetElement
\mapsto
\innerProduct
  {\hilbertSpaceElement_1}
  {
    \generalFunction(\classicalQuantumChannel(\inputAlphabetElement)) \hilbertSpaceElement_2
  }
\textrm{ is measurable.}
\end{equation}
Finally, we will use (\ref{eq:continuous-functions-operator-measurable-weak-measurability}) to verify the criterion \cite[Theorem E.9]{cohn2013measure} for measurability of operator valued functions.

For (\ref{eq:continuous-functions-operator-measurable-polynomials}), we note that since $\generalPolynomial(0) = 0$, we can write
\begin{equation*}
\generalPolynomial(\classicalQuantumChannel(\inputAlphabetElement))
=
\sum\limits_{\generalIndex=1}^\generalIndexMax
  \generalPolynomialCoefficient_\generalIndex
  \classicalQuantumChannel(\inputAlphabetElement)^\generalIndex.
\end{equation*}
Each summand is a composition of the measurable map $\classicalQuantumChannel^\generalIndex: \inputAlphabet \rightarrow \traceClass{\hilbertSpace}^\generalIndex, \inputAlphabetElement \mapsto (\classicalQuantumChannel(\inputAlphabetElement), \dots, \classicalQuantumChannel(\inputAlphabetElement))$, and the product of operators which is continuous by Lemma~\ref{lemma:operator-compositions-continuous}. Both the trace class and the set of measurable functions are closed under finite summations, hence we obtain (\ref{eq:continuous-functions-operator-measurable-polynomials}).

In order to prove (\ref{eq:continuous-functions-operator-measurable-weak-measurability}), we apply the Weierstraß approximation theorem to obtain a sequence of polynomials $(\generalPolynomial_\generalIndex)_{\generalIndex \in \naturals}$ with $\generalPolynomial_\generalIndex(0) = 0$ for every $\generalIndex$ and
\[
\lim\limits_{\generalIndex \rightarrow \infty}
\maxnorm{
  \generalFunction
  -
  \generalPolynomial_\generalIndex
}
=
0,
\]
where the difference is pointwise and $\generalPolynomial_\generalIndex$ is implicitly identified with its restriction to $[0,\generalRealTwo]$.
By the continuity of the continuous functional calculus, this implies, for every $\inputAlphabetElement$,
\[
\lim\limits_{\generalIndex \rightarrow \infty}
\operatorNorm{
  \generalFunction(\classicalQuantumChannel(\inputAlphabetElement))
  -
  \generalPolynomial_\generalIndex(\classicalQuantumChannel(\inputAlphabetElement))
}
=
0.
\]
The map $\boundedOperators{\hilbertSpace} \rightarrow \complexNumbers, \generalOperator \mapsto \innerProduct{\hilbertSpaceElement_1}{\generalOperator \hilbertSpaceElement_2}$ is continuous for every fixed $\hilbertSpaceElement_1,\hilbertSpaceElement_2 \in \hilbertSpace$, so it follows that
\begin{equation}
\label{eq:operator-functions-weakly-measurable-limit-representation}
\lim\limits_{\generalIndex \rightarrow \infty}
  \innerProduct{\hilbertSpaceElement_1}
               {\generalPolynomial_\generalIndex(\classicalQuantumChannel(\inputAlphabetElement)) \hilbertSpaceElement_2}
=
\innerProduct{\hilbertSpaceElement_1}
             {\generalFunction(\classicalQuantumChannel(\inputAlphabetElement)) \hilbertSpaceElement_2}.
\end{equation}
Clearly, the identity map $\traceClass{\hilbertSpace} \rightarrow \boundedOperators{\hilbertSpace}$ is also continuous and $\inputAlphabet \rightarrow \traceClass{\hilbertSpace}, \inputAlphabetElement \mapsto \generalPolynomial_\generalIndex(\classicalQuantumChannel(\inputAlphabetElement))$ is measurable by (\ref{eq:continuous-functions-operator-measurable-polynomials}). Therefore, (\ref{eq:operator-functions-weakly-measurable-limit-representation}) is a representation of the map $\inputAlphabetElement \mapsto \innerProduct{\hilbertSpaceElement_1}{ \generalFunction(\classicalQuantumChannel(\inputAlphabetElement)) \hilbertSpaceElement_2}$ as a pointwise limit of measurable functions, hence we obtain (\ref{eq:continuous-functions-operator-measurable-weak-measurability}).

Clearly, $\image \generalFunction \circ \classicalQuantumChannel \subseteq \traceClass{\hilbertSpace}$ is separable due to the separability of $\hilbertSpace$, so the remaining criterion~\cite[Theorem E.9]{cohn2013measure} requires us to verify that for every $\generalFunctional \in \dualSpace{\traceClass{\hilbertSpace}}$, the map $\inputAlphabet \rightarrow \complexNumbers, \inputAlphabetElement \mapsto \generalFunctional(\generalFunction(\classicalQuantumChannel(\inputAlphabetElement)))$ is measurable. By Lemma~\ref{lemma:norm-basics}-\ref{item:norm-basics-trace-norm-duality}, the map
\[
\boundedOperators{\hilbertSpace}
\rightarrow
\dualSpace{\traceClass{\hilbertSpace}},
\generalOperator \mapsto \trace(\generalOperator~\cdot)
\]
is an isometric isomorphism. Therefore, it remains to show that for all $\generalOperator \in \boundedOperators{\hilbertSpace}$, the map $\inputAlphabet \rightarrow \complexNumbers, \inputAlphabetElement \mapsto \trace(\generalOperator \generalFunction(\classicalQuantumChannel(\inputAlphabetElement)))$ is measurable.

By Lemma~\ref{lemma:norm-basics}-\ref{item:norm-basics-trace-class-ideal}, $\generalOperator \generalFunction(\classicalQuantumChannel(\inputAlphabetElement)) \in \traceClass{\hilbertSpace}$. Pick an orthonormal basis $(\onbVector_\generalIndex)_{\generalIndex \in \naturals}$ of $\hilbertSpace$. Then, by~\cite[Theorem VI.24]{reed1980functional},
\[
\trace(\generalOperator \generalFunction(\classicalQuantumChannel(\inputAlphabetElement)))
:=
\sum_{\generalIndex=1}^\infty
  \innerProduct{\onbVector_\generalIndex}
               {\generalOperator \generalFunction(\classicalQuantumChannel(\inputAlphabetElement))\onbVector_\generalIndex}
\]
converges absolutely and regardless of the choice of basis. For every $\generalIndexMax \in \naturals$, we have
\[
\sum\limits_{\generalIndex=1}^\generalIndexMax
  \innerProduct{\onbVector_\generalIndex}
               {\generalOperator \generalFunction(\classicalQuantumChannel(\inputAlphabetElement))\onbVector_\generalIndex}
=
\sum\limits_{\generalIndex=1}^\generalIndexMax
  \innerProduct{\adjoint{\generalOperator}\onbVector_\generalIndex}
               {\generalFunction(\classicalQuantumChannel(\inputAlphabetElement))\onbVector_\generalIndex},
\]
which is a measurable function of $\inputAlphabetElement$ by (\ref{eq:continuous-functions-operator-measurable-weak-measurability}). So we have written the map $\inputAlphabet \rightarrow \complexNumbers, \inputAlphabetElement \mapsto \trace(\generalOperator \generalFunction(\classicalQuantumChannel(\inputAlphabetElement)))$ as a pointwise limit of measurable maps.
\end{proof}

\begin{lemma}
\label{lemma:spectral-decompositions-measurable}
Let $\classicalQuantumChannel: \inputAlphabet \rightarrow \densityOperators{\hilbertSpace}$ be measurable. For every $\eigenvalueIndex \in \naturals$, define a map $\eigenvalue_\eigenvalueIndex: \inputAlphabet \rightarrow [0,1]$ in such a way that for every $\inputAlphabetElement \in \inputAlphabet$, $(\eigenvalue_\eigenvalueIndex(\inputAlphabetElement))_{\eigenvalueIndex \in \naturals}$ is the non-increasing sequence which contains all nonzero eigenvalues of $\classicalQuantumChannel(\inputAlphabetElement)$ counted with multiplicity.

Then, we have the following:
\begin{enumerate}
 \item \label{item:spectral-decompositions-measurable-eigenvalues}
 For every $\eigenvalueIndex \in \naturals$, the map $\eigenvalue_\eigenvalueIndex: \inputAlphabet \rightarrow [0,1]$ is measurable.
 \item \label{item:spectral-decompositions-measurable-indicators}
 Let $0 < \generalIntervalLowerBound < \generalIntervalUpperBound \leq 1$. Then
 $
 \inputAlphabet \rightarrow \traceClass{\hilbertSpace},
 \inputAlphabetElement
 \mapsto
 \indicatorFunction{(\generalIntervalLowerBound,\generalIntervalUpperBound)}
   (\classicalQuantumChannel(\inputAlphabetElement))
 $
 is measurable.
\end{enumerate}
\end{lemma}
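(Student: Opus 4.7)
For part~\ref{item:spectral-decompositions-measurable-eigenvalues}), my plan is to use the Ky Fan variational identity
\[
\sum_{\eigenvalueIndex=1}^k \eigenvalue_\eigenvalueIndex(\inputAlphabetElement)
=
\sup\{
  \trace(P \classicalQuantumChannel(\inputAlphabetElement))
  :~ P \in \boundedOperators{\outputHilbertSpace} \text{ an orthogonal projection with rank } \leq k
\},
\]
which is valid for positive trace-class operators. For each fixed $P$ the map $\inputAlphabetElement \mapsto \trace(P\classicalQuantumChannel(\inputAlphabetElement))$ is measurable, since $\generalOperator \mapsto \trace(P\generalOperator)$ is trace-norm continuous on $\traceClass{\outputHilbertSpace}$ by Lemma~\ref{lemma:norm-basics}-\ref{item:norm-basics-trace-class-ideal}). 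The main step is to replace the uncountable supremum by a countable one: using separability of $\outputHilbertSpace$, I pick a countable dense subset of its unit ball and, via Gram--Schmidt on its finite subtuples of length at most $k$, construct a countable family $(P_m)_{m \in \naturals}$ of rank-$\leq k$ orthogonal projections that is dense in operator norm in the set of all such projections. The uniform bound $|\trace((P - P_m)\classicalQuantumChannel(\inputAlphabetElement))| \leq \operatorNorm{P - P_m}$ then gives $\sum_{\eigenvalueIndex=1}^k \eigenvalue_\eigenvalueIndex(\inputAlphabetElement) = \sup_m \trace(P_m\classicalQuantumChannel(\inputAlphabetElement))$ as a countable supremum of measurable functions, hence measurable; writing $\eigenvalue_k$ as the difference of two consecutive partial sums completes part~\ref{item:spectral-decompositions-measurable-eigenvalues}).

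For part~\ref{item:spectral-decompositions-measurable-indicators}), my plan is to approximate the discontinuous indicator $\indicatorFunction{(\generalIntervalLowerBound,\generalIntervalUpperBound)}$ by continuous trapezoidal bumps and apply Lemma~\ref{lemma:continuous-functions-operator-measurable}. For each $\epsilon \in (0,(\generalIntervalUpperBound-\generalIntervalLowerBound)/2)$, let $\generalFunction_\epsilon : [0,1] \to [0,1]$ be the piecewise linear function that vanishes on $[0,\generalIntervalLowerBound] \cup [\generalIntervalUpperBound,1]$, equals $1$ on $[\generalIntervalLowerBound+\epsilon, \generalIntervalUpperBound-\epsilon]$, and interpolates linearly on the two remaining slopes. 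Since $\generalFunction_\epsilon$ is continuous on $[0,1]$ and vanishes on $[0,\generalIntervalLowerBound]$, Lemma~\ref{lemma:continuous-functions-operator-measurable} (with $\generalConstant = \generalIntervalLowerBound$) makes $\inputAlphabetElement \mapsto \generalFunction_\epsilon(\classicalQuantumChannel(\inputAlphabetElement))$ a measurable map to $\traceClass{\outputHilbertSpace}$. It then suffices to show $\generalFunction_\epsilon(\classicalQuantumChannel(\inputAlphabetElement)) \to \indicatorFunction{(\generalIntervalLowerBound,\generalIntervalUpperBound)}(\classicalQuantumChannel(\inputAlphabetElement))$ in trace norm as $\epsilon \to 0$ and conclude measurability as a pointwise norm limit of measurable functions.

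The critical observation making this convergence work is that $\generalIntervalLowerBound > 0$ forces $\classicalQuantumChannel(\inputAlphabetElement)$ to have at most $\lfloor 1/\generalIntervalLowerBound \rfloor$ eigenvalues above $\generalIntervalLowerBound$, so $\indicatorFunction{(\generalIntervalLowerBound,\generalIntervalUpperBound)}(\classicalQuantumChannel(\inputAlphabetElement))$ has finite rank. In the spectral decomposition, only these finitely many eigenvectors contribute to the difference $\generalFunction_\epsilon(\classicalQuantumChannel(\inputAlphabetElement)) - \indicatorFunction{(\generalIntervalLowerBound,\generalIntervalUpperBound)}(\classicalQuantumChannel(\inputAlphabetElement))$; the symmetric trapezoid is chosen so that at any eigenvalue $\eigenvalue \in \{\generalIntervalLowerBound, \generalIntervalUpperBound\}$ both $\generalFunction_\epsilon(\eigenvalue)$ and $\indicatorFunction{(\generalIntervalLowerBound,\generalIntervalUpperBound)}(\eigenvalue)$ equal $0$, while at any other eigenvalue the two values coincide for all $\epsilon$ smaller than a threshold depending on that eigenvalue. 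Hence the trace-norm error is a finite sum of terms that in fact vanishes for $\epsilon$ small enough depending on $\inputAlphabetElement$.

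The hardest part, in my view, is the density claim underpinning part~\ref{item:spectral-decompositions-measurable-eigenvalues}): verifying that the countable family $(P_m)$ produced by Gram--Schmidt is operator-norm dense in rank-$\leq k$ projections, which amounts to checking that orthogonal projections onto spans depend continuously (in operator norm) on small perturbations of their spanning vectors. The subtle technical choice in part~\ref{item:spectral-decompositions-measurable-indicators}) is the symmetric placement of the trapezoid's sloped sides so that the approximation agrees with the indicator at the potentially problematic boundary points $\generalIntervalLowerBound$ and $\generalIntervalUpperBound$; once this design is made, everything else is a routine combination of the cited lemmas and standard closure properties of measurability under pointwise norm limits.
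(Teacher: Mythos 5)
Your part~\ref{item:spectral-decompositions-measurable-indicators}) is essentially the paper's argument: the paper also uses a sequence of continuous trapezoidal bump functions vanishing below the cutoff $\generalIntervalLowerBound$, invokes Lemma~\ref{lemma:continuous-functions-operator-measurable} to get measurability of each approximant, and then shows pointwise (in fact eventually exact) convergence by using that the spectrum of a density operator can only accumulate at $0$, so only finitely many eigenvalues can ever land in the shrinking linear ramps. Your ``finite rank because $\sum\lambda_\ell = 1$'' observation is the same fact packaged slightly differently.

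Your part~\ref{item:spectral-decompositions-measurable-eigenvalues}) is a genuinely different route. The paper simply observes that the map $\singularValue_\eigenvalueIndex$ sending a compact operator to its $\eigenvalueIndex$-th largest singular value is operator-norm continuous (citing Gohberg--Krein; this is the Weyl--Mirsky inequality $\absolute{\singularValue_\eigenvalueIndex(\generalOperator) - \singularValue_\eigenvalueIndex(\generalOperatorTwo)} \leq \operatorNorm{\generalOperator - \generalOperatorTwo}$), then uses $\operatorNorm{\cdot} \leq \traceNorm{\cdot}$ from Lemma~\ref{lemma:norm-basics}-\ref{item:norm-basics-trace-norm-bound}) to get trace-norm continuity, and concludes that $\eigenvalue_\eigenvalueIndex = \singularValue_\eigenvalueIndex \circ \classicalQuantumChannel$ is measurable as a composition of a continuous map after a measurable one. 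Your Ky Fan variational route instead turns $\sum_{\eigenvalueIndex\leq k}\eigenvalue_\eigenvalueIndex(\inputAlphabetElement)$ into a supremum of manifestly measurable functionals $\inputAlphabetElement\mapsto\trace(P\classicalQuantumChannel(\inputAlphabetElement))$ and then reduces it to a countable supremum by a density argument. Both are correct; yours avoids citing the continuity-of-singular-values result, at the price of having to actually prove the density of your Gram--Schmidt family of projections in operator norm, which you correctly identify as the hard step and leave unverified. That density claim is true (separability of $\outputHilbertSpace$ implies separability of the Stiefel-type manifold of orthonormal $r$-tuples, and the map from linearly independent tuples to span projections is norm-continuous), but establishing it carefully takes about as much work as the paper's appeal to the Weyl inequality saves, so the net effect is a longer and more self-contained, but not a simpler, proof. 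One small technical remark: you want your countable dense set to live on (or near) the unit sphere rather than the unit ball, and you need to discard the measure-zero set of nearly linearly dependent tuples where Gram--Schmidt degenerates; neither point breaks the argument, but both need to be said.
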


\begin{proof}
For item \ref{item:spectral-decompositions-measurable-eigenvalues}), we let $\singularValue_\eigenvalueIndex: \densityOperators{\hilbertSpace} \rightarrow [0,1]$ be the map that maps each $\generalOperator$ to its $\eigenvalueIndex$-th largest eigenvalue (counted with multiplicity). We note that $\singularValue_\eigenvalueIndex \circ \classicalQuantumChannel = \eigenvalue_\eigenvalueIndex$. Since by Lemma~\ref{lemma:norm-basics}-\ref{item:norm-basics-trace-norm-bound}, the operator norm is upper bounded by the trace norm, \cite[Application (b) following Theorem III.9.1]{gohberg1980basic} implies that the map $\singularValue_\eigenvalueIndex$ is continuous. Therefore, $\eigenvalue_\eigenvalueIndex$ is measurable, since it is the composition of a continuous and a measurable map.

For item \ref{item:spectral-decompositions-measurable-indicators}), we define a sequence $(\generalFunction_\generalIndex)_{\generalIndex \in \naturals}$ of functions $[0,1] \rightarrow \reals$ as follows:
\[
\generalFunction_\generalIndex(\generalReal)
:=
\begin{cases}
  1, &\generalReal \in [\generalIntervalLowerBound + \frac{1}{\generalIndex},
                        \generalIntervalUpperBound - \frac{1}{\generalIndex}] \\
  \generalIndex(\generalReal - \generalIntervalLowerBound),
     &\generalReal \in (\generalIntervalLowerBound,
                        \generalIntervalLowerBound + \frac{1}{\generalIndex}) \\
  \generalIndex(\generalIntervalUpperBound - \generalReal),
     &\generalReal \in (\generalIntervalUpperBound - \frac{1}{\generalIndex},
                        \generalIntervalUpperBound) \\
  0, &\generalReal \notin (\generalIntervalLowerBound, \generalIntervalUpperBound).
\end{cases}
\]
It is straightforward to verify that $(\generalFunction_\generalIndex)_{\generalIndex \in \naturals}$ is a sequence of continuous functions and that for all $\generalIndex$ and all $\generalReal < \generalIntervalLowerBound$, we have $\generalFunction_\generalIndex(\generalReal) = 0$. This allows us to apply Lemma~\ref{lemma:continuous-functions-operator-measurable} and conclude that the functions $\inputAlphabet \rightarrow \traceClass{\hilbertSpace}, \inputAlphabetElement \mapsto \generalFunction_\generalIndex(\classicalQuantumChannel(\inputAlphabetElement))$ are measurable.

It remains to show that for every $\inputAlphabetElement$, $\generalFunction_\generalIndex(\classicalQuantumChannel(\inputAlphabetElement))$ converges to $\indicatorFunction{(\generalIntervalLowerBound,\generalIntervalUpperBound)}(\classicalQuantumChannel(\inputAlphabetElement))$ in trace norm. We are going to prove the stronger statement
\[
\forall \inputAlphabetElement \in \inputAlphabet~
\exists \generalIndex' \in \naturals~
\forall \generalIndex \geq \generalIndex':
  \generalFunction_\generalIndex(\classicalQuantumChannel(\inputAlphabetElement))
  =
  \indicatorFunction{(\generalIntervalLowerBound,\generalIntervalUpperBound)}(\classicalQuantumChannel(\inputAlphabetElement)).
\]
To this end, we observe that we have equality if $\generalFunction_\generalIndex$ and $\indicatorFunction{(\generalIntervalLowerBound,\generalIntervalUpperBound)}$ coincide on $\spectrum{\classicalQuantumChannel(\inputAlphabetElement)}$. It can easily be verified that these functions coincide everywhere but on $(\generalIntervalLowerBound, \generalIntervalLowerBound + \frac{1}{\generalIndex})$ and $(\generalIntervalUpperBound - \frac{1}{\generalIndex}, \generalIntervalUpperBound)$. But if $\spectrum{\classicalQuantumChannel(\inputAlphabetElement)}$ has nonempty intersection with one of these intervals for infinitely many $\generalIndex$, we have found an accumulation point of $\spectrum{\classicalQuantumChannel(\inputAlphabetElement)}$ either at $\generalIntervalLowerBound$ or $\generalIntervalUpperBound$. By the Riesz-Schauder theorem (see, e.g.,~\cite[Theorem VI.15]{reed1980functional}), $\spectrum{\classicalQuantumChannel(\inputAlphabetElement)}$ does not have nonzero accumulation points, so this contradicts $\generalIntervalLowerBound, \generalIntervalUpperBound \neq 0$.
\end{proof}

\begin{lemma}
\label{lemma:bochner-integral-basics}
Let $(\probabilitySpace, \probabilitySpaceAlgebra, \generalPMeasure)$ be a probability space, $\hilbertSpace$ a separable Hilbert space and let $\generalOperatorRV: \probabilitySpace \rightarrow  \traceClass{\hilbertSpace}$ be a random variable. Then, we have the following:
\begin{enumerate}
 \item \label{item:bochner-integral-basics-existence}
 The Bochner integral $\Expectation \generalOperatorRV \in \traceClass{\hilbertSpace}$ exists iff the Lebesgue integral $\Expectation \traceNorm{\generalOperatorRV}$ exists and is finite.
 \item \label{item:bochner-integral-basics-trace-norm-jensen}
 $
 \traceNorm{\Expectation \generalOperatorRV}
 \leq
 \Expectation \traceNorm{\generalOperatorRV}
 $
 \item \label{item:bochner-integral-basics-bounded-functional}
 Let $\generalOperatorTwo$ be a bounded linear functional on $\traceClass{\hilbertSpace}$. Then, if $\Expectation \generalOperator$ exists, $\Expectation(\generalOperatorTwo(\generalOperator))$ exists and $\generalOperatorTwo (\Expectation \generalOperator) = \Expectation(\generalOperatorTwo (\generalOperator))$,
 \item \label{item:bochner-integral-basics-trace}
 If $\Expectation \generalOperatorRV$ exists, then $\Expectation \trace \generalOperatorRV$ exists and $\trace \Expectation \generalOperatorRV = \Expectation \trace \generalOperatorRV$.
 \item \label{item:bochner-integral-basics-square-root-jensen}
 If for all $\probabilitySpaceElement \in \probabilitySpace$, we have $\generalOperatorRV(\probabilitySpaceElement) \geq 0$ and $\sqrt{\generalOperatorRV(\probabilitySpaceElement)} \in \traceClass{\hilbertSpace}$, then
  \[
  \Expectation\sqrt{\generalOperatorRV}
  \leq
  \sqrt{\Expectation{\generalOperatorRV}}.
  \]
\end{enumerate}
\end{lemma}

\begin{proof}
For \ref{item:bochner-integral-basics-existence}) and \ref{item:bochner-integral-basics-trace-norm-jensen}), we note that the separability of $\hilbertSpace$ implies that $\traceClass{\hilbertSpace}$ is separable, which means that $\generalOperatorRV$ has separable range. Therefore, the statements are proven in~\cite[Section V.5, Theorem 1 and Corollary 1]{yosida1980functional}. \ref{item:bochner-integral-basics-bounded-functional}) is a special case of~\cite[Section V.5, Corollary 2]{yosida1980functional}. \ref{item:bochner-integral-basics-trace}) is a special case of \ref{item:bochner-integral-basics-bounded-functional}), since the trace is a bounded linear operator~\cite[Section III.1, Lemmas 4 and 8(vi)]{schatten1970norm}.

For \ref{item:bochner-integral-basics-square-root-jensen}), first observe that the measurability of $\probabilitySpaceElement \mapsto \sqrt{\generalOperator(\probabilitySpaceElement)}$ follows from Lemma~\ref{lemma:continuous-functions-operator-measurable}. We apply the theorem in~\cite{to1975generalized} in the form of eq. (K') with the function $\traceClass{\hilbertSpace} \rightarrow \traceClass{\hilbertSpace}, \generalOperator \mapsto \adjoint{\generalOperator} \generalOperator$ substituted for $g$. To this end, we first have to note that the function is continuous by Lemma~\ref{lemma:operator-compositions-continuous} and that it satisfies the convexity condition in~\cite{to1975generalized}. In order to show the latter, we calculate for $\generalReal \in [0,1]$ and $\generalOperator_1, \generalOperator_2 \in \traceClass{\hilbertSpace}$:
\begin{align*}
&\hphantom{{}={}}
\adjoint{
  \big(
    \generalReal \generalOperator_1
    +
    (1-\generalReal) \generalOperator_2
  \big)
}
\big(
  \generalReal \generalOperator_1
  +
  (1-\generalReal) \generalOperator_2
\big)
\\
&=
\generalReal^2 \adjoint{\generalOperator_1} \generalOperator_1
+
\generalReal(1-\generalReal)(
  \adjoint{\generalOperator_1} \generalOperator_2
  +
  \adjoint{\generalOperator_2} \generalOperator_1
)
+
(1-\generalReal)^2 \adjoint{\generalOperator_2} \generalOperator_2
\\
&=
\generalReal \adjoint{\generalOperator_1} \generalOperator_1
+
(1-\generalReal) \adjoint{\generalOperator_2} \generalOperator_2
+
(\generalReal^2 - \generalReal) \adjoint{\generalOperator_1} \generalOperator_1
+
\big((1-\generalReal)^2-(1-\generalReal)\big) \adjoint{\generalOperator_2} \generalOperator_2
+
\generalReal(1-\generalReal)(
  \adjoint{\generalOperator_1} \generalOperator_2
  +
  \adjoint{\generalOperator_2} \generalOperator_1
)
\\
&=
\generalReal \adjoint{\generalOperator_1} \generalOperator_1
+
(1-\generalReal) \adjoint{\generalOperator_2} \generalOperator_2
-
\generalReal(1-\generalReal)
\big(
  \adjoint{\generalOperator_1} \generalOperator_1
  -
  \adjoint{\generalOperator_1} \generalOperator_2
  -
  \adjoint{\generalOperator_2} \generalOperator_1
  +
  \adjoint{\generalOperator_2} \generalOperator_2
\big)
\\
&=
\generalReal \adjoint{\generalOperator_1} \generalOperator_1
+
(1-\generalReal) \adjoint{\generalOperator_2} \generalOperator_2
-
\generalReal(1-\generalReal)
\adjoint{
  \big(
    \generalOperator_1
    -
    \generalOperator_2
  \big)
}
\big(
  \generalOperator_1
  -
  \generalOperator_2
\big)
\\
&\leq
\generalReal \adjoint{\generalOperator_1} \generalOperator_1
+
(1-\generalReal) \adjoint{\generalOperator_2} \generalOperator_2.
\end{align*}
We can now argue
\begin{equation*}
\Expectation\sqrt{\generalOperatorRV}
\overset{(a)}{=}
\sqrt{
  \adjoint{
    \left(
      \Expectation \sqrt{\generalOperatorRV}
    \right)
  }
  \Expectation \sqrt{\generalOperatorRV}
}
\overset{(b)}{\leq}
\sqrt{
  \Expectation
  \left(
    \adjoint{\sqrt{\generalOperatorRV}}
    \sqrt{\generalOperatorRV}
  \right)
}
\overset{(c)}{=}
\sqrt{\Expectation{\generalOperatorRV}}.
\end{equation*}
(a) is due to $\Expectation \sqrt{\generalOperatorRV} \geq 0$, (b) is the application of~\cite[eq. (K')]{to1975generalized} combined with the fact that the square root is operator monotone~\cite{pedersen1972some}, and (c) follows directly from $\generalOperatorRV \geq 0$.
\end{proof}

\begin{lemma}
\label{lemma:renyi-entropy-basics-preliminary}
Let $\generalpmf: \naturals \rightarrow [0,1]$ be a \gls{pmf} with
\[
\sum\limits_{\generalIndex \in \naturals}
  \generalpmf(\generalIndex)^{\renyiorder_{\min}}
  <
  \infty.
\]
Then the function
\begin{equation}
\label{eq:renyi-entropy-basics-preliminary-function}
\generalFunction_\generalpmf:
\renyiorder
\mapsto
  \sum\limits_{\generalIndex \in \naturals}
    \generalpmf(\generalIndex)^\renyiorder
\end{equation}
is continuously differentiable on $(\renyiorder_{\min},\infty)$, and its derivative is
\begin{equation}
\label{eq:renyi-entropy-basics-preliminary-derivative}
\generalFunction_\generalpmf':
\renyiorder
\mapsto
  \sum\limits_{\generalIndex \in \naturals}
    \generalpmf(\generalIndex)^\renyiorder
    \log \generalpmf(\generalIndex)
\in
(-\infty,0].
\end{equation}
Moreover, for every $\renyiorder_0 > \renyiorder_{\min}$, there exists $\generalFunctionTwo_{\max}(\renyiorder_0) \in [0, \infty)$ such that for all $\renyiorder \in [\renyiorder_0,\infty)$
\begin{equation}
\label{eq:renyi-entropy-basics-preliminary-derivative-bound}
\absolute{
  \generalFunction_\generalpmf'(\renyiorder)
}
\leq
\generalFunctionTwo_{\max}(\renyiorder_0)
\sum\limits_{\generalIndex \in \naturals}
  \generalpmf(\generalIndex)^{\renyiorder_{\min}}
<
\infty
.
\end{equation}
\end{lemma}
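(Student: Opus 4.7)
The plan is to prove all three assertions through a single application of dominated convergence, the crux being a uniform term-wise bound by $\generalFunctionTwo_{\max}(\renyiorder_0) \generalpmf(\generalIndex)^{\renyiorder_{\min}}$.

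First I would establish the key pointwise estimate. Fix $\renyiorder_0 \in (\renyiorder_{\min}, \infty)$ and, for any $\generalIndex$ with $\generalpmf(\generalIndex) \in (0,1]$ (indices with $\generalpmf(\generalIndex) = 0$ contribute nothing to the series), rewrite
\[
|\generalpmf(\generalIndex)^\renyiorder \log \generalpmf(\generalIndex)|
=
\generalpmf(\generalIndex)^{\renyiorder_{\min}} \cdot \generalpmf(\generalIndex)^{\renyiorder - \renyiorder_{\min}} |\log \generalpmf(\generalIndex)|.
\]
For $\renyiorder \geq \renyiorder_0$, monotonicity of $t \mapsto t^{\renyiorder - \renyiorder_{\min}}$ on $[0,1]$ gives $\generalpmf(\generalIndex)^{\renyiorder - \renyiorder_{\min}} \leq \generalpmf(\generalIndex)^{\renyiorder_0 - \renyiorder_{\min}}$. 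An elementary calculus check shows that the map $t \mapsto t^{\renyiorder_0 - \renyiorder_{\min}} |\log t|$ attains a finite maximum on $(0,1]$ at $t = \exp(-1/(\renyiorder_0 - \renyiorder_{\min}))$, equal to $1 / (e(\renyiorder_0 - \renyiorder_{\min}))$. Setting $\generalFunctionTwo_{\max}(\renyiorder_0)$ to this value yields
\[
|\generalpmf(\generalIndex)^\renyiorder \log \generalpmf(\generalIndex)|
\leq
\generalFunctionTwo_{\max}(\renyiorder_0) \cdot \generalpmf(\generalIndex)^{\renyiorder_{\min}}
\]
for all $\renyiorder \geq \renyiorder_0$ and all $\generalIndex \in \naturals$. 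Summing in $\generalIndex$ and using the hypothesis $\sum_\generalIndex \generalpmf(\generalIndex)^{\renyiorder_{\min}} < \infty$ directly delivers (\ref{eq:renyi-entropy-basics-preliminary-derivative-bound}) once (\ref{eq:renyi-entropy-basics-preliminary-derivative}) has been proven.

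Next, fix any $\renyiorder \in (\renyiorder_{\min}, \infty)$ and pick $\renyiorder_0$ with $\renyiorder_{\min} < \renyiorder_0 < \renyiorder$. For $|h|$ small enough that $\renyiorder + h \geq \renyiorder_0$, the mean value theorem applied term-wise to the smooth map $\beta \mapsto \generalpmf(\generalIndex)^\beta$ yields $\xi_\generalIndex(h)$ between $\renyiorder$ and $\renyiorder + h$ with
\[
\frac{\generalpmf(\generalIndex)^{\renyiorder + h} - \generalpmf(\generalIndex)^\renyiorder}{h}
=
\generalpmf(\generalIndex)^{\xi_\generalIndex(h)} \log \generalpmf(\generalIndex).
\]
By the estimate above, these difference quotients are dominated in absolute value, uniformly in $h$, by the summable sequence $\generalFunctionTwo_{\max}(\renyiorder_0) \generalpmf(\generalIndex)^{\renyiorder_{\min}}$. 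Dominated convergence then permits exchanging limit and sum, giving $\generalFunction_\generalpmf'(\renyiorder) = \sum_\generalIndex \generalpmf(\generalIndex)^\renyiorder \log \generalpmf(\generalIndex)$, which is (\ref{eq:renyi-entropy-basics-preliminary-derivative}). Non-positivity of $\generalFunction_\generalpmf'(\renyiorder)$ is immediate from $\log \generalpmf(\generalIndex) \leq 0$ on $(0,1]$.

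Finally, continuity of $\generalFunction_\generalpmf'$ at $\renyiorder$ follows from the same domination argument: for any sequence $\renyiorder_n \to \renyiorder$ that eventually lies in $[\renyiorder_0, \infty)$, the summands $\generalpmf(\generalIndex)^{\renyiorder_n} \log \generalpmf(\generalIndex)$ converge pointwise to $\generalpmf(\generalIndex)^\renyiorder \log \generalpmf(\generalIndex)$ and are bounded by $\generalFunctionTwo_{\max}(\renyiorder_0) \generalpmf(\generalIndex)^{\renyiorder_{\min}}$, so one more invocation of dominated convergence closes the argument. I anticipate no substantive obstacle beyond the elementary supremum estimate on $t^\varepsilon |\log t|$; everything else is routine packaging of standard dominated convergence arguments.
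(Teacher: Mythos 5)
Your proof is correct and rests on exactly the same key estimate as the paper's: split off $\generalpmf(\generalIndex)^{\renyiorder_{\min}}$ and bound the remaining factor by the finite supremum of $\generalReal \mapsto -\generalReal^{\renyiorder_0 - \renyiorder_{\min}}\log \generalReal$ on $(0,1]$. The only difference is in packaging — you use the mean value theorem on difference quotients plus dominated convergence, while the paper cites the standard result on uniform convergence of the term-wise derivative series (Rudin Thm.~7.17) together with the Weierstraß criterion — but these are two routine ways of expressing the same domination argument.
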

\begin{proof}
Fix an arbitrary $\renyiorder_0 > \renyiorder_{\min}$. Because $\renyiorder_0$ can be chosen arbitrarily close to $\renyiorder_{\min}$, it is sufficient to compute $\generalFunction'$ on $[\renyiorder_0,\infty)$. By~\cite[Theorem 7.17]{rudin1976principles}, we only have to show that
\[
\sum\limits_{\generalIndex \in \naturals}
  \frac{d}{d \renyiorder}
  \generalpmf(\generalIndex)^\renyiorder
=
\sum\limits_{\generalIndex \in \naturals}
  \generalpmf(\generalIndex)^\renyiorder
  \log \generalpmf(\generalIndex)
\]
converges uniformly for $\renyiorder \in [\renyiorder_0,\infty)$. To this end, we write
\[
-
\generalpmf(\generalIndex)^\renyiorder
\log \generalpmf(\generalIndex)
=
-
\generalpmf(\generalIndex)^{\renyiorder_{\min}}
\generalpmf(\generalIndex)^{\renyiorder_0 - \renyiorder_{\min}}
\generalpmf(\generalIndex)^{\renyiorder - \renyiorder_0}
\log \generalpmf(\generalIndex)
\leq
-
\generalpmf(\generalIndex)^{\renyiorder_{\min}}
\generalpmf(\generalIndex)^{\renyiorder_0 - \renyiorder_{\min}}
\log \generalpmf(\generalIndex)
\]
By the Weierstraß criterion, we only have to show that
\[
\sum\limits_{\generalIndex \in \naturals}
\Big(
  -
  \generalpmf(\generalIndex)^{\renyiorder_{\min}}
  \generalpmf(\generalIndex)^{\renyiorder_0 - \renyiorder_{\min}}
  \log \generalpmf(\generalIndex)
\Big)
\]
converges. To this end, we first examine the function
\[
\generalFunctionTwo: (0,1] \rightarrow [0,\infty),
\generalReal
\mapsto
-
\generalReal^{\renyiorder_0 - \renyiorder_{\min}}
\log \generalReal.
\]
Clearly, $\generalFunctionTwo(1) = 0$. Moreover, by substituting $\hat{\generalReal} := - \log \generalReal$, we get
\[
\lim\limits_{\generalReal \rightarrow 0}
  \generalFunctionTwo(\generalReal)
=
\lim\limits_{\generalReal \rightarrow 0}
\Big(
  -
  \generalReal^{\renyiorder_0 - \renyiorder_{\min}}
  \log \generalReal
\Big)
=
\lim\limits_{\hat{\generalReal} \rightarrow \infty}
  \hat{\generalReal}
  \exp\left(-\hat{\generalReal}(\renyiorder_0 - \renyiorder_{\min})\right)
=
0,
\]
so $\generalFunctionTwo$ can be extended to a continuous function on the whole interval $[0,1]$ with $\generalFunctionTwo(0) = \generalFunctionTwo(1) = 0$. This means that $\generalFunctionTwo$ takes a finite maximum value $\generalFunctionTwo_{\max}$ on $[0,1]$. So we have
\[
\sum\limits_{\generalIndex \in \naturals}
\Big(
  -
  \generalpmf(\generalIndex)^{\renyiorder_{\min}}
  \generalpmf(\generalIndex)^{\renyiorder_0 - \renyiorder_{\min}}
  \log \generalpmf(\generalIndex)
\Big)
\leq
\generalFunctionTwo_{\max}
\sum\limits_{\generalIndex \in \naturals}
  \generalpmf(\generalIndex)^{\renyiorder_{\min}}
<
\infty
\]
by the assumption of the lemma, concluding the proof.
\end{proof}

\begin{lemma}
\label{lemma:renyi-entropy-basics}
Let $\inputDistribution$ and $\classicalQuantumChannel$ be such that (\ref{eq:technical-assumptions}) holds, let $\jointEntropy$ be as defined as in \ref{eq:joint-entropy}, $\outputEntropy$ as defined in (\ref{eq:output-entropy}), $\conditionalRenyiEntropy{\renyiorder}$ as defined in (\ref{eq:conditional-renyi-entropy}), and $\outputRenyiEntropy{\renyiorder}$ as defined in (\ref{eq:renyi-entropy}).

Then $\renyiorder \mapsto \outputRenyiEntropy{\renyiorder}$ and $\renyiorder \mapsto \conditionalRenyiEntropy{\renyiorder}$ are continuously differentiable on $(\renyiorder_{\min},1) \cup (1,\infty)$. Moreover, we have
\begin{align}
\label{eq:renyi-entropy-basics-output}
\lim\limits_{\renyiorder \rightarrow 1}
  \outputRenyiEntropy{\renyiorder}
&=
\outputEntropy
\\
\label{eq:renyi-entropy-basics-joint}
\lim\limits_{\renyiorder \rightarrow 1}
  \conditionalRenyiEntropy{\renyiorder}
&=
\jointEntropy.
\end{align}
\end{lemma}
\begin{proof}
In order to apply Lemma~\ref{lemma:renyi-entropy-basics-preliminary}, we define functions $\generalFunction_\outputDistribution$ and $\generalFunction_{\inputDistribution(\cdot | \inputRV)}$ as in (\ref{eq:renyi-entropy-basics-preliminary-function}), where the \gls{pmf} $\outputDistribution$ (respectively $\inputDistribution(\cdot | \inputRV)$) is substituted for $\generalpmf$. Observe that
\[
\outputRenyiEntropy{\renyiorder}
=
\frac{1}{1-\renyiorder}
\log
  \generalFunction_\outputDistribution(\renyiorder),
\]
where $\generalFunction_\outputDistribution$ is defined by (\ref{eq:renyi-entropy-basics-preliminary-function}) with $\generalpmf:=\outputDistribution$. Therefore, the continuous differentiability follows immediately from Lemma~\ref{lemma:renyi-entropy-basics-preliminary}.

The argument for the continuous differentiability of
\[
\conditionalRenyiEntropy{\renyiorder}
=
\frac{1}{1-\renyiorder}
\log\Expectation
  \generalFunction_{\inputDistribution(\cdot | \inputRV)}
    (\renyiorder)
\]
is similar, but we need an additional argument that
\begin{equation}
\label{eq:renyi-entropy-basics-intermediate-derivative}
\frac{d}{d\renyiorder}
\Expectation
  \generalFunction_{\inputDistribution(\cdot | \inputRV)}
    (\renyiorder)
=
\Expectation
  \generalFunction_{\inputDistribution(\cdot | \inputRV)}'
    (\renyiorder),
\end{equation}
(and, implicitly, that the derivative on the left-hand side exists) where $\generalFunction_{\inputDistribution(\cdot | \inputRV)}'$ is given in (\ref{eq:renyi-entropy-basics-preliminary-derivative}). It is sufficient to show that this is the case for all $\renyiorder > \renyiorder_0 > \renyiorder_{\min}$. We argue this with the criterion~\cite[Theorem 16.8]{billingsley2012probability} for interchangeability of differentiation and expectation, which requires us to show that $\generalFunction_{\inputDistribution(\cdot | \inputAlphabetElement)}'$ is a continuous function of $\renyiorder$ for fixed $\inputAlphabetElement$ and that $\absolute{\generalFunction_{\inputDistribution(\cdot | \inputRV)}'(\renyiorder)}$ has an upper bound that is uniform in $\renyiorder$ and integrable over $\inputRV$. The former clearly follows from Lemma~\ref{lemma:renyi-entropy-basics-preliminary}, and for the latter, we note the bound (\ref{eq:renyi-entropy-basics-preliminary-derivative-bound}) in Lemma~\ref{lemma:renyi-entropy-basics-preliminary} and observe that
\[
\sum\limits_{\classicalOutputIndex \in \naturals}
  \inputDistribution(\classicalOutputIndex | \inputAlphabetElement)^{\renyiorder_{\min}}
=
\trace\left(
  \classicalQuantumChannel(\inputAlphabetElement)^{\renyiorder_{\min}}
\right),
\]
so by the interchangeability of expectation and trace, the integrability follows from the assumption that $\Expectation \classicalQuantumChannel(\inputRV)^{\renyiorder_{\min}} \in \traceClass{\hilbertSpace}$ exists.

In order to show (\ref{eq:renyi-entropy-basics-output}), we note that
\[
\lim\limits_{\renyiorder \rightarrow 1}
  \outputRenyiEntropy{\renyiorder}
=
-
\lim\limits_{\renyiorder \rightarrow 1}
  \frac{
    \log
      \generalFunction_\outputDistribution(\renyiorder)
    -
    \log
      \generalFunction_\outputDistribution(1)
}{
  \renyiorder - 1
}
=
(\log \circ \generalFunction_\outputDistribution)'
  (1).
\]
Lemma~\ref{lemma:renyi-entropy-basics-preliminary} and the chain rule yield
\[
(\log \circ \generalFunction_\outputDistribution)'
  (1)
=
\frac{\generalFunction'_\outputDistribution(1)}
     {\generalFunction_\outputDistribution(1)}
=
\outputEntropy.
\]

For (\ref{eq:renyi-entropy-basics-joint}), we have
\[
\lim\limits_{\renyiorder \rightarrow 1}
  \conditionalRenyiEntropy{\renyiorder}
=
-
\lim\limits_{\renyiorder \rightarrow 1}
  \frac{
    \log\Expectation
      \generalFunction_{\inputDistribution(\cdot | \inputRV)}
        (\renyiorder)
    -
    \log\Expectation
      \generalFunction_{\inputDistribution(\cdot | \inputRV)}
        (1)
}{
  \renyiorder - 1
}
=
\generalFunctionTwo'
  (1),
\]
where $\generalFunctionTwo'$ is the derivative of
\[
\generalFunctionTwo:
  \renyiorder
  \mapsto
  \log\Expectation
    \generalFunction_{\inputDistribution(\cdot | \inputRV)}
      (\renyiorder).
\]
We use (\ref{eq:renyi-entropy-basics-intermediate-derivative}) and the chain rule for differentiation to obtain
\[
\generalFunctionTwo'(\renyiorder)
=
\frac{
  \Expectation \generalFunction'_{\inputDistribution(\cdot | \inputRV)}(\renyiorder)
}{
  \Expectation \generalFunction_{\inputDistribution(\cdot | \inputRV)}(\renyiorder)
}
=
\frac{
  \Expectation \sum_{\classicalOutputIndex \in \naturals}
    \inputDistribution(\classicalOutputIndex | \inputRV)^\renyiorder
    \log \inputDistribution(\classicalOutputIndex | \inputRV)
}{
  \Expectation \sum_{\classicalOutputIndex \in \naturals}
    \inputDistribution(\classicalOutputIndex | \inputRV)^\renyiorder
},
\]
so clearly, $\generalFunctionTwo'(1) = \jointEntropy$, concluding the proof of the lemma.
\end{proof}

\bibliographystyle{IEEEtran}
\bibliography{quantum-resolvability-references}
\end{document}